\newtheorem{theorem}{Theorem}
\newtheorem{proposition}[theorem]{Proposition}
\newtheorem{corollary}[theorem]{Corollary}
\newtheorem{remark}{Remark}
\newtheorem{assump}{Assumption}
\newenvironment{customthm}[1]
{\innercustomthm}
{\endinnercustomthm}
\newenvironment{customprop}[1]
{\innercustomprop}
{\endinnercustomprop}
\newcommand{\ind}{{\perp\!\!\!\perp}}
\newcommand{\norm}[1]{\left|\left| #1 \right|\right|}
\renewcommand{\hat}{\widehat}
\renewcommand{\tilde}{\widetilde}
\renewcommand{\P}{\mbox{$\mathrm{P}$}}
\newcommand{\E}{\mbox{$\mathbb{E}$}}
\definecolor{emerald}{rgb}{0.31, 0.78, 0.47}
\title{\textbf{Doubly Robust Inference on Causal Derivative Effects for Continuous Treatments}}
\author{Yikun Zhang$^{\ast}$ \;\text{and}\;  Yen-Chi Chen$^{\dagger}$}
\date{\normalsize	
	\textit{Department of Statistics, University of Washington}\\
		$^{\ast}$\href{mailto:yikun@uw.edu}{yikun@uw.edu}\;\; $^{\dagger}$\href{mailto:yenchic@uw.edu}{yenchic@uw.edu} \\~\\	
	\today}
\begin{document}
	\maketitle

\begin{abstract}
Statistical methods for causal inference with continuous treatments mainly focus on estimating the mean potential outcome function, commonly known as the dose-response curve. However, it is often not the dose-response curve but its derivative function that signals the treatment effect. 
In this paper, we investigate nonparametric inference on the derivative of the dose-response curve with and without the positivity condition.
Under the positivity and other regularity conditions, we propose a doubly robust (DR) inference method for estimating the derivative of the dose-response curve using kernel smoothing.
When the positivity condition is violated, we demonstrate the inconsistency of conventional inverse probability weighting (IPW) and DR estimators, and introduce novel bias-corrected IPW and DR estimators. 
In all settings, our DR estimator achieves asymptotic normality at the standard nonparametric rate of convergence with nonparametric efficiency guarantees.
Additionally, our approach reveals an interesting connection to nonparametric support and level set estimation problems. 
Finally, we demonstrate the applicability of our proposed estimators through simulations and a case study of evaluating a job training program.
		\\~\\
\noindent \textbf {Keywords:} {Causal inference; dose-response curve; derivative estimation; positivity; kernel smoothing.}
\end{abstract}
	
\section{Introduction}
\label{sec:Intro}

This paper investigates the construction of a doubly robust estimator for the derivative of the continuous treatment effect using kernel smoothing. The analysis considers scenarios both with and without the positivity condition. Specifically, positivity (Assumption~\ref{assump:positivity}) requires that every individual has a nonzero chance (measured by a conditional density function) of being exposed to any treatment level $T=t$ across all possible values of the covariate vector $\bm{S}\in \mathcal{S}\subset \mathbb{R}^d$. Let $Y(t)$ be the potential outcome \citep{rubin1974estimating} that would have been observed under treatment level $T=t$. The focus of this work is the (causal) derivative effect curve $t\mapsto \theta(t) := \frac{d}{dt}\mathbb{E}\left[Y(t)\right]$, where $t\mapsto m(t):= \mathbb{E}\left[Y(t)\right]$ represents the (causal) dose-response curve.

Valid inference on $\theta(t)$ is essential for understanding how the outcome of interest $Y\in \mathcal{Y}$ changes with treatment $t$, offering insights beyond the expected value $\E\left[Y(t)\right]=m(t)$ of the potential outcome across the population. Indeed, the derivative effect curve $\theta(t)$ serves as the most natural continuous-treatment counterpart to the average treatment effect $\E\left[Y(1)\right] - \E\left[Y(0)\right]$ in binary treatment settings. Despite the importance of estimating $\theta(t)$, the current research on continuous treatments has primarily focused on inferring $m(t)$ \citep{diaz2013targeted,kennedy2017non,bonvini2022fast,takatsu2022debiased} or the average derivative effect $\E\left[\theta(T)\right] = \E\left[\frac{\partial}{\partial t}\E\left(Y|T,\bm{S}\right)\right]$ under some regularity conditions \citep{hardle1989investigating,powell1989semiparametric,newey1993efficiency,cattaneo2010robust,hirshberg2020debiased,hines2023optimally}, with limited attention to $\theta(t)$ itself.
As demonstrated by \autoref{fig:deriv_illust}, a dose-response curve $m(t)$ attains the same value at two distinct treatment levels $t_1,t_2$, and the associated average derivative effect $\mathbb{E}\left[\theta(T)\right]$ is identical to zero. Nevertheless, the actual causal effects differ and remain nonzero at $t_1,t_2$, which can be effectively captured by our target estimand $\theta(t)$. In reality, $\mathbb{E}\left[\theta(T)\right]$ only quantifies the overall causal effects, while $\theta(t)$ provides more precise treatment effects at a personalized level of interest.

\begin{figure}[t]
	\centering
	\includegraphics[width=1\linewidth]{./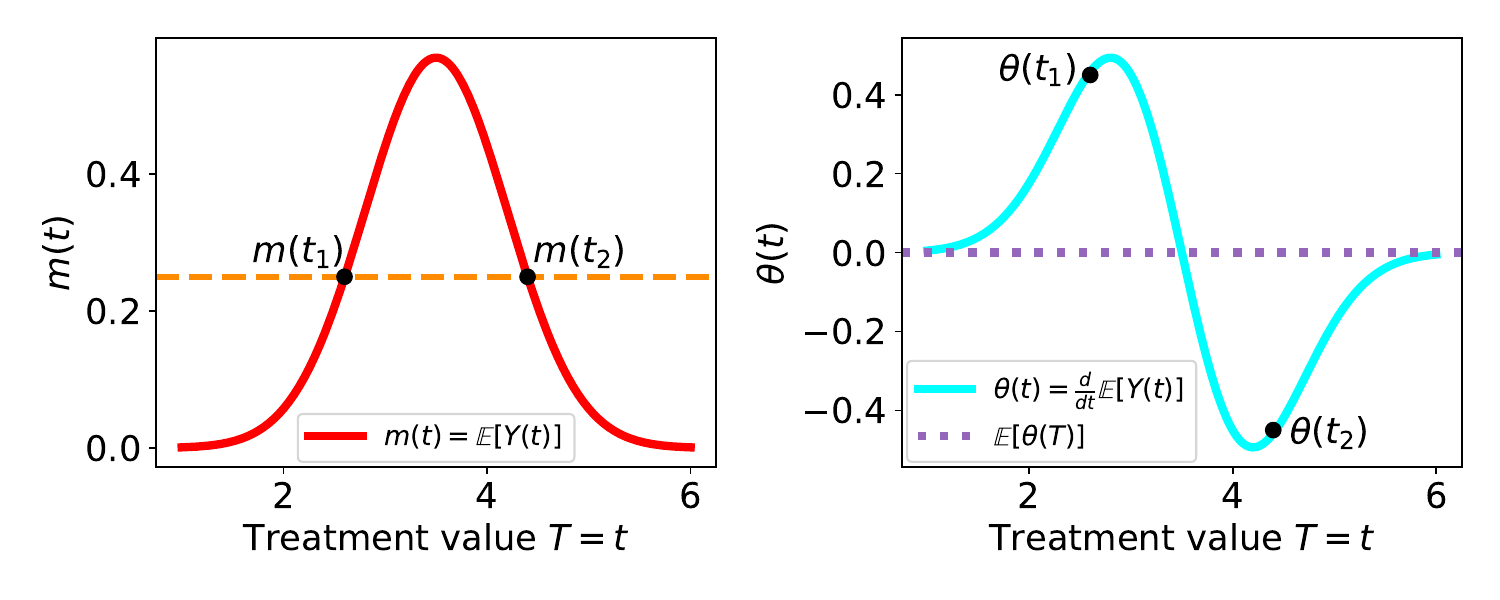}
	\caption{Illustrations of a dose-response curve $m(t)$, its corresponding derivative effect curve $\theta(t)$, and average derivative effects $\mathbb{E}\left[\theta(T)\right]$. Here, $m(t)$ is symmetric with respect to $t=3.5$ so that $m(t_1)=m(t_2)$ and $\mathbb{E}\left[\theta(T)\right]=0$ but $\theta(t_1) > 0 > \theta(t_2)$.}
	\label{fig:deriv_illust}
\end{figure}

To achieve precise inference on $\theta(t)$ without numerical approximation, a straightforward approach is to impose structural assumptions on the conditional mean outcome function $\E\left(Y|T=t,\bm{S}=\bm{s}\right)$ or directly on the dose-response curve $m(t)$, known as the marginal structural modeling \citep{robins2000marginal,neugebauer2007nonparametric}. Although this approach can easily construct an estimator of $\theta(t)$ via a standard differentiation on the estimated dose-response curve, those structural assumptions are difficult to verify in practice. Alternatively, existing methods for derivative estimation \citep{gasser1984estimating,mack1989derivative,zhou2000derivative}, combined with the inverse probability weighting (IPW) technique \citep{hirano2004propensity,imai2004causal}, can define an estimator of $\theta(t)$; see, \emph{e.g.}, our proposed IPW estimator in \autoref{sec:theta_pos}. Yet, this approach requires correct specification and accurate estimation of the conditional density of $T$ given $\bm{S}$. The sensitivity of these approaches to model misspecification and estimation motivates us to propose a doubly robust (DR) inference procedures for $\theta(t)$ that accommodates misspecification in either the outcome regression or the conditional density models \citep{robins1986new,van2003unified,bang2005doubly} while imposing less stringent requirements on the estimation rates of convergence.

The existing inference methods for $m(t)$ and the above discussion on $\theta(t)$ relies on the positivity condition (Assumption~\ref{assump:positivity}), which may be violated in observational studies with continuous treatments \citep{cole2008constructing,westreich2010invited}. When positivity fails, the identifications of both $m(t)$ and $\theta(t)$ become infeasible without structural assumptions; see \autoref{sec:inconsistency} for details. \cite{zhang2024nonparametric} address this problem without positivity by imposing an assumption on the potential outcome model that can be satisfied by additive confounding models and proposing a regression adjustment (RA) estimator of $\theta(t)$. We extend their identification and estimation strategies to propose IPW and DR estimators of $\theta(t)$ under additive confounding models. This extension not only advances the field but also reveals novel connections between the derivative effect curve inference and classical support estimation problems \citep{cuevas1997plug,cuevas2009set}.

\subsection{Contributions and Outline of the Paper}

\hspace{4pt} {\bf 1. Identification and Estimation:} Under the positivity and other regularity conditions that are stated in \autoref{sec:background}, we propose our IPW and DR estimators of $\theta(t)$ using kernel smoothing in \autoref{sec:theta_pos}. 
In particular, our proposed DR estimator leverages a local polynomial approximation to the outcome variable and is robust to the misspecification of either the outcome regression or the conditional density models. 

{\bf 2. Challenges and Remedies Under Violations of Positivity:}
When the positivity condition is violated, we demonstrate that the conventional approaches exhibit two types of bias due to lack of identification and support discrepancy in \autoref{sec:inconsistency}.
To resolve these issues, we adopt the additive structural assumption to maintain identification and utilize nonparametric set estimation techniques to develop our bias-corrected IPW and DR estimators of $\theta(t)$ in \autoref{subsec:bias_corrected_IPW_DR}.

{\bf 3. Asymptotic Theory:} We establish the consistency and asymptotic properties of RA, IPW, and DR estimators of $\theta(t)$ when the nuisance functions are nonparametrically estimated under cross-fitting; see \autoref{subsec:asymp_theta_pos} with positivity and \autoref{subsec:asymp_theta_nopos} without positivity. Specifically, our proposed DR estimators are asymptotically normal and can be used to conduct valid (uniform) inference on $\theta(t)$ with nonparametric efficiency guarantees.

{\bf 4. Numerical Experiments:} We showcase the finite-sample performances of our proposed estimators of $\theta(t)$ with and without the positivity condition through simulations and a case study of the Job Corps program in the United States in \autoref{sec:experiments} and \autoref{app:add_sim}. All the codes for our experiments are available at \url{https://github.com/zhangyk8/npDRDeriv}, and we provide some practical considerations for implementing our proposed estimators in \autoref{app:practical}.

\subsection{Other Related Works}
\label{subsec:related_works}

The dose-response curve $m(t)$ and its derivative $\theta(t)$ are non-regular target parameters, as they lack unique G\^ateaux derivatives and Riesz representers, depending on how the treatment distribution is localized at $t$ \citep{van1991differentiable,carone2019toward,ichimura2022influence}. As one of the key ingredients in this paper, kernel-based localization is a common approach in the literature, which has been used to construct IPW or DR estimators of $m(t)$ \citep{kallus2018policy,su2019non,huber2020direct,colangelo2020double,klosin2021automatic}. An alternative localization method is through the basis approach or series estimator \citep{chen2014sieve,chen2014sieveM,luedtke2024one}. Additionally, a general form of the IPW estimator of $m(t)$ was studied by \cite{galvao2015uniformly}. Under the positivity condition, the RA or G-computation \citep{robins1986new} estimators of $\theta(t)$ have been explored by \cite{gill2001causal,flores2007estimation,lee2018partial}. Recently, \cite{colangelo2020double} and \cite{bong2023local} also considered approximating $\theta(t)$ via finite differences of the estimated dose-response curve or a closely related matching method. Shortly after the first version of this paper, \cite{zeng2025nonparametric} proposed their DR estimator of $\theta(t)$ by regressing the pseudo-outcome in \cite{kennedy2017non} via local polynomial regression, but they did not address the positivity violation.


Growing interest in relaxing the positivity condition has led to new developments in causal inference. For continuous treatments, \cite{branson2023causal} studied a smoothed causal effect with trimmed conditional densities, while \cite{schindl2024incremental} examined stochastic interventions via exponentially tilted treatment distributions. Notably, dynamic stochastic interventions with continuous treatments can be robust to the violation of positivity \citep{diaz2020causal,bonvini2023incremental,mcclean2024fair}. To our knowledge, no existing works directly consider nonparametric inference on $\theta(t)$ without positivity, and our work takes an initial step to fill in this gap.

\subsection{Notations}

Throughout this paper, we consider an outcome variable $Y\in \mathcal{Y}\subset \mathbb{R}$, an univariate continuous treatment $T\in \mathcal{T}\subset \mathbb{R}$, and a vector of continuous confounding variables or covariates $\bm{S}=(S_1,...,S_d)\in \mathcal{S}\subset \mathbb{R}^d$ with a fixed dimension $d$. We write $Y \ind \bm{X}$ when the random variables $Y,\bm{X}$ are independent. The common distribution and expectation of $\bm{U}=(Y,T,\bm{S})$ are denoted by $\P$ and $\E$ respectively, whose Lebesgue density is $p(y,t,\bm{s}) = p_{Y|T,\bm{S}}(y|t,\bm{s})\cdot p_{T|\bm{S}}(t|\bm{s}) \cdot p_S(\bm{s})$. Here, $p_T(t)$ and $p_S(\bm{s})$ are the marginal densities of $T$ and $\bm{S}$, respectively, and $p_{T|\bm{S}}(t|\bm{s}) = \frac{\partial}{\partial t} \P\left(T\leq t|\bm{S} = \bm{s}\right)$ is the conditional density of $T$ given covariates $\bm{S}=\bm{s}$. We also denote the joint density of $(T,\bm{S})$ by $p(t,\bm{s}) = p_{T|\bm{S}}(t|\bm{s})\cdot p_S(\bm{s}) = p_{\bm{S}|T}(\bm{s}|t)\cdot p_T(t)$ and the support of $p_{\bm{S}|T}(\bm{s}|t)$ by $\mathcal{S}(t)$ for $t\in \mathcal{T}$. Our methodology and theory in this paper also apply when $\bm{S}$ consists of discrete covariates, where it suffices to, \emph{e.g.}, replace the density $p_{\bm{S}}(\bm{s})$, with the probability mass function $\P(\bm{S}=\bm{s})$ and substitute integration with summation. For any real-valued $\P$-integrable function $f$, we write $\P f = \int f(\bm{u})\, d\P(\bm{u})$ and denote the $L_p(\P)$-norm of $f$ by $\norm{f}_{L_p} := \left(\int |f(\bm{u})|^p d\P(\bm{u})\right)^{\frac{1}{p}}$. If $\hat{f}$ is estimated on an independent data sample, then $\norm{\hat{f}}_{L_p} := \left(\int \left|\hat{f}(\bm{u})\right|^p d\P(\bm{u})\right)^{\frac{1}{p}}$.
Additionally, we let $\mathbb{P}_n$ denote the empirical measure so that $\mathbb{P}_n f = \frac{1}{n} \sum_{i=1}^n f(\bm{U}_i) = \int f(\bm{u}) d\mathbb{P}_n(\bm{u})$ and $\mathbb{G}_n(f) = \sqrt{n}\left(\mathbb{P}_n-\P\right)f$. Finally, we use $\mathbbm{1}_A$ to denote the indicator function of a set $A$. The big-$O$ notation $h_n=O(g_n)$ means that $|h_n|$ is upper bounded by a positive constant multiple of $g_n >0$ when $n$ is sufficiently large. In contrast, $h_n=o(g_n)$ when $\lim_{n\to\infty} \frac{|h_n|}{g_n}=0$. For random variables, $o_P(1)$ is short for a sequence of random variables converging to zero in probability, while $O_P(1)$ denotes the sequence that is bounded in probability.

\section{Basic Framework and Identification With Positivity}
\label{sec:background}

Suppose that the data sample consists of independent and identically distributed (i.i.d.) observations $\left\{(Y_i,T_i,\bm{S}_i)\right\}_{i=1}^n \subset \mathcal{Y}\times \mathcal{T} \times \mathcal{S}$.
Since the main estimands of interest $\theta(t)=\frac{d}{dt}\E\left[Y(t)\right]$ and $m(t)=\E\left[Y(t)\right]$ are defined by potential outcomes that are not directly observable, we introduce some identification conditions for identifying $m(t)$ and $\theta(t)$ with observed data $\left\{(Y_i,T_i,\bm{S}_i)\right\}_{i=1}^n$.

\begin{assump}[Basic identification conditions]
	\label{assump:id_cond}
	\noindent %
	\begin{enumerate}[label=(\alph*)]
		\item (Consistency) $T=t$ implies that $Y(t) = Y$ for any $t\in \mathcal{T}$.
		\item (Ignorability or unconfoundedness) $Y(t) \ind T \,\big|\, \bm{S}$ for all $t\in \mathcal{T}$.
		\item (Treatment variation) The conditional variance of $T$ given $\bm{S}=\bm{s}$ is strictly positive for all $\bm{s}\in \mathcal{S}$, \emph{i.e.}, $\mathrm{Var}(T|\bm{S}=\bm{s})>0$.
		\item (Interchangeability) The equality $\frac{d}{dt}\mathbb{E}\left[\mu(t,\bm{S})\right] =\E\left[\frac{\partial}{\partial t} \mu(t,\bm{S})\right]$ holds true when $\mu(t,\bm{s}) = \E(Y|T=t,\bm{S}=\bm{s})$ is well-defined on $\mathcal{T}\times \mathcal{S}$.
	\end{enumerate}
\end{assump}

Assumption~\ref{assump:id_cond}(a,b) are standard identification conditions for causal dose-response curves \citep{gill2001causal,kennedy2017non}, while Example 1 in \cite{zhang2024nonparametric} demonstrates the necessity of imposing Assumption~\ref{assump:id_cond}(c) for identifiability. In particular, Assumption~\ref{assump:id_cond}(c) ensures that the distribution of $(T,\bm{S})$ has a nontrivial support in $\mathcal{T}\times \mathcal{S}$. 

Finally, Assumption~\ref{assump:id_cond}(d) only requires the interchangeability of the expectation and (partial) differentiation when $\mu(t,\bm{s})$ is well-defined on $\mathcal{T}\times \mathcal{S}$. It is a mild condition and can be satisfied when $\left|\frac{\partial}{\partial t} \mu(t,\bm{S})\right|$ is upper bounded by an integrable function with respect to the distribution of $\bm{S}$; see Theorem 1.1 and Example 1.8 in \cite{shao2003mathematical}. We also emphasize that the requirement of $\mu(t,\bm{s})$ being well-defined on $\mathcal{T}\times \mathcal{S}$ in Assumption~\ref{assump:id_cond}(d) is entailed by the following positivity condition. In general, when the positivity condition is violated, the conditional mean outcome (or regression) function $\mu(t,\bm{s})$ is no longer well-defined outside the support of the joint density $p(t,\bm{s})$.

\begin{assump}[Positivity]
	\label{assump:positivity}
	The conditional density $p_{T|\bm{S}}(t|\bm{s})$ is bounded away from 0 for all $(t,\bm{s})\in \mathcal{T} \times \mathcal{S}$, \emph{i.e.}, there exists a constant $p_{\min}>0$ such that $p_{T|\bm{S}}(t|\bm{s}) \geq p_{\min}$. 
\end{assump}

Under Assumptions \autoref{assump:id_cond}
and \autoref{assump:positivity}, the dose-response curve $m(t)$ and its derivative $\theta(t)$ are identifiable as:
$$
m(t)=\E\left[\mu(t,\bm{S})\right] \quad \text{ and } \quad \theta(t) = \E\left[\frac{\partial}{\partial t} \mu(t,\bm{S})\right],
$$
respectively. 
In \autoref{subsec:m_pos} and \autoref{sec:theta_pos}, we first study nonparametric inference on $m(t)$ and $\theta(t)$ this positivity condition. Later, in \autoref{sec:inconsistency} and \autoref{sec:theta_nopos}, we examine the identification and estimation issues as well as our proposed inference method without positivity.

\subsection{Nonparametric Estimation on $m(t)$ With Positivity}
\label{subsec:m_pos}

Before discussing our estimation strategy on the derivative effect curve $\theta(t)$, we first review the existing approaches for estimating the dose-response curve $m(t)$ via kernel smoothing when the positivity condition is valid. Specifically, under Assumptions~\ref{assump:id_cond} and \ref{assump:positivity}, there are three major estimation strategies for $t\mapsto m(t)=\E\left[Y(t)\right]$ with observed data $\left\{(Y_i,T_i,\bm{S}_i)\right\}_{i=1}^n$ listed as follows.

$\bullet$ {\bf Regression Adjustment (RA) Estimator:} Since $m(t)$ coincides with the form $\E\left[\mu(t,\bm{S})\right]$ under Assumptions~\ref{assump:id_cond} and \ref{assump:positivity}, it leads to a plug-in estimator as:
	\begin{equation}
		\label{m_RA}
		\hat{m}_{\mathrm{RA}}(t)  = \frac{1}{n}\sum_{i=1}^n \hat{\mu}(t,\bm{S}_i),
	\end{equation}
	where $\hat{\mu}(t,\bm{s})$ is a (consistent) estimator of the conditional mean outcome function $\mu(t,\bm{s})$.
	
$\bullet$ {\bf Inverse Probability Weighting (IPW) Estimator:} The IPW estimator follows from the rationale that $m(t) = \E\left[\frac{Y \cdot \mathbbm{1}_{\{T=t\}}}{p_{T|\bm{S}}(t|\bm{S})} \right]$ under Assumptions~\ref{assump:id_cond} and \ref{assump:positivity} in the discrete treatment setting. When the treatment variable $T \in \mathcal{T}$ is continuous, one common approach is to smooth the indicator function $\mathbbm{1}_{\{T=t\}}$ by a kernel function $K:\mathbb{R}\to [0,\infty)$, yielding the following IPW estimator as:
\begin{equation}
	\label{m_IPW}
	\hat{m}_{\mathrm{IPW}}(t) = \frac{1}{nh}\sum_{i=1}^n \frac{K\left(\frac{T_i-t}{h}\right)}{\hat{p}_{T|\bm{S}}(T_i|\bm{S}_i)}\cdot Y_i,
\end{equation}
where $h>0$ is a smoothing bandwidth and $\hat{p}_{T|\bm{S}}(t|\bm{s})$ is a (consistent) estimator of the conditional density $p_{T|\bm{S}}(t|\bm{s})$. In practice, without loss of its consistency, one can implement a self-normalized IPW estimator \eqref{m_IPW_selfnorm} of $m(t)$ as shown in \autoref{app:self_norm} to reduce the variance of \eqref{m_IPW}. Note that there are other approaches than kernel smoothing to smoothly approximate the non-regular target parameter $m(t)$, such as the series method outlined in Section 2.4 of \cite{luedtke2024one}.

$\bullet$ {\bf Doubly Robust (DR) Estimator:} The above RA estimator \eqref{m_RA} can be combined with the IPW estimator \eqref{m_IPW} to obtain the following DR estimator as:
	\begin{equation}
		\label{m_DR}
		\hat{m}_{\mathrm{DR}}(t) =\frac{1}{nh}\sum_{i=1}^n \left\{\frac{K\left(\frac{T_i-t}{h}\right)}{\hat{p}_{T|\bm{S}}(T_i|\bm{S}_i)}\cdot \left[Y_i - \hat \mu(t,\bm{S}_i)\right]+ h\cdot \hat{\mu}(t,\bm{S}_i) \right\},
	\end{equation}
where $\hat{\mu}(t,\bm{s})$ and $\hat{p}_{T|\bm{S}}(t,\bm{s})$ are (consistent) estimators of $\mu(t,\bm{s})$ and $p_{T|\bm{S}}(t,\bm{s})$ respectively. For completeness, we state and prove the asymptotic properties of the above estimators in \autoref{app:proof_m_pos}.

\begin{remark}
	\label{remark:m_IPW}
	There exists a slightly different formulation of the IPW estimator of $m(t)$ in the literature \citep{colangelo2020double,klosin2021automatic} as:
	\begin{equation}
		\label{m_IPW2}
		\hat{m}_{\mathrm{IPW,2}}(t) = \frac{1}{nh}\sum_{i=1}^n \frac{K\left(\frac{T_i-t}{h}\right)}{\hat{p}_{T|\bm{S}}(t|\bm{S}_i)}\cdot Y_i,
	\end{equation}
	in which the (estimated) inverse probability weight $\frac{1}{\hat{p}_{T|\bm{S}}(t|\bm{S}_i)}$ is evaluated at query point $t$ conditioning on each $\bm{S}_i$. We demonstrate in \autoref{app:m_ipw_diff} that the asymptotic difference between the oracle versions of \eqref{m_IPW} and \eqref{m_IPW2} will be of order $O(h^2) + O_P\left(\sqrt{\frac{h}{n}}\right)$ under some regularity conditions, which thus shrinks to 0 as $h\to 0$ and $n\to \infty$. In practice, we recommend using the form \eqref{m_IPW} for the IPW estimator of $m(t)$, because the estimated conditional density $\hat{p}_{T|\bm{S}}$ is more likely to be positive at sample points $(T_i,\bm{S}_i),i=1,...,n$ than at the (query) points $(t,\bm{S}_i),i=1,...,n$. 
\end{remark}

\begin{remark}
\label{remark:m_dr_pseudo_outcome}
Another important class of DR estimators for $m(t)$ constructs an (estimated) pseudo-outcome based on the efficient influence function of $\E\left[m(T)\right]$ as:
$$\varphi(Y,T,\bm{S};\hat{\mu},\hat{p}_{T|\bm{S}}) = \frac{Y-\hat{\mu}(T,\bm{S})}{\hat{p}_{T|\bm{S}}(T|\bm{S})} \int_{\mathcal{S}} \hat{p}_{T|\bm{S}}(T|\bm{s}) \, d\mathbb{P}_n(\bm{s}) + \int_{\mathcal{S}} \hat{\mu}(T,\bm{s})\, d\mathbb{P}_n(\bm{s}),$$
which is then regressed on the treatment variable $T$ \citep{kennedy2017non}. As mentioned in \autoref{subsec:related_works}, the DR estimator in \cite{zeng2025nonparametric} relies on this pseudo-outcome. A rigorous comparison between our proposed DR estimator of $\theta(t)$ in \autoref{sec:theta_pos} and theirs under positivity will be an interesting future direction.
\end{remark}

\section{Nonparametric Inference on $\theta(t)$ With Positivity}
\label{sec:theta_pos}

In this section, analogous to the estimation of $m(t)$ in \autoref{subsec:m_pos}, we study three different methods for estimating the derivative effect curve $t\mapsto \theta(t)=\frac{d}{dt}\E\left[Y(t)\right]$ with kernel smoothing under Assumptions~\ref{assump:id_cond} and \ref{assump:positivity}. Notably, both the IPW and DR estimators of $\theta(t)$ are novel contribution to the existing literature and exhibit distinct insights.

$\bullet$ {\bf Regression Adjustment (RA) Estimator:} Assumption~\ref{assump:id_cond}(d), together with other conditions in \ref{assump:id_cond} and \ref{assump:positivity}, guarantees the identification of $\theta(t)$ as $\E\left[\frac{\partial}{\partial t} \mu(t,\bm{S})\right]$ and provides a natural RA estimator as:
\begin{equation}
	\label{theta_RA}
	\hat{\theta}_{\mathrm{RA}}(t) = \frac{1}{n}\sum_{i=1}^n \hat{\beta}(t,\bm{S}_i),
\end{equation}
where $\hat{\beta}(t,\bm{s})$ is a (consistent) estimator of $\beta(t,\bm{s})=\frac{\partial}{\partial t} \mu(t,\bm{s})$.

$\bullet$ {\bf Inverse Probability Weighting (IPW) Estimator:} Inspired by the nonparametric derivative estimator in \cite{mack1989derivative}, we propose the following IPW estimator of $\theta(t)$ as:
\begin{equation}
	\label{theta_IPW}
	\hat{\theta}_{\mathrm{IPW}}(t) =\frac{1}{nh^2} \sum_{i=1}^n \frac{Y_i\left(\frac{T_i-t}{h}\right) K\left(\frac{T_i-t}{h}\right)}{\kappa_2 \cdot \hat{p}_{T|\bm{S}}(T_i|\bm{S}_i)},
\end{equation}
where $K:\mathbb{R}\to [0,\infty)$ is a kernel function with $\kappa_2=\int u^2 K(u)\,du$, $h>0$ is a smoothing bandwidth, and $\hat{p}_{T|\bm{S}}(t|\bm{s})$ is a (consistent) estimator of the conditional density $p_{T|\bm{S}}(t|\bm{s})$. One can implement the self-normalized IPW estimator \eqref{theta_IPW_selfnorm} of $\theta(t)$ in \autoref{app:self_norm} to reduce the variance of \eqref{theta_IPW}.

\begin{remark}
	\label{remark:theta_IPW}
	One might define the IPW estimator by evaluating the estimated inverse probability weights at points $(t,\bm{S}_i),i=1,...,n$ as:
	\begin{equation}
		\label{theta_IPW2}
		\hat{\theta}_{\mathrm{IPW,2}}(t) =\frac{1}{nh^2} \sum_{i=1}^n \frac{Y_i\left(\frac{T_i-t}{h}\right) K\left(\frac{T_i-t}{h}\right)}{\kappa_2 \cdot \hat{p}_{T|\bm{S}}(t|\bm{S}_i)}.
	\end{equation}
	However, different from \eqref{theta_IPW} in Remark~\ref{remark:m_IPW}, this IPW estimator $\hat{\theta}_{\mathrm{IPW,2}}(t)$ of $\theta(t)$ is (asymptotically) biased even when $h\to 0$ and $n\to \infty$; see \autoref{app:theta_ipw_diff} for details. Hence, our proposed IPW form \eqref{theta_IPW} is preferable not only due to the practical reason as stated in Remark~\ref{remark:m_IPW} but also because of its statistical consistency as justified in \autoref{thm:theta_pos} below.
\end{remark}

$\bullet$ {\bf Doubly Robust (DR) Estimator:} To achieve the doubly robust property like $\hat{m}_{\mathrm{DR}}(t)$ in \eqref{m_DR} (see also \autoref{app:proof_m_pos}), we propose the following DR estimator of $\theta(t)$ as:
\begin{equation}
	\label{theta_DR}
	\hat{\theta}_{\mathrm{DR}}(t) = \frac{1}{nh}\sum_{i=1}^n \left\{ \frac{\left(\frac{T_i-t}{h}\right)K\left(\frac{T_i-t}{h}\right) }{h\cdot \kappa_2\cdot \hat{p}_{T|\bm{S}}(T_i|\bm{S}_i)} \left[Y_i - \hat{\mu}(t,\bm{S}_i) - (T_i-t)\cdot \hat{\beta}(t,\bm{S}_i)
	\right]+ h\cdot \hat{\beta}(t,\bm{S}_i) \right\},
\end{equation}
where $\hat{\mu}(t,\bm{s}),\hat{\beta}(t,\bm{s}), \hat{p}_{T|\bm{S}}(t,\bm{s})$ are (consistent) estimators of $\mu(t,\bm{s}),\beta(t,\bm{s}), p_{T|\bm{S}}(t,\bm{s})$, respectively. We discuss how these nuisance functions can be estimated in \autoref{app:nuisance_est}. The key insight of why $\hat{\theta}_{\mathrm{DR}}(t)$ in \eqref{theta_DR} embraces the doubly robust property is that we leverage a local polynomial approximation \citep{fan1996local} to push the residual of the IPW component in \eqref{theta_DR} to at least second order before combining with the RA component. In other words, it can be shown that the Neyman orthogonality holds as $h\to 0$ \citep{neyman1959optimal,neyman1979c,chernozhukov2018double}.
As pointed out in Remark~\ref{remark:theta_IPW}, we need to compute the inverse probability weights at the sample points as $\frac{1}{\hat{p}_{T|\bm{S}}(T_i|\bm{S}_i)}, i=1,...,n$ for the above DR estimator \eqref{theta_DR}. If we otherwise compute the inverse probability weights at the (query) points as $\frac{1}{\hat{p}_{T|\bm{S}}(t|\bm{S}_i)}$ for $i=1,...,n$, then the resulting $\hat{\theta}_{\mathrm{DR}}(t)$ will be asymptotically biased even when both of the conditional density model $p_{T|\bm{S}}$ and the outcome model $\mu,\beta$ are correctly specified. Finally, we also outline a self-normalized version of \eqref{theta_DR} in \autoref{app:self_norm} for stabilizing its variance.


\subsection{Asymptotic Theory}
\label{subsec:asymp_theta_pos}

We introduce some regularity conditions for our subsequent theoretical analysis. Let $\mathcal{J} \subset \mathcal{T}\times \mathcal{S}$ be the support of the joint density $p(t,\bm{s})$, $\mathcal{J}^{\circ}$ be the interior of $\mathcal{J}$, and $\partial\mathcal{J}$ be the boundary of $\mathcal{J}$.

\begin{assump}[Differentiability of the conditional mean outcome function]
	\label{assump:reg_diff}
	For any $(t,\bm{s}) \in \mathcal{T}\times \mathcal{S}$ so that $\mu(t,\bm{s})=\E(Y|T=t,\bm{S}=\bm{s})$ is well-defined, it holds that
	\begin{enumerate}[label=(\alph*)]
		\item $\mu(t,\bm{s})$ is at least four times continuously differentiable with respect to $t$.
		\item $\mu(t,\bm{s})$ and all of its partial derivatives are uniformly bounded on $\mathcal{T}\times \mathcal{S}$.
		
		\item There exist constants $\sigma,c_1>0$ such that $\mathrm{Var}(Y|T=t,\bm{S}=\bm{s}) >\sigma^2$ and $\E|Y|^{2+c_1} <\infty$.
	\end{enumerate} 
\end{assump}

\begin{assump}[Differentiability of the density functions]
	\label{assump:den_diff}
	For any $(t,\bm{s})\in \mathcal{J}$, it holds that
	\begin{enumerate}[label=(\alph*)]
		\item The joint density $p(t,\bm{s})$ and the conditional density $p_{T|\bm{S}}(t|\bm{s})$ are at least three times continuously differentiable with respect to $t$.
		
		\item $p(t,\bm{s})$, $p_{T|\bm{S}}(t|\bm{s})$, $p_{\bm{S}|T}(\bm{s}|t)$, as well as all of the partial derivatives of $p(t,\bm{s})$ and $p_{T|\bm{S}}(t|\bm{s})$ are bounded and continuous up to the boundary $\partial \mathcal{J}$.
		
		\item The support $\mathcal{T}$ of the marginal density $p_T(t)$ is compact and $p_T(t)$ is uniformly bounded away from 0 within $\mathcal{T}$.
	\end{enumerate}
\end{assump}

\begin{assump}[Regular kernel conditions]
	\label{assump:reg_kernel}
A kernel function $K:\mathbb{R} \to [0,\infty)$ is bounded and compactly supported on $[-1,1]$ with $\int_{\mathbb{R}} K(t)\,dt =1$ and $K(t)=K(-t)$. In addition, it holds that 
\begin{enumerate}[label=(\alph*)]
	\item $\kappa_j := \int_{\mathbb{R}} u^j K(u) \, du < \infty$ and $\nu_j := \int_{\mathbb{R}} u^j K^2(u) \, du < \infty$ for all $j=1,2,...$.
	\item $K$ is a second-order kernel, \emph{i.e.}, $\kappa_1 = 0$ and $\kappa_2 >0$.
	\item $\mathcal{K} = \left\{t'\mapsto \left(\frac{t'-t}{h}\right)^{k_1} K\left(\frac{t'-t}{h}\right): t\in \mathcal{T}, h>0, k_1=0,1\right\}$ is a bounded VC-type class of measurable functions on $\mathbb{R}$.
\end{enumerate}
\end{assump}

Assumptions~\ref{assump:reg_diff} and \ref{assump:den_diff} are common smoothness conditions for derivative estimation with kernel smoothing methods \citep{gasser1984estimating,mack1989derivative,wand1994kernel,wasserman2006all}. These assumptions can be relaxed by the H\"older continuity condition. The uniform lower bound on $p_T(t)$ within its support $\mathcal{T}$ in Assumption~\ref{assump:den_diff}(c) is only needed when we establish the uniform consistency of our proposed estimators and identify the derivative effect curve $\theta(t)$ when the positivity condition is violated. Assumption~\ref{assump:reg_kernel}(a,b) are more like properties than regularity conditions on those commonly used kernel functions, such as the triangular kernel $K(u)=(1-|u|)\,\mathbbm{1}_{\{|u|\leq 1\}}$ and Epanechnikov kernel $K(u)=\frac{3}{4}(1-|u|)\, \mathbbm{1}_{\{|u|\leq 1\}}$. Finally, the VC-type condition in Assumption~\ref{assump:reg_kernel}(c) is only required when we are interested in the uniform consistency of our proposed estimators over $\mathcal{T}$.

The following theorem summarizes the asymptotic properties of our proposed DR estimator $\hat{\theta}_{\mathrm{DR}}(t)$, and we defer its proof and other consistency results for $\hat{\theta}_{\mathrm{RA}}(t), \hat{\theta}_{\mathrm{IPW}}(t)$ in \autoref{app:proof_theta_pos}. 

\begin{theorem}[Asymptotic properties of $\hat{\theta}_{\mathrm{DR}}(t)$ under positivity]
	\label{thm:theta_pos}
	Suppose that Assumptions~\ref{assump:id_cond}, \ref{assump:reg_diff}, \ref{assump:den_diff}, \ref{assump:reg_kernel}, and \ref{assump:positivity} hold and $\hat{\mu},\hat{\beta}, \hat{p}_{T|\bm{S}}$ are constructed on a data sample independent of $\{(Y_i,T_i,\bm{S}_i)\}_{i=1}^n$. For any fixed $t\in \mathcal{T}$, we let $\bar{\mu}(t,\bm{s})$, $\bar{\beta}(t,\bm{s})$, and $\bar{p}_{T|\bm{S}}(t|\bm{s})$ be fixed bounded functions to which $\hat{\mu}(t,\bm{s})$, $\hat{\beta}(t,\bm{s})$ and $\hat{p}_{T|\bm{S}}(t|\bm{s})$ converge.
	If, in addition, we assume that 
	\begin{enumerate}[label=(\alph*)]
		\item $\bar{p}_{T|\bm{S}}$ satisfies Assumptions~\ref{assump:den_diff} and \ref{assump:positivity};
		\item either (i) ``$\,\bar{\mu}=\mu$ and $\bar{\beta}=\beta$'' with only $h\norm{\bar{\beta}(t,\bm{S}) - \beta(t,\bm{S})}_{L_2} \to 0$ or (ii) ``$\,\bar{p}_{T|\bm{S}} = p_{T|\bm{S}}$'';
		\item $\sqrt{nh} \sup\limits_{|u-t|\leq h} \norm{\hat{p}_{T|\bm{S}}(u|\bm{S}) - p_{T|\bm{S}}(u|\bm{S})}_{L_2} \left[\norm{\hat{\mu}(t,\bm{S}) - \mu(t,\bm{S})}_{L_2} + h \norm{\hat{\beta}(t,\bm{S}) - \beta(t,\bm{S})}_{L_2}\right] = o_P(1)$,
	\end{enumerate}
	then 
	$$\sqrt{nh^3}\left[\hat{\theta}_{\mathrm{DR}}(t) - \theta(t)\right] = \frac{1}{\sqrt{n}} \sum_{i=1}^n \left\{\phi_{h,t}\left(Y_i,T_i,\bm{S}_i;\bar{\mu}, \bar{\beta}, \bar{p}_{T|\bm{S}}\right) + \sqrt{h^3}\left[\bar{\beta}(t,\bm{S}_i) -  \mathbb{E}\left[\beta(t,\bm{S})\right] \right]\right\} +o_P(1)$$
	when $nh^7\to c_3$ for some finite number $c_3\geq 0$, where $$\phi_{h,t}\left(Y,T,\bm{S}; \bar{\mu},\bar{\beta}, \bar{p}_{T|\bm{S}}\right) = \frac{\left(\frac{T-t}{h}\right) K\left(\frac{T-t}{h}\right)}{\sqrt{h}\cdot \kappa_2\cdot \bar{p}_{T|\bm{S}}(T|\bm{S})}\cdot \left[Y - \bar{\mu}(t,\bm{S}) - (T-t)\cdot \bar{\beta}(t,\bm{S})\right].$$
	Furthermore, 
	$$\sqrt{nh^3}\left[\hat{\theta}_{\mathrm{DR}}(t) - \theta(t) - h^2 B_{\theta}(t)\right] \stackrel{d}{\to} \mathcal{N}\left(0,V_{\theta}(t)\right)$$
	with $V_{\theta}(t) = \mathbb{E}\left[\phi_{h,t}^2\left(Y,T,\bm{S};\bar{\mu}, \bar{\beta}, \bar{p}_{T|\bm{S}}\right)\right]$ and 
	\begin{align*}
		B_{\theta}(t) = 
		\begin{cases}
			\frac{\kappa_4}{6\kappa_2} \cdot \mathbb{E}_{\bm{S}}\left\{\frac{3\frac{\partial}{\partial t} p_{T|\bm{S}}(t|\bm{S}) \cdot \frac{\partial^2}{\partial t^2} \mu(t,\bm{S}) + p_{T|\bm{S}}(t|\bm{S})\left[ \frac{\partial^3}{\partial t^3} \mu(t,\bm{S}) - 3\frac{\partial}{\partial t} \log\bar{p}_{T|\bm{S}}(t|\bm{S}) \cdot \frac{\partial^2}{\partial t^2} \mu(t,\bm{S}) \right]}{\bar{p}_{T|\bm{S}}(t|\bm{S})} \right\} \; \text{ when } \bar{\mu}=\mu \text{ and } \bar{\beta}=\beta,\\
			\frac{\kappa_4}{6\kappa_2} \cdot \mathbb{E}_{\bm{S}}\left[\frac{\partial^3}{\partial t^3} \mu(t,\bm{S})\right] \quad\; \text{ when }\; \bar{p}_{T|\bm{S}} = p_{T|\bm{S}}.
		\end{cases}
	\end{align*}
\end{theorem}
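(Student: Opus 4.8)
The plan is to decompose $\hat\theta_{\mathrm{DR}}(t)-\theta(t)$ into a "stochastic" piece driven by the limiting nuisances $\bar\mu,\bar\beta,\bar p_{T|\bm S}$, a "drift/bias" piece, and a "remainder" piece capturing the error from estimating the nuisances, and then to show (i) the remainder is $o_P((nh^3)^{-1/2})$, (ii) the stochastic piece obeys a triangular-array CLT after centering, and (iii) the bias piece has the stated leading term $h^2 B_\theta(t)$. Concretely, write
$$
\hat\theta_{\mathrm{DR}}(t) = \mathbb{P}_n\!\left[\psi_{h,t}(\cdot;\hat\mu,\hat\beta,\hat p_{T|\bm S})\right],\qquad
\psi_{h,t}(Y,T,\bm S;\mu',\beta',p') = \tfrac{1}{h}\!\left\{\tfrac{(\frac{T-t}{h})K(\frac{T-t}{h})}{h\kappa_2 p'(T|\bm S)}\big[Y-\mu'(t,\bm S)-(T-t)\beta'(t,\bm S)\big] + h\,\beta'(t,\bm S)\right\}.
$$
Because $\hat\mu,\hat\beta,\hat p_{T|\bm S}$ are fitted on an independent split (cross-fitting), I would condition on that split so that $\mathbb{G}_n$-type empirical-process arguments reduce to ordinary i.i.d.\ sums; the VC-type hypothesis in Assumption~\ref{assump:reg_kernel}(c) lets me control the kernel-indexed class and pass the conditional statements back to unconditional ones. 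The first step is the algebraic identity
$$
\hat\theta_{\mathrm{DR}}(t)-\theta(t)
= \underbrace{(\mathbb{P}_n-\P)\psi_{h,t}(\cdot;\bar\mu,\bar\beta,\bar p_{T|\bm S})}_{\text{linear term}}
+ \underbrace{\P\psi_{h,t}(\cdot;\bar\mu,\bar\beta,\bar p_{T|\bm S})-\theta(t)}_{\text{bias term}}
+ \underbrace{(\mathbb{P}_n-\P)\big[\psi_{h,t}(\cdot;\hat\mu,\hat\beta,\hat p)-\psi_{h,t}(\cdot;\bar\mu,\bar\beta,\bar p)\big]}_{\text{empirical-process remainder}}
+ \underbrace{\P\big[\psi_{h,t}(\cdot;\hat\mu,\hat\beta,\hat p)-\psi_{h,t}(\cdot;\bar\mu,\bar\beta,\bar p)\big]}_{\text{drift remainder}}.
$$

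For the bias term I would Taylor-expand $Y$ around $T=t$ using Assumption~\ref{assump:reg_diff}(a) (four derivatives), integrate against $\tfrac1h (\tfrac{T-t}{h})K(\tfrac{T-t}{h})$ over the conditional law of $T$ given $\bm S$, and use the kernel moments $\kappa_1=0$, $\kappa_2>0$, $\kappa_4<\infty$ from Assumption~\ref{assump:reg_kernel}. When $\bar p_{T|\bm S}=p_{T|\bm S}$ the weighting is exact and the $O(1)$ and $O(h)$ contributions cancel against the $h\beta$-correction, leaving the clean $\tfrac{\kappa_4}{6\kappa_2}\mathbb{E}_{\bm S}[\partial_t^3\mu(t,\bm S)]\,h^2$; when instead $\bar\mu=\mu,\bar\beta=\beta$, the local polynomial subtraction kills the first two orders of the residual $Y-\mu-(T-t)\beta$, the leftover involves $\partial_t^2\mu,\partial_t^3\mu$ convolved with the mis-specified weight ratio $p_{T|\bm S}/\bar p_{T|\bm S}$ and its log-derivative — this is exactly the first branch of $B_\theta(t)$. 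The condition $nh^7\to c_3$ makes $\sqrt{nh^3}\cdot h^2 = \sqrt{nh^7}\to\sqrt{c_3}$, so after subtracting $h^2 B_\theta(t)$ the residual bias is genuinely negligible; I would also need to check that higher-order bias terms are $o((nh^3)^{-1/2})$, which follows from the same expansion plus the smoothness in Assumption~\ref{assump:reg_diff}(b). The drift remainder is handled by a product-rate (Neyman-orthogonality) bound: expanding the difference in nuisances to first order, the cross term between the density error and the outcome error contributes $\sup_{|u-t|\le h}\|\hat p(u|\bm S)-p(u|\bm S)\|_{L_2}\cdot[\|\hat\mu(t,\bm S)-\mu(t,\bm S)\|_{L_2}+h\|\hat\beta(t,\bm S)-\beta(t,\bm S)\|_{L_2}]$, which is $o_P((nh^3)^{-1/2})$ by hypothesis~(c); the "pure" outcome-error term vanishes because the kernel weight integrates against $Y-\bar\mu-(T-t)\bar\beta$ which (by orthogonality, when $\bar\mu=\mu,\bar\beta=\beta$) has conditional mean of order $h^2$, while the "pure" density-error term is absorbed by condition~(b)(ii); when branch~(b)(i) holds the surviving first-order density term is controlled by the stated $h\|\bar\beta-\beta\|_{L_2}\to 0$ slack.

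For the linear term I would apply a Lindeberg–Feller CLT to the triangular array $\{h\,\psi_{h,t}(\bm U_i;\bar\mu,\bar\beta,\bar p)\}$: its variance, after scaling by $\sqrt{nh^3}$, converges because $\mathrm{Var}\big(\tfrac{(\frac{T-t}{h})K(\frac{T-t}{h})}{\sqrt h\,\kappa_2\,\bar p}(Y-\bar\mu-(T-t)\bar\beta)\big)\to V_\theta(t)$ by a change of variables $u=(T-t)/h$ together with Assumptions~\ref{assump:reg_diff}(c) and \ref{assump:den_diff}; the $h\beta$-correction contributes the extra term $\sqrt{h^3}(\bar\beta(t,\bm S_i)-\mathbb{E}[\beta(t,\bm S)])$ in the influence expansion but is itself $o_P(1)$ in the scaled sum, consistent with the stated representation. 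The Lindeberg condition follows from $\mathbb{E}|Y|^{2+c_1}<\infty$ and boundedness of $K$ and $1/\bar p$. Finally, the empirical-process remainder is shown $o_P((nh^3)^{-1/2})$ by a maximal inequality over the VC-type class $\mathcal K$ (Assumption~\ref{assump:reg_kernel}(c)) times the shrinking $L_2$-radius of the nuisance estimators, again using the independence of the fitting sample to treat $\hat\mu,\hat\beta,\hat p$ as fixed. Assembling the four pieces yields the asymptotic linear expansion and then, by Slutsky, the stated asymptotic normality. The main obstacle I anticipate is the bias term in the mis-specified-density branch: getting the precise constant $\tfrac{\kappa_4}{6\kappa_2}$ and the exact combination of $\partial_t p_{T|\bm S}$, $\partial_t\log\bar p_{T|\bm S}$, $\partial_t^2\mu$, $\partial_t^3\mu$ requires a careful third-order Taylor expansion of the \emph{entire} integrand $\big(p_{T|\bm S}(u|\bm s)/\bar p_{T|\bm S}(u|\bm s)\big)\big[\mu(u,\bm s)-\mu(t,\bm s)-(u-t)\beta(t,\bm s)\big]$ in $u$ around $t$, tracking which terms survive after multiplying by the odd factor $(\tfrac{u-t}{h})K(\tfrac{u-t}{h})$ and integrating — the bookkeeping there, rather than any conceptual difficulty, is where the real work lies.
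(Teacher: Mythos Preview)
Your decomposition into linear, bias, empirical-process remainder, and drift remainder is exactly the paper's strategy (its Terms~VI--XI), and your plan for the bias via third-order Taylor expansion and for the linear term via a Lyapunov/Lindeberg CLT matches the paper's execution. Two small corrections are worth flagging. First, for the empirical-process remainder you do not need the VC-type class or a maximal inequality: because the nuisances are fit on an independent split, conditioning on that split reduces each $(\mathbb{P}_n-\P)[\psi(\hat\cdot)-\psi(\bar\cdot)]$ piece to an i.i.d.\ average with conditional second moment bounded by the $L_2$ nuisance error, so plain Markov/Chebyshev suffices (the paper reserves the VC condition for uniform-in-$t$ side remarks). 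Second, in your drift-remainder sketch the two ``pure'' mechanisms are swapped: the residual $Y-\bar\mu-(T-t)\bar\beta$ with conditional mean $O(h^2)$ is what controls the \emph{density}-error piece in branch~(b)(i) (since that factor multiplies $1/\hat p-1/\bar p$), while the \emph{outcome}-error pieces (derivatives in $\mu,\beta$) vanish exactly by odd-moment kernel symmetry in branch~(b)(ii); in branch~(b)(i) the outcome-error pieces are instead handled by noting that condition~(c), combined with $\sup_{|u-t|\le h}\|\hat p-p\|_{L_2}=O_P(1)$ when $\bar p\neq p$, forces $\|\hat\mu-\mu\|_{L_2}=o_P((nh)^{-1/2})$ and $\|\hat\beta-\beta\|_{L_2}=o_P((nh^3)^{-1/2})$. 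With those two adjustments your plan is the paper's proof.
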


As established by \autoref{thm:theta_pos}, the proposed estimator $\hat{\theta}_{\mathrm{DR}}(t)$ achieves doubly robust consistency for $\theta(t)$, provided that either the conditional density model $\bar{p}_{T|\bm{S}}$ or the outcome model $\bar{\mu},\bar{\beta}$ is correctly specified. Unlike the DR estimator $\hat{m}_{\mathrm{DR}}(t)$ of the dose-response curve $m(t)$, which only requires the specification of $\mu(t,\bm{s})$ in the outcome model, the DR estimator $\hat{\theta}_{\mathrm{DR}}(t)$ of the derivative effect $\theta(t)$ necessitates specifying both $\mu(t,\bm{s})$ and its partial derivative $\beta(t,\bm{s})=\frac{\partial}{\partial t}\mu(t,\bm{s})$ in the outcome model. This added complexity is essential for accurately estimating derivatives.

We require in \autoref{thm:theta_pos} and other subsequent results that $\hat{\mu},\hat{\beta},\hat{p}_{T|\bm{S}}$ are obtained from a data sample independent of $\left\{(Y_i,T_i,\bm{S}_i)\right\}_{i=1}^n$. This requirement avoids the need for uniform entropy conditions on $\bar{\mu}, \bar{p}_{T|\bm{S}}$ imposed by \cite{kennedy2017non}. When no additional data sample is available, these nuisance function estimators $\hat{\mu},\hat{\beta},\hat{p}_{T|\bm{S}}$ can still be estimated using cross-fitting techniques, allowing for valid construction of the associated estimators of $\theta(t)$; see \autoref{app:imple} for the detailed procedures. Importantly, the established rates of convergence in \autoref{thm:theta_pos} remain unchanged for the cross-fitted estimators.

Finally, the estimation bias of $\hat{\theta}_{\mathrm{DR}}(t)$ is of order $O(h^2)$ due to the use of a standard second-order kernel. While a lower-bias estimator could be derived under Assumptions~\ref{assump:reg_diff} and \ref{assump:den_diff} by employing a higher-order kernel, we still recommend our proposed estimator $\hat{\theta}_{\mathrm{DR}}(t)$ for practical use, as higher-order kernels introduce greater complexity and are more challenging to implement.

\subsection{Statistical Inference on $\theta(t)$}
\label{subsec:theta_inference_pos}

To leverage the asymptotic normality of $\hat{\theta}_{\mathrm{DR}}(t)$ for pointwise inference on $\theta(t)$ in practice, we need to address two additional challenges: (i) estimate the asymptotic variance $V_{\theta}(t)$; and (ii) select a proper bandwidth parameter $h>0$.

For challenge (i), we estimate $V_{\theta}(t)$ in \autoref{thm:theta_pos} by the sample variance of the influence function $\phi_{h,t}$ or the asymptotic linear form as:
\begin{equation}
\label{theta_var_est}
\hat{V}_{\theta}(t) = \frac{1}{n} \sum_{i=1}^n \left\{\phi_{h,t}\left(Y_i,T_i,\bm{S}_i;\hat{\mu}, \hat{\beta}, \hat{p}_{T|\bm{S}}\right) + \sqrt{h^3}\left[\hat{\beta}(t,\bm{S}_i) - \hat{\theta}_{\mathrm{DR}}(t) \right]\right\}^2.
\end{equation}
The cross-fitted version of $\hat{V}_{\theta}(t)$ can be found in \eqref{theta_var_est_crossfit} of \autoref{app:imple}. Notice that the second part $\sqrt{h^3}\left[\hat{\beta}(t,\bm{S}_i) - \hat{\theta}_{\mathrm{DR}}(t) \right]$ in \eqref{theta_var_est} is asymptotically negligible. We keep this part mainly for a more conservative estimate of the asymptotic variance $V_{\theta}(t)$ to guarantee a better empirical coverage of the resulting pointwise confidence interval.

For challenge (ii), the optimal bandwidth that minimizes the asymptotic mean squared error of $\hat{\theta}_{\mathrm{DR}}(t)$ is of order $O\left(n^{-\frac{1}{7}}\right)$. However, to construct a valid Wald-type confidence interval, an undersmoothing bandwidth $h$ is typically required for the first-order bias of $\hat{\theta}_{\mathrm{DR}}(t)$ to be asymptotically negligible, \emph{i.e.}, $h^2\sqrt{nh^3}=o(1)$ \citep[Section~5.7]{wasserman2006all}. Therefore, we recommend choosing the bandwidth $h$ to be of order $O\left(n^{-\frac{1}{5}}\right)$, aligning with the outputs of standard bandwidth selection methods for nonparametric regression \citep{wand1994kernel,li2004cross}.

Finally, the $(1-\tau)$-level confidence interval for $\theta(t)$ is thus given by $\left[\hat{\theta}_{\mathrm{DR}}(t) \pm q_{1-\frac{\tau}{2}} \sqrt{\frac{\hat{V}_{\theta}(t)}{nh^3}} \right]$, where $q_{1-\frac{\tau}{2}}$ is the $\left(1-\frac{\tau}{2}\right)$ quantile of the standard normal distribution $\mathcal{N}(0,1)$.

\begin{remark}[Uniform inference via multiplier bootstrap]
	It is also statistically valid to conduct uniform inference on $\theta(t)$ over $t\in \mathcal{T}$ via multiplier bootstrap under our regularity conditions in \autoref{thm:theta_pos}. Specifically, let $\left\{Z_i\right\}_{i=1}^n$ be a sequence of i.i.d. random variables independent of the observed data $\left\{(Y_i,T_i,\bm{S}_i)\right\}_{i=1}^n$ with $\E(Z_i)=\mathrm{Var}(Z_i)=1$ and sub-exponential tails. Then, we sample $B$ different i.i.d. datasets $\left\{Z_i^{(b)}\right\}_{i=1}^n, b=1,...,B$ and compute the bootstrap DR estimators of $\theta(t)$ as:
	$$\hat{\theta}_{\mathrm{DR}}^{(b)*}(t) = \frac{1}{nh}\sum_{i=1}^n Z_i^{(b)}\left\{ \frac{\left(\frac{T_i-t}{h}\right)K\left(\frac{T_i-t}{h}\right) }{h\cdot \kappa_2\cdot \hat{p}_{T|\bm{S}}(T_i|\bm{S}_i)} \left[Y_i - \hat{\mu}(t,\bm{S}_i) - (T_i-t)\cdot \hat{\beta}(t,\bm{S}_i)
	\right]+ h\cdot \hat{\beta}(t,\bm{S}_i) \right\}$$
	for $b=1,...,B$. If $\hat{Q}(1-\tau)$ is the $(1-\tau)$ quantile of the sequence $\left\{\sup_{t\in \mathcal{T}} \sqrt{nh^3}\left|\frac{\hat{\theta}_{\mathrm{DR}}^{(b)*}(t) - \hat{\theta}_{\mathrm{DR}}(t)}{\sqrt{\hat{V}_{\theta}(t)}}\right|\right\}_{b=1}^B$, then the $(1-\tau)$ uniform confidence band of $\theta(t)$ is given by
	$\left[\hat{\theta}_{\mathrm{DR}}(t) \pm \hat{Q}(1-\tau) \sqrt{\frac{\hat{V}_{\theta}(t)}{nh^3}} \right]$. The asymptotic validity of this confidence band under cross-fitting follows from Theorem 4.2 in \cite{fan2022estimation}; see also Section S4 in \cite{colangelo2020double}.
\end{remark}

\subsection{Nonparametric Efficiency Guarantee for $\hat{\theta}_{\mathrm{DR}}(t)$}
\label{subsec:nonp_eff}

We now study the nonparametric efficiency of our DR estimator $\hat{\theta}_{\mathrm{DR}}(t)$, providing extra insights into its formulation. As discussed in \autoref{subsec:related_works}, a key challenge in deriving a nonparametric efficiency bound for $\theta(t)$ is its lack of pathwise differentiability \citep{bickel1998efficient,diaz2013targeted}, which implies that the efficient influence function relative to a nonparametric model does not always exist. Our solution is to derive the efficient influence function for a smooth functional $\P\mapsto \varpi_{h,t}(\P):= \mathbb{E}\left[\frac{Y \cdot \left(\frac{T-t}{h}\right) K\left(\frac{T-t}{h}\right)}{h^2\cdot \kappa_2 \cdot p_{T|\bm{S}}(T|\bm{S})}\right]$ for a fixed bandwidth $h>0$ \citep{van2018cv,takatsu2022debiased}. This functional, defined by the IPW form of $\theta(t)$, smoothly approximates $\theta(t)$ with $\varpi_{h,t}(\P) - \theta(t) = O(h^2)$ as shown in \autoref{app:theta_IPW_pos} and remains pathwise differentiable for any fixed $h>0$. We establish in the following theorem that our DR estimator $\hat{\theta}_{\mathrm{DR}}(t)$ attains the same asymptotic variance $V_{\theta}(t)$ as the (approximated) one obtained from the efficient influence function of $\varpi_{h,t}(\P)$, up to a finite-sample bias of order $O(h^2)$.

\begin{theorem}[Efficient influence function]
\label{thm:nonp_eff}
Suppose that Assumptions~\ref{assump:positivity}, \ref{assump:reg_diff}, and \ref{assump:reg_kernel} hold for the nonparametric model containing $\P$. For any fixed bandwidth $h>0$ and $t\in \mathcal{T}$, the efficient influence function of $\varpi_{h,t}(\P)$ relative to this model is given by
\begin{align}
\label{theta_eif}
\begin{split}
&\int_{\mathcal{T}} \frac{\mu(t_1,\bm{S})\left(\frac{t_1-t}{h}\right)  K\left(\frac{t_1-t}{h}\right)}{h^2 \cdot \kappa_2} \, dt_1 + \frac{\left[Y - \mu(T,\bm{S})\right] \left(\frac{T-t}{h}\right) K\left(\frac{T-t}{h}\right)}{h^2 \cdot \kappa_2\cdot p_{T|\bm{S}}(T|\bm{S})} - \varpi_{h,t}(\P) \\
&= \beta(t,\bm{S}) + \frac{\left[Y - \mu(T,\bm{S})\right] \left(\frac{T-t}{h}\right) K\left(\frac{T-t}{h}\right)}{h^2 \cdot \kappa_2\cdot p_{T|\bm{S}}(T|\bm{S})} - \varpi_{h,t}(\P) + h^2 \cdot \tilde{B}_{\theta}(t),
\end{split}
\end{align}
where $\tilde{B}_{\theta}(t) = \frac{\kappa_4}{6\kappa_2} \cdot \frac{\partial^3}{\partial t^3} \mu(t,\bm{s}) + o(h)$. Under the setup of \autoref{thm:theta_pos}, the resulting estimator of $\theta(t)$ can be written as:
\begin{equation}
\label{theta_DR2}
\hat{\theta}_{\mathrm{DR,2}}(t) = \frac{1}{nh}\sum_{i=1}^n \left\{ \frac{\left(\frac{T_i-t}{h}\right)K\left(\frac{T_i-t}{h}\right) }{h\cdot \kappa_2\cdot \hat{p}_{T|\bm{S}}(T_i|\bm{S}_i)} \left[Y_i - \hat{\mu}(T_i,\bm{S}_i)
\right]+ h\cdot \hat{\beta}(t,\bm{S}_i) \right\},
\end{equation}
which has the same doubly robust properties and asymptotic variance as our proposed one $\hat{\theta}_{\mathrm{DR}}(t)$.
\end{theorem}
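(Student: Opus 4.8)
\textbf{Proof proposal for Theorem~\ref{thm:nonp_eff}.}

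The plan is to compute the efficient influence function (EIF) of the smooth functional $\varpi_{h,t}(\P)$ by the standard pathwise-derivative approach, using the convenient feature that for a fixed bandwidth $h$ this is a genuinely differentiable parameter. First I would write $\varpi_{h,t}(\P) = \E\left[\frac{Y\,(\frac{T-t}{h})K(\frac{T-t}{h})}{h^2\kappa_2\,p_{T|\bm{S}}(T|\bm{S})}\right]$ and decompose it, using iterated expectations over $Y$ given $(T,\bm{S})$ and then over $T$ given $\bm{S}$, into $\E\left[\int_{\mathcal{T}}\frac{\mu(t_1,\bm{S})(\frac{t_1-t}{h})K(\frac{t_1-t}{h})}{h^2\kappa_2}\,dt_1\right]$, which shows the weight $1/p_{T|\bm{S}}$ cancels against the conditional density of $T$ in the outer expectation. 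Then I would parametrize a regular one-dimensional submodel $\{\P_\epsilon\}$ with score $g(Y,T,\bm{S}) = g_1(Y|T,\bm{S}) + g_2(T|\bm{S}) + g_3(\bm{S})$ (the orthogonal decomposition of the tangent space of the nonparametric model into the $Y|T,\bm{S}$, $T|\bm{S}$, and $\bm{S}$ pieces), differentiate $\varpi_{h,t}(\P_\epsilon)$ at $\epsilon=0$, and match the result against $\E[\psi(Y,T,\bm{S})\,g(Y,T,\bm{S})]$ to read off $\psi$. This yields three contributions: differentiating the $\bm{S}$-marginal gives the term $\int_{\mathcal{T}}\frac{\mu(t_1,\bm{S})(\frac{t_1-t}{h})K(\frac{t_1-t}{h})}{h^2\kappa_2}\,dt_1 - \varpi_{h,t}(\P)$ (its $g_3$-projection); differentiating $\mu$, i.e. the $Y|T,\bm{S}$ conditional, gives $\frac{[Y-\mu(T,\bm{S})](\frac{T-t}{h})K(\frac{T-t}{h})}{h^2\kappa_2\,p_{T|\bm{S}}(T|\bm{S})}$; and the contribution from perturbing $p_{T|\bm{S}}$ itself cancels, because after the iterated-expectation simplification $\varpi_{h,t}(\P)$ depends on $p_{T|\bm{S}}$ only through the density that is being integrated against, and the explicit $1/p_{T|\bm{S}}$ factor is attached to a mean-zero residual. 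Summing gives the first line of \eqref{theta_eif}; since the nonparametric tangent space is the whole of $L^2_0(\P)$, this candidate influence function is automatically the efficient one.

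Next I would establish the second (reparametrized) line of \eqref{theta_eif}: the claim $\int_{\mathcal{T}}\frac{\mu(t_1,\bm{S})(\frac{t_1-t}{h})K(\frac{t_1-t}{h})}{h^2\kappa_2}\,dt_1 = \beta(t,\bm{S}) + h^2\tilde B_\theta(t)$ with $\tilde B_\theta(t) = \frac{\kappa_4}{6\kappa_2}\frac{\partial^3}{\partial t^3}\mu(t,\bm{s}) + o(h)$. This is a routine change of variables $u=(t_1-t)/h$ followed by a fourth-order Taylor expansion of $t_1\mapsto \mu(t_1,\bm{S})$ about $t_1=t$, valid under Assumption~\ref{assump:reg_diff}(a) (four continuous $t$-derivatives): the $u^0$ and $u^2$ moments annihilate via $\kappa_1=\kappa_3=0$ and $\int uK(u)\,du=0$, the $u^1$ term integrates against $u\cdot uK(u)$ to produce $\kappa_2\beta(t,\bm{S})/\kappa_2 = \beta(t,\bm{S})$, and the leading remainder is the $u^3$ term giving $\frac{h^2}{6}\frac{\partial^3\mu}{\partial t^3}\frac{\kappa_4}{\kappa_2}$ (note $\int u^3\cdot uK(u)\,du = \kappa_4$), with the next order $o(h)$ absorbed. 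I should be slightly careful here about an $\E_{\bm{S}}$ versus pointwise-in-$\bm{s}$ reading of $\tilde B_\theta(t)$ — as written in the statement it has a free $\bm{s}$, so the identity is meant pointwise in $\bm{S}$ inside the EIF and only gets integrated when one forms the estimator.

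Finally, for the last assertion — that the plug-in estimator $\hat\theta_{\mathrm{DR},2}(t)$ built from this EIF (replacing $\mu(T,\bm{S})$ inside the residual rather than $\mu(t,\bm{S})$, and dropping the $(T-t)\beta$ correction term) has the same double robustness and asymptotic variance as $\hat\theta_{\mathrm{DR}}(t)$ — I would argue as follows. Writing $Y - \mu(T,\bm{S}) = [Y - \mu(t,\bm{S}) - (T-t)\beta(t,\bm{S})] + [\mu(t,\bm{S}) + (T-t)\beta(t,\bm{S}) - \mu(T,\bm{S})]$, the first bracket is exactly the residual appearing in $\hat\theta_{\mathrm{DR}}(t)$, and the second bracket is an $O((T-t)^2) = O(h^2)$ deterministic-order quantity on the kernel support; multiplied by the kernel weight $\frac{(\frac{T-t}{h})K(\frac{T-t}{h})}{h^2\kappa_2 p_{T|\bm{S}}}$ and averaged it contributes a term that, by the same $\mathbb{G}_n$/variance bookkeeping as in the proof of Theorem~\ref{thm:theta_pos}, adds only an $O(h^2)$ bias and an asymptotically negligible variance inflation of order $o((nh^3)^{-1})$. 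Hence the $\sqrt{nh^3}$-scaled asymptotic linear expansion, and therefore the limiting variance $V_\theta(t)$, coincides, and the double-robustness argument is unchanged because both forms have the identical Neyman-orthogonal structure as $h\to 0$. The main obstacle I anticipate is making the cancellation of the $p_{T|\bm{S}}$-perturbation in the EIF derivation fully rigorous — one has to be careful that the functional, once simplified by iterated expectations, really has no residual dependence on $p_{T|\bm{S}}$ beyond the density it integrates against, so that the naive "differentiate through" computation does not drop or double-count a term; a clean way is to verify directly that $\E[\psi\cdot g_2] = \partial_\epsilon\varpi_{h,t}(\P_\epsilon)|_{\epsilon=0}$ restricted to $T|\bm{S}$-scores, rather than trusting the bookkeeping.
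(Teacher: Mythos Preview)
Your proposal is correct and arrives at the same result as the paper, but the organization differs in two places worth noting.

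For the EIF derivation, the paper keeps the functional in its original form $\varpi_{h,t}(\P_\epsilon) = \int \frac{y\,(\frac{t_1-t}{h})K(\frac{t_1-t}{h})}{h^2\kappa_2\,p_\epsilon(t_1|\bm{s})}\,p_\epsilon(y,t_1,\bm{s})\,dy\,dt_1\,d\bm{s}$, takes a single submodel $p_\epsilon = p(1+\epsilon g)$, and splits $\varpi_{h,t}(\P_\epsilon)-\varpi_{h,t}(\P)$ into the piece coming from perturbing $1/p_\epsilon(t_1|\bm{s})$ and the piece from perturbing $p_\epsilon(y,t_1,\bm{s})$; a Taylor expansion of the reciprocal handles the first piece. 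Your route instead simplifies the functional \emph{first} via iterated expectations to $\E_{\bm{S}}\!\left[\int \mu(t_1,\bm{S})\cdots\,dt_1\right]$ and then exploits the three-way score decomposition, so the $T\mid\bm{S}$ score contributes nothing by inspection. Both are standard pathwise-derivative computations; yours is a bit cleaner because the $p_{T|\bm{S}}$ cancellation is immediate rather than emerging from the algebra, which also dissolves the ``obstacle'' you flagged.

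For the equivalence of asymptotic variances, the paper computes $\mathrm{Var}(\phi_{h,t}^{\mathrm{DR},2}) - \mathrm{Var}(\phi_{h,t}^{\mathrm{DR}})$ directly as an integral, Taylor-expands $\bar\mu(t+uh,\bm{S})$ about $t$, and shows the difference is $O(h^2)$ (and $O(h^4)$ when $\bar\mu=\mu,\bar\beta=\beta$). Your decomposition $Y-\bar\mu(T,\bm{S}) = [Y-\bar\mu(t,\bm{S})-(T-t)\bar\beta(t,\bm{S})] + O((T-t)^2)$ is the same idea packaged differently; it gives the bias and variance contributions of the extra piece without writing out the full variance difference. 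Either route suffices, though the paper's explicit calculation also tracks the sharper $O(h^4)$ rate under correct specification.
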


The proof of \autoref{thm:nonp_eff} is in \autoref{app:nonp_eff_proof}. We recommend defining the DR estimator of $\theta(t)$ as \eqref{theta_DR} or \eqref{theta_DR2} rather than the one-step estimator based on the efficient influence function \eqref{theta_eif} because of two reasons. First, evaluating the integral in \eqref{theta_eif} is computationally challenging in practice. Second, the approximation error of \eqref{theta_DR} or \eqref{theta_DR2} relative to \eqref{theta_eif} is of the same order $O(h^2)$ as the approximation error of \eqref{theta_eif} to the target parameter $\theta(t)$.

\section{Identification and Inconsistency Issues Without Positivity}
\label{sec:inconsistency}

This section discusses the general identification issue on the dose-response curve $t\mapsto m(t)=\E\left[Y(t)\right]$ and its derivative effect curve $t\mapsto \theta(t)=\frac{d}{dt}\E\left[Y(t)\right]$ when the positivity condition (Assumption~\ref{assump:positivity}) is violated. 
We propose an additive structural assumption on the outcome model in \eqref{add_conf_model} to address the identification issue. However, even under this additive confounding model \eqref{add_conf_model}, the IPW and DR estimators of $m(t)$ and $\theta(t)$ remain inconsistent without the positivity condition due to the support discrepancy. 
To resolve this inconsistency, we leverage techniques from nonparametric set estimation to propose our bias-corrected IPW and DR estimators.

\subsection{Identification Issue Without Positivity}
\label{subsec:id_issue_nopos}

When the positivity condition (Assumption~\ref{assump:positivity}) fails to hold, the conditional mean outcome (or regression) function $\mu(t,\bm{s}) = \mathbb{E}(Y|T=t,\bm{S}=\bm{s})$ is not well-defined in those regions of $\mathcal{T}\times\mathcal{S}$ that lie outside the support $\mathcal{J}$ of the joint density $p(t,\bm{s})$. Hence, the G-computation formulae $\E\left[\mu(t,\bm{S})\right]$ and $\E\left[\frac{\partial}{\partial t} \mu(t,\bm{S})\right]$ are ill-defined and cannot be used to identify $m(t)$ and $\theta(t)$, respectively.

Similarly, identifying $m(t)$ and $\theta(t)$ through the IPW formulae requires the positivity condition as well, because we demonstrate in the proofs of \autoref{thm:theta_pos} and Proposition~\ref{prop:m_pos} that 
\begin{equation}
\label{IPW_forms}
\lim\limits_{h\to 0} \E\left[\frac{Y \cdot K\left(\frac{T-t}{h}\right)}{h\cdot p_{T|\bm{S}}(T|\bm{S})} \right] = \E\left[\mu(t,\bm{S})\right] \quad \text{ and } \quad \lim\limits_{h\to 0} \E\left[\frac{Y \left(\frac{T-t}{h}\right) K\left(\frac{T-t}{h}\right)}{\kappa_2 h^2\cdot p_{T|\bm{S}}(T|\bm{S})} \right] = \E\left[\frac{\partial}{\partial t}\mu(t,\bm{S})\right].
\end{equation}

Therefore, it is impossible in general to identify the causal dose-response curve $t\mapsto m(t)=\E\left[Y(t)\right]$ and its derivative effect curve $t\mapsto \theta(t) = \frac{d}{dt}\E\left[Y(t)\right]$ without further identification or structural assumptions when the positivity condition is violated.

\subsubsection{Remedy: Identification Under an Additive Structural Model}
\label{subsec:id_add_mod}

While the identifications of $m(t)$ and $\theta(t)$ are infeasible without positivity in general, they are indeed identifiable under an additive structural assumption on the potential outcome model as $Y(t) = \bar{m}(t) + \eta(\bm{S}) +\epsilon$ for any $t\in \mathcal{T}$ \citep{zhang2024nonparametric},
which, under the consistency condition (Assumption~\ref{assump:id_cond}(a)), is equivalent to the following additive confounding model
\begin{align}
	\label{add_conf_model}
	\begin{split}
		Y &= \bar{m}(T) + \eta(\bm{S}) +\epsilon,
	\end{split}
\end{align}
where $\bar{m}:\mathcal{T}\to \mathbb{R}$ and $\eta:\mathcal{S}\to \mathbb{R}$ are deterministic functions, $\E(\epsilon|T,\bm{S})=0$, $\mathrm{Var}(\epsilon|T,\bm{S}) > \sigma^2 >0$, and $\E|\epsilon|^{2+c_1} <\infty$ as in Assumption~\ref{assump:reg_diff}(c). Such an additive model is a common working model in the context of spatial statistics \citep{paciorek2010importance,schnell2020}, where the covariates $\bm{S}\in \mathcal{S}\subset \mathbb{R}^d$ consist of spatial locations or other spatially correlated confounding variables. More broadly, it also appears in the literature of nonparametric \citep{stone1985additive} and high-dimensional statistics \citep{meier2009high,guo2019decorrelated}. 

Under model \eqref{add_conf_model}, the dose-response curve $m(t)$ and its derivative $\theta(t)$ become
\begin{equation}
\label{m_theta_additive}
m(t)=\mathbb{E}\left[Y(t)\right] = \bar{m}(t) + \E\left[\eta(\bm{S})\right] \quad \text{ and } \quad \theta(t)=m'(t)=\bar{m}'(t).
\end{equation}
They are identifiable from the observable data through the formulas
\begin{align}
\label{id_m}
\begin{split}
	\theta(t) &= \bar{m}'(t) = \mathbb{E}\left[\frac{\partial}{\partial t}\mu(T,\bm{S})\Big|T=t\right],\\
	m(t) &= \mathbb{E}\left[Y+\int_T^t\theta(\tilde{t})\, d\tilde{t}\right] = \mathbb{E}\left\{Y+\int_T^t\mathbb{E}\left[\frac{\partial}{\partial t}\mu(T,\bm{S})\Big|T=\tilde{t}\right]\, d\tilde{t}\right\}.
\end{split}
\end{align}
For completeness, we also summarize this identification theory as Proposition~\ref{prop:id_additive} in \autoref{app:id_additive}. As a result, the RA estimator of $\theta(t)$ under model \eqref{add_conf_model} without assuming the positivity condition is given by
\begin{equation}
	\label{theta_RA_corrected}
	\hat{\theta}_{\mathrm{C,RA}}(t) = \int \hat{\beta}(t,\bm{s}) \, d\hat{F}_{\bm{S}|T}(\bm{s}|t),
\end{equation}
where $\hat{\beta}(t,\bm{s})$ and $\hat{F}_{\bm{S}|T}(\bm{s}|t)$ are (consistent) estimators of of $\beta(t,\bm{s}) = \frac{\partial}{\partial t} \mu(t,\bm{S})$ and the conditional cumulative distribution function (CDF) $\P_{\bm{S}|T}(\bm{s}|t):= F_{\bm{S}|T}(\bm{s}|t)$, respectively. By \eqref{id_m}, the integral RA estimator of $m(t)$ under model \eqref{add_conf_model} can be written as:
\begin{equation}
	\label{m_RA_corrected}
	\hat{m}_{\mathrm{C,RA}}(t) = \frac{1}{n}\sum_{i=1}^n \left[Y_i + \int_{\tilde{t}=T_i}^{\tilde{t}=t} \hat{\theta}_{\mathrm{C,RA}}(\tilde{t})\, d\tilde{t} \right].
\end{equation}
Both estimators \eqref{theta_RA_corrected} and \eqref{m_RA_corrected} are consistent even when the positivity condition is violated \citep{zhang2024nonparametric}; see also \autoref{thm:theta_nopos} and \autoref{app:m_nopos_proof}. In the sequel, we will discuss both the challenges and solutions for extending these RA estimators to IPW and DR estimators of $\theta(t)$ and $m(t)$ under model \eqref{add_conf_model}.

\subsection{Estimation Issues of IPW Estimators Under the Additive Confounding Model \eqref{add_conf_model}}
\label{subsec:IPW_bias}

Although the causal quantities $m(t)$ and $\theta(t)$ are identifiable under the additive confounding model \eqref{add_conf_model}, the IPW formulae \eqref{IPW_forms} are indeed biased without positivity due to the support discrepancy between the conditional density $p_{\bm{S}|T}(\bm{s}|t)$ for $t\in \mathcal{T}$ and the marginal density $p_{\bm{S}}(\bm{s})$. To examine these biases, we can equivalently analyze the following oracle IPW estimators of $m(t)$ and $\theta(t)$ defined as:
\begin{equation}
	\label{IPW_oracle}
	\tilde{m}_{\mathrm{IPW}}(t) = \frac{1}{nh}\sum_{i=1}^n \frac{Y_i\cdot K\left(\frac{T_i-t}{h}\right)}{p_{T|\bm{S}}(T_i|\bm{S}_i)} \quad \text{ and } \quad \tilde{\theta}_{\mathrm{IPW}}(t) = \frac{1}{nh^2} \sum_{i=1}^n \frac{Y_i\left(\frac{T_i-t}{h}\right) K\left(\frac{T_i-t}{h}\right)}{\kappa_2 \cdot p_{T|\bm{S}}(T_i|\bm{S}_i)},
\end{equation}
where the estimated conditional density $\hat{p}_{T|\bm{S}}(t|\bm{s})$ is replaced by the true one $p_{T|\bm{S}}(t|\bm{s})$.

\begin{proposition}[Inconsistency of IPW estimators]
	\label{prop:IPW_bias_m}
	Suppose that Assumptions~\ref{assump:id_cond}(a-c), \ref{assump:reg_diff}, \ref{assump:den_diff}(c), and \ref{assump:reg_kernel}(a-b) hold under the additive confounding model \eqref{add_conf_model}. Assume also that when the bandwidth $h$ is small, the Lebesgue measure of the symmetric difference set satisfies 
	$$\left|\mathcal{S}(t+uh)\triangle \mathcal{S}(t)\right| = \left|\left[\mathcal{S}(t+uh)\setminus \mathcal{S}(t)\right] \cup \left[\mathcal{S}(t)\setminus \mathcal{S}(t+uh)\right]\right|=o(1)$$
	for any $t\in \mathcal{T}$ and $u\in \mathbb{R}$. Then, when $h$ is small, the expectation of $\tilde{m}_{\mathrm{IPW}}(t)$ in \eqref{IPW_oracle} is given by
	$$\E\left[\tilde{m}_{\mathrm{IPW}}(t)\right] = \bar{m}(t)\cdot \rho(t) + \omega(t) + o(1),$$
	where $\rho(t) = \P\left(\bm{S}\in \mathcal{S}(t)\right)$ and $\omega(t) = \E\left[\eta(\bm{S}) \mathbbm{1}_{\{\bm{S}\in \mathcal{S}(t)\}}\right]$. If, in addition, there exists a constant $A_h>0$ depending on $h$ such that
	\begin{equation}
		\label{quadratic_cond}
		\int_{\mathbb{R}}\E\left\{\left[\bar{m}(t) +\eta(\bm{S})\right]\left[\mathbbm{1}_{\{\bm{S}\in \mathcal{S}(t+uh)\setminus \mathcal{S}(t)\}} - \mathbbm{1}_{\{\bm{S}\in \mathcal{S}(t)\setminus \mathcal{S}(t+uh)\}}\right] \right\} u\cdot K(u)\, du=O(A_h)
	\end{equation}
	for any $t\in \mathcal{T}$ and $u\in \mathbb{R}$ when $h$ is small, then the expectation of $\tilde{\theta}_{\mathrm{IPW}}(t)$ in \eqref{IPW_forms} is given by
	$$\E\left[\tilde{\theta}_{\mathrm{IPW}}(t)\right]=\bar{m}'(t)\cdot \rho(t) + O\left(\frac{A_h}{h}\right).$$
\end{proposition}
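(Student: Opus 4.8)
The plan is to compute $\E[\tilde\theta_{\mathrm{IPW}}(t)]$ directly by iterated expectations, isolating the term that survives as $h\to 0$ and bounding the rest by $O(A_h/h)$. First I would write, using the i.i.d.\ assumption and the additive model \eqref{add_conf_model},
\begin{align*}
\E\left[\tilde\theta_{\mathrm{IPW}}(t)\right]
&= \frac{1}{h^2\kappa_2}\,\E\left[\frac{\left(\bar m(T)+\eta(\bm S)\right)\left(\frac{T-t}{h}\right)K\!\left(\frac{T-t}{h}\right)}{p_{T|\bm S}(T|\bm S)}\right]\\
&= \frac{1}{h^2\kappa_2}\int_{\mathcal T}\int_{\mathcal S(\tilde t)}\left(\bar m(\tilde t)+\eta(\bm s)\right)\left(\frac{\tilde t-t}{h}\right)K\!\left(\frac{\tilde t-t}{h}\right) p_{\bm S}(\bm s)\,d\bm s\,d\tilde t ,
\end{align*}
where the cancellation of $p_{T|\bm S}$ against the joint density leaves the \emph{marginal} density $p_{\bm S}(\bm s)$ — this is precisely the source of the support mismatch, since the inner integral runs over $\mathcal S(\tilde t)$ rather than over all of $\mathcal S$. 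Substituting $\tilde t = t+uh$ gives
\[
\E\left[\tilde\theta_{\mathrm{IPW}}(t)\right]
= \frac{1}{h\kappa_2}\int_{\mathbb R}\left(\int_{\mathcal S(t+uh)}\left(\bar m(t+uh)+\eta(\bm s)\right)p_{\bm S}(\bm s)\,d\bm s\right) u\,K(u)\,du .
\]

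Next I would split the inner integral over $\mathcal S(t+uh)$ as an integral over the fixed set $\mathcal S(t)$ plus a correction over the symmetric difference: for any integrand $g$,
\[
\int_{\mathcal S(t+uh)} g\,p_{\bm S}
= \int_{\mathcal S(t)} g\,p_{\bm S}
+ \int_{\mathcal S(t+uh)\setminus\mathcal S(t)} g\,p_{\bm S}
- \int_{\mathcal S(t)\setminus\mathcal S(t+uh)} g\,p_{\bm S}.
\]
Feeding the fixed-set piece back in, the contribution is
\[
\frac{1}{h\kappa_2}\int_{\mathbb R}\left(\bar m(t+uh)\,\rho(t)+\E\!\left[\eta(\bm S)\mathbbm 1_{\{\bm S\in\mathcal S(t)\}}\right]\right) u\,K(u)\,du .
\]
Here the $\eta$-term is a constant in $u$, so it integrates against $u\,K(u)$ to zero by the symmetry $\kappa_1=\int uK(u)\,du=0$ (Assumption~\ref{assump:reg_kernel}(b)); and Taylor-expanding $\bar m(t+uh)=\bar m(t)+uh\,\bar m'(t)+O(h^2)$, using $\mu(t,\bm s)=\bar m(t)+\eta(\bm s)$ so that $\beta(t,\bm s)=\bar m'(t)$, the $\bar m(t)$ term vanishes by $\kappa_1=0$ and the $uh\,\bar m'(t)$ term yields $\frac{\rho(t)}{\kappa_2}\bar m'(t)\int_{\mathbb R}u^2K(u)\,du = \bar m'(t)\rho(t)$; the $O(h^2)$ remainder, after dividing by $h$, is $O(h)=o(1)$, absorbed into the error. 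Meanwhile the symmetric-difference piece is exactly
\[
\frac{1}{h\kappa_2}\int_{\mathbb R}\E\!\left\{\left[\bar m(t+uh)+\eta(\bm S)\right]\left[\mathbbm 1_{\{\bm S\in\mathcal S(t+uh)\setminus\mathcal S(t)\}}-\mathbbm 1_{\{\bm S\in\mathcal S(t)\setminus\mathcal S(t+uh)\}}\right]\right\} u\,K(u)\,du ,
\]
and replacing $\bar m(t+uh)$ by $\bar m(t)$ inside (the difference $\bar m(t+uh)-\bar m(t)=O(h)$ times an integrand supported on a set of measure $o(1)$ contributes $o(h)/h=o(1)$) reduces this to $\frac{1}{h\kappa_2}\cdot O(A_h)=O(A_h/h)$ by hypothesis \eqref{quadratic_cond}. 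Combining the two pieces gives $\E[\tilde\theta_{\mathrm{IPW}}(t)] = \bar m'(t)\rho(t) + O(A_h/h)$, as claimed; the bounded-convergence arguments that let me interchange limits and integrals, and discard $o(1)$ terms uniformly, are justified by the compact support of $K$ (Assumption~\ref{assump:reg_kernel}), the boundedness of $\bar m$, $\eta$, $\mu$ and their relevant derivatives (Assumption~\ref{assump:reg_diff}), and the $o(1)$ bound on $|\mathcal S(t+uh)\triangle\mathcal S(t)|$.

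The main obstacle is bookkeeping the various error terms carefully enough to see that everything except the leading $\bar m'(t)\rho(t)$ collapses either into the explicit $O(A_h/h)$ bound or into a genuine $o(1)$ — in particular, confirming that the $\eta$-contribution over the fixed set $\mathcal S(t)$ really does cancel by kernel symmetry (so that no stray $\omega(t)/h$ term appears, which would blow up), and that the Taylor remainder from $\bar m(t+uh)$, the $\bar m(t+uh)-\bar m(t)$ swap inside the symmetric-difference integral, and the $o(1)$ measure bound all survive division by $h$. This is exactly where the structural assumption \eqref{quadratic_cond} does its work: it is the hypothesis that the symmetric-difference correction is not merely $o(1)$ but small enough ($O(A_h)$) that, even after the $1/h$ amplification inherent to derivative estimation, the bias remains controlled. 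The $m(t)$ computation is the same argument one order lower (no $(\frac{T-t}{h})$ factor, no $1/h$ amplification, and the $\eta$-term does \emph{not} cancel), which is why there the bias is the order-one quantity $\bar m(t)\rho(t)+\omega(t)$ rather than something vanishing.
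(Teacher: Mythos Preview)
Your proposal is correct and follows essentially the same route as the paper's proof: write out the expectation, cancel $p_{T|\bm S}$ against the joint density to leave $p_{\bm S}$ integrated over $\mathcal S(\tilde t)$, change variables $\tilde t = t+uh$, split $\mathcal S(t+uh)$ into $\mathcal S(t)$ plus the symmetric-difference correction, Taylor-expand $\bar m(t+uh)$ and use $\kappa_1=0$ on the fixed-set piece, and invoke \eqref{quadratic_cond} on the correction. The only cosmetic difference is that the paper Taylor-expands first and then tracks each Taylor term through the symmetric-difference integral (picking up a separate $o(1)$ from the $\bar m'(t)\,u^2K(u)$ piece via the measure bound), whereas you keep $\bar m(t+uh)$ intact in the correction and swap it for $\bar m(t)$ afterward; both bookkeeping choices lead to the same conclusion.
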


The proof of Proposition~\ref{prop:IPW_bias_m} is in \autoref{app:IPW_bias_proof_m}. We emphasize that the IPW estimators in \eqref{IPW_oracle} have two layers of bias. First, if $\frac{A_h}{h}\to 0$ as $h\to 0$ (see also Remark~\ref{remark:reg_cond} below), then the results in Proposition~\ref{prop:IPW_bias_m} will imply that 
\begin{align*}
	\lim_{h\to 0} \mathbb{E}\left[\tilde{m}_{\mathrm{IPW}}(t)\right] &= \lim_{h\to 0} \E\left[\frac{Y \cdot K\left(\frac{T-t}{h}\right)}{h\cdot p_{T|\bm{S}}(T|\bm{S})} \right] = \bar{m}(t)\cdot \rho(t) + \omega(t)\neq m(t), \\
	\lim_{h\to 0} \mathbb{E}\left[\tilde{\theta}_{\mathrm{IPW}}(t)\right] &= \lim_{h\to 0} \E\left[\frac{Y \left(\frac{T-t}{h}\right) K\left(\frac{T-t}{h}\right)}{h^2\cdot\kappa_2\cdot p_{T|\bm{S}}(T|\bm{S})} \right] = \bar{m}'(t)\cdot \rho(t) \neq \theta(t),
\end{align*}
where we recall that $m(t)=\bar{m}(t) + \E\left[\eta(\bm{S})\right]$ and $\theta(t)=\bar{m}'(t)$ from \eqref{m_theta_additive}. Second, if $\frac{A_h}{h}$ does not converge to 0, then the bias of $\tilde{\theta}_{\mathrm{IPW}}(t)$ will be larger or even diverging to infinity as $h\to 0$. In reality, the estimation biases or inconsistencies of IPW estimators in \eqref{IPW_oracle} are due to the discrepancy between the conditional support $\mathcal{S}(t)$ of $p_{\bm{S}|T}(\bm{s}|t)$ and the marginal support $\mathcal{S}$ of $p_{\bm{S}}(\bm{s})$. To correct for the bias of IPW estimators, it is necessary to address the geometric discrepancy, a solution to which will be elaborated upon in \autoref{sec:theta_nopos}.

Finally, since both RA and IPW estimators cannot be used to identify and estimate $m(t)$ and $\theta(t)$ due to identification and inconsistency issues, the previously studied DR estimators \eqref{m_DR} and \eqref{theta_DR} will be pointless without the positivity condition. 

\begin{remark}
	\label{remark:reg_cond}
	The regularity condition \eqref{quadratic_cond} is indeed not an assumption but rather a natural property. This is because as $h\to 0$, the differences between two sets $\mathcal{S}(t+uh)\setminus \mathcal{S}(t)$ and $\mathcal{S}(t)\setminus \mathcal{S}(t+uh)$ shrink to 0 for any $t\in \mathcal{T}$ and $u\in \mathbb{R}$. Additionally, when the expectation in \eqref{quadratic_cond} is independent of $u$, one can deduce by the second-order kernel property of $K$ that the left-hand side of \eqref{quadratic_cond} is 0. Hence, as $h\to 0$, the left-hand side of \eqref{quadratic_cond} should converge to 0 in a certain rate depending on $h$. 
\end{remark}

\section{Nonparametric Inference on $\theta(t)$ Without Positivity}
\label{sec:theta_nopos}

In this section, we present our solution for addressing the estimation biases of IPW estimators for the dose-response curve $m(t)$ and its derivative $\theta(t)$, as described in \autoref{subsec:IPW_bias}, when the positivity condition (Assumption~\ref{assump:positivity}) is violated. Specifically, our proposed IPW and DR estimators for $\theta(t)$ under the additive confounding model \eqref{add_conf_model} rely on a consistent estimation of the interior region of the support of the conditional density $p_{\bm{S}|T}(\bm{s}|t)$. Our approach establishes a connection between the classical support estimation problem and a contemporary causal inference challenge, namely the dose-response curve estimation problem.

\subsection{Bias-Corrected IPW and DR Estimators of $\theta(t)$}
\label{subsec:bias_corrected_IPW_DR}

Recall from \eqref{IPW_oracle} and Proposition~\ref{prop:IPW_bias_m} that the oracle IPW estimator of $\theta(t)$ is the sample average of the IPW quantity $\Xi_t(Y,T,\bm{S}) = \frac{Y\left(\frac{T-t}{h}\right) K\left(\frac{T-t}{h}\right)}{h^2 \cdot \kappa_2 \cdot p_{T|\bm{S}}(T|\bm{S})}$, and it is biased for estimating the quantity of interest $\theta(t)=\bar{m}'(t)$ even under model \eqref{add_conf_model}. In particular, $\E\left[\Xi_t(Y,T,\bm{S}) \right]$ converges to $\bar{m}'(t)\cdot \rho(t)$ as $h\to 0$ under some mild regularity conditions, where $\rho(t)=\P\left(\bm{S}\in \mathcal{S}(t)\right)$ for any $t\in \mathcal{T}$. The first step toward removing the bias of $\mathbb{E}\left[\Xi_t(Y,T,\bm{S})\right]$ is to decouple the quantity of interest $\theta(t)=\bar{m}'(t)$ from the nuisance function $\rho(t)$. To this end, we consider a modified IPW quantity defined as:
\begin{equation}
	\label{IPW_quantity}
	\tilde{\Xi}_t(Y,T,\bm{S}) = \Xi_t(Y,T,\bm{S}) \cdot \frac{p_{\bm{S}|T}(\bm{S}|t)}{p_S(\bm{S})} = \frac{Y\left(\frac{T-t}{h}\right) K\left(\frac{T-t}{h}\right) p_{\bm{S}|T}(\bm{S}|t)}{h^2\cdot \kappa_2 \cdot p(T,\bm{S})},
\end{equation}
in which we multiply the original IPW quantity $\Xi_t(Y,T,\bm{S})$ by a density ratio $\frac{p_{\bm{S}|T}(\bm{S}|t)}{p_S(\bm{S})}$. The following proposition demonstrates that the remaining bias in $\mathbb{E}\left[\tilde{\Xi}_t(Y,T,\bm{S})\right]$ can be disentangled from the quantity of interest $\theta(t)=\bar{m}'(t)$ in an additive form.

\begin{proposition}
	\label{prop:theta_IPW_new}
Suppose that Assumptions~\ref{assump:id_cond}(a-c), \ref{assump:reg_diff}, \ref{assump:den_diff}(c), and \ref{assump:reg_kernel}(a-b) hold under the additive confounding model \eqref{add_conf_model}. Then, when the bandwidth $h$ is small, the expectation of the modified IPW quantity \eqref{IPW_quantity} is given by
\begin{align*}
	&\mathbb{E}\left[\tilde{\Xi}_t(Y,T,\bm{S})\right] = \bar{m}'(t) + O(h^2) \\
	&\quad\quad + \int_{\mathbb{R}}\E\left\{\left[\bar{m}(t+uh) +\eta(\bm{S})\right]\left[\mathbbm{1}_{\{\bm{S}\in \mathcal{S}(t+uh)\setminus \mathcal{S}(t)\}} - \mathbbm{1}_{\{\bm{S}\in \mathcal{S}(t)\setminus \mathcal{S}(t+uh)\}}\right] \Big| T=t\right\} u\cdot K(u)\, du.
\end{align*}
\end{proposition}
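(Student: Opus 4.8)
The plan is to compute $\mathbb{E}\left[\tilde{\Xi}_t(Y,T,\bm{S})\right]$ directly by conditioning, exploiting the additive structure $Y = \bar{m}(T) + \eta(\bm{S}) + \epsilon$ and the mean-zero property $\E(\epsilon|T,\bm{S})=0$. First I would write the expectation as an integral against the joint density $p(t_1,\bm{s})$, cancel the $p(T,\bm{S})$ in the denominator of $\tilde{\Xi}_t$, and substitute $t_1 = t + uh$ so that the kernel factor becomes $u\cdot K(u)$ and the measure picks up a factor $h$ that cancels one power of $h^{-2}$. This leaves an integral of the form $\frac{1}{h\kappa_2}\int_{\mathbb{R}} u\, K(u) \left(\int_{\mathcal{S}(t+uh)} \left[\bar{m}(t+uh) + \eta(\bm{s})\right] p_{\bm{S}|T}(\bm{s}|t)\, d\bm{s}\right) du$, where the domain of the inner integral is $\mathcal{S}(t+uh)$ because $p(t+uh,\bm{s}) = 0$ off that set, but the integrand now carries the weight $p_{\bm{S}|T}(\bm{s}|t)$ rather than $p_{\bm{S}|T}(\bm{s}|t+uh)$, thanks to the density-ratio correction in \eqref{IPW_quantity}.

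Next I would reconcile the mismatched domains: the inner integral is over $\mathcal{S}(t+uh)$ but I want it over $\mathcal{S}(t)$ so the conditional density $p_{\bm{S}|T}(\bm{s}|t)$ integrates sensibly. Writing $\mathbbm{1}_{\{\bm{s}\in\mathcal{S}(t+uh)\}} = \mathbbm{1}_{\{\bm{s}\in\mathcal{S}(t)\}} + \mathbbm{1}_{\{\bm{s}\in\mathcal{S}(t+uh)\setminus\mathcal{S}(t)\}} - \mathbbm{1}_{\{\bm{s}\in\mathcal{S}(t)\setminus\mathcal{S}(t+uh)\}}$ splits the inner integral into a main term over $\mathcal{S}(t)$ plus the symmetric-difference correction terms. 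The main term is $\E\left\{\left[\bar{m}(t+uh) + \eta(\bm{S})\right] \,\big|\, T=t\right\}$ since integrating $p_{\bm{S}|T}(\bm{s}|t)$ over $\mathcal{S}(t)$ gives $1$; the correction terms are exactly the symmetric-difference expression appearing in the statement, with the conditioning on $T=t$ supplied by the same $p_{\bm{S}|T}(\bm{s}|t)$ weight. For the main term I would then Taylor-expand $\bar{m}(t+uh) = \bar{m}(t) + uh\,\bar{m}'(t) + \frac{(uh)^2}{2}\bar{m}''(t) + \cdots$ and use the kernel moments from Assumption~\ref{assump:reg_kernel}: $\int u\,K(u)\,du = \kappa_1 = 0$ kills the $\bar{m}(t)$ contribution (and also $\E[\eta(\bm{S})|T=t]$, which is constant in $u$), $\int u^2 K(u)\,du = \kappa_2$ turns the linear term into $\frac{1}{h\kappa_2}\cdot h\,\bar{m}'(t)\cdot\kappa_2 = \bar{m}'(t)$, and the higher-order terms contribute $O(h^2)$ using $\kappa_3$, $\kappa_4$ together with the fourth-order differentiability in Assumption~\ref{assump:reg_diff}(a). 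This produces $\bar{m}'(t) + O(h^2)$ plus the stated symmetric-difference integral.

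The main obstacle I anticipate is handling the symmetric-difference correction terms rigorously, not just formally. One must justify that the Taylor expansion of the main term is uniform enough to keep the remainder $O(h^2)$ while the symmetric-difference sets are themselves shrinking with $h$ — in particular, one needs the boundedness of $\bar{m}$, $\eta$, and the conditional density near $\partial\mathcal{J}$ (Assumptions~\ref{assump:reg_diff}(b) and \ref{assump:den_diff}(b,c)) to control the integrals over $\mathcal{S}(t+uh)\triangle\mathcal{S}(t)$ and to interchange the $u$-integral with the $\bm{s}$-integral via Fubini. There is also a subtlety in that the statement keeps $\bar{m}(t+uh)$ (not $\bar{m}(t)$) inside the correction integral, so I would not expand it there; the correction term is left in its exact form precisely because its magnitude is governed by the separate regularity condition \eqref{quadratic_cond} analyzed in Proposition~\ref{prop:IPW_bias_m} and Remark~\ref{remark:reg_cond}. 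Once the domain decomposition and kernel-moment bookkeeping are in place, the result follows; the bulk of the work is the careful accounting of remainder terms rather than any deep new idea.
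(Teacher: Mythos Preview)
Your proposal is correct and follows essentially the same route as the paper's proof: cancel the joint density against the denominator, change variables $t_1 = t + uh$, decompose the inner integration domain $\mathcal{S}(t+uh)$ into $\mathcal{S}(t)$ plus the symmetric-difference pieces, and then Taylor-expand $\bar{m}(t+uh)$ in the main term and invoke the second-order kernel moments to extract $\bar{m}'(t) + O(h^2)$. The paper carries out exactly these steps in the same order (domain split, change of variable, Taylor expansion, kernel moments), and leaves the symmetric-difference correction unexpanded just as you propose.
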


\begin{remark}
\label{remark:IPW_modified}
Different from Remarks~\ref{remark:m_IPW} and \ref{remark:theta_IPW}, the conditional density $p_{\bm{S}|T}$ should be evaluated at the (query) point $(t,\bm{S})$ instead of the sample point $(T,\bm{S})$ in the modified IPW quantity \eqref{IPW_quantity}. Otherwise, the expectation of \eqref{IPW_quantity} will have an asymptotically non-vanishing additive bias; see the proof of Proposition~\ref{prop:theta_IPW_new} in \autoref{app:IPW_new_proof} for details.
\end{remark}

Proposition~\ref{prop:theta_IPW_new} reveals that the estimation bias of the modified IPW quantity \eqref{IPW_quantity} results from the support discrepancy between $\mathcal{S}(t)$ and the integration range $\mathcal{S}(t+uh)$ for a given integration variable $u\in \mathbb{R}$; see \autoref{fig:support_diff} for an illustration. As shown in Proposition~\ref{prop:IPW_bias_m}, this additive bias may not always shrink at the rate $O(h^2)$ as $h\to 0$. To further reduce the bias of the modified IPW quantity \eqref{IPW_quantity} to $O(h^2)$ without assuming positivity, we address the support discrepancy of \eqref{IPW_quantity} by restricting the conditional density $p_{\bm{S}|T}(\bm{s}|t)$ to its interior region, defining it as $p_{\zeta}(\bm{s}|t)$, and refining \eqref{IPW_quantity} as:
\begin{equation}
	\label{IPW_quantity2}
	\tilde{\Xi}_{t,\zeta}(Y,T,\bm{S}) = \frac{Y\left(\frac{T-t}{h}\right) K\left(\frac{T-t}{h}\right) p_{\zeta}(\bm{S}|t)}{h^2\cdot \kappa_2 \cdot p(T,\bm{S})}.
\end{equation}
Essentially, the only requirement for defining the $\zeta$-interior conditional density $p_{\zeta}(\bm{s}|t)$ is that its support satisfies the following condition:
\begin{equation}
	\label{int_supp}
	\{\bm{s} \in \mathcal{S}(t): p_{\zeta}(\bm{s}|t) > 0\} \subset \mathcal{S}(t+\delta)\quad \text{ for any }\quad  \delta\in[-h, h].
\end{equation}
Here, we propose two approaches for defining $p_{\zeta}(\bm{s}|t)$ and leave other options to interested readers.

{\bf 1. Support Shrinking Approach:} Let $\mathcal{S}(t) \ominus \zeta = \left\{\bm{s}\in \mathcal{S}(t): \inf_{\bm{x}\in \partial \mathcal{S}(t)}\norm{\bm{s}-\bm{x}}_2 \geq \zeta\right\}$ denote the set of interior points of $\mathcal{S}(t)$ that are at least a distance $\zeta$ away from the boundary $\mathcal{S}(t)$. Then, we define the $\zeta$-interior conditional density with $\zeta>0$ being a tuning parameter as:
\begin{equation}
	\label{interior_cond_density}
	p_{\zeta}(\bm{s}|t) = \frac{p_{\bm{S}|T}(\bm{s}|t) \cdot \mathbbm{1}_{\left\{\bm{s}\in \mathcal{S}(t)\ominus \zeta\right\}}}{\int_{\mathcal{S}(t)\ominus \zeta} p_{\bm{S}|T}(\bm{s}_1|t) \,d\bm{s}_1} \propto p_{\bm{S}|T}(\bm{s}|t) \cdot \mathbbm{1}_{\left\{\bm{s}\in \mathcal{S}(t)\ominus \zeta\right\}}.
\end{equation}
This interior density is indeed the conditional density $p_{\bm{S}|T}(\bm{s}|t)$ restricted to the interior of its support $\mathcal{S}(t)$. Its estimator $\hat{p}_{\zeta}(\bm{s}|t)$ can be constructed using a support estimator $\hat{\mathcal{S}}(t)$ and constraining the conditional density estimator $\hat{p}_{\bm{S}|T}(\bm{s}|t)$ within the region $\hat{\mathcal{S}}(t)\ominus \zeta$. 

{\bf 2. Level Set Approach:} Let $\mathcal{L}_{\zeta}(t) = \left\{\bm{s}\in \mathcal{S}(t): p_{\bm{S}|T}(\bm{s}|t) \geq \zeta\right\}$ be the $\zeta$-upper level set of the conditional density $p_{\bm{S}|T}(\bm{s}|t)$. Then, we define the $\zeta$-interior conditional density as:
\begin{equation}
\label{interior_cond_density_levelset}
p_{\zeta}(\bm{s}|t) = \frac{p_{\bm{S}|T}(\bm{s}|t) \cdot \mathbbm{1}_{\left\{\bm{s}\in \mathcal{L}_{\zeta}(t)\right\}}}{\int_{\mathcal{L}_{\zeta}(t)} p_{\bm{S}|T}(\bm{s}_1|t) \,d\bm{s}_1} \propto p_{\bm{S}|T}(\bm{s}|t) \cdot \mathbbm{1}_{\left\{\bm{s}\in \mathcal{L}_{\zeta}(t)\right\}}.
\end{equation}
The level set approach restricts the conditional density $p_{\bm{S}|T}(\bm{s}|t)$ to the high-density region, which is generally located away from the support boundary. We may construct the estimator $\hat p_{\zeta}(\bm{s}|t)$ using a level set estimator
$\hat{\mathcal{L}}_{\zeta}(t) = \left\{\bm{s}\in \mathcal{S}(t): \hat{p}_{\bm{S}|T}(\bm{s}|t) \geq \zeta\right\}$ and constraining $\hat{p}_{\bm{S}|T}(\bm{s}|t)$ to $\hat{\mathcal{L}}_{\zeta}(t)$. 

We further specialize condition \eqref{int_supp} for the above two approaches by introducing the following smoothness condition on the conditional support $\mathcal{S}(t)$.

\begin{figure}
	\centering
	\includegraphics[width=1\linewidth]{./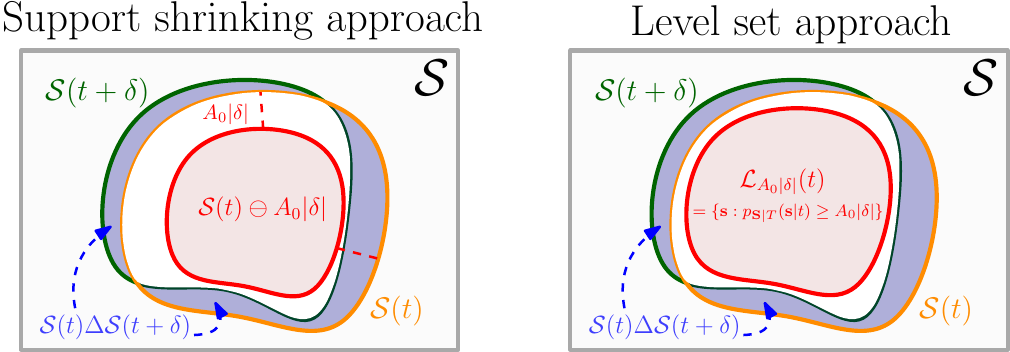}
	\caption{Graphical illustrations of the support discrepancy between $\mathcal{S}(t)$ and $\mathcal{S}(t+\delta)$ for $t\in \mathcal{T}$ as well as Assumption~\ref{assump:cond_support_smooth}, where $\delta$ can take its value as $uh\in \mathbb{R}$.}
	\label{fig:support_diff}
\end{figure}

\begin{assump}[Smoothness condition on $\mathcal{S}(t)$]
	\label{assump:cond_support_smooth}
	For any $\delta \in \mathbb{R}$ and $t\in \mathcal{T}$, there exists an absolute constant $A_0>0$ such that either (i) ``$\,\mathcal{S}(t) \ominus \left(A_0|\delta| \right) \subset \mathcal{S}(t+\delta)$'' for the support shrinking approach or (ii) ``$\,\mathcal{L}_{A_0|\delta|}(t) \subset \mathcal{S}(t+\delta)$'' for the level set approach.
\end{assump} 

To some extent, Assumption~\ref{assump:cond_support_smooth} can be viewed as a Lipschitz condition of the conditional support $\mathcal{S}(t)$. It can be satisfied when the Euclidean norm of the gradient $\norm{\nabla_{\bm{s}} p_{\bm{S}|T}(\bm{s}|t)}_2$ is bounded away from 0 at the boundary of $\mathcal{S}(t)$ \citep{cadre2006kernel}. This assumption allows us to ignore the boundary discrepancy as long as we do not evaluate our IPW quantity \eqref{IPW_quantity2} near the boundary; see \autoref{fig:support_diff} for a graphical illustration.

\begin{proposition}
	\label{prop:theta_IPW_new2}
	Suppose that Assumptions~\ref{assump:id_cond}(a-c), \ref{assump:reg_diff}, \ref{assump:den_diff}(c), \ref{assump:reg_kernel}(a-b), and \ref{assump:cond_support_smooth} hold under the additive confounding model \eqref{add_conf_model}. Then, when the bandwidth $h>0$ is small, the expectation of the modified IPW quantity \eqref{IPW_quantity2} is given by
	$$\mathbb{E}\left[\tilde{\Xi}_{t,\zeta}(Y,T,\bm{S}) \right] = \bar{m}'(t) + \frac{h^2\kappa_4}{6\kappa_2}\cdot \bar{m}^{(3)}(t) + O\left(h^3\right).$$
\end{proposition}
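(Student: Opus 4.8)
The plan is to combine the exact expansion from Proposition~\ref{prop:theta_IPW_new} with the support-smoothness Assumption~\ref{assump:cond_support_smooth} to upgrade the additive support-discrepancy term from an uncontrolled $O(A_h/h)$ contribution to an honest $O(h^3)$ remainder. First I would invoke Proposition~\ref{prop:theta_IPW_new} applied to the modified quantity \eqref{IPW_quantity2}: since $p_\zeta(\bm{s}|t)$ differs from $p_{\bm{S}|T}(\bm{s}|t)$ only through the indicator restricting to $\mathcal{S}(t)\ominus\zeta$ (support shrinking) or $\mathcal{L}_\zeta(t)$ (level set), the same Taylor-expansion argument that produced the three-term formula there applies, except that the conditional support $\mathcal{S}(t)$ is replaced by the shrunk support $\mathcal{S}_\zeta(t) := \{\bm{s}: p_\zeta(\bm{s}|t)>0\}$. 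This yields
\begin{align*}
\mathbb{E}\left[\tilde{\Xi}_{t,\zeta}(Y,T,\bm{S})\right] &= \bar{m}'(t) + \frac{h^2\kappa_4}{6\kappa_2}\bar{m}^{(3)}(t) + O(h^3) \\
&\quad + \int_{\mathbb{R}} \mathbb{E}\left\{\left[\bar{m}(t+uh)+\eta(\bm{S})\right]\left[\mathbbm{1}_{\{\bm{S}\in \mathcal{S}_\zeta(t+uh)\setminus \mathcal{S}_\zeta(t)\}} - \mathbbm{1}_{\{\bm{S}\in \mathcal{S}_\zeta(t)\setminus \mathcal{S}_\zeta(t+uh)\}}\right]\Big| T=t\right\} u\cdot K(u)\,du,
\end{align*}
where the $h^2$ term comes from expanding $\bar m$ and the conditional density to second order just as in the positivity case, and I should double-check that restricting to the interior does not disturb this leading bias constant (it shouldn't, because on a fixed interior region the density and its derivatives are smooth).

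The crux is then to show the final integral is $O(h^3)$. By condition \eqref{int_supp}, which Assumption~\ref{assump:cond_support_smooth} is designed to guarantee, we have $\mathcal{S}_\zeta(t) \subset \mathcal{S}(t+\delta)$ for all $|\delta|\le h$. In particular $\mathcal{S}_\zeta(t)\subset \mathcal{S}(t+uh)$ for $|u|\le 1$, and since $K$ is supported on $[-1,1]$, the set $\mathcal{S}_\zeta(t)\setminus\mathcal{S}(t+uh)$ is empty over the integration range — but I actually need $\mathcal{S}_\zeta(t)\setminus \mathcal{S}_\zeta(t+uh)$, so I must argue that $\mathcal{S}_\zeta(t+uh)\supset \mathcal{S}_\zeta(t)$ or at least control the symmetric difference. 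Here I would use Assumption~\ref{assump:cond_support_smooth} symmetrically in $t$ and $t+uh$: for the support-shrinking approach, $\mathcal{S}(t)\ominus(A_0|uh|)\subset \mathcal{S}(t+uh)$, and combining with the definition $\mathcal{S}_\zeta(t)=\mathcal{S}(t)\ominus\zeta$ one gets, after an erosion–monotonicity argument, that $\mathcal{S}_\zeta(t)\triangle\mathcal{S}_\zeta(t+uh)$ is contained in a boundary collar of width $O(|uh|)$, hence has Lebesgue measure $O(|u|h)$. The level-set case is analogous via $\mathcal{L}_{A_0|uh|}(t)\subset\mathcal{S}(t+uh)$ together with continuity of $p_{\bm{S}|T}$ in $t$. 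Since $\bar m$ and $\eta$ are bounded (Assumption~\ref{assump:reg_diff}(b) and the model), the integrand in the symmetric-difference term is bounded by a constant times the measure of $\mathcal{S}_\zeta(t)\triangle\mathcal{S}_\zeta(t+uh)$, which is $O(|u|h)$; multiplying by $u\,K(u)$ and integrating gives $O(h)$ naively, so I need one more order. The extra order comes from the odd symmetry of $u\,K(u)$: writing $\bar m(t+uh)+\eta(\bm{s}) = \bar m(t)+\eta(\bm s) + uh\,\bar m'(t) + O(h^2)$ and noting that the measure of the collar is, to leading order in $h$, an even function of $u$ times $|u|h$ (the boundary moves by $\approx |\dot{\text{(boundary)}}(t)|\cdot|uh|$ in either direction), the $\bar m(t)+\eta(\bm s)$ piece pairs against $u K(u)$ over a symmetric collar and cancels to $O(h^2)$ inside the integrand, leaving the net contribution at $O(h\cdot h) = O(h^2)$ — still not quite enough, so the cancellation must be pushed one step further.

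The main obstacle I anticipate is exactly this last cancellation bookkeeping: making rigorous that the symmetric-difference integral is genuinely $O(h^3)$ and not merely $O(h^2)$. The honest way to do it is to reparametrize the boundary of $\mathcal{S}_\zeta(t+uh)$ as a graph over $\partial\mathcal{S}_\zeta(t)$ with a displacement $g(\bm x, uh) = uh\cdot g_1(\bm x) + (uh)^2 g_2(\bm x)/2 + O(h^3)$ (justified by smoothness of the conditional density in $t$ and the implicit-function theorem near the interior boundary, where $\nabla_{\bm s}p_{\bm S|T}$ is bounded away from zero), so that the collar integral becomes a surface integral over $\partial\mathcal{S}_\zeta(t)$ of the integrand times $g(\bm x, uh)$ plus higher order. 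Then the $uh\cdot g_1(\bm x)$ term times the constant part $[\bar m(t)+\eta]$ of the integrand is odd in $u$ and integrates to zero against $uK(u)$ — wait, $u\cdot u K(u)=u^2K(u)$ is even and does not vanish; so this $O(h)$-sized collar term paired with $u K(u)$ does contribute $O(h)\cdot\kappa_2$. This means the claimed $O(h^3)$ can only hold if the leading collar displacement $g_1$ integrates to zero against the integrand, i.e. the \emph{signed} area change $\int_{\partial\mathcal{S}_\zeta(t)}[\bar m(t)+\eta(\bm x)]g_1(\bm x)\,d\sigma(\bm x)$ vanishes. So the real content of Proposition~\ref{prop:theta_IPW_new2} — and the step I'd flag as the crux — is showing that this first-order signed boundary term is zero, which should follow from the fact that by construction $\mathcal{S}_\zeta(t)$ sits strictly inside $\mathcal{S}(t\pm h)$ so that for $|u|\le 1$ we actually have $\mathcal{S}_\zeta(t)\subset\mathcal{S}(t+uh)$ \emph{and} we may choose $p_\zeta$ so that $\mathcal{S}_\zeta(\cdot)$ is \emph{constant} on $[t-h,t+h]$ (e.g. take the erosion radius / level threshold uniformly so that the restricted support does not depend on the query point within the bandwidth window). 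If $\mathcal{S}_\zeta(t+uh)=\mathcal{S}_\zeta(t)$ identically for $|u|\le1$, the entire symmetric-difference integral is exactly zero and the $O(h^3)$ remainder comes purely from the Taylor expansion of $\bar m$ and the conditional density, matching the positivity-case bias computation. I would therefore organize the proof around establishing this exact support-invariance on the bandwidth window from \eqref{int_supp}/Assumption~\ref{assump:cond_support_smooth}, after which the rest is a verbatim repeat of the bias expansion underlying Theorem~\ref{thm:theta_pos} with $p_{\bm S|T}$ replaced by $p_\zeta$ and the conditional expectation given $T=t$ in place of the marginal one.
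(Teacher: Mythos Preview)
Your overall instinct is right: repeat the expansion behind Proposition~\ref{prop:theta_IPW_new} with $p_{\bm{S}|T}(\cdot\,|\,t)$ replaced by $p_\zeta(\cdot\,|\,t)$, then use Assumption~\ref{assump:cond_support_smooth} to make the support-discrepancy term vanish. But you misidentify what that discrepancy term is, and this sends you on a long unnecessary detour.

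In $\tilde{\Xi}_{t,\zeta}$ the weight $p_\zeta(\bm{S}|t)$ is evaluated at the \emph{fixed} query point $t$, while the integration domain for $\bm{s}$ after the change of variable $u=(t_1-t)/h$ comes from the support of the joint density $p(t_1,\bm{s})$, i.e.\ $\mathcal{S}(t+uh)$. So the correct comparison is between $\mathcal{S}(t+uh)$ and $\mathcal{S}_\zeta(t)$, \emph{not} between $\mathcal{S}_\zeta(t+uh)$ and $\mathcal{S}_\zeta(t)$ as you wrote. The set $\mathcal{S}_\zeta(t+uh)$ never enters the problem, since $p_\zeta$ is never evaluated at $t+uh$. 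Concretely,
\[
\mathbb{E}\bigl[\tilde{\Xi}_{t,\zeta}\bigr]
=\int_{\mathbb{R}}\int_{\mathcal{S}(t+uh)}\frac{[\bar m(t+uh)+\eta(\bm s_1)]\,u\,K(u)\,p_\zeta(\bm s_1|t)}{h\kappa_2}\,d\bm s_1\,du,
\]
and the integrand already carries the factor $p_\zeta(\bm s_1|t)$, which vanishes outside $\mathcal{S}_\zeta(t)$. Hence the inner integral over $\mathcal{S}(t+uh)$ reduces to an integral over $\mathcal{S}(t+uh)\cap\mathcal{S}_\zeta(t)$. By condition~\eqref{int_supp} (guaranteed by Assumption~\ref{assump:cond_support_smooth} once $h<\zeta/A_0$, using that $K$ is supported on $[-1,1]$), we have $\mathcal{S}_\zeta(t)\subset\mathcal{S}(t+uh)$ for every $|u|\le 1$, so this intersection is exactly $\mathcal{S}_\zeta(t)$, a set \emph{independent of $u$}. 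The boundary term is therefore identically zero --- there is nothing to bound via collars, boundary parametrizations, or parity cancellations. The remaining computation is a straight Taylor expansion of $\bar m(t+uh)$ against the fixed density $p_\zeta(\cdot\,|\,t)$ (which integrates to $1$), and the $O(h^3)$ is just the Taylor remainder.

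You actually write the key containment $\mathcal{S}_\zeta(t)\subset\mathcal{S}(t+uh)$ near the end, but then impose the extra requirement that ``$\mathcal{S}_\zeta(\cdot)$ be constant on $[t-h,t+h]$.'' This is superfluous and reflects the same confusion: there is only one shrunk support in play, namely $\mathcal{S}_\zeta(t)$ at the fixed $t$, and it is already constant in $u$ by construction. Once you fix the setup, the proof is three lines, exactly as in the paper.
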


The proof of Proposition~\ref{prop:theta_IPW_new2} is in \autoref{app:IPW_new_proof2}. This result demonstrates that the expectation of our newly modified IPW quantity $\tilde{\Xi}_{t,\zeta}(Y,T,\bm{S})$ in \eqref{IPW_quantity2} converges to the quantity of interest $\theta(t)=\bar{m}'(t)$ in the standard order $O(h^2)$ as $h\to 0$ under the additive confounding model \eqref{add_conf_model}. Notice that the tuning parameter $\zeta=\zeta_n >0$ in \eqref{interior_cond_density} is allowed to converge to 0 as $n\to \infty$, as long as the condition $h=h_n < \frac{\zeta_n}{A_0}$ holds under Assumption~\ref{assump:cond_support_smooth}. 

Given this newly modified IPW quantity \eqref{IPW_quantity2}, we propose the bias-corrected IPW estimator of $\theta(t)$ without the positivity condition as:
\begin{equation}
	\label{theta_IPW_bnd_corrected}
	\hat{\theta}_{\mathrm{C,IPW}}(t) = \frac{1}{nh^2} \sum_{i=1}^n \frac{Y_i\left(\frac{T_i-t}{h}\right) K\left(\frac{T_i-t}{h}\right) \hat{p}_{\zeta}(\bm{S}_i|t)}{\kappa_2 \cdot \hat{p}(T_i,\bm{S}_i)},
\end{equation}
where $\hat{p}(t,\bm{s})$ is a consistent estimator of the joint density $p(t,\bm{s})$ and $\hat{p}_{\zeta}(\bm{s}|t)$
is an estimated $\zeta$-interior conditional density.

Finally, we combine the modified RA estimator \eqref{theta_RA_corrected} with our bias-corrected IPW estimator \eqref{theta_IPW_bnd_corrected} to propose our bias-corrected DR estimator of $\theta(t)$ as:
\begin{equation}
	\label{theta_DR_bnd_corrected}
	\hat{\theta}_{\mathrm{C,DR}}(t) = \frac{1}{nh^2} \sum_{i=1}^n \frac{\left(\frac{T_i-t}{h}\right) K\left(\frac{T_i-t}{h}\right) \hat{p}_{\zeta}(\bm{S}_i|t)}{\kappa_2\cdot \hat{p}(T_i,\bm{S}_i)} \left[Y_i - \hat{\mu}(t,\bm{S}_i) - (T_i-t)\cdot \hat{\beta}(t,\bm{S}_i)\right] + \int \hat{\beta}(t,\bm{s})\cdot \hat{p}_{\zeta}(\bm{s}|t)\, d\bm{s}.
\end{equation}
Notice that for the RA component of $\hat{\theta}_{\mathrm{C,DR}}(t)$, we replace the original conditional CDF estimator $\hat{F}_{\bm{S}|T}$ in \eqref{theta_RA_corrected} with the estimated $\zeta$-interior conditional density $\hat{p}_{\zeta}$. This modification is necessary because the IPW component of $\hat{\theta}_{\mathrm{C,DR}}(t)$ is defined through $\hat{p}_{\zeta}$. Both the RA and IPW components need to match up with each other in the definition of $\hat{\theta}_{\mathrm{C,DR}}(t)$ for its consistency.

\begin{remark}
	\label{remark:interior_density_est}
	While both the support shrinking and level set approaches are valid, we recommend the level set approach in practice, because support estimation is a notoriously challenging problem in nonparametric statistics \citep{devroye1980detection}. Additionally, selecting an appropriate $\zeta$ for the support shrinking method is nontrivial. In contrast, level set estimation has been studied over decades \citep{cuevas1997plug,cadre2006kernel},
	and the threshold can be set as $\zeta =0.5\cdot \max\left\{\hat{p}_{\bm{S}|T}(\bm{S}_i|t): i=1,...,n\right\}$.
	Notice that users may adjust the multiplier 0.5 in this rule, where a smaller value generally increases the effective sample size but also raises the risk of violating condition \eqref{int_supp}.
\end{remark}

\subsection{Asymptotic Theory}
\label{subsec:asymp_theta_nopos}

The following theorem summarizes the asymptotic properties of our DR estimator \eqref{theta_DR_bnd_corrected} of $\theta(t)$ under the additive confounding model \eqref{add_conf_model} without assuming the positivity condition. We again defer its proof and other consistency results for our RA \eqref{theta_RA_corrected} and IPW \eqref{theta_IPW_bnd_corrected} estimators of $\theta(t)$ to \autoref{app:theta_nopos_proof}.

\begin{theorem}[Asymptotic properties of $\hat{\theta}_{\mathrm{C,DR}}(t)$ without positivity]
	\label{thm:theta_nopos}
	Suppose that Assumptions~\ref{assump:id_cond}(a-c), \ref{assump:reg_diff}, \ref{assump:den_diff}, \ref{assump:reg_kernel}, and \ref{assump:cond_support_smooth} hold under the additive confounding model \eqref{add_conf_model}, and the support $\mathcal{S} \subset \mathbb{R}^d$ of the marginal density $p_{\bm{S}}$ is compact. In addition, $\hat{\mu},\hat{\beta}, \hat{p}_{\zeta}, \hat{p}$ are constructed on a data sample independent of $\{(Y_i,T_i,\bm{S}_i)\}_{i=1}^n$. For any fixed $t\in \mathcal{T}$, we let $\bar{\mu}(t,\bm{s})$, $\bar{\beta}(t,\bm{s})$, $\bar{p}_{\zeta}(\bm{s}|t)$, and $\bar{p}(t,\bm{s})$ be fixed bounded functions to which $\hat{\mu}(t,\bm{s})$, $\hat{\beta}(t,\bm{s})$, $\hat{p}_{\zeta}(\bm{s}|t)$, and $\hat{p}(t,\bm{s})$ converge.
	If, in addition, we assume that 
	\begin{enumerate}[label=(\alph*)]
		\item $\bar{p},\bar{p}_{\zeta}$ satisfy Assumptions~\ref{assump:den_diff} and \ref{assump:cond_support_smooth} as well as $\sqrt{nh^3} \norm{\hat{p}_{\zeta}(\bm{S}|t) - \bar{p}_{\zeta}(\bm{S}|t)}_{L_2}=o(1)$;
		\item either (i) ``$\,\bar{\mu}=\mu$ and $\bar{\beta}=\beta$'' or (ii) ``$\,\bar{p} = p$'';
		\item {\small $\sqrt{nh} \left[\norm{\hat{p}_{\zeta}(\bm{S}|t) - \bar{p}_{\zeta}(\bm{S}|t)}_{L_2} + \sup\limits_{|u-t|\leq h} \norm{\hat{p}(u,\bm{S}) - p(u,\bm{S})}_{L_2} \right]\left[\norm{\hat{\mu}(t,\bm{S}) - \mu(t,\bm{S})}_{L_2} + h \norm{\hat{\beta}(t,\bm{S}) - \beta(t,\bm{S})}_{L_2}\right] = o_P(1)$},
	\end{enumerate}
	then 
	\begin{align*}
		&\sqrt{nh^3}\left[\hat{\theta}_{\mathrm{C,DR}}(t) - \theta(t)\right] \\
		&= \frac{1}{\sqrt{n}} \sum_{i=1}^n \left\{\phi_{C,h,t}\left(Y_i,T_i,\bm{S}_i;\bar{\mu}, \bar{\beta}, \bar{p},\bar{p}_{\zeta}\right) + \sqrt{h^3}\left[\int \bar{\beta}(t,\bm{s})\cdot \bar{p}_{\zeta}(\bm{s}|t)\, d\bm{s} - \theta(t)\right] \right\} + o_P(1)
	\end{align*}
	when $nh^7\to c_3$ for some finite number $c_3\geq 0$, where $$\phi_{C,h,t}\left(Y,T,\bm{S}; \bar{\mu},\bar{\beta}, \bar{p},\bar{p}_{\zeta}\right) = \frac{\left(\frac{T-t}{h}\right) K\left(\frac{T-t}{h}\right) \cdot \bar{p}_{\zeta}(\bm{S}|t)}{\sqrt{h}\cdot \kappa_2\cdot \bar{p}(T,\bm{S})}\cdot \left[Y - \bar{\mu}(t,\bm{S}) - (T-t)\cdot \bar{\beta}(t,\bm{S})\right].$$
	Furthermore, 
	$$\sqrt{nh^3}\left[\hat{\theta}_{\mathrm{C,DR}}(t) - \theta(t) - h^2 B_{C,\theta}(t)\right] \stackrel{d}{\to} \mathcal{N}\left(0,V_{C,\theta}(t)\right)$$
	with $V_{C,\theta}(t) = \mathbb{E}\left[\phi_{C,h,t}^2\left(Y,T,\bm{S};\bar{\mu}, \bar{\beta}, \bar{p},\bar{p}_{\zeta}\right)\right]$ and 
	\begin{align*}
		B_{C,\theta}(t) &= \begin{cases}
			\frac{\kappa_4}{6\kappa_2} \int \left\{\frac{3\frac{\partial}{\partial t} p(t,\bm{s}) \cdot \bar{m}''(t) + p(t,\bm{s})\left[\bar{m}^{(3)}(t) - 3\frac{\partial}{\partial t} \log\bar{p}(t,\bm{s}) \cdot \bar{m}''(t) \right]}{\bar{p}(t,\bm{s})} \right\} \bar{p}_{\zeta}(\bm{s}|t)\, d\bm{s}& \text{ when } \bar{\mu}=\mu \text{ and } \bar{\beta}=\beta,\\
			\frac{\kappa_4}{6\kappa_2} \cdot \bar{m}^{(3)}(t) & \text{ when } \bar{p} = p.
		\end{cases}
	\end{align*}
\end{theorem}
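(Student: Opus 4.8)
The plan is to mirror the argument for \autoref{thm:theta_pos}, decomposing $\sqrt{nh^3}[\hat{\theta}_{\mathrm{C,DR}}(t) - \theta(t)]$ into an empirical-process term, a ``drift'' bias term, and a remainder collecting the nuisance estimation errors, and then verifying that the remainder is $o_P(1)$ under conditions (a)--(c). Write $\hat{\theta}_{\mathrm{C,DR}}(t) = \mathbb{P}_n g_{h,t}(\cdot;\hat{\mu},\hat{\beta},\hat{p},\hat{p}_{\zeta})$, where $g_{h,t}$ is the summand in \eqref{theta_DR_bnd_corrected} written with the integral $\int\hat{\beta}(t,\bm{s})\hat{p}_{\zeta}(\bm{s}|t)\,d\bm{s}$ absorbed as a fixed (per-fold) additive constant. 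First I would add and subtract $\mathbb{P}_n g_{h,t}(\cdot;\bar{\mu},\bar{\beta},\bar{p},\bar{p}_{\zeta})$ and $\P g_{h,t}(\cdot;\bar{\mu},\bar{\beta},\bar{p},\bar{p}_{\zeta})$, giving
\[
\sqrt{nh^3}\big[\hat{\theta}_{\mathrm{C,DR}}(t)-\theta(t)\big]
= \underbrace{\sqrt{h^3}\,\mathbb{G}_n\big(g_{h,t}(\cdot;\bar{\mu},\bar{\beta},\bar{p},\bar{p}_{\zeta})\big)}_{(\mathrm{I})}
+ \underbrace{\sqrt{nh^3}\,\big[\P g_{h,t}(\cdot;\bar{\mu},\bar{\beta},\bar{p},\bar{p}_{\zeta}) - \theta(t)\big]}_{(\mathrm{II})}
+ \underbrace{\sqrt{nh^3}\,(\mathbb{P}_n-\P)\big(g_{h,t}(\cdot;\hat{\cdot})-g_{h,t}(\cdot;\bar{\cdot})\big)}_{(\mathrm{III})}
+ \underbrace{\sqrt{nh^3}\,\P\big(g_{h,t}(\cdot;\hat{\cdot})-g_{h,t}(\cdot;\bar{\cdot})\big)}_{(\mathrm{IV})}.
\]
Term (I), after rescaling $g_{h,t}$ by $\sqrt{h}$ to obtain $\phi_{C,h,t}$ plus the $\sqrt{h^3}$-order RA centering, is exactly the asymptotically linear representation in the statement; its variance is $V_{C,\theta}(t)=\E[\phi_{C,h,t}^2]$, and a triangular-array Lyapunov CLT (using Assumption~\ref{assump:reg_diff}(c) for the $(2+c_1)$-moment and the kernel support to control $\E|\phi_{C,h,t}|^{2+c_1}=O(h^{-(1+c_1)/2})$ against $(\mathrm{Var})^{(2+c_1)/2}\asymp h^{-(1+c_1)/2}\cdot$const, i.e. the Lyapunov ratio $\to 0$ at rate $h^{c_1/2}$) gives the asymptotic normality. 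Term (II) is the deterministic bias: a Taylor expansion of $\bar{m}$ to third order inside the kernel, combined with Proposition~\ref{prop:theta_IPW_new2} applied under the relevant correct-specification branch of (b), yields $\P g_{h,t}(\cdot;\bar\cdot)-\theta(t) = h^3 B_{C,\theta}(t) + o(h^3)$ when $\bar p=p$, and the more elaborate expression (from expanding the ratio $p/\bar p$ and $\log$-derivatives of $\bar p$) when $\bar\mu=\mu,\bar\beta=\beta$; multiplied by $\sqrt{nh^3}$ and using $nh^7\to c_3$ this is $\sqrt{c_3}\,B_{C,\theta}(t)+o(1)$, i.e. the claimed bias after recentering.

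The real work is in showing (III) and (IV) are $o_P(1)$. For (III), since $\hat{\mu},\hat{\beta},\hat{p},\hat{p}_{\zeta}$ are built on an independent sample, conditionally on that sample $g_{h,t}(\cdot;\hat\cdot)-g_{h,t}(\cdot;\bar\cdot)$ is a fixed function, so $\mathrm{Var}\big((\mathbb{P}_n-\P)(\cdot)\big)\le n^{-1}\P(g_{h,t}(\cdot;\hat\cdot)-g_{h,t}(\cdot;\bar\cdot))^2$; I would bound the $L_2(\P)$-norm of the difference by the kernel-localization factor $h^{-3/2}$ times the sum of the nuisance $L_2$-errors (Lipschitz in each argument on the compact supports, using Assumptions~\ref{assump:reg_diff}(b), \ref{assump:den_diff}, and the lower bound $\bar p\ge$const from the interior restriction and Assumption~\ref{assump:cond_support_smooth}), so that $\sqrt{nh^3}\cdot\|g_{h,t}(\cdot;\hat\cdot)-g_{h,t}(\cdot;\bar\cdot)\|_{L_2}/\sqrt{n}\lesssim \|\hat\mu-\bar\mu\|_{L_2}+\|\hat\beta-\bar\beta\|_{L_2}+\cdots\to 0$ by consistency of the nuisances — note this only needs consistency, not a rate, which is why (III) is easy. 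For (IV), the key is the \emph{Neyman orthogonality / second-order remainder} structure built into the estimator: the local-polynomial correction $Y-\bar\mu(t,\bm S)-(T-t)\bar\beta(t,\bm S)$ and the RA integral $\int\bar\beta(t,\bm s)\bar p_\zeta(\bm s|t)\,d\bm s$ are designed so that, expanding $\P g_{h,t}(\cdot;\hat\cdot)-\P g_{h,t}(\cdot;\bar\cdot)$ around $(\bar\mu,\bar\beta,\bar p,\bar p_\zeta)$, the first-order (Gateaux) terms in $\hat\mu-\bar\mu$ and $\hat\beta-\bar\beta$ vanish (in the $h\to 0$ limit, after a Taylor expansion that absorbs the non-vanishing part into the $O(h^2)$ bias, exactly as in the ``Neyman orthogonality holds as $h\to 0$'' remark following \eqref{theta_DR}), leaving only a cross/quadratic term of the form $\P\big[(\hat p - \bar p)/(\bar p)\cdot(\hat\mu-\bar\mu + (T-t)(\hat\beta-\bar\beta))\cdot(\text{kernel factor})\big]$ plus a term linear in $\hat p_\zeta-\bar p_\zeta$. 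The linear-in-$\hat p_\zeta$ term is controlled directly by the extra condition $\sqrt{nh^3}\|\hat p_\zeta - \bar p_\zeta\|_{L_2}=o(1)$ in (a) (this is the one place the interior-density estimation error enters at first order, because $p_\zeta$ plays a role analogous to the marginal $p_{\bm S}$ and there is no doubly-robust cancellation for it), and the cross term is bounded in absolute value, via Cauchy--Schwarz and the $h^{-1}$ from the derivative-kernel normalization, by $\sqrt{nh}\cdot\sup_{|u-t|\le h}\|\hat p(u,\bm S)-p(u,\bm S)\|_{L_2}\cdot[\|\hat\mu(t,\bm S)-\mu(t,\bm S)\|_{L_2}+h\|\hat\beta(t,\bm S)-\beta(t,\bm S)\|_{L_2}]$, which is precisely the product-rate condition (c); and under branch (b)(ii) with $\bar p=p$ it vanishes, while under (b)(i) with $\bar\mu=\mu,\bar\beta=\beta$ the surviving term carries $\hat\mu-\mu$ and $\hat\beta-\beta$ which are $o_P(1)$ by consistency, so in either branch (IV)$=o_P(1)$.

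The main obstacle is the precise handling of (IV) under the $h\to 0$ asymptotic, specifically showing that the Taylor expansion of the IPW localization kernel against $\hat\mu-\bar\mu$ and $\hat\beta-\bar\beta$ produces only (i) a piece that is absorbed into the deterministic $O(h^2)$ bias (hence fine once we recenter by $h^2 B_{C,\theta}(t)$) and (ii) a genuine second-order product term matching (c) — this requires carefully tracking how the $(\frac{T-t}{h})K(\frac{T-t}{h})$ weight, which integrates to zero against constants and to $\kappa_2 h^2$ against $(t'-t)^2$-type terms, interacts with a third-order Taylor remainder of $\mu$ in $t$, and confirming that the restriction to the $\zeta$-interior via $\bar p_\zeta$ (together with Assumption~\ref{assump:cond_support_smooth} ensuring $\mathcal S(t+uh)\supset\{\bar p_\zeta(\cdot|t)>0\}$ for $|u|\le 1$) removes any support-discrepancy boundary contribution, so that $\mu(t+uh,\bm s)$ is well-defined everywhere the integrand is supported. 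All of these are close analogues of steps already carried out (with positivity) in the proof of \autoref{thm:theta_pos} and in Proposition~\ref{prop:theta_IPW_new2}, so I would organize the proof to reuse those lemmas and only spell out the modifications caused by replacing $1/p_{T|\bm S}(T|\bm S)$ with $p_\zeta(\bm S|t)/p(T,\bm S)$ and the RA centering with the $\zeta$-interior integral.
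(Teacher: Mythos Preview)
Your proposal is correct and follows essentially the same decomposition and argument as the paper's proof: the paper splits $\hat{\theta}_{\mathrm{C,DR}}(t)-\theta(t)$ into an oracle piece (your (I)+(II), their Term~VI), empirical-process remainders (your (III), their Terms~VIII--IX), a linear-in-$\hat p_\zeta$ term controlled by condition~(a) (their Term~VII), a genuine cross term handled by~(c) (their Term~X), and branch-dependent first-order terms that vanish under~(b) (their Term~XI), exactly as you outline. One small slip: the bias in (II) is $h^2 B_{C,\theta}(t)+o(h^2)$, not $h^3$ --- your subsequent conclusion $\sqrt{nh^3}\cdot h^2=\sqrt{nh^7}\to\sqrt{c_3}$ is consistent with the correct exponent, so this is just a typo.
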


Similar to our discussions in \autoref{subsec:theta_inference_pos} and \autoref{subsec:nonp_eff}, $V_{C,\theta}(t)$ in \autoref{thm:theta_nopos} attains the nonparametric efficiency bound derived from a smooth functional $\P\mapsto \varpi_{C,h,t}(\P):= \mathbb{E}\left[\frac{Y \cdot \left(\frac{T-t}{h}\right) K\left(\frac{T-t}{h}\right) \cdot \bar{p}_{\zeta}(S|t)}{h^2\cdot \kappa_2 \cdot p(T,\bm{S})}\right]$, and we can estimate it by
$$\hat{V}_{C,\theta}(t) = \frac{1}{n} \sum_{i=1}^n \left\{\phi_{C,h,t}\left(Y_i,T_i,\bm{S}_i;\hat{\mu}, \hat{\beta}, \hat{p}, \hat{p}_{\zeta}\right) + \sqrt{h^3}\left[\int \hat{\beta}(t,\bm{s}) \cdot \hat{p}_{\zeta}(\bm{s}|t)\, d\bm{s} - \hat{\theta}_{\mathrm{C,DR}}(t) \right]\right\}^2$$
and choose the bandwidth $h$ to be of order $O\left(n^{-\frac{1}{5}}\right)$ to ensure valid inference. 
As a corollary, we can plug either IPW \eqref{theta_IPW_bnd_corrected} or DR \eqref{theta_DR_bnd_corrected} estimators into our integral formula \eqref{m_RA_corrected} to obtain the integral IPW or DR estimators of the dose-response curve $m(t)$ under model \eqref{add_conf_model}. We establish the asymptotic theory for these integral estimators in Corollary~\ref{cor:m_nopos} of \autoref{app:m_nopos_proof}.

\begin{remark}
Under standard regularity conditions in nonparametric estimation \citep{wasserman2006all}, the rates of convergence for $\hat{\mu}, \hat{p}_{T|\bm{S}},\hat{p}$ in \autoref{thm:theta_pos} and \autoref{thm:theta_nopos} would be of order $O\left(n^{-\frac{2}{4 + d}}\right)$ up to some possible $\log n$ factors, while $\hat{\beta}$ converges at rate $O\left(n^{-\frac{2}{6 + d}}\right)$. 
As shown by \cite{farrell2021deep,colangelo2020double}, these rates are attainable by neural network models. Additionally, the rate of convergence for $\hat{F}_{\bm{S}|T}$ can be dimensionally independent, achieving $O\left(\left(\frac{\log n}{n}\right)^{\frac{2}{5}}\right)$ \citep{einmahl2005uniform}. Faster rates can be obtained under higher-order smoothness conditions and with higher-order kernel functions. Finally, the typical rate of convergence for $\hat{p}_{\zeta}$ to $p_{\zeta}$ is of order $O\left(n^{-\frac{2}{5+d}}\right)$ \citep{cuevas1997plug,tsybakov1997nonparametric}, which seems to be slower than the requirement $\sqrt{nh}\norm{\hat{p}_{\zeta}(\bm{S}|t) - \bar{p}_{\zeta}(\bm{S}|t)}_{L_2} = o_P\left(1\right)$. However, we emphasize that the limiting quantity $\bar{p}_{\zeta}$ of $\hat{p}_{\zeta}$ in \autoref{thm:theta_nopos} needs not be the true interior conditional density $p_{\zeta}$ but rather its smooth surrogate. This flexibility allows for constructing $\hat{p}_{\zeta}$ with a smaller bandwidth (distinct from $h$) or using a more data-adaptive method that ensures sufficiently fast convergence to $\bar{p}_{\zeta}$.
\end{remark}

\section{Numerical Experiments}
\label{sec:experiments}

In this section, we evaluate the finite-sample performances of our proposed estimators of $\theta(t)=\frac{d}{dt}\E\left[Y(t)\right]$ in \autoref{sec:theta_pos} and compare them with the finite-difference approach in \cite{colangelo2020double} under the positivity condition through simulation studies and an analysis of the Job Corps program in the United States. Furthermore, we compare the bias-corrected estimators of $\theta(t)$ in \autoref{subsec:bias_corrected_IPW_DR} with their counterparts via simulation studies when the positivity condition is violated. 

\subsection{Simulation Studies With Positivity}
\label{subsec:simulation}

We generate i.i.d. observations $\{(Y_i,T_i,\bm{S}_i)\}_{i=1}^n$ from the following data-generating model as in \cite{colangelo2020double,klosin2021automatic}:
\begin{align}
\label{dgp}
\begin{split}
&Y=1.2\,T+T^2 + TS_1+1.2\,\bm{\xi}^T\bm{S} +\epsilon\cdot \sqrt{0.5+F_{\mathcal{N}(0,1)}(S_1)}, \quad \epsilon\sim \mathcal{N}(0,1),\\ 
&T= F_{\mathcal{N}(0,1)}\left(3\bm{\xi}^T\bm{S}\right) - 0.5 + 0.75E, \quad  \bm{S}=(S_1,...,S_d)^T \sim \mathcal{N}_d\left(\bm{0}, \Sigma\right), \quad E\sim \mathcal{N}(0,1),
\end{split}
\end{align}
where $F_{\mathcal{N}(0,1)}$ is the CDF of $\mathcal{N}\left(0, 1\right)$, $\bm{\xi}=(\xi_1,...,\xi_d)^T \in \mathbb{R}^d$ has its entry $\xi_j=\frac{1}{j^2}$ for $j=1,...,d$ as well as $\Sigma_{ii}=1$, $\Sigma_{ij}=0.5$ when $|i-j|=1$, and $\Sigma_{ij}=0$ when $|i-j|>1$ for $i,j=1,...,d$. Here, $d=20$ unless stated otherwise. The dose-response curve is thus given by $m(t)= 1.2t+t^2$, and our parameter of interest is the derivative effect curve $\theta(t)=1.2+2t$.

We evaluate our proposed estimators of $\theta(t)$ in \autoref{sec:theta_pos} alongside the finite-difference estimator by \cite{colangelo2020double} with 5-fold cross-fitting. In particular, we replicate their finite-difference estimators  using their neural network (NN) and kernel neural network (KNN) models for estimating the nuisance functions $\mu(t,\bm{s})$ and $p_{T|\bm{S}}(t|\bm{s})$, which yield their best performances. Additionally, similar to the setups in \cite{colangelo2020double,klosin2021automatic}, we use the Epanechnikov kernel $K(u)=\frac{3}{4}(1-|u|)\, \mathbbm{1}_{\{|u|\leq 1\}}$ under a bandwidth choice $h=1.25\, \hat{\sigma}_T\cdot n^{-\frac{1}{5}}$, where $\hat{\sigma}_T$ is the sample standard deviation of $\{T_1,...,T_n\}$. Furthermore, for our proposed estimators, the nuisance functions $\mu(t,\bm{s})$ and $\beta(t,\bm{s})$ are estimated by neural network models as well, while $p_{T|\bm{S}}(t|\bm{s})$ is estimated by either the method of kernel density estimation (KDE) on residuals or the approach of regressing kernel-smoothed outcomes (RKS); see \autoref{app:nuisance_est} for details. To prevent division by zero, all estimated conditional density values $\hat{p}_{T|\bm{S}}(T_i|\bm{S}_i), i=1,...,n$ smaller than 0.001 are set to this value. For comparison, we also implement our proposed DR estimator of $\theta(t)$ under the true conditional density (``True''). All our DR estimators are self-normalized as described in \autoref{app:self_norm} to reduce their variances. The nominal levels of all the yielded pointwise confidence intervals are set to 95\%.

\begin{figure}[t]
	\centering
	\includegraphics[width=1\linewidth]{./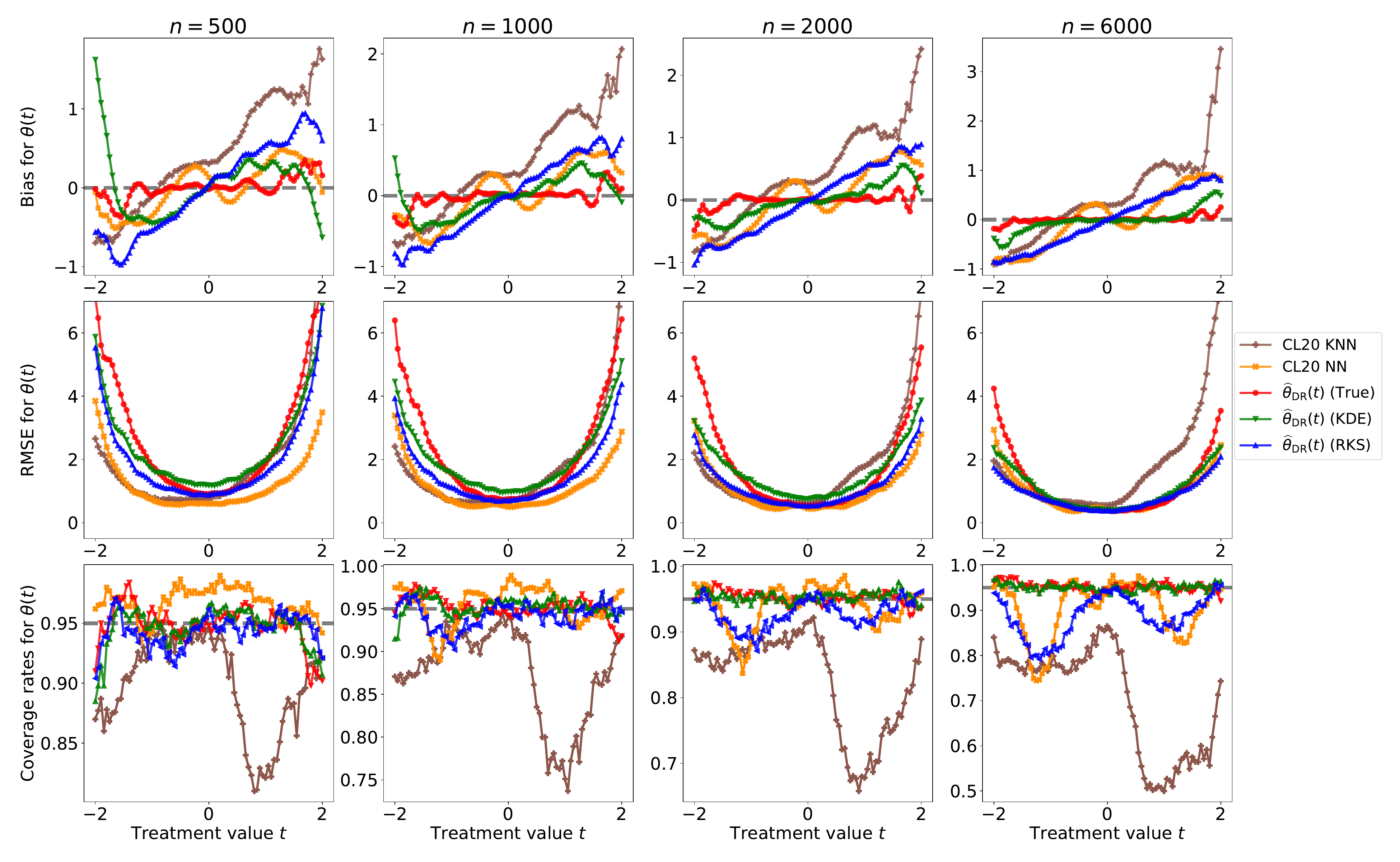}
	\caption{Comparisons between our proposed estimators and the finite-difference approaches by \cite{colangelo2020double} (``CL20'') under positivity and with 5-fold cross-fitting across various sample sizes. Rows present estimation bias, RMSE, and coverage probability for each estimator of $\theta(t)$, while columns correspond to different values for $n$.}
	\label{fig:theta_L5}
\end{figure}

The simulation results are shown in \autoref{fig:theta_L5} for various sample sizes, where the estimation biases, root mean square errors (RMSEs), and coverage rates of confidence intervals are calculated by averaging over 1000 Monte Carlo replications. Additional results when the bandwidth parameter varies or cross-fitting is not employed are in \autoref{app:add_pos_bandwidth} and \autoref{app:add_pos_nocrossfitting}. Unlike prior studies in \cite{colangelo2020double,klosin2021automatic}, which focus solely on $t=0$, our comparative simulations evaluate 81 treatment values across $t\in [-2,2]$. 
Overall, our proposed DR estimators, using either true or KDE-estimated conditional density values, outperform the finite-difference methods of \cite{colangelo2020double} in terms of estimation bias while maintaining comparable RMSE. When it comes to statistical inference, the confidence intervals from our DR estimators consistently show better empirical coverages than those from \cite{colangelo2020double}. These performance advantages of our DR estimators arise from directly estimating and inferring $\theta(t)$ without requiring a step-size parameter for finite-difference approximations.

\subsection{Simulation Studies Without Positivity}
\label{subsec:sim_nopos}

We now assess the finite-sample performances of our bias-corrected IPW and DR estimators of $\theta(t)$ in \autoref{subsec:bias_corrected_IPW_DR} and compare them with those counterparts in \autoref{sec:theta_pos} when the positivity condition is violated. To this end, we generate i.i.d. data $\{(Y_i, T_i,S_i)\}_{i=1}^n$ from the following data-generating model
\begin{align}
	\label{dgp2}
	\begin{split}
		&Y=T^3 +T^2 + 10S+\epsilon, \quad T= \sin(\pi S) + E, \quad S\sim \mathrm{Uniform}[-1,1] \subset \mathbb{R}, 
	\end{split}
\end{align}
where $E\sim \mathrm{Uniform}[-0.3,0.3]$ is an independent treatment variation and $\epsilon\sim \mathcal{N}(0,1)$ is an independent noise variable. The marginal supports of $T$ and $S$ are $\mathcal{T}=[-1.3, 1.3]$ and $\mathcal{S}=[-1,1]$ respectively, while the joint support of $(T,S)$ only covers a thin band region of the product space $\mathcal{T}\times \mathcal{S}$; see Figure 1 in \cite{zhang2024nonparametric} for illustration. The true derivative effect curve is thus given by $\theta(t)=3t^2 + 2t$.

We evaluate our bias-corrected estimators of $\theta(t)$ in \autoref{subsec:bias_corrected_IPW_DR} on the simulated dataset, alongside those estimators from \autoref{sec:theta_pos} that assumes the positivity condition. All these estimators are assessed with 5-fold cross-fitting. Again, we use the Epanechnikov kernel $K(u)=\frac{3}{4}(1-|u|)\, \mathbbm{1}_{\{|u|\leq 1\}}$ under a bandwidth choice $h=2\,\hat{\sigma}_T\cdot n^{-\frac{1}{5}}$. For those estimators assuming positivity, we estimate the nuisance functions $\mu(t,s)$ and $\beta(t,s)$ by neural network models in \autoref{app:nuisance_est} and utilize the true conditional density function $p_{T|S}$ evaluated at the observations $\{(T_i,S_i)\}_{i=1}^n$. For the bias-corrected estimators, we estimate the joint density $p(t,s)$ and conditional density $p_{S|T}(s|t)$ using kernel density estimation with a Gaussian kernel $K(u)=\frac{1}{\sqrt{2\pi}}\exp\left(-\frac{u^2}{2}\right)$. The estimated interior densities $\hat{p}_{\zeta}(S_i|t), i=1,...,n$ are computed via the trimming method outlined in Remark~\ref{remark:interior_density_est}. All the estimators are self-normalized as described in \autoref{app:self_norm} to reduce their variances, and the nominal levels of all the yielded pointwise confidence intervals are set to 95\%.

\begin{figure}[t]
	\centering
	\includegraphics[width=1\linewidth]{./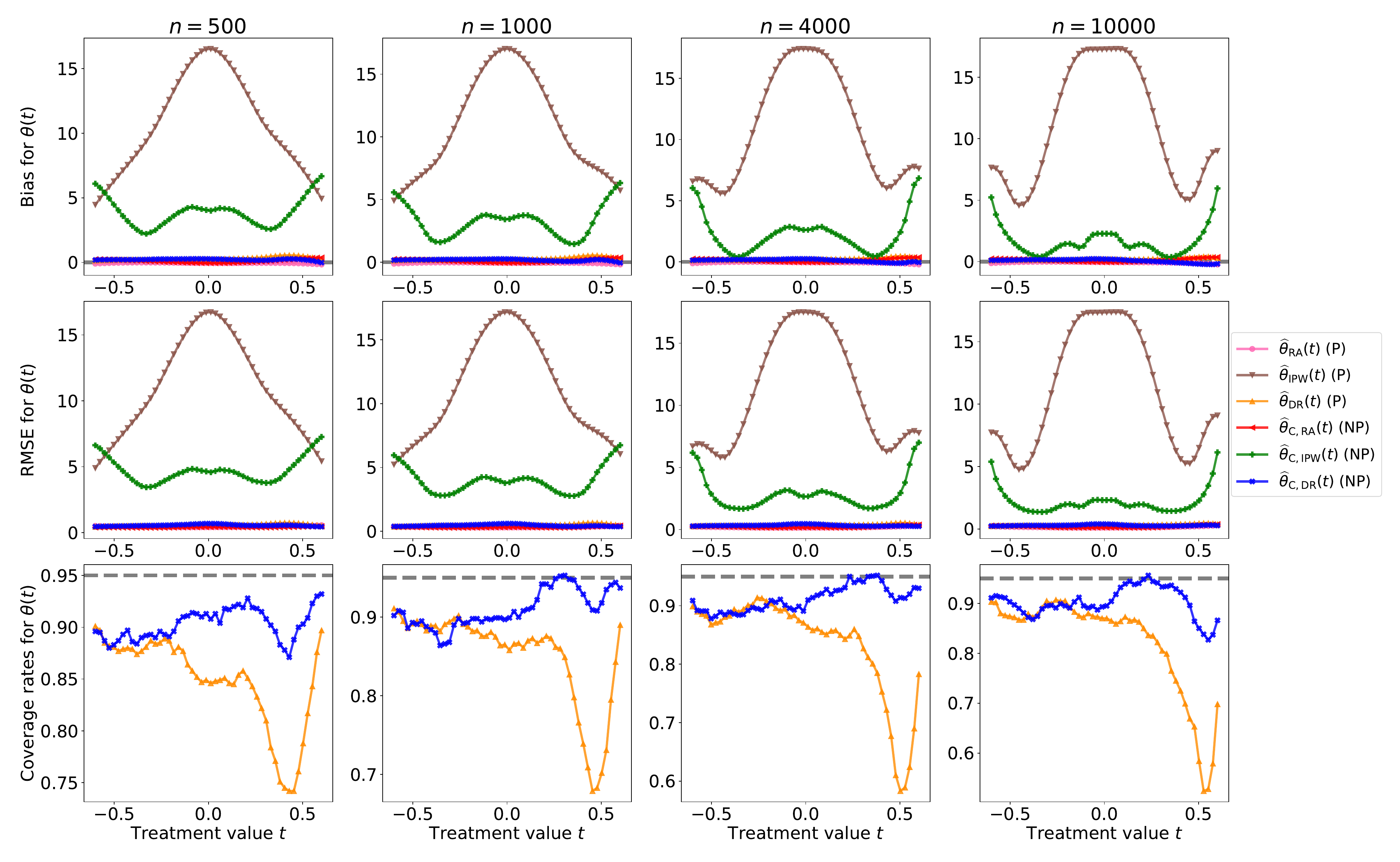}
	\caption{Comparisons between our bias-corrected estimators (NP) in \autoref{subsec:bias_corrected_IPW_DR} and their counterparts (P) in \autoref{sec:theta_pos} under the violation of positivity and with 5-fold cross-fitting across different sample sizes. Rows present estimation bias, RMSE, and coverage probability for each estimator of $\theta(t)$, while columns correspond to different values for $n$.}
	\label{fig:theta_L5_nopos}
\end{figure}

The simulation results for different sample sizes are presented in \autoref{fig:theta_L5_nopos}, where the estimation biases, root mean square errors (RMSEs), and coverage rates of confidence intervals are calculated by averaging over 1000 Monte Carlo replications. Additional results when the bandwidth parameter varies or cross-fitting is not employed are in \autoref{app:add_sim_nopos_bandwidth} and \autoref{app:add_nopos_nocrossfitting}. The bias-corrected IPW estimator \eqref{theta_IPW_bnd_corrected} effectively reduces the estimation biases of the standard IPW estimator \eqref{theta_IPW} of $\theta(t)$ across when the positivity condition is violated. Furthermore, the bias-corrected DR estimator \eqref{theta_DR_bnd_corrected} achieves comparable biases and RMSEs to its standard counterpart \eqref{theta_DR}, even when \eqref{theta_DR} uses the oracle conditional density $p_{T|S}$. Notably, the confidence intervals yielded by the bias-corrected DR estimator \eqref{theta_DR_bnd_corrected} exhibit better coverage probabilities compared to its counterpart \eqref{theta_DR}. These findings support the theoretical properties of our proposed bias-corrected IPW and DR estimators in \autoref{subsec:bias_corrected_IPW_DR}. Nonetheless, the bias-corrected RA estimator \eqref{theta_RA_corrected} remains the preferred choice when it comes to estimation accuracy due to its simplicity under violations of the positivity condition.

\subsection{Case Study: An Analysis of the Job Corps Program}
\label{subsec:job_corps}

We demonstrate the applicability of our proposed DR estimators for $\theta(t)$ by extending the analysis of \cite{colangelo2020double} on the Job Corps program in the United States (U.S.). This program aims at providing academic and vocational training to U.S. legal residents aged 16--24 who come from low-income households \citep{schochet2001national}. The data used in our analysis originated from the National Job Corps Study, which conducted some randomized experiments on first-time applicants in the 48 contiguous states and the District of Columbia between November 1994 and February 1996 \citep{schochet2008does}. 

\begin{figure}[t]
	\centering
	\includegraphics[width=1\linewidth]{./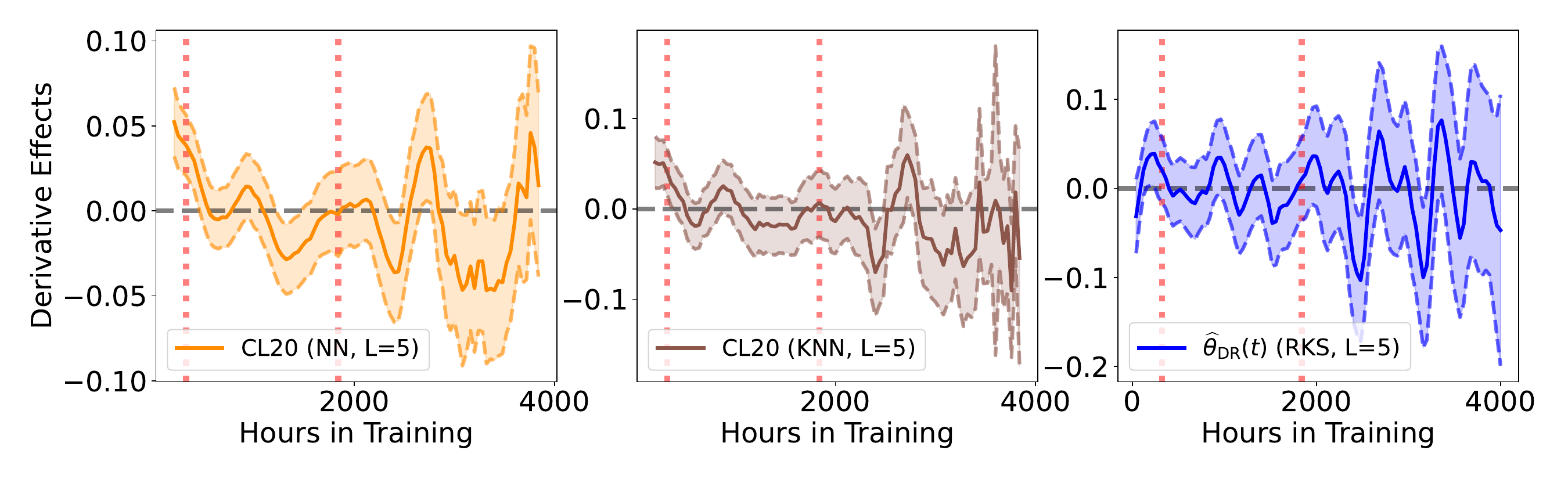}
	\caption{Estimated derivative effect curves with 95\% confidence intervals using our proposed estimators and the finite-difference approaches by \cite{colangelo2020double} (``CL20'') under 5-fold cross-fitting. The vertical red dotted lines mark the original treatment range $[320,1840]$ analyzed in \cite{colangelo2020double}.}
	\label{fig:job_corps}
\end{figure}

Numerous studies have examined the causal effects of the Job Corps program from various angles \citep{flores2009identification,flores2012estimating,huber2014identifying,lee2018partial,huber2020direct,lee2024lee}. Following \cite{colangelo2020double}, we analyze the relationship between employment outcomes and the duration of academic and vocational training, focusing on the derivative effect curve $\theta(t)=\frac{d}{dt}\E\left[Y(t)\right]$. The data sample includes 4,024 individuals who received at least 40 hours of training. The outcome variable $Y$ represents the proportions of weeks employed in the second year following the program assignment, and the treatment variable $T$ is the total hours of academic and vocational training received. The covariate vector $\bm{S}$, comprising 49 socioeconomic characteristics, ensures the validity of the ignorability assumption \citep{flores2012estimating}; see Table 4 in \cite{huber2020direct} for detailed descriptions of the covariates. Before applying derivative effect estimation methods, categorical covariates were converted to dummy variables, and all variables were standardized to have mean 0 and variance 1.

We apply our proposed DR estimator \eqref{theta_DR} with the same setup as in \autoref{subsec:simulation} to the standardized data, extending the range of queried treatment values from $[320,1840]$ to $[40, 4000]$. For consistency, we use the same bandwidth parameter $h=223$ and apply the neural network model for conditional density estimation as in \cite{colangelo2020double}. The estimated derivative effect curves with 95\% confidence intervals under 5-fold cross-fitting are shown in \autoref{fig:job_corps}. Overall, our DR estimator produces similar patterns to the finite-difference estimates from \cite{colangelo2020double}. However, our confidence intervals are more conservative and include 0 for nearly all treatment values, suggesting insufficient evidence to confirm the program’s effectiveness. Additional results when cross-fitting is not employed are shown in \autoref{app:job_corps_nocrossfitting}.


\section{Discussion}

In summary, this paper studies nonparametric DR inference methods for the derivative function of the dose-response curve with and without the positivity condition. We establish the asymptotic properties of our proposed estimators under mild conditions, permitting the use of machine learning methods for nuisance function estimation with cross-fitting. Furthermore, our identification theory and refinements of IPW and DR estimators without positivity open up a novel link between the dose-response curve inference challenge and the nonparametric set estimation problem. Simulation studies and empirical applications demonstrate the advantages of our DR estimator over the existing finite-difference method for derivative effect inference. This work also highlights several avenues for future research.


{\bf 1. Bias correction for DR estimators:} As shown in \autoref{thm:theta_pos} and \autoref{thm:theta_nopos}, our DR estimators of $\theta(t)$ contain bias terms of order $O(h^2)$. These biases become asymptotically negligible when the bandwidth is chosen as $h\asymp n^{-\frac{1}{5}}$ that matches up the standard rate of convergence for nonparametric regression. To guarantee valid inference, an alternative approach is to explicitly estimate and correct these bias terms, as demonstrated by \cite{calonico2018effect,cheng2019nonparametric,takatsu2022debiased}. A rigorous investigation of this bias-corrected approach for our DR estimators would be a valuable direction for future research.


{\bf 2. Derivative estimation in other causal contexts:} Our proposed DR inference methods for $\theta(t)$ can be naturally extended to conduct inference on other causal estimands of interest, such as the instantaneous causal effect $\frac{d}{d t} \E\left[Y(t)|\bm{S}=\bm{s}\right]$ \citep{stolzenberg1980measurement,ratkovic2023estimation} or the marginal direct and indirect effects in causal mediation analysis \citep{huber2020direct,xu2021multiply}.



\section*{Acknowledgement}

We thank Alex Luedtke and Jon A. Wellner for their helpful comments. YZ is supported in part by YC's NSF grant DMS-2141808. YC is supported by NSF grants DMS-1952781, 2112907, 2141808, and NIH U24-AG07212.

\vspace{5mm}
\singlespacing
\bibliography{IPWDR_bib}

\newpage 
\onehalfspacing

\begin{center}
{\LARGE \textbf{Supplementary Materials to ``Doubly Robust Inference on Causal Derivative Effects for Continuous Treatments''}}\\~\\

\end{center}

\appendix

\setcounter{page}{1}

\noindent{\bf\LARGE Contents}

\startcontents[sections]
\printcontents[sections]{l}{1}{\setcounter{tocdepth}{2}}

\vspace{5mm}

\section{Practical Considerations}
\label{app:practical}

In this section, we outline some practical aspects involved in implementing our proposed estimators of dose-response and derivative effect curves.

\subsection{Self-Normalized IPW and DR Estimators}
\label{app:self_norm}

The classical IPW estimators of $m(t)$ and $\theta(t)$ suffer from the variance blowup when some estimated conditional densities $\hat{p}_{T|\bm{S}}(T_i|\bm{S}_i), i=1,...,n$ are close to 0. While truncating these estimates in a threshold value can reduce the instability of IPW estimators \citep{branson2023causal,wu2024matching}, determining the appropriate threshold value in practice is not straightforward. Alternatively, one can reduce the variances of IPW estimators and maintain its consistency by implementing the self-normalized version of IPW estimators
\citep{swaminathan2015self,kallus2018policy}. This idea was originally from the importance sampling literature \citep{tukey1956conditional} and also known as H\'ajek estimator \citep{godambe1965admissibility}. 

$\bullet$ {\bf Estimators Under Positivity:} For the IPW estimator \eqref{m_IPW} of $m(t)$, its self-normalized version takes the form
\begin{equation}
	\label{m_IPW_selfnorm}
	\hat{m}_{\mathrm{IPW}}^{\mathrm{norm}}(t) = \frac{\hat{m}_{\mathrm{IPW}}(t)}{\frac{1}{nh}\sum_{j=1}^n \frac{K\left(\frac{T_j-t}{h}\right)}{\hat{p}_{T|\bm{S}}(T_j|\bm{S}_j)}} = \frac{\sum_{i=1}^n \frac{Y_i\cdot K\left(\frac{T_i-t}{h}\right)}{\hat{p}_{T|\bm{S}}(T_i|\bm{S}_i)}}{\sum_{j=1}^n \frac{K\left(\frac{T_j-t}{h}\right)}{\hat{p}_{T|\bm{S}}(T_j|\bm{S}_j)}}.
\end{equation}
The self-normalized IPW estimator \eqref{m_IPW_selfnorm} maintains the consistency of the original IPW estimator \eqref{m_IPW}, because $\hat{p}_{T|\bm{S}}$ is a consistent estimator of $p_{T|\bm{S}}$ and the (oracle) denominator of \eqref{m_IPW_selfnorm} has its expectation as:
\begin{align*}
\mathbb{E}\left[\frac{1}{nh}\sum_{j=1}^n \frac{K\left(\frac{T_j-t}{h}\right)}{p_{T|\bm{S}}(T_j|\bm{S}_j)}\right] &= \frac{1}{h} \int_{\mathcal{T}\times \mathcal{S}} \frac{K\left(\frac{t_1-t}{h}\right)}{p_{T|\bm{S}}(t_1|\bm{s}_1)} \cdot p(t_1,\bm{s}_1)\, d\bm{s}_1 dt_1 \\
&= \int_{\mathbb{R}\times \mathcal{S}} K(u) \cdot p_{\bm{S}}(\bm{s}_1)\, d\bm{s}_1 du =1
\end{align*}
under Assumption~\ref{assump:reg_kernel}.

Similarly, for the IPW estimator \eqref{theta_IPW} of $\theta(t)$, its self-normalized version can be written as:
\begin{equation}
	\label{theta_IPW_selfnorm}
	\hat{\theta}_{\mathrm{IPW}}^{\mathrm{norm}}(t) = \frac{\hat{\theta}_{\mathrm{IPW}}(t)}{\frac{1}{nh}\sum_{j=1}^n \frac{K\left(\frac{T_j-t}{h}\right)}{\hat{p}_{T|\bm{S}}(T_j|\bm{S}_j)}} = \frac{\sum_{i=1}^n \frac{Y_i \left(\frac{T_i-t}{h}\right) K\left(\frac{T_i-t}{h}\right)}{\hat{p}_{T|\bm{S}}(T_i|\bm{S}_i)}}{\kappa_2h\sum_{j=1}^n \frac{K\left(\frac{T_j-t}{h}\right)}{\hat{p}_{T|\bm{S}}(T_j|\bm{S}_j)}}.
\end{equation}
This self-normalized technique can also be applied to the IPW component of the DR estimators \eqref{m_DR} and \eqref{theta_DR} to stabilize their variances, leading to self-normalized DR estimators of $m(t)$ and $\theta(t)$ as:
\begin{equation}
\label{m_DR_selfnorm}
\hat{m}_{\mathrm{DR}}^{\mathrm{norm}}(t)= \frac{\sum_{i=1}^n \frac{\left[Y_i - \hat \mu(t,\bm{S}_i)\right] K\left(\frac{T_i-t}{h}\right)}{\hat{p}_{T|\bm{S}}(T_i|\bm{S}_i)}}{\sum_{j=1}^n \frac{K\left(\frac{T_j-t}{h}\right)}{\hat{p}_{T|\bm{S}}(T_j|\bm{S}_j)}} + \frac{1}{n}\sum_{i=1}^n \hat{\mu}(t,\bm{S}_i)
\end{equation}
and
\begin{equation}
	\label{theta_DR_selfnorm}
\hat{\theta}_{\mathrm{DR}}^{\mathrm{norm}}(t) = \frac{\sum_{i=1}^n \frac{\left[Y_i - \hat{\mu}(t,\bm{S}_i) - (T_i-t)\cdot \hat{\beta}(t,\bm{S}_i) \right]\left(\frac{T_i-t}{h}\right) K\left(\frac{T_i-t}{h}\right)}{\hat{p}_{T|\bm{S}}(T_i|\bm{S}_i)}}{\kappa_2h \sum_{j=1}^n \frac{K\left(\frac{T_j-t}{h}\right)}{\hat{p}_{T|\bm{S}}(T_j|\bm{S}_j)}} + \frac{1}{n}\sum_{i=1}^n \hat{\beta}(t,\bm{S}_i),
\end{equation}
respectively. Compared to \eqref{theta_var_est}, the estimated asymptotic variance of the self-normalized DR estimator \eqref{theta_DR_selfnorm} thus becomes
\begin{equation}
	\label{theta_var_est_selfnorm}
	\hat{V}_{\theta}(t) = \frac{1}{n} \sum_{i=1}^n \left\{\frac{nh\cdot \phi_{h,t}\left(Y_i,T_i,\bm{S}_i;\hat{\mu}, \hat{\beta}, \hat{p}_{T|\bm{S}}\right)}{\sum_{j=1}^n \frac{K\left(\frac{T_j-t}{h}\right)}{\hat{p}_{T|\bm{S}}(T_j|\bm{S}_j)}} + \sqrt{h^3}\left[\hat{\beta}(t,\bm{S}_i) - \hat{\theta}_{\mathrm{DR}}^{\mathrm{norm}}(t) \right]\right\}^2.
\end{equation}

$\bullet$ {\bf Estimators Without Positivity:} For the bias-corrected IPW estimator \eqref{theta_IPW_bnd_corrected} of $\theta(t)$, we also adopt the self-normalized technique to deduce that
\begin{equation}
	\label{theta_IPW_bnd_corrected_selfnorm}
	\hat{\theta}_{\mathrm{C,IPW}}^{\mathrm{norm}}(t) = \frac{\hat{\theta}_{\mathrm{C,IPW}}(t)}{\frac{1}{nh}\sum_{j=1}^n \frac{K\left(\frac{T_j-t}{h}\right)\cdot \hat{p}_{\zeta}(\bm{S}_j|t)}{\hat{p}(T_j,\bm{S}_j)}} 
	= \frac{\sum_{i=1}^n \frac{Y_i \left(\frac{T_i-t}{h}\right) K\left(\frac{T_i-t}{h}\right)\cdot \hat{p}_{\zeta}(\bm{S}_i|t)}{\hat{p}(T_i,\bm{S}_i)}}{\kappa_2h\sum_{j=1}^n \frac{K\left(\frac{T_j-t}{h}\right)\cdot \hat{p}_{\zeta}(\bm{S}_j|t)}{\hat{p}(T_j,\bm{S}_j)}}.
\end{equation}
The self-normalized IPW estimator \eqref{theta_IPW_bnd_corrected_selfnorm} again maintains the consistency of the original IPW estimator \eqref{theta_IPW_bnd_corrected}, because $\hat{p},\hat{p}_{\zeta}$ are consistent estimators of $p,\bar{p}_{\zeta}$ respectively and the (oracle) denominator of \eqref{theta_IPW_bnd_corrected_selfnorm} has its expectation as:
\begin{align*}
	\mathbb{E}\left[\frac{1}{nh}\sum_{j=1}^n \frac{K\left(\frac{T_j-t}{h}\right)\cdot \bar{p}_{\zeta}(\bm{S}_j|t)}{p(T_j,\bm{S}_j)}\right] &= \frac{1}{h} \int_{\mathcal{T}\times \mathcal{S}} K\left(\frac{t_1-t}{h}\right) \cdot \bar{p}_{\zeta}(\bm{s}_1|t)\, d\bm{s}_1 dt_1 \\
	&= \int_{\mathbb{R}\times \mathcal{S}} K(u) \cdot \bar{p}_{\zeta}(\bm{s}_1|t)\, d\bm{s}_1 du =1
\end{align*}
under Assumption~\ref{assump:reg_kernel}.

Analogously, the self-normalized bias-corrected DR estimator of $\theta(t)$ is given by
\begin{equation}
	\label{theta_DR_bnd_corrected_selfnorm}
	\hat{\theta}_{\mathrm{C,DR}}^{\mathrm{norm}}(t) = \frac{\sum_{i=1}^n \frac{\left[Y_i - \hat{\mu}(t,\bm{S}_i) - (T_i-t)\cdot \hat{\beta}(t,\bm{S}_i)\right] \left(\frac{T_i-t}{h}\right) K\left(\frac{T_i-t}{h}\right)\cdot \hat{p}_{\zeta}(\bm{S}_i|t)}{\hat{p}(T_i,\bm{S}_i)}}{\kappa_2h\sum_{j=1}^n \frac{K\left(\frac{T_j-t}{h}\right)\cdot \hat{p}_{\zeta}(\bm{S}_j|t)}{\hat{p}(T_j,\bm{S}_j)}} + \int \hat{\beta}(t,\bm{s})\cdot \hat{p}_{\zeta}(\bm{s}|t)\, d\bm{s},
\end{equation}
whose estimated asymptotic variance becomes
$$\hat{V}_{C,\theta}(t) = \frac{1}{n} \sum_{i=1}^n \left\{\frac{nh\cdot \phi_{C,h,t}\left(Y_i,T_i,\bm{S}_i;\hat{\mu}, \hat{\beta}, \hat{p}, \hat{p}_{\zeta}\right)}{\sum_{j=1}^n \frac{K\left(\frac{T_j-t}{h}\right)\cdot \hat{p}_{\zeta}(\bm{S}_j|t)}{\hat{p}(T_j,\bm{S}_j)}} + \sqrt{h^3}\left[\int \hat{\beta}(t,\bm{s}) \cdot \hat{p}_{\zeta}(\bm{s}|t)\, d\bm{s} - \hat{\theta}_{\mathrm{C,DR}}(t) \right]\right\}^2.$$

\subsection{Implementation of Proposed Estimators in \autoref{sec:theta_pos} with Cross-Fitting}
\label{app:imple}

We explain the implementation details for our proposed estimators of $\theta(t)$ in \autoref{sec:theta_pos} with cross-fitting \citep{schick1986on,newey2018cross,chernozhukov2018double} as follows. The same procedures can be applied to the estimators of $m(t)$ in \autoref{subsec:m_pos} as well.

{\bf 1. Partitioning the Data:} The observed data $\{(Y_i,T_i,\bm{S}_i)\}_{i=1}^n$ are partitioned into $L$ distinct subsets of approximately equal size. Commonly, the 5-fold ($L=5$) or 10-fold ($L=10$) cross-fitting is applied in practice, and no cross-fitting is used when $L=1$ by convention. Let $I_{\ell}, \ell=1,...,L$ be the index sets of such a partition so that $\cup_{\ell=1}^L I_{\ell}=\{1,...,n\}$. 

{\bf 2. Estimating the Nuisance Functions:} For each index set $I_{\ell}$, we estimate the nuisance functions $\mu,\beta,p_{T|\bm{S}}$ using the observations that are \emph{not} in $I_{\ell}$; see \autoref{app:nuisance_est} for details. The estimated nuisance functions are denoted by $\hat{\mu}^{(\ell)},\hat{\beta}^{(\ell)},\hat{p}_{T|\bm{S}}^{(\ell)}$, respectively, for $\ell=1,...,L$. Recall that $\mu(t,\bm{s})$ is the conditional mean outcome function, $\beta(t,\bm{s})$ is the partial derivative of $\mu(t,\bm{s})$ with respect to $t$, and $p_{T|\bm{S}}(t|\bm{s})$ is the conditional density function of $T$ given $\bm{S}=\bm{s}$.

{\bf 3. Constructing the Final Estimators:} The RA, IPW, and DR estimators of $\theta(t)$ under cross-fitting are given by
\begin{align}
\label{theta_est_crossfit}
\begin{split}
\hat{\theta}_{\mathrm{RA}}(t) &= \frac{1}{n}\sum_{\ell=1}^L \sum_{i\in I_{\ell}} \hat{\beta}^{(\ell)}(t,\bm{S}_i), \\
\hat{\theta}_{\mathrm{IPW}}(t) &= \frac{1}{nh^2} \sum_{\ell=1}^L \sum_{i\in I_{\ell}} \frac{Y_i\left(\frac{T_i-t}{h}\right) K\left(\frac{T_i-t}{h}\right)}{\kappa_2 \cdot \hat{p}_{T|\bm{S}}^{(\ell)}(T_i|\bm{S}_i)},\\
\hat{\theta}_{\mathrm{DR}}(t) &= \frac{1}{nh}\sum_{\ell=1}^L \sum_{i\in I_{\ell}} \left\{ \frac{\left(\frac{T_i-t}{h}\right)K\left(\frac{T_i-t}{h}\right) }{h\cdot \kappa_2\cdot \hat{p}_{T|\bm{S}}^{(\ell)}(T_i|\bm{S}_i)} \left[Y_i - \hat{\mu}^{(\ell)}(t,\bm{S}_i) - (T_i-t)\cdot \hat{\beta}^{(\ell)}(t,\bm{S}_i)
\right]+ h\cdot \hat{\beta}^{(\ell)}(t,\bm{S}_i) \right\},
\end{split}
\end{align}
where the bandwidth parameter $h>0$ is chosen beforehand. The estimated asymptotic variance of $\hat{\theta}_{\mathrm{DR}}(t)$ under cross-fitting is given by
\begin{equation}
	\label{theta_var_est_crossfit}
	\hat{V}_{\theta}(t) = \frac{1}{n} \sum_{\ell=1}^L \sum_{i\in I_{\ell}} \left\{\phi_{h,t}\left(Y_i,T_i,\bm{S}_i;\hat{\mu}^{(\ell)}, \hat{\beta}^{(\ell)}, \hat{p}_{T|\bm{S}}^{(\ell)}\right) + \sqrt{h^3}\left[\hat{\beta}^{(\ell)}(t,\bm{S}_i) - \hat{\theta}_{\mathrm{DR}}(t) \right]\right\}^2.
\end{equation}
The self-normalized technique in \autoref{app:self_norm} can be applied to these cross-fitted estimators accordingly.

\subsection{Nuisance Function Estimation}
\label{app:nuisance_est}

The implementation of the DR estimator \eqref{theta_DR} of $\theta(t)$ in \autoref{sec:theta_pos} requires the estimation of three nuisance functions: (i) the conditional mean outcome function $\mu(t,\bm{s})=\E\left(Y|T=t,\bm{S}=\bm{s}\right)$; (ii) the partial derivative function $\beta(t,\bm{s})=\frac{\partial}{\partial t} \mu(t,\bm{s})$; and (iii) the conditional density $p_{T|\bm{S}}(t|\bm{s})$. Below, we discuss how these nuisance functions are estimated in our numerical experiments of \autoref{sec:experiments}.

$\bullet$ {\bf Estimations of $\mu(t,\bm{s})$ and $\beta(t,\bm{s})$:} We apply a fully connected neural network model with two hidden layers of size $100\times 50$ and use the sigmoid linear unit function $u\mapsto \frac{u}{1+e^{-u}}$ as the activation function to ensure the smoothness of resulting estimators $\hat{\mu}(t,\bm{s})$ and $\hat{\beta}(t,\bm{s})$. Other choices of the neural network architectures and activation functions also works for our proposed DR estimator \eqref{theta_DR}. Theoretically, Theorem 1 in \cite{farrell2021deep} and Section 3.1 in \cite{colangelo2020double} discuss some regularity conditions under which our requirements on the rates of convergence are satisfied by neural network models. Practically, our neural network model is implemented via \texttt{PyTorch} \citep{paszke2019pytorch}, and we use its automatic differentiation engine \citep{paszke2017automatic} to compute the estimated partial derivative $\hat{\beta}(t,\bm{s})$ from the fitted conditional mean outcome function $\hat{\mu}(t,\bm{s})$. In contrast to numerical differentiation, automatic differentiation offers the key advantage of being hyperparameter-free and inherently accurate to working precision \citep{baydin2018automatic,blondel2024elements}. Recently, it has been employed to compute the semi-parametric or non-parametric efficient influence function for any statistical functional \citep{luedtke2024simplifying}.

$\bullet$ {\bf Estimation of $p_{T|\bm{S}}(t|\bm{s})$:} Given the data-generating model \eqref{dgp}, we consider two different methods for estimating the conditional density $p_{T|\bm{S}}(t|\bm{s})$ with kernel smoothing techniques.
\begin{enumerate}
	\item \emph{Method 1 (Kernel density estimation (KDE) on residuals):} Notice that the relationship between the covariate vector $\bm{S}$ and the treatment variation variable $E$ is additive in model \eqref{dgp}, \emph{i.e.}, 
	$$T=g_{\bm{S}}(\bm{S})+ g_E(E) \quad \text{ with } \quad g_{\bm{S}}(\bm{S}) = \frac{1}{\sqrt{2\pi}}\int_{-\infty}^{3\bm{\xi}^T\bm{S}} \exp\left(-\frac{u^2}{2}\right) du - 0.5 \; \text{ and } \; g_E(E)=0.75E.$$
	In addition, the regression function of $T$ against $\bm{S}$ is given by $\E\left(T|\bm{S}=\bm{s}\right) = g_{\bm{S}}(\bm{s})$. To estimate the regression function $g_{\bm{S}}$, we can apply any machine learning method to the data $\{(T_i,\bm{S}_i)\}_{i=1}^n$ using cross-fitting. In the actual implementation, we use a neural network model with one hidden layer of size 20 and rectified linear unit function $u\mapsto\max\{0,u\}$ as the activation function. Based on the fitted regression function $\hat{g}_{\bm{S}}$, we construct an estimator of $p_{T|\bm{S}}(t|\bm{s})$ as:
	$$\hat{p}_{T|\bm{S}}(t|\bm{s}) = \frac{1}{nh_e} \sum_{i=1}^n K_e\left[\frac{t-\hat{g}_{\bm{S}}(\bm{s}) - \left(T_i-\hat{g}_{\bm{S}}(\bm{S}_i)\right)}{h_e}\right],$$
	where the kernel function $K_e$ and bandwidth $h_e>0$ may differ from those in our DR estimator \eqref{theta_DR}. It is worth noting that the observations $\{T_1-\hat{g}_{\bm{S}}(\bm{S}_1),...,T_n-\hat{g}_{\bm{S}}(\bm{S}_n)\}$ are not i.i.d., necessitating additional analysis for asymptotic theory and bandwidth selection \citep{hart1990data,meloche1990asymptotic}. For simplicity, we use the Epanechnikov kernel $K_e(u)=\frac{3}{4}(1-|u|)\, \mathbbm{1}_{\{|u|\leq 1\}}$ and choose the bandwidth via Silverman's rule of thumb as $\hat{h}_e=\left(\frac{4}{3}\right)^{\frac{1}{5}}\hat{\sigma}_e n^{-\frac{1}{5}}$, where $\hat{\sigma}_e$ is the sample standard deviation of $\{T_1-\hat{g}_{\bm{S}}(\bm{S}_1),...,T_n-\hat{g}_{\bm{S}}(\bm{S}_n)\}$.
	
	\item \emph{Method 2 (Regression on kernel-smoothed outcomes (RKS))} The validity of Method 1 relies on the additive relation between $\bm{S}$ and $E$ in the model for $T$. Since this additive structure may not hold in general, we consider another kernel smoothing method for estimating $p_{T|\bm{S}}(t|\bm{s})$. Specifically, we estimate a kernel-smoothed regression function $g(t,\bm{s})=\E\left[K_r\left(\frac{T-t}{h_r}\right) \big| \bm{S}=\bm{s}\right]$ by regressing kernel-smoothed outcomes $\left\{K_r\left(\frac{T_i-t}{h_r}\right)\right\}_{i=1}^n$ against the covariate vectors $\left\{\bm{S}_i\right\}_{i=1}^n$ via any machine learning method. The fitted kernel-smoothed regression function $\hat{g}(t,\bm{s})$ is a consistent estimator of $p_{T|\bm{S}}(t|\bm{s})$ when the regression method is accurate, because
	\begin{align*}
		g(t,\bm{s})&=\E\left[K_r\left(\frac{T-t}{h_r}\right) \Big| \bm{S}=\bm{s}\right]\\
		&=\int_{\mathcal{T}} K_r\left(\frac{t_1-t}{h_r}\right) p_{T|\bm{S}}(t_1|\bm{s})\, dt_1\\
		&= \int_{\mathbb{R}} K_r(u)\cdot p_{T|\bm{S}}(t+uh_r|\bm{s})\, du\\
		&= \int_{\mathbb{R}} K_r(u)\left[p_{T|\bm{S}}(t|\bm{s}) + uh_r\cdot p_{T|\bm{S}}'(t|\bm{s}) + \frac{u^2h_r^2}{2}\cdot p_{T|\bm{S}}''(t|\bm{s}) + o\left(h_r^3\right)\right]\, du\\
		&= p_{T|\bm{S}}(t|\bm{s}) + O\left(h_r^2\right)\\
		&\to p_{T|\bm{S}}(t|\bm{s})
	\end{align*}
    as $h_r\to 0$ under Assumptions~\ref{assump:den_diff} and \ref{assump:reg_kernel}(a-b). In the actual implementation, we again use a neural network model with one hidden layer of size 20 and rectified linear unit function $u\mapsto\max\{0,u\}$ as the activation function.
	Here, the kernel function $K_r$ and bandwidth $h_r>0$ can be different from those in our DR estimator \eqref{theta_DR}. To ensure a relatively large effective sample size for fitting $g$, we use the Gaussian kernel $K_r(u)=\frac{1}{\sqrt{2\pi}}\exp\left(-\frac{u^2}{2}\right)$ and choose the bandwidth by Silverman's rule of thumb as $\hat{h}_r=\left(\frac{4}{3}\right)^{\frac{1}{5}}\hat{\sigma}_T n^{-\frac{1}{5}}$, where $\hat{\sigma}_T$ is the sample standard deviation of $\{T_1,...,T_n\}$.
\end{enumerate}
Besides that, \cite{klosin2021automatic} proposed another method with kernel smoothing that directly estimates the reciprocal $\frac{1}{p_{T|\bm{S}}(t|\bm{s})}$ of the conditional density using a minimum distance Lasso approach \citep{chernozhukov2022automatic}. This method employs polynomial basis functions of the covariate vector $\bm{S}$ and a kernel-smoothed $L_2$ loss function. We briefly experimented with this approach and found that its performance and computational efficiency are inferior to the two methods above. In addition, this approach is very sensitive to the choice of its tuning parameter as shown in Section 6 of \cite{klosin2021automatic}. Thus, we choose not to report its results.

\section{Additional Simulation Results}
\label{app:add_sim}

This section provides supplementary simulation results assessing the impact of varying the bandwidth parameter on the performance of our proposed estimators of $\theta(t)$ and the finite-difference method by \cite{colangelo2020double}. Furthermore, we evaluate the finite-sample performances of our proposed estimators without cross-fitting in both simulation studies and the empirical analysis of the U.S. Job Corps Program dataset.

\subsection{Simulation Studies With Positivity Across Different Bandwidth Choices}
\label{app:add_pos_bandwidth}

We follow the same data-generating process and experimental setup in \autoref{subsec:simulation} to evaluate the performances of different estimators of $\theta(t)$ under the positivity condition, varying the bandwidth parameter $h$. In line with the bandwidth choices in \cite{colangelo2020double,klosin2021automatic}, we examine four scaling factors for the bandwidth parameter as $h=C_h\cdot \hat{\sigma}_T\cdot n^{-\frac{1}{5}}$ with $C_T\in \left\{0.75, 1, 1.25, 1.5\right\}$, where $\hat{\sigma}_T$ is the sample standard deviation of $\{T_1,...,T_n\}$. For supplementary purposes, we only present the simulation results with 5-fold cross-fitting when the sample size is $n=4000$ in \autoref{fig:theta_L5_bandwidth}. The results are mostly consistent with our findings in \autoref{subsec:simulation}.
Our proposed DR estimators, leveraging either true or KDE-estimated conditional densities, demonstrate lower estimation biases and superior empirical coverage probabilities for their confidence intervals compared to the finite-difference method of \cite{colangelo2020double}. At the same time, they maintain RMSEs that are comparable to the finite-difference method. This additional results further demonstrate the robustness of our proposed DR estimator \eqref{theta_DR} to variations in its bandwidth parameter.

\begin{figure}[t]
	\centering
	\includegraphics[width=1\linewidth]{./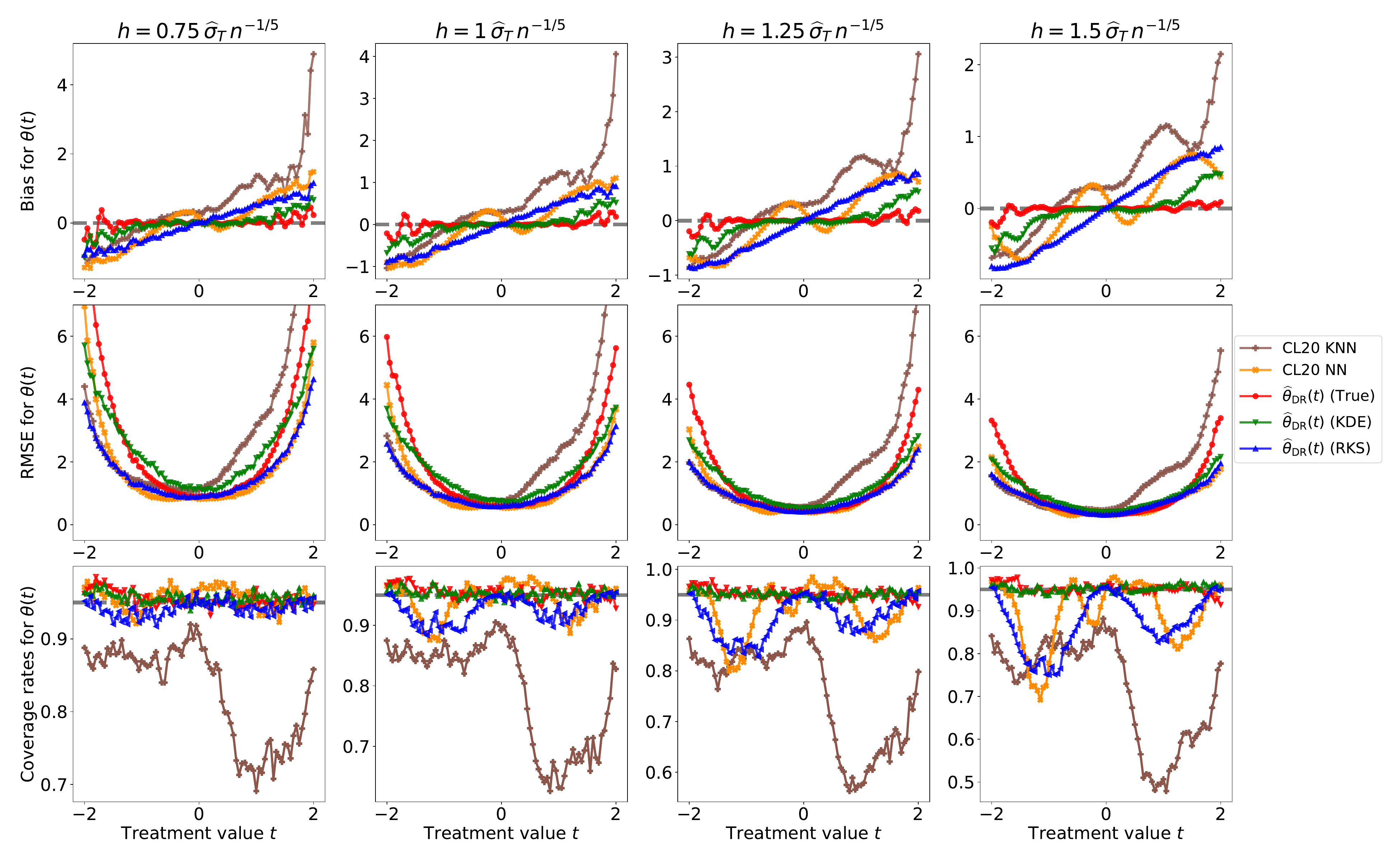}
	\caption{Comparisons between our proposed estimators and the finite-difference approaches by \cite{colangelo2020double} (``CL20'') under positivity and with 5-fold cross-fitting across different bandwidth values ($h$). Rows present estimation bias, RMSE, and coverage probability for each estimator of $\theta(t)$, while columns correspond to different scaling factors for $h$.}
	\label{fig:theta_L5_bandwidth}
\end{figure}

\subsection{Simulation Studies With Positivity and No Cross-Fitting}
\label{app:add_pos_nocrossfitting}

For exploratory purposes, we conduct additional simulations to compare the performances of our proposed estimators of $\theta(t)$ in \autoref{sec:theta_pos} with the finite-difference method by \cite{colangelo2020double} when cross-fitting is not employed.

We replicate the experimental setup in \autoref{subsec:simulation} to generate the simulation results shown in \autoref{fig:theta_L1} across various sample sizes without using any cross-fitting. When the sample size is small, our proposed DR estimators without cross-fitting exhibit lower RMSEs but higher estimation biases than the finite-difference method by \cite{colangelo2020double}, resulting in inferior empirical coverage probabilities for the associated confidence intervals. However, as the sample size increases, the estimation biases of our DR estimators diminish, and the empirical coverage probabilities of their confidence intervals improve, ultimately surpassing the finite-sample performance of the finite-difference method by \cite{colangelo2020double}.

\begin{figure}[t]
	\centering
	\includegraphics[width=1\linewidth]{./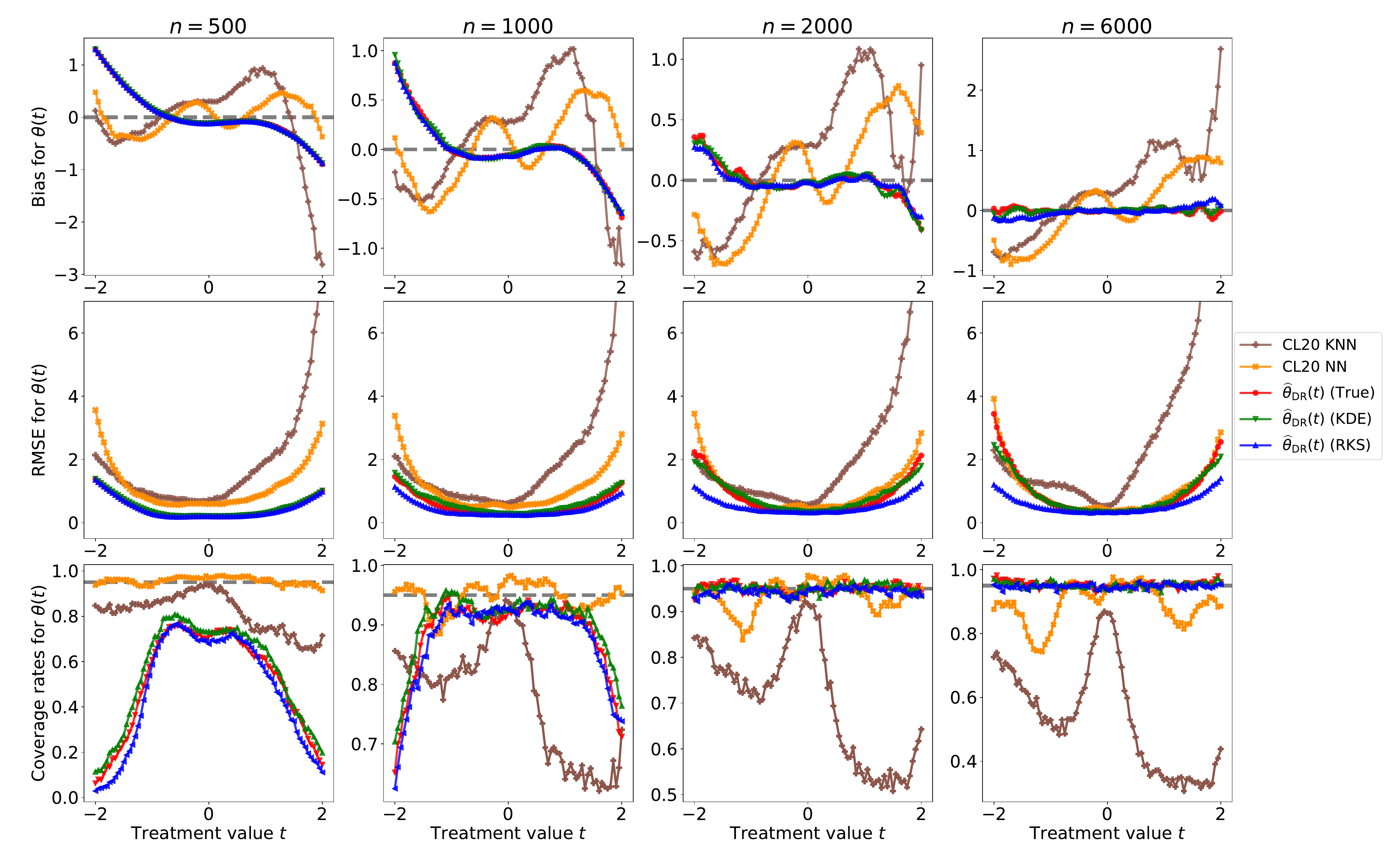}
	\caption{Comparisons between our proposed estimators and the finite-difference approaches by \cite{colangelo2020double} (``CL20'') under positivity and without cross-fitting across various sample sizes. Rows present estimation bias, RMSE, and coverage probability for each estimator of $\theta(t)$, while columns correspond to different values for $n$. This figure follows an identical simulation setup as \autoref{fig:theta_L5} but without using any cross-fitting.}
	\label{fig:theta_L1}
\end{figure}


These results without cross-fitting again highlight the practical utility of our proposed estimators in \autoref{sec:theta_pos} under the positivity condition. However, developing rigorous theoretical guarantees for these estimators without cross-fitting is beyond the scope of this paper and will be addressed in future work.


\subsection{Simulation Studies Without Positivity Across Different Bandwidth Choices}
\label{app:add_sim_nopos_bandwidth}

We adopt the same data-generating process and experimental setup in \autoref{subsec:sim_nopos} to evaluate the performances of different estimators of $\theta(t)$ under various choices of the bandwidth parameter without assuming the positivity condition. Specifically, we test four scaling factors for the bandwidth parameter as $h=C_h\cdot \hat{\sigma}_T\cdot n^{-\frac{1}{5}}$ with $C_T\in \left\{0.75, 1, 1.5, 2\right\}$, where $\hat{\sigma}_T$ is the sample standard deviation of $\{T_1,...,T_n\}$. For supplementary purposes, we only present the simulation results with 5-fold cross-fitting when the sample size is $n=2000$ in \autoref{fig:theta_L5_nopos_bandwidth}. Again, our proposed bias-corrected estimators of $\theta(t)$ demonstrate significant improvements by reducing bias and enhancing the empirical coverage probabilities of the resulting confidence intervals compared to their counterparts in \autoref{sec:theta_pos} across various bandwidth parameter choices.

\begin{figure}[t]
	\centering
	\includegraphics[width=1\linewidth]{./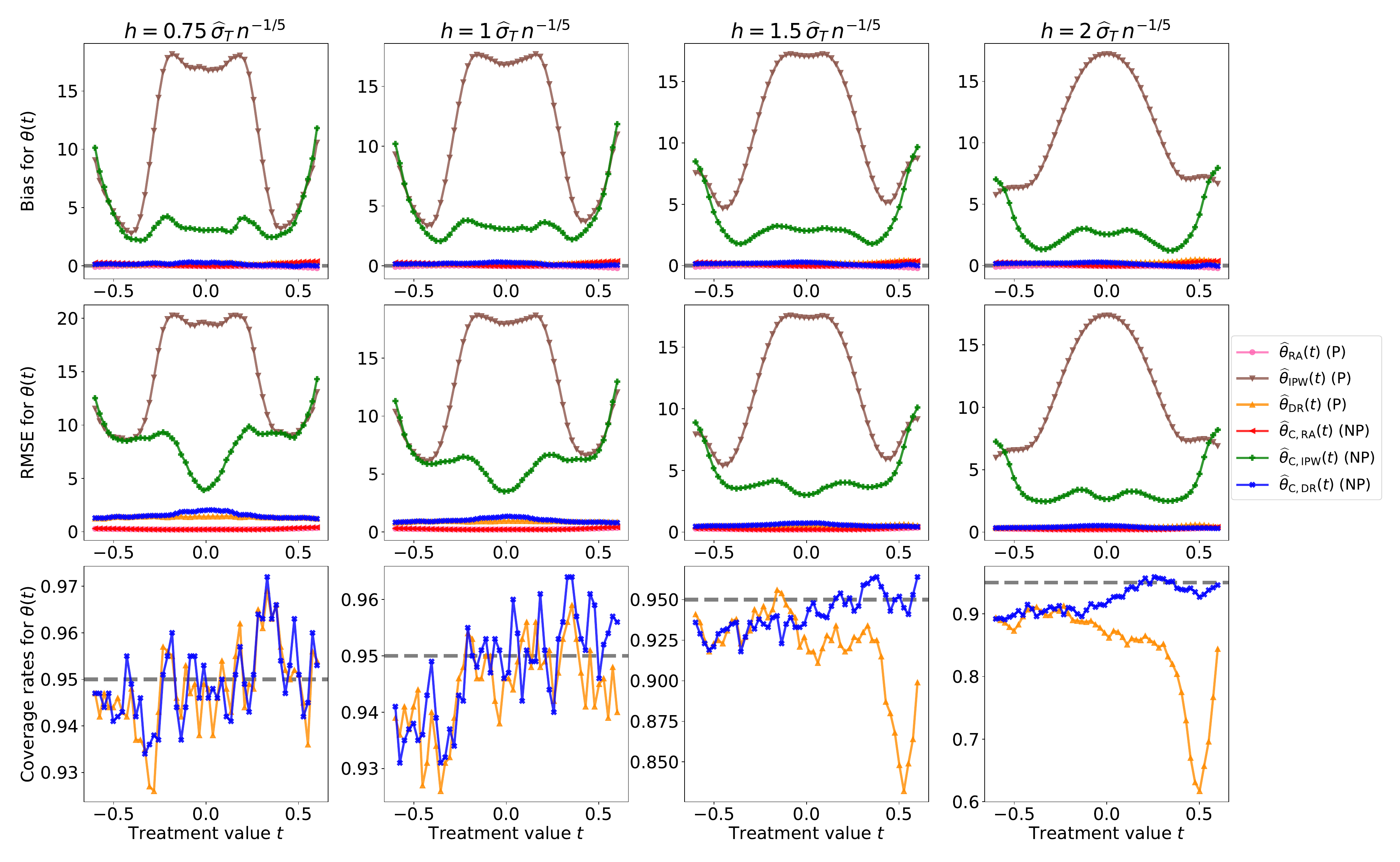}
	\caption{Comparisons between our bias-corrected estimators (NP) in \autoref{subsec:bias_corrected_IPW_DR} and their counterparts (P) under the violation of positivity and with 5-fold cross-fitting ($L=5$) across different bandwidth values ($h$). Rows present estimation bias, RMSE, and coverage probability for each estimator of $\theta(t)$, while columns correspond to different scaling factors for $h$.}
	\label{fig:theta_L5_nopos_bandwidth}
\end{figure}

\subsection{Simulation Studies Without Positivity and No Cross-Fitting}
\label{app:add_nopos_nocrossfitting}

For exploratory purposes, we conduct additional simulations for our bias-corrected estimators of $\theta(t)$ in \autoref{subsec:bias_corrected_IPW_DR} when the positivity condition is violated and cross-fitting is not employed.

Using the same experimental setup described in \autoref{subsec:sim_nopos}, we generate simulation results, shown in \autoref{fig:theta_L1_nopos}, across various sample sizes without using any cross-fitting. As expected, the estimation biases and RMSEs of our bias-corrected estimators improve as the sample size increases, consistently outperforming their standard counterparts. However, in comparison to the results obtained with 5-fold cross-fitting in \autoref{fig:theta_L5_nopos}, the performance of our bias-corrected estimators without cross-fitting deteriorates, particularly in terms of the empirical coverage probabilities of the resulting confidence intervals. These results consolidate the need of cross-fitting for constructing our bias-corrected estimators as \autoref{thm:theta_nopos} suggests.

\begin{figure}[t]
	\centering
	\includegraphics[width=1\linewidth]{./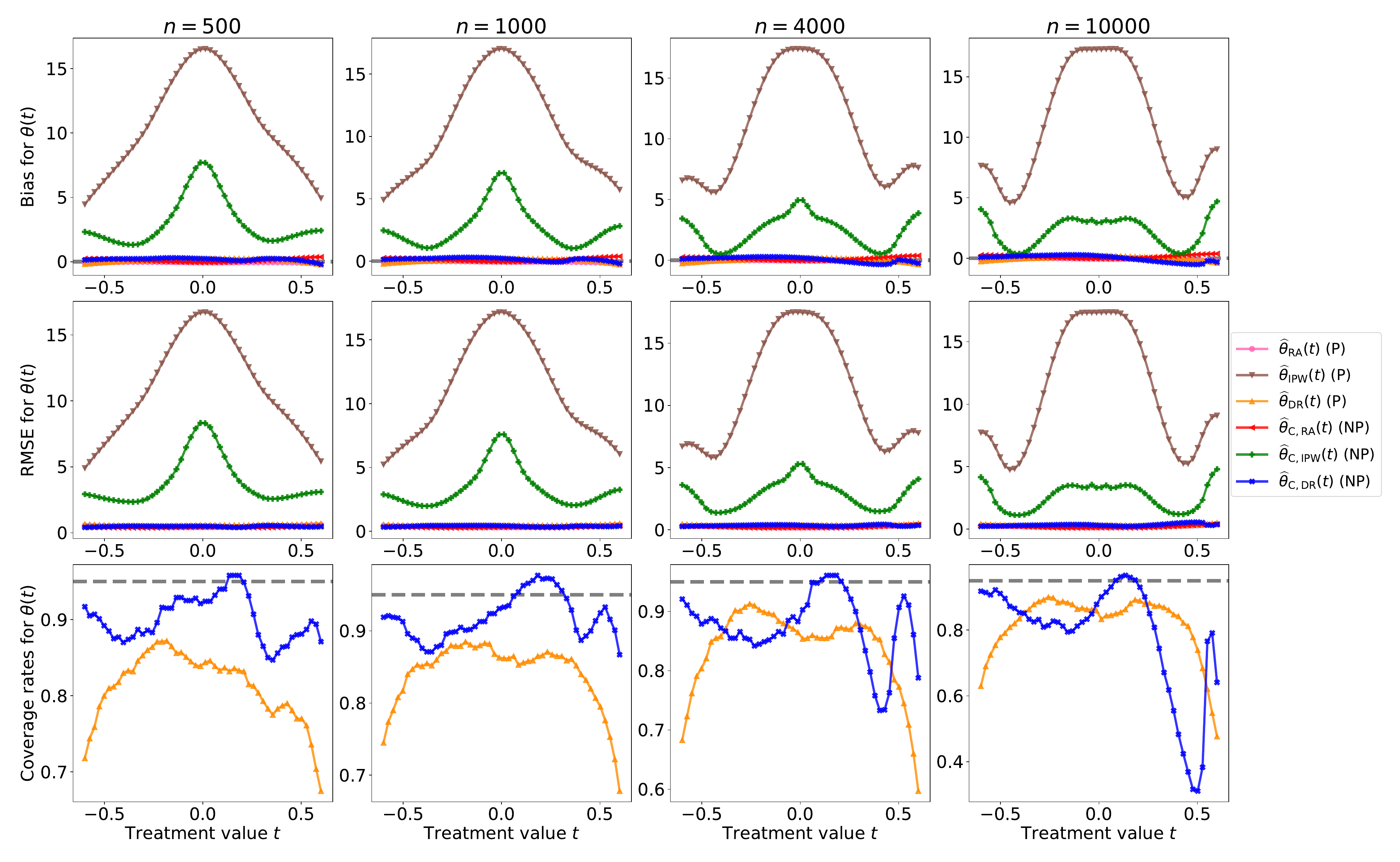}
	\caption{Comparisons between our bias-corrected estimators (NP) in \autoref{subsec:bias_corrected_IPW_DR} and their counterparts (P) in \autoref{sec:theta_pos} under the violation of positivity and without cross-fitting across different sample sizes. Rows present estimation bias, RMSE, and coverage probability for each estimator of $\theta(t)$, while columns correspond to different values for $n$. This figure follows an identical simulation setup as in \autoref{fig:theta_L5_nopos} but without using any cross-fitting.}
	\label{fig:theta_L1_nopos}
\end{figure}

\subsection{Analysis of the Job Corps Program With No Cross-fitted Estimators}
\label{app:job_corps_nocrossfitting}

Finally, we explore the behaviors of our proposed DR estimator \eqref{theta_DR} and the finite-difference method by \cite{colangelo2020double} when cross-fitting is not employed. Following the same analysis pipeline described in \autoref{subsec:job_corps}, but without employing cross-fitting, we produce the results shown in \autoref{fig:job_corps_nocrossfit}. Interestingly, our DR estimator without cross-fitting reveals some distinct trends compared to its counterpart with 5-fold cross-fitting. Specifically, it suggests a positive impact on employment during the first 20 weeks ($\sim 800$ hours), diminishing benefits after 23 weeks ($\sim 920$ hours), and statistically significant negative effects beyond 43 weeks ($\sim 1720$ hours). These trends align with prior research (\emph{e.g.}, Figure 2 of \citealt{lee2009training}), which documented short-term negative impacts of the program on employment propensities (104 weeks after the program assignment). However, it is worth mentioning that the analysis in \cite{lee2009training} was based on a binary treatment variable of being in the program or not. Since we do not establish any theoretical guarantees for our DR estimator of $\theta(t)$ when cross-fitting is not applied in this paper, more thorough investigations are necessary in the future to substantiate these short-term negative impacts of the Job Corps program.

\begin{figure}[t]
	\centering
	\includegraphics[width=1\linewidth]{./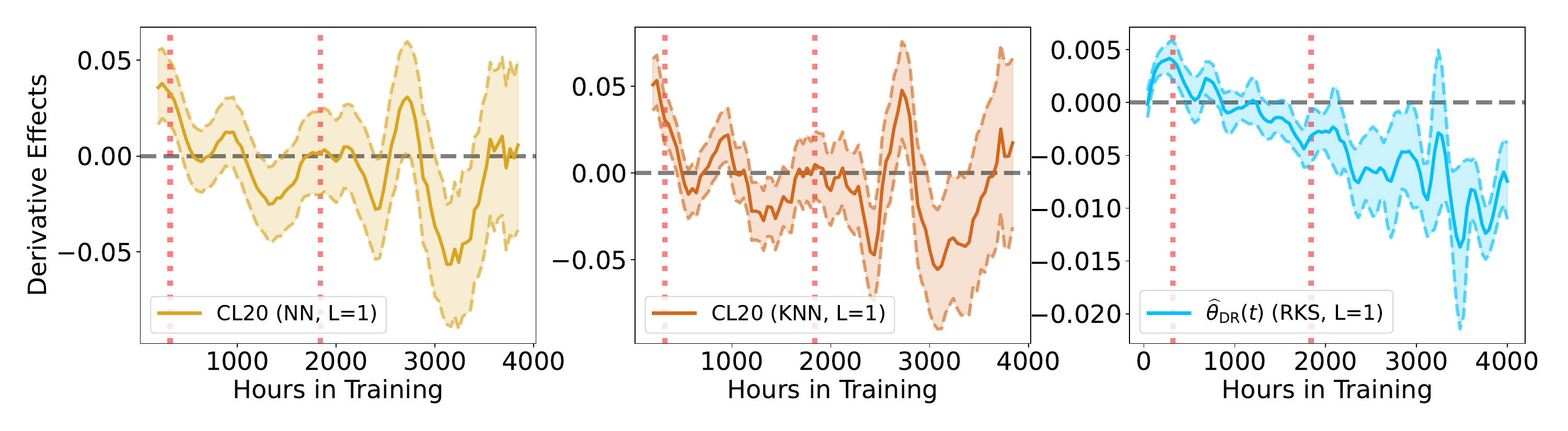}
	\caption{Estimated derivative effect curves with 95\% confidence intervals using our proposed estimators and the finite-difference approaches by \cite{colangelo2020double} (``CL20'') without cross-fitting. The vertical red dotted lines mark the original treatment range $[320,1840]$ analyzed in \cite{colangelo2020double}. This figure follows an identical analysis pipeline as in \autoref{fig:job_corps} but without using any cross-fitting.}
	\label{fig:job_corps_nocrossfit}
\end{figure}

\section{Identification of $m(t)$ and $\theta(t)$ Under the Additive Confounding Model \eqref{add_conf_model}}
\label{app:id_additive}

\begin{proposition}[Identifications of $m(t)$ and $\theta(t)$]
	\label{prop:id_additive}
	Suppose that Assumptions~\ref{assump:id_cond}(a-c) and \ref{assump:den_diff}(c) holds under model \eqref{add_conf_model}. Then, for any $t\in \mathcal{T}$ with $p_{\bm{S}|T}(\bm{s}|t)>0$ for some $\bm{s}\in \mathcal{S}$, we have that
	$$\theta(t) = \bar{m}'(t) = \mathbb{E}\left[\frac{\partial}{\partial t}\mu(T,\bm{S})\Big|T=t\right],$$
	where $\mu(t,\bm{s})=\mathbb{E}(Y|T=t,\bm{S}=\bm{s})$. If, in addition, the marginal support $\mathcal{T}$ of $p_T(t)$ is connected, then 
	$$m(t) = \mathbb{E}\left[Y+\int_T^t\theta(\tilde{t})\, d\tilde{t}\right] = \mathbb{E}\left\{Y+\int_T^t\mathbb{E}\left[\frac{\partial}{\partial t}\mu(T,\bm{S})\Big|T=\tilde{t}\right]\, d\tilde{t}\right\}.$$
\end{proposition}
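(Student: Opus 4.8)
The plan is to exploit the additive structure of model~\eqref{add_conf_model} to compute the conditional mean outcome function in closed form, and then to assemble the two identification formulas by elementary manipulations of expectations together with the fundamental theorem of calculus.

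First I would show that $\mu(t,\bm{s}) = \bar{m}(t) + \eta(\bm{s})$ for every $(t,\bm{s})$ in the joint support $\mathcal{J}$. By the consistency condition (Assumption~\ref{assump:id_cond}(a)), on the event $\{T=t\}$ one has $Y = Y(t) = \bar{m}(t) + \eta(\bm{S}) + \epsilon$, so
$$\mu(t,\bm{s}) = \E\left(Y \mid T=t, \bm{S}=\bm{s}\right) = \bar{m}(t) + \eta(\bm{s}) + \E\left(\epsilon \mid T=t, \bm{S}=\bm{s}\right) = \bar{m}(t) + \eta(\bm{s}),$$
using $\E(\epsilon\mid T,\bm{S})=0$. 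Differentiating in $t$ (the derivative exists thanks to the smoothness inherited from Assumption~\ref{assump:reg_diff} through $\mu = \bar{m} + \eta$) gives $\frac{\partial}{\partial t}\mu(t,\bm{s}) = \bar{m}'(t)$ on $\mathcal{J}$. Since conditioning on $\{T=t\}$ draws $\bm{S}$ from $p_{\bm{S}|T}(\cdot\mid t)$, and the hypothesis that $p_{\bm{S}|T}(\bm{s}\mid t)>0$ for some $\bm{s}$ forces $p_T(t)>0$ and hence confines $(t,\bm{S})$ to $\mathcal{J}$ almost surely under this conditional law, it follows that $\E\left[\frac{\partial}{\partial t}\mu(T,\bm{S}) \,\Big|\, T=t\right] = \E\left[\bar{m}'(t)\,\Big|\,T=t\right] = \bar{m}'(t)$. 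Finally, from the potential-outcome form of model~\eqref{add_conf_model}, $m(t) = \E[Y(t)] = \bar{m}(t) + \E[\eta(\bm{S})] + \E[\epsilon] = \bar{m}(t) + \E[\eta(\bm{S})]$, whence $\theta(t) = m'(t) = \bar{m}'(t)$, which together with the previous line gives the first display.

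For the second display I would assume $\mathcal{T}$ connected, hence an interval, and compute directly
$$\E\left[Y + \int_T^t \theta(\tilde{t})\, d\tilde{t}\right] = \E[Y] + \E\left[\int_T^t \bar{m}'(\tilde{t})\, d\tilde{t}\right] = \E[Y] + \E\left[\bar{m}(t) - \bar{m}(T)\right] = \E[Y] + \bar{m}(t) - \E[\bar{m}(T)].$$
The middle step is the fundamental theorem of calculus applied pathwise: the segment joining $T$ and $t$ lies in $\mathcal{T}$ by connectedness, where $\bar{m}\in C^1$; and the interchange of expectation with the inner integral is licensed by Fubini, since $\bar{m}'$ is bounded on the compact set $\mathcal{T}$ (Assumption~\ref{assump:den_diff}(c)) and $\bar{m}$, $\eta$ are bounded (Assumption~\ref{assump:reg_diff}(b)), so $\E|Y|<\infty$. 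Substituting $\E[Y] = \E[\bar{m}(T)] + \E[\eta(\bm{S})] + \E[\epsilon] = \E[\bar{m}(T)] + \E[\eta(\bm{S})]$ collapses the right-hand side to $\bar{m}(t) + \E[\eta(\bm{S})] = m(t)$. Replacing $\theta(\tilde{t})$ by the identified form $\E[\frac{\partial}{\partial t}\mu(T,\bm{S})\mid T=\tilde{t}]$ from the first part then yields the stated formula.

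The argument is essentially bookkeeping once the closed form of $\mu$ is in hand; the only points that demand care, and which I would flag explicitly, are (i) that $\frac{\partial}{\partial t}\mu$ is only ever evaluated at points of $\mathcal{J}$ — ensured by conditioning on $\{T=t\}$ together with the support hypothesis on $p_{\bm{S}|T}(\cdot\mid t)$ — and (ii) the joint use of Fubini and the pathwise fundamental theorem of calculus in the second display, for which compactness of $\mathcal{T}$ and the $C^1$ smoothness of $\bar{m}$ suffice. I do not anticipate any deeper obstacle.
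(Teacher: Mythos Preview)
Your proposal is correct and follows essentially the same route as the paper: both arguments exploit the closed form $\mu(t,\bm{s})=\bar{m}(t)+\eta(\bm{s})$ to obtain $\theta(t)=\bar{m}'(t)=\E\bigl[\tfrac{\partial}{\partial t}\mu(T,\bm{S})\mid T=t\bigr]$, then recover $m(t)$ via the fundamental theorem of calculus and the identity $\E[Y]=\E[\bar{m}(T)]+\E[\eta(\bm{S})]$. Your write-up is in fact more careful than the paper's about the Fubini justification and about restricting $\tfrac{\partial}{\partial t}\mu$ to the joint support, though note that Assumption~\ref{assump:reg_diff} is not among the stated hypotheses of the proposition, so you should either drop that citation or phrase the smoothness of $\bar{m}$ as implicit in the definition of $\theta(t)=m'(t)$.
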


\begin{proof}[Proof of Proposition~\ref{prop:id_additive}]
	We first study the identification of $\theta(t)$. By \eqref{m_theta_additive}, $\theta(t)=\bar{m}'(t)$, and the conditional mean outcome function $\mu(t,\bm{s})=\E\left(Y|T=t,\bm{S}=\bm{s}\right)=\bar{m}(t)+\eta(\bm{s})$ is well-defined within the support $\mathcal{J}$ of the joint density $p(t,\bm{s})$. In particular, for any $t\in \mathcal{T}$ with $p_{\bm{S}|T}(\tilde{\bm{s}}|t)>0$ for some $\tilde{\bm{s}}\in \mathcal{S}$, we know that $p(t,\tilde{\bm{s}})>0$ so that $\frac{\partial}{\partial t}\mu(t,\tilde{\bm{s}})=\bar{m}'(t)$ is also well-defined for these $\tilde{\bm{s}}\in \mathcal{S}(t)$. Furthermore, under Assumption~\ref{assump:den_diff}(c), the support $\mathcal{S}(t)$ of the conditional distribution $p_{\bm{S}|T}(\bm{s}|t)$ is non-degenerate (\emph{i.e.}, has nonzero Lebesgue measure). Thus,
	$$\theta(t) = \bar{m}'(t) = \mathbb{E}\left[\frac{\partial}{\partial t}\mu(T,\bm{S})\Big|T=t\right]$$
	is valid. 
	
	As for the identification of $m(t)$, we apply the fundamental theorem of calculus and argue that
	$$m(t)=m(T) + \int_T^t \theta(\tilde{t})\, d\tilde{t}.$$
	Taking the expectation over $T$ yields that
	$$m(t)=\mathbb{E}\left[m(T) + \int_T^t \theta(\tilde{t})\, d\tilde{t}\right]=\E\left\{Y + \int_T^t\mathbb{E}\left[\frac{\partial}{\partial t}\mu(T,\bm{S})\Big|T=\tilde{t}\right]\, d\tilde{t}\right\},$$
	where the second equality follows from the fact that $\E\left[m(T)\right] = \mathbb{E}\left[\bar{m}(T)\right] + \E\left[\eta(\bm{S})\right]=\E(Y)$ by \eqref{m_theta_additive}. Here, the connectedness of $\mathcal{T}$ ensures that the integration of $\theta(t)$ is only over the region where it is identifiable. When $\mathcal{T}$ has multiple connected components, the integral formula \eqref{id_m} as well as the observations should be restricted to the connected component in which the point of interest $t\in \mathcal{T}$ lies.
\end{proof}

\section{Asymptotic Differences Between Two Variants of IPW Estimators}

In this section, we study the asymptotic differences between the IPW estimators when the inverse probability weights are evaluated at the sample points $(T_i,\bm{S}_i),i=1,...,n$ or at the (query) points $(t,\bm{S}_i),i=1,...,n$. Specifically, for estimating the dose-response curve $m(t)$, we have two variants of the IPW estimators as \eqref{m_IPW} and \eqref{m_IPW2}. Similarly, for estimating the derivative effect $\theta(t)=m'(t)$, we also consider two different versions of the IPW estimators as \eqref{theta_IPW} and \eqref{theta_IPW2}. For the sake of illustrations, we assume that the conditional density $p_{T|\bm{S}}$ is known and only consider the oracle IPW estimators.

\subsection{Asymptotic Difference Between IPW Estimators \eqref{m_IPW} and \eqref{m_IPW2} of $m(t)$}
\label{app:m_ipw_diff}

We define the difference between two oracle IPW estimators of $m(t)$ as:
\begin{equation}
	\label{m_IPW_diff}
	\tilde{\Delta}_{\mathrm{IPW,m}}(t) = \tilde{m}_{\mathrm{IPW,2}}(t) - \tilde{m}_{\mathrm{IPW}}(t) = \frac{1}{nh}\sum_{i=1}^n \left[\frac{1}{p_{T|\bm{S}}(t|\bm{S}_i)} - \frac{1}{p_{T|\bm{S}}(T_i|\bm{S}_i)}\right] Y_i \cdot K\left(\frac{T_i-t}{h}\right).
\end{equation}

\begin{proposition}
	\label{prop:m_ipw_diff}
	Suppose that Assumptions~\ref{assump:positivity}, \ref{assump:reg_diff}, \ref{assump:den_diff}, and \ref{assump:reg_kernel} hold with $\mu(t,\bm{s})=\E(Y|T=t,\bm{S}=\bm{s})$. Then, for any fixed $t\in \mathcal{T}$, we have that
	\begin{align*}
		\tilde{\Delta}_{\mathrm{IPW,m}}(t) &= h^2\kappa_2\cdot \mathbb{E}\left[\frac{\frac{\partial}{\partial t} p_{T|\bm{S}}(t|\bm{S}) \cdot \frac{\partial}{\partial t} \mu(t,\bm{S})}{p_{T|\bm{S}}(t|\bm{S})} + \frac{\mu(t,\bm{S})\cdot \frac{\partial^2}{\partial t^2} p_{T|\bm{S}}(t|\bm{S})}{2p_{T|\bm{S}}(t|\bm{S})} \right] + O(h^3) + O_P\left(\sqrt{\frac{h}{n}}\right)\\
		&= O(h^2) + O_P\left(\sqrt{\frac{h}{n}}\right)
	\end{align*}
	as $h\to 0$ and $n\to \infty$.
\end{proposition}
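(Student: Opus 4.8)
The plan is to split $\tilde{\Delta}_{\mathrm{IPW,m}}(t)$ into its expectation and a centered remainder,
\[
\tilde{\Delta}_{\mathrm{IPW,m}}(t) = \mathbb{E}\!\left[\tilde{\Delta}_{\mathrm{IPW,m}}(t)\right] + \left(\tilde{\Delta}_{\mathrm{IPW,m}}(t) - \mathbb{E}\!\left[\tilde{\Delta}_{\mathrm{IPW,m}}(t)\right]\right),
\]
and to bound the two pieces by, respectively, a deterministic $O(h^2)$ expansion and a stochastic $O_P\!\left(\sqrt{h/n}\right)$ fluctuation. Since $p_{T|\bm{S}}$ is treated as known here, there are no nuisance-estimation errors to track and everything reduces to elementary kernel-smoothing calculus.

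For the bias term, I would first condition on $\bm{S}$ and integrate over $T$, using $\mathbb{E}(Y\mid T=t_1,\bm{S})=\mu(t_1,\bm{S})$; the weight $1/p_{T|\bm{S}}(t_1|\bm{S})$ inside the bracket cancels against the conditional density $p_{T|\bm{S}}(t_1|\bm{S})$, leaving
\[
\mathbb{E}\!\left[\tilde{\Delta}_{\mathrm{IPW,m}}(t)\right] = \frac{1}{h}\,\mathbb{E}_{\bm{S}}\!\left[\int \left(\frac{p_{T|\bm{S}}(t_1|\bm{S})}{p_{T|\bm{S}}(t|\bm{S})}-1\right)\mu(t_1,\bm{S})\,K\!\left(\frac{t_1-t}{h}\right)dt_1\right].
\]
After the substitution $t_1=t+uh$ I would Taylor-expand $p_{T|\bm{S}}(t+uh|\bm{S})$ to second order and $\mu(t+uh,\bm{S})$ to first order in $u$, collect terms by powers of $h$, and integrate against $K(u)\,du$. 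The $h^{0}$ contribution vanishes identically, the $h^{1}$ contribution vanishes because $\kappa_1=0$ (Assumption~\ref{assump:reg_kernel}(b)), and the $h^{2}$ contribution reproduces exactly the stated leading term $h^2\kappa_2\,\mathbb{E}_{\bm{S}}\!\left[\frac{\frac{\partial}{\partial t}p_{T|\bm{S}}(t|\bm{S})\cdot\frac{\partial}{\partial t}\mu(t,\bm{S})}{p_{T|\bm{S}}(t|\bm{S})}+\frac{\mu(t,\bm{S})\cdot\frac{\partial^2}{\partial t^2}p_{T|\bm{S}}(t|\bm{S})}{2\,p_{T|\bm{S}}(t|\bm{S})}\right]$, using $\kappa_2=\int u^2K(u)\,du$. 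The remainder is $O(h^3)$ uniformly: this uses positivity (Assumption~\ref{assump:positivity}) to lower-bound $p_{T|\bm{S}}$, together with the boundedness of $\mu$, $p_{T|\bm{S}}$ and their $t$-derivatives (Assumptions~\ref{assump:reg_diff}(a,b) and \ref{assump:den_diff}(a,b)) and the compact support of $K$, so that only $t_1\in[t-h,t+h]$ matters.

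For the stochastic term, write it as $\frac{1}{n}\sum_{i=1}^n\left(W_i-\mathbb{E}[W_i]\right)$ with $W_i=\frac{1}{h}\left[\frac{1}{p_{T|\bm{S}}(t|\bm{S}_i)}-\frac{1}{p_{T|\bm{S}}(T_i|\bm{S}_i)}\right]Y_i\,K\!\left(\frac{T_i-t}{h}\right)$. Since the summands are i.i.d., it suffices to bound $\mathrm{Var}(W_1)\le\mathbb{E}[W_1^2]$. Conditioning on $\bm{S}$, integrating over $T$, and again substituting $t_1=t+uh$, the bracket $\frac{1}{p_{T|\bm{S}}(t|\bm{S})}-\frac{1}{p_{T|\bm{S}}(t+uh|\bm{S})}$ is $O(uh)$ uniformly (positivity plus a bounded $\frac{\partial}{\partial t}p_{T|\bm{S}}$), one factor of $h$ is absorbed by the Jacobian $dt_1=h\,du$, and $\int u^2K^2(u)\,du=\nu_2<\infty$, so that $\mathbb{E}[W_1^2]=O(h)$ once a conditional second-moment bound on $Y$ (from Assumption~\ref{assump:reg_diff}(c)) is invoked. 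Hence $\mathrm{Var}\!\left(\tilde{\Delta}_{\mathrm{IPW,m}}(t)-\mathbb{E}[\tilde{\Delta}_{\mathrm{IPW,m}}(t)]\right)=\frac{1}{n}\mathrm{Var}(W_1)=O(h/n)$, and Chebyshev's inequality gives $\tilde{\Delta}_{\mathrm{IPW,m}}(t)-\mathbb{E}[\tilde{\Delta}_{\mathrm{IPW,m}}(t)]=O_P\!\left(\sqrt{h/n}\right)$. Adding this to the bias expansion yields the claim.

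The main obstacle will be making the two remainder estimates genuinely uniform — the $O(h^3)$ Taylor remainder in the bias and the $O(h)$ bound on $\mathbb{E}[W_1^2]$ must both hold with constants independent of $u$ and $h$ — which is exactly where positivity (keeping $1/p_{T|\bm{S}}$ and its $t$-derivative bounded on the shrinking window $[t-h,t+h]$) and the smoothness and boundedness hypotheses on $\mu$ and $p_{T|\bm{S}}$ come in. If $t$ lies on the boundary of $\mathcal{T}$, the same scheme runs with one-sided Taylor expansions and the continuity up to $\partial\mathcal{J}$ in Assumption~\ref{assump:den_diff}(b), giving a possibly different leading constant but the same $O(h^2)+O_P\!\left(\sqrt{h/n}\right)$ rate.
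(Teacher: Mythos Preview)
Your proposal is correct and follows essentially the same approach as the paper's proof: the bias--variance decomposition via Chebyshev's inequality, the change of variable $t_1=t+uh$ followed by Taylor expansion of $p_{T|\bm{S}}$ and $\mu$ to extract the $h^2\kappa_2$ leading term, and the $O(h)$ bound on the second moment of the summand to control the stochastic fluctuation. Your remarks on uniformity of the remainder constants and boundary behavior go slightly beyond what the paper spells out, but the core argument is identical.
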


\begin{proof}[Proof of Proposition~\ref{prop:m_ipw_diff}]
	By Chebyshev's inequality, we know that
	\begin{align*}
		\tilde{\Delta}_{\mathrm{IPW,m}}(t) &= \mathbb{E}\left[\tilde{\Delta}_{\mathrm{IPW,m}}(t)\right] + \tilde{\Delta}_{\mathrm{IPW,m}}(t) - \mathbb{E}\left[\tilde{\Delta}_{\mathrm{IPW,m}}(t)\right]\\
		&= \mathbb{E}\left[\tilde{\Delta}_{\mathrm{IPW,m}}(t)\right] + O_P\left(\sqrt{\mathrm{Var}\left[\tilde{\Delta}_{\mathrm{IPW,m}}(t)\right]}\right).
	\end{align*}
	On one hand, we calculate that
	\begin{align*}
		&\mathbb{E}\left[\tilde{\Delta}_{\mathrm{IPW,m}}(t)\right] \\
		&= \mathbb{E}\left\{\frac{1}{h}\left[\frac{1}{p_{T|\bm{S}}(t|\bm{S})} - \frac{1}{p_{T|\bm{S}}(T|\bm{S})}\right] Y \cdot K\left(\frac{T-t}{h}\right) \right\} \\
		&= \mathbb{E}\left\{\int_{\mathcal{T}} \left[\frac{1}{p_{T|\bm{S}}(t|\bm{S})} - \frac{1}{p_{T|\bm{S}}(t_1|\bm{S})}\right] \mu(t_1,\bm{S}) \cdot K\left(\frac{t_1-t}{h}\right)\cdot p_{T|\bm{S}}(t_1|\bm{S})\, dt_1\right\}\\
		&\stackrel{\text{(i)}}{=} \mathbb{E}\left\{\int_{\mathbb{R}} \left[\frac{p_{T|\bm{S}}(t+uh|\bm{S})}{p_{T|\bm{S}}(t|\bm{S})} - 1\right] \mu(t+uh,\bm{S}) \cdot K(u) \, du\right\}\\
		&\stackrel{\text{(ii)}}{=} \mathbb{E}\bigg\{\int_{\mathbb{R}} \left[\frac{uh\cdot \frac{\partial}{\partial t}p_{T|\bm{S}}(t|\bm{S}) + \frac{u^2h^2}{2} \cdot \frac{\partial^2}{\partial t^2} p_{T|\bm{S}}(t|\bm{S})}{p_{T|\bm{S}}(t|\bm{S})} + O(h^3)\right]\\
		&\quad\quad\quad \times \left[\mu(t,\bm{S}) +uh\cdot \frac{\partial}{\partial t}\mu(t,\bm{S}) + \frac{u^2h^2}{2}\frac{\partial^2}{\partial t^2}\mu(t,\bm{S}) + O(h^3)\right] K(u) \, du\bigg\}\\
		&= h^2\kappa_2\cdot \mathbb{E}\left[\frac{\frac{\partial}{\partial t} p_{T|\bm{S}}(t|\bm{S}) \cdot \frac{\partial}{\partial t} \mu(t,\bm{S})}{p_{T|\bm{S}}(t|\bm{S})} + \frac{\mu(t,\bm{S})\cdot \frac{\partial^2}{\partial t^2} p_{T|\bm{S}}(t|\bm{S})}{2p_{T|\bm{S}}(t|\bm{S})} \right] + O(h^3)
	\end{align*}
	where (i) uses a change of variable $u=\frac{t_1-t}{h}$ while (ii) applies Taylor's expansions on $p_{T|\bm{S}}$ and $\mu$ under Assumptions~\ref{assump:reg_diff} and \ref{assump:den_diff}. On the other hand, we also compute that
	\begin{align*}
		&\mathrm{Var}\left[\tilde{\Delta}_{\mathrm{IPW,m}}(t) \right] \\
		&= \frac{1}{nh^2} \cdot \mathrm{Var}\left\{\left[\frac{1}{p_{T|\bm{S}}(t|\bm{S})} - \frac{1}{p_{T|\bm{S}}(T|\bm{S})}\right] Y \cdot K\left(\frac{T-t}{h}\right) \right\}\\
		&\stackrel{\text{(iii)}}{=}\frac{1}{nh^2} \cdot \mathbb{E}\left\{\left[\frac{1}{p_{T|\bm{S}}(t|\bm{S})} - \frac{1}{p_{T|\bm{S}}(T|\bm{S})}\right]^2 Y^2 K^2\left(\frac{T-t}{h}\right)\right\} + O\left(\frac{h^4}{n}\right)\\
		&\stackrel{\text{(iv)}}{=} \frac{1}{nh}\cdot \mathbb{E}\left\{\left[\frac{1}{p_{T|\bm{S}}(t|\bm{S})} - \frac{1}{p_{T|\bm{S}}(t+uh|\bm{S})}\right]^2 \left[\mu(t+uh,\bm{S})^2 + \sigma^2\right] K^2(u) \cdot p_{T|\bm{S}}(t+uh|\bm{S})\, du \right\}\\
		&\stackrel{\text{(v)}}{=} \frac{1}{nh}\cdot \mathbb{E}\left\{\frac{\left[uh\cdot \frac{\partial}{\partial t} p_{T|\bm{S}}(t|\bm{S}) + \frac{u^2h^2}{2}\frac{\partial^2}{\partial t^2}p_{T|\bm{S}}(t|\bm{S}) + O(h^3)\right]^2}{p_{T|\bm{S}}^2(t|\bm{S})\cdot p_{T|\bm{S}}(t+uh|\bm{S})} \cdot \left[\mu(t+uh,\bm{S})^2 + \sigma^2\right] K^2(u) \, du \right\}\\
		&= \frac{h}{n}\cdot \nu_2 \cdot \mathbb{E}\left\{\frac{\left[\frac{\partial}{\partial t} \log p_{T|\bm{S}}(t|\bm{S})\right]^2 \left[\mu(t,\bm{S})^2 + \sigma^2\right]}{p_{T|\bm{S}}(t|\bm{S})}\right\} + O\left(\frac{h^2}{n}\right)
	\end{align*}
	where (iii) leverages our above calculation on $\mathbb{E}\left[\tilde{\Delta}_{\mathrm{IPW,m}}(t) \right] = O(h^2)$, (iv) applies a change of variable, and (v) utilizes Taylor's expansion on $p_{T|\bm{S}}$ under Assumption~\ref{assump:den_diff}.
	
	In total, we conclude that
	\begin{align*}
		\tilde{\Delta}_{\mathrm{IPW,m}}(t) &= \mathbb{E}\left[\tilde{\Delta}_{\mathrm{IPW,m}}(t)\right] + O_P\left(\sqrt{\mathrm{Var}\left[\tilde{\Delta}_{\mathrm{IPW,m}}(t)\right]}\right) \\
		&= h^2\kappa_2\cdot \mathbb{E}\left[\frac{\frac{\partial}{\partial t} p_{T|\bm{S}}(t|\bm{S}) \cdot \frac{\partial}{\partial t} \mu(t,\bm{S})}{p_{T|\bm{S}}(t|\bm{S})} + \frac{\mu(t,\bm{S})\cdot \frac{\partial^2}{\partial t^2} p_{T|\bm{S}}(t|\bm{S})}{2p_{T|\bm{S}}(t|\bm{S})} \right] + O(h^3) + O_P\left(\sqrt{\frac{h}{n}}\right)\\
		&= O(h^2) + O_P\left(\sqrt{\frac{h}{n}}\right)
	\end{align*}
	as $h\to 0$ and $n\to \infty$. The result follows.
\end{proof}

\subsection{Asymptotic Difference Between IPW Estimators \eqref{theta_IPW} and \eqref{theta_IPW2} of $\theta(t)$}
\label{app:theta_ipw_diff}

We define the difference between two oracle IPW estimators of $\theta(t)$ as:
\begin{equation}
	\label{theta_IPW_diff}
	\tilde{\Delta}_{\mathrm{IPW,\theta}}(t) = \tilde{\theta}_{\mathrm{IPW,2}}(t) - \tilde{\theta}_{\mathrm{IPW}}(t) = \frac{1}{nh}\sum_{i=1}^n \left[\frac{1}{p_{T|\bm{S}}(t|\bm{S}_i)} - \frac{1}{p_{T|\bm{S}}(T_i|\bm{S}_i)}\right] \frac{Y_i\left(\frac{T_i-t}{h^2}\right) K\left(\frac{T_i-t}{h}\right)}{\kappa_2}.
\end{equation}

\begin{proposition}
	\label{prop:theta_ipw_diff}
	Suppose that Assumptions~\ref{assump:positivity}, \ref{assump:reg_diff}, \ref{assump:den_diff}, and \ref{assump:reg_kernel} hold with $\mu(t,\bm{s})=\E(Y|T=t,\bm{S}=\bm{s})$. Then, for any fixed $t\in \mathcal{T}$, we have that
	\begin{align*}
		\tilde{\Delta}_{\mathrm{IPW,\theta}}(t) &= \mathbb{E}\left[\mu(t,\bm{S})\cdot \frac{\partial}{\partial t} \log p_{T|\bm{S}}(t|\bm{S})\right] \\
		&\quad + \frac{h^2\kappa_4}{2\kappa_2}\cdot \mathbb{E}\left[\frac{\partial^2}{\partial t^2}\mu(t,\bm{S})\cdot \frac{\partial}{\partial t} \log p_{T|\bm{S}}(t|\bm{S}) + \frac{\frac{\partial}{\partial t}\mu(t,\bm{S}) \cdot \frac{\partial^2}{\partial t^2} p_{T|\bm{S}}(t|\bm{S})}{p_{T|\bm{S}}(t|\bm{S})} + \frac{\mu(t,\bm{S}) \cdot \frac{\partial^3}{\partial t^3} p_{T|\bm{S}}(t|\bm{S})}{3p_{T|\bm{S}}(t|\bm{S})}\right] \\
		&\quad + O(h^3) + O_P\left(\sqrt{\frac{1}{nh}}\right)\\
		&= \mathbb{E}\left[\mu(t,\bm{S})\cdot \frac{\partial}{\partial t} \log p_{T|\bm{S}}(t|\bm{S})\right] + O(h^2) + O_P\left(\sqrt{\frac{1}{nh}}\right)
	\end{align*}
	as $h\to 0$ and $nh\to \infty$.
\end{proposition}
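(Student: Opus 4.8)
The plan is to follow the same template as the proof of Proposition~\ref{prop:m_ipw_diff}, the only structural difference being the extra factor $\frac{T_i-t}{h^2}$ in the summand, which after rescaling contributes a ``$u/h$'' and therefore shifts by one the order at which the kernel integral must be expanded. Concretely, I would first write, via Chebyshev's inequality,
\[
\tilde{\Delta}_{\mathrm{IPW,\theta}}(t) = \mathbb{E}\left[\tilde{\Delta}_{\mathrm{IPW,\theta}}(t)\right] + O_P\left(\sqrt{\mathrm{Var}\left[\tilde{\Delta}_{\mathrm{IPW,\theta}}(t)\right]}\right),
\]
and then handle the bias term and the variance term separately.

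For the bias term, I would apply the tower property to replace $Y$ by $\mu(T,\bm{S})$, condition on $\bm{S}$, integrate over $T$ against $p_{T|\bm{S}}(\cdot|\bm{S})$, and substitute $u=(t_1-t)/h$. Under this substitution $\frac{t_1-t}{h^2}$ becomes $\frac{u}{h}$ and $\left[\frac{1}{p_{T|\bm{S}}(t|\bm{S})}-\frac{1}{p_{T|\bm{S}}(t_1|\bm{S})}\right]p_{T|\bm{S}}(t_1|\bm{S})$ collapses to $\frac{p_{T|\bm{S}}(t+uh|\bm{S})}{p_{T|\bm{S}}(t|\bm{S})}-1$, so that
\[
\mathbb{E}\left[\tilde{\Delta}_{\mathrm{IPW,\theta}}(t)\right] = \mathbb{E}_{\bm{S}}\left\{\frac{1}{h}\int_{\mathbb{R}}\left[\frac{p_{T|\bm{S}}(t+uh|\bm{S})}{p_{T|\bm{S}}(t|\bm{S})}-1\right]\mu(t+uh,\bm{S})\,\frac{u\,K(u)}{\kappa_2}\,du\right\}.
\]
I would then Taylor-expand both $\frac{p_{T|\bm{S}}(t+uh|\bm{S})}{p_{T|\bm{S}}(t|\bm{S})}-1$ and $\mu(t+uh,\bm{S})$ to third order in $uh$ (legitimate by Assumptions~\ref{assump:reg_diff} and \ref{assump:den_diff}, with remainders uniform in $u$ since $K$ has compact support), multiply out, and integrate term by term against $u\,K(u)$. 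The $h^0$ contribution is $\mu(t,\bm{S})\,\frac{\partial}{\partial t}\log p_{T|\bm{S}}(t|\bm{S})\cdot\frac{1}{\kappa_2}\int u^2K(u)\,du=\mu(t,\bm{S})\,\frac{\partial}{\partial t}\log p_{T|\bm{S}}(t|\bm{S})$, which is the stated leading term; the $h^1$ contribution is a multiple of $\int u^3K(u)\,du=\kappa_3=0$ and vanishes by the symmetry of $K$ (Assumption~\ref{assump:reg_kernel}); the $h^2$ contribution is a multiple of $\int u^4K(u)\,du=\kappa_4$ and, after collecting the three surviving cross-terms, produces exactly the $\frac{h^2\kappa_4}{2\kappa_2}\mathbb{E}_{\bm{S}}[\cdots]$ term in the statement, with the rest being $O(h^3)$.

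For the variance term I would bound $\mathrm{Var}[\tilde{\Delta}_{\mathrm{IPW,\theta}}(t)]\le\frac{1}{n}\mathbb{E}[X_1^2]$, where $X_1$ is a single summand. Conditioning on $\bm{S}$, using $\mathbb{E}[Y^2\,|\,T,\bm{S}]=\mu(T,\bm{S})^2+\mathrm{Var}(Y\,|\,T,\bm{S})$, and substituting $u=(T-t)/h$ turns $\left(\frac{T-t}{h^2}\right)^2$ into $u^2/h^2$ and turns $\left[\frac{1}{p_{T|\bm{S}}(t|\bm{S})}-\frac{1}{p_{T|\bm{S}}(t+uh|\bm{S})}\right]^2 p_{T|\bm{S}}(t+uh|\bm{S})$ into $\frac{\left(p_{T|\bm{S}}(t+uh|\bm{S})-p_{T|\bm{S}}(t|\bm{S})\right)^2}{p_{T|\bm{S}}(t|\bm{S})^2\,p_{T|\bm{S}}(t+uh|\bm{S})}$, which is $O(h^2)$ by a first-order expansion. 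Counting powers of $h$ ($1/h^2$ from the prefactor, $1/h^2$ from the squared weight, $h$ from $dt_1$, $h^2$ from the squared bracket) leaves $\mathbb{E}[X_1^2]=O(1/h)$, with leading term proportional to $\mathbb{E}_{\bm{S}}\{[\frac{\partial}{\partial t}\log p_{T|\bm{S}}(t|\bm{S})]^2\,\mathbb{E}[Y^2\,|\,T=t,\bm{S}]/p_{T|\bm{S}}(t|\bm{S})\}$. Hence $\mathrm{Var}[\tilde{\Delta}_{\mathrm{IPW,\theta}}(t)]=O(1/(nh))$ and the Chebyshev term is $O_P(\sqrt{1/(nh)})$, which is why the conclusion needs $nh\to\infty$. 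Combining this with the bias expansion gives the stated result.

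The step I expect to be the main obstacle is the bookkeeping in the bias expansion: because of the explicit $1/h$ prefactor the Taylor expansions must be carried one order beyond the naive count, so one has to track carefully which terms cancel — in particular the vanishing of the $h^1$ term through the odd kernel moment — and verify that the remainder is genuinely $O(h^3)$ (and, in the variance, that the subleading pieces are genuinely $O(1)$ before dividing by $h$) uniformly in $u$ over the support of $K$. The remaining manipulations are routine kernel-smoothing calculations.
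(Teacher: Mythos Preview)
Your proposal is correct and follows essentially the same argument as the paper: Chebyshev decomposition into bias plus stochastic term, then for the bias condition on $\bm{S}$, substitute $u=(t_1-t)/h$ to obtain the integral $\mathbb{E}_{\bm{S}}\{(h\kappa_2)^{-1}\int[p_{T|\bm{S}}(t+uh|\bm{S})/p_{T|\bm{S}}(t|\bm{S})-1]\mu(t+uh,\bm{S})\,u K(u)\,du\}$, Taylor-expand both factors to third order and collect the $\kappa_2$ and $\kappa_4$ contributions (the $\kappa_3$ term vanishing by symmetry), while for the variance the same substitution plus a first-order expansion of the bracketed density difference yields the $O(1/(nh))$ rate. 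The only cosmetic difference is that the paper organizes the power-counting by pulling the $1/h$ factors outside the variance first, whereas you count them inside $X_1$; the arithmetic and the final expressions coincide.
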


\begin{proof}[Proof of Proposition~\ref{prop:theta_ipw_diff}]
	By Chebyshev's inequality, we know that
	\begin{align*}
		\tilde{\Delta}_{\mathrm{IPW,\theta}}(t) &= \mathbb{E}\left[\tilde{\Delta}_{\mathrm{IPW,\theta}}(t)\right] + \tilde{\Delta}_{\mathrm{IPW,\theta}}(t) - \mathbb{E}\left[\tilde{\Delta}_{\mathrm{IPW,\theta}}(t)\right]\\
		&= \mathbb{E}\left[\tilde{\Delta}_{\mathrm{IPW,\theta}}(t)\right] + O_P\left(\sqrt{\mathrm{Var}\left[\tilde{\Delta}_{\mathrm{IPW,\theta}}(t)\right]}\right).
	\end{align*}
	On one hand, we calculate that
	\begin{align*}
		&\mathbb{E}\left[\tilde{\Delta}_{\mathrm{IPW,\theta}}(t)\right] \\
		&= \mathbb{E}\left\{\left[\frac{1}{p_{T|\bm{S}}(t|\bm{S})} - \frac{1}{p_{T|\bm{S}}(T|\bm{S})}\right] \frac{Y\left(\frac{T-t}{h^2}\right) K\left(\frac{T-t}{h}\right)}{h\cdot \kappa_2}\right\}\\
		&= \mathbb{E}\left\{\int_{\mathcal{T}} \left[\frac{p_{T|\bm{S}}(t_1|\bm{S})}{p_{T|\bm{S}}(t|\bm{S})} - 1\right] \frac{\mu(t_1,\bm{S}) \left(\frac{t_1-t}{h^2}\right) K\left(\frac{t_1-t}{h}\right)}{h\cdot \kappa_2} \, dt_1 \right\} \\
		&\stackrel{\text{(i)}}{=} \mathbb{E}\left\{\int_{\mathbb{R}} \left[\frac{p_{T|\bm{S}}(t+uh|\bm{S})}{p_{T|\bm{S}}(t|\bm{S})} - 1\right] \frac{\mu(t+uh,\bm{S}) \cdot u \cdot K(u)}{h\cdot \kappa_2} \, du \right\} \\
		&\stackrel{\text{(ii)}}{=} \mathbb{E}\bigg\{\int_{\mathbb{R}} \left[\frac{uh\cdot \frac{\partial}{\partial t}p_{T|\bm{S}}(t|\bm{S}) + \frac{u^2h^2}{2} \frac{\partial^2}{\partial t^2}p_{T|\bm{S}}(t|\bm{S}) + \frac{u^3h^3}{6} \frac{\partial^3}{\partial t^3}p_{T|\bm{S}}(t|\bm{S}) + O(h^4)}{p_{T|\bm{S}}(t|\bm{S})} \right] \\
		&\quad\quad \quad \times \left[\mu(t,\bm{S}) + uh\cdot \frac{\partial}{\partial t} \mu(t,\bm{S}) + \frac{u^2h^2}{2} \frac{\partial^2}{\partial t^2} \mu(t,\bm{S}) + O(h^3)\right]\frac{u \cdot K(u)}{h\cdot \kappa_2} \, du \bigg\} \\
		&= \mathbb{E}\left[\mu(t,\bm{S})\cdot \frac{\partial}{\partial t} \log p_{T|\bm{S}}(t|\bm{S})\right] \\
		&\quad + \frac{h^2\kappa_4}{2\kappa_2}\cdot \mathbb{E}\left[\frac{\partial^2}{\partial t^2}\mu(t,\bm{S})\cdot \frac{\partial}{\partial t} \log p_{T|\bm{S}}(t|\bm{S}) + \frac{\frac{\partial}{\partial t}\mu(t,\bm{S}) \cdot \frac{\partial^2}{\partial t^2} p_{T|\bm{S}}(t|\bm{S})}{p_{T|\bm{S}}(t|\bm{S})} + \frac{\mu(t,\bm{S}) \cdot \frac{\partial^3}{\partial t^3} p_{T|\bm{S}}(t|\bm{S})}{3p_{T|\bm{S}}(t|\bm{S})}\right] + O(h^3),
	\end{align*}
	where (i) uses a change of variable $u=\frac{t_1-t}{h}$ while (ii) applies Taylor's expansions on $p_{T|\bm{S}}$ and $\mu$ under Assumptions~\ref{assump:reg_diff} and \ref{assump:den_diff}. On the other hand, we also compute that
	\begin{align*}
		&\mathrm{Var}\left[\tilde{\Delta}_{\mathrm{IPW,\theta}}(t)\right] \\
		&= \frac{1}{nh^2} \cdot \mathrm{Var}\left\{\left[\frac{1}{p_{T|\bm{S}}(t|\bm{S})} - \frac{1}{p_{T|\bm{S}}(T|\bm{S})}\right] \frac{Y\left(\frac{T-t}{h^2}\right) K\left(\frac{T-t}{h}\right)}{\kappa_2} \right\}\\
		&= \frac{1}{nh^4} \cdot \mathbb{E}\left\{\left[\frac{1}{p_{T|\bm{S}}(t|\bm{S})} - \frac{1}{p_{T|\bm{S}}(T|\bm{S})}\right]^2 \frac{Y^2\left(\frac{T-t}{h}\right)^2 K^2\left(\frac{T-t}{h}\right)}{\kappa_2^2} \right\} + O\left(\frac{1}{n}\right)\\
		&\stackrel{\text{(iii)}}{=} \frac{1}{nh^4} \cdot \mathbb{E}\left\{\left[\frac{1}{p_{T|\bm{S}}(t|\bm{S})} - \frac{1}{p_{T|\bm{S}}(T|\bm{S})}\right]^2 \frac{\left[\mu(T,\bm{S})^2 + \sigma^2\right]\left(\frac{T-t}{h}\right)^2 K^2\left(\frac{T-t}{h}\right)}{\kappa_2^2} \right\} + O\left(\frac{1}{n}\right)\\
		&\stackrel{\text{(iv)}}{=} \frac{1}{nh^3} \cdot \mathbb{E}\left\{\int_{\mathbb{R}}\left[\frac{1}{p_{T|\bm{S}}(t|\bm{S})} - \frac{1}{p_{T|\bm{S}}(t+uh|\bm{S})}\right]^2 \frac{\left[\mu(t+uh,\bm{S})^2 + \sigma^2\right]u^2 K^2(u)}{\kappa_2^2} \cdot p_{T|\bm{S}}(t+uh|\bm{S})\, du\right\} + O\left(\frac{1}{n}\right)\\
		&\stackrel{\text{(v)}}{=} \frac{1}{nh^3} \cdot \mathbb{E}\bigg\{\int_{\mathbb{R}} \frac{\left[uh\cdot \frac{\partial}{\partial t}p_{T|\bm{S}}(t|\bm{S}) + \frac{u^2h^2}{2} \cdot \frac{\partial^2}{\partial t^2} p_{T|\bm{S}}(t|\bm{S}) + \frac{u^3h^3}{6} \cdot \frac{\partial^3}{\partial t^3} p_{T|\bm{S}}(t|\bm{S}) + O(h^4)\right]^2}{p_{T|\bm{S}}^2(t|\bm{S})\cdot p_{T|\bm{S}}(t+uh|\bm{S})} \\
		&\quad \quad \quad \times \frac{\left[\mu(t+uh,\bm{S})^2 + \sigma^2\right]u^2 K^2(u)}{\kappa_2^2} \, du\bigg\} + O\left(\frac{1}{n}\right)\\
		&= \frac{\nu_4}{nh\cdot \kappa_2^2} \cdot \mathbb{E}\left\{\frac{\left[\frac{\partial}{\partial t} \log p_{T|\bm{S}}(t|\bm{S})\right]^2 \left[\mu(t,\bm{S})^2 + \sigma^2\right]}{p_{T|\bm{S}}(t|\bm{S})} \right\} + O\left(\frac{1}{n}\right),
	\end{align*}
	where (iii) leverages our above calculation on $\mathbb{E}\left[\tilde{\Delta}_{\mathrm{IPW,\theta}}(t) \right] = O(1)$, (iv) applies a change of variable, and (v) utilizes Taylor's expansion on $p_{T|\bm{S}}$ under Assumption~\ref{assump:den_diff}.
	
	In total, we conclude that
	\begin{align*}
		\tilde{\Delta}_{\mathrm{IPW,\theta}}(t) &= \mathbb{E}\left[\tilde{\Delta}_{\mathrm{IPW,\theta}}(t)\right] + O_P\left(\sqrt{\mathrm{Var}\left[\tilde{\Delta}_{\mathrm{IPW,\theta}}(t)\right]}\right) \\
		&= \mathbb{E}\left[\mu(t,\bm{S})\cdot \frac{\partial}{\partial t} \log p_{T|\bm{S}}(t|\bm{S})\right] \\
		&\quad + \frac{h^2\kappa_4}{2\kappa_2}\cdot \mathbb{E}\left[\frac{\partial^2}{\partial t^2}\mu(t,\bm{S})\cdot \frac{\partial}{\partial t} \log p_{T|\bm{S}}(t|\bm{S}) + \frac{\frac{\partial}{\partial t}\mu(t,\bm{S}) \cdot \frac{\partial^2}{\partial t^2} p_{T|\bm{S}}(t|\bm{S})}{p_{T|\bm{S}}(t|\bm{S})} + \frac{\mu(t,\bm{S}) \cdot \frac{\partial^3}{\partial t^3} p_{T|\bm{S}}(t|\bm{S})}{3p_{T|\bm{S}}(t|\bm{S})}\right] \\
		&\quad + O(h^3) + O_P\left(\sqrt{\frac{1}{nh}}\right)\\
		&= \mathbb{E}\left[\mu(t,\bm{S})\cdot \frac{\partial}{\partial t} \log p_{T|\bm{S}}(t|\bm{S})\right] + O(h^2) + O_P\left(\sqrt{\frac{1}{nh}}\right)
	\end{align*}
	as $h\to 0$ and $nh\to \infty$. The result follows.
\end{proof}

\begin{remark}
	Given our convergence analysis for $\hat{\theta}_{\mathrm{IPW}}(t)$ in \autoref{thm:theta_pos}, one can easily calculate that the non-vanishing bias term $\mathbb{E}\left[\mu(t,\bm{S})\cdot \frac{\partial}{\partial t} \log p_{T|\bm{S}}(t|\bm{S})\right]$ for $\tilde{\Delta}_{\mathrm{IPW,\theta}}(t)$ in Proposition~\ref{prop:theta_ipw_diff} results from the IPW estimator $\tilde{\theta}_{\mathrm{IPW,2}}(t)$ in \eqref{theta_IPW2}. Therefore, unless $\mu(t,\bm{s}) = 0$ or $\frac{\partial}{\partial t} \log p_{T|\bm{S}}(t|\bm{s})=0$, the IPW estimator \eqref{theta_IPW2} of $\theta(t)$ is asymptotically biased and inconsistent.
\end{remark}

\section{Consistency of Estimating $m(t)$ Under Positivity}
\label{app:proof_m_pos}

In this section, we review and prove the consistency results of $\hat{m}_{\mathrm{RA}}(t)$, $\hat{m}_{\mathrm{IPW}}(t)$, and $\hat{m}_{\mathrm{DR}}(t)$ in \eqref{m_RA}, \eqref{m_IPW}, \eqref{m_DR} for estimating the dose-response curve $t\mapsto m(t)=\E\left[Y(t)\right]$ under the positivity condition.

\begin{proposition}[Consistency of Estimating $m(t)$ Under Positivity]
	\label{prop:m_pos}
	Suppose that Assumptions~\ref{assump:id_cond}, \ref{assump:reg_diff}, \ref{assump:den_diff}, \ref{assump:reg_kernel}, and \ref{assump:positivity} hold as well as $\hat{\mu}, \hat{p}_{T|\bm{S}}$ are constructed on a data sample independent of $\{(Y_i,T_i,\bm{S}_i)\}_{i=1}^n$. For any fixed $t\in \mathcal{T}$, we let $\bar{\mu}(t,\bm{s})$ and $\bar{p}_{T|\bm{S}}(t|\bm{s})$ be fixed bounded functions to which $\hat{\mu}(t,\bm{s})$ and $\hat{p}_{T|\bm{S}}(t|\bm{s})$ converge under the rates of convergence as:
	$$\norm{\hat{\mu}(t,\bm{S}) - \bar{\mu}(t,\bm{S})}_{L_2} = O_P\left(\Upsilon_{1,n}\right) \quad \text{ and } \quad \sup_{|u-t|\leq h}\norm{\hat{p}_{T|\bm{S}}(u|\bm{S}) - \bar{p}_{T|\bm{S}}(u|\bm{S})}_{L_2}=O_P\left(\Upsilon_{2,n}\right),$$
	where $\Upsilon_{1,n},\Upsilon_{2,n}\to 0$ as $n\to \infty$. Then, as $h\to 0$ and $nh\to \infty$, we have that
	\begin{align*}
		&\hat{m}_{\mathrm{RA}}(t) - m(t) = O_P\left(\Upsilon_{1,n} + \norm{\bar{\mu}(t,\bm{S}) - \mu(t,\bm{S})}_{L_2} + \frac{1}{\sqrt{n}}\right),\\
		& \hat{m}_{\mathrm{IPW}}(t) - m(t) = O(h^2) + O_P\left(\sqrt{\frac{1}{nh}} + \Upsilon_{2,n} + \sup\limits_{|u-t|\leq h}\norm{\bar{p}_{T|\bm{S}}(u|\bm{S}) - p_{T|\bm{S}}(u|\bm{S})}_{L_2}\right). 
	\end{align*}
	If, in addition, we assume that 
	\begin{enumerate}[label=(\alph*)]
		\item $\bar{p}_{T|\bm{S}}$ satisfies Assumptions~\ref{assump:den_diff} and \ref{assump:positivity};
		\item either (i) ``$\,\bar{\mu} = \mu$'' or ``$\,\bar{p}_{T|\bm{S}}=p_{T|\bm{S}}$'' almost surely;
		\item $\sqrt{nh} \norm{\hat{\mu}(t,\bm{S}) - \mu(t,\bm{S})}_{L_2} \sup_{|u-t|\leq h}\norm{\hat{p}_{T|\bm{S}}(u|\bm{S}) - p_{T|\bm{S}}(u|\bm{S})}_{L_2}=o_P(1)$,
	\end{enumerate}
	then 
	$$\sqrt{nh}\left[\hat{m}_{\mathrm{DR}}(t) - m(t)\right] = \frac{1}{\sqrt{n}} \sum_{i=1}^n \left\{\psi_{h,t}\left(Y_i,T_i,\bm{S}_i; \bar{\mu}, \bar{p}_{T|\bm{S}}\right) + \sqrt{h}\left[\bar{\mu}(t,\bm{S}_i) -\mathbb{E}\left[\mu(t,\bm{S}) \right]\right]\right\} +o_P(1)$$
	when $nh^5\to c_2$ for some finite number $c_2\geq 0$, where $\psi_{h,t}\left(Y,T,\bm{S}; \bar{\mu}, \bar{p}_{T|\bm{S}}\right) = \frac{K\left(\frac{T-t}{h}\right)}{\sqrt{h} \cdot \bar{p}_{T|\bm{S}}(T|\bm{S})}\left[Y-\bar{\mu}(t,\bm{S})\right]$ and 
	$$\sqrt{nh}\left[\hat{m}_{\mathrm{DR}}(t) - m(t) - h^2 B_m(t)\right] \stackrel{d}{\to} \mathcal{N}\left(0,V_m(t)\right)$$
	with $V_m(t) = \mathbb{E}\left[\psi_{h,t}^2\left(Y,T,\bm{S}; \bar{\mu}, \bar{p}_{T|\bm{S}}\right)\right]$ and
	\begin{align*}
		B_m(t) &= 
		\begin{cases}
			\frac{\kappa_2}{2}\cdot \mathbb{E}_{\bm{S}}\left\{\frac{2\frac{\partial}{\partial t}\mu(t,\bm{S})\cdot \frac{\partial}{\partial t}p_{T|\bm{S}}(t|\bm{S}) + p_{T|\bm{S}}(t|\bm{S})\left[\frac{\partial^2}{\partial t^2}\mu(t,\bm{S}) - 2 \frac{\partial}{\partial t}\log \bar{p}_{T|\bm{S}}(t|\bm{S}) \cdot \frac{\partial}{\partial t}\mu(t,\bm{S})\right]}{\bar{p}_{T|\bm{S}}(t|\bm{S})}\right\} \quad \text{ when } \bar{\mu}=\mu,\\
			\frac{\kappa_2}{2}\cdot \mathbb{E}_{\bm{S}}\left[\frac{\partial^2}{\partial t^2} \mu(t,\bm{S}) \right] \quad\quad  \text{ when } \bar{p}_{T|\bm{S}} = p_{T|\bm{S}}.
		\end{cases}
	\end{align*}
\end{proposition}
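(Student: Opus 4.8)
The plan is to treat $\hat m_{\mathrm{RA}}$ and $\hat m_{\mathrm{IPW}}$ quickly and put the work into $\hat m_{\mathrm{DR}}$. Throughout I condition on the independent sample on which $\hat\mu,\hat p_{T|\bm S}$ are built (writing $\E[\,\cdot\mid\text{aux}]$) and derive the $o_P/O_P$ statements conditionally, removing the conditioning by a routine dominated-convergence argument on conditional probabilities. For $\hat m_{\mathrm{RA}}(t)-m(t)$ I split into (i) the centered i.i.d.\ average $n^{-1}\sum_i[\hat\mu(t,\bm S_i)-\E(\hat\mu(t,\bm S)\mid\text{aux})]=O_P(n^{-1/2})$ by Chebyshev (since $\norm{\hat\mu(t,\bm S)}_{L_2}=O_P(1)$ as $\hat\mu\to\bar\mu$ bounded), (ii) the nuisance bias $\E[\hat\mu(t,\bm S)-\bar\mu(t,\bm S)\mid\text{aux}]$, bounded by $\norm{\hat\mu(t,\bm S)-\bar\mu(t,\bm S)}_{L_2}=O_P(\Upsilon_{1,n})$, and (iii) the misspecification gap $|\E\bar\mu(t,\bm S)-\E\mu(t,\bm S)|\le\norm{\bar\mu(t,\bm S)-\mu(t,\bm S)}_{L_2}$, using $m(t)=\E[\mu(t,\bm S)]$. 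For $\hat m_{\mathrm{IPW}}(t)$ I compare it to the version $\bar m_{\mathrm{IPW}}(t)$ with $\bar p_{T|\bm S}$ in place of $\hat p_{T|\bm S}$: a change of variables $u=(t_1-t)/h$ and a Taylor expansion (using $\kappa_1=0$, Assumption~\ref{assump:reg_kernel}(b)) give $\E[\bar m_{\mathrm{IPW}}(t)]=m(t)+O(h^2)+O\bigl(\sup_{|u-t|\le h}\norm{\bar p_{T|\bm S}(u|\bm S)-p_{T|\bm S}(u|\bm S)}_{L_2}\bigr)$, a second-moment bound gives a fluctuation of order $(nh)^{-1/2}$, and the pointwise bound $|1/\hat p_{T|\bm S}-1/\bar p_{T|\bm S}|\le C|\hat p_{T|\bm S}-\bar p_{T|\bm S}|$ (valid on the support once $\bar p_{T|\bm S}$ is bounded below) yields $\hat m_{\mathrm{IPW}}(t)-\bar m_{\mathrm{IPW}}(t)=O_P(\Upsilon_{2,n})$ up to a lower-order fluctuation; summing gives the stated rate.

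\emph{DR: reduction to the oracle and the fluctuation.} Write $\hat m_{\mathrm{DR}}(t)=\bar m_{\mathrm{DR}}(t)+R_n$, where $\bar m_{\mathrm{DR}}(t)=n^{-1}\sum_i g_{h,t}(Y_i,T_i,\bm S_i)$ with $g_{h,t}=\tfrac{K((T-t)/h)}{h\,\bar p_{T|\bm S}(T|\bm S)}[Y-\bar\mu(t,\bm S)]+\bar\mu(t,\bm S)$ and $R_n=(nh)^{-1}\sum_i A_i$ is the same difference with $(\hat\mu,\hat p_{T|\bm S})$ minus $(\bar\mu,\bar p_{T|\bm S})$, $A_i=\bigl(\tfrac{K_i}{\hat p_i}-\tfrac{K_i}{\bar p_i}\bigr)[Y_i-\bar\mu_i]+\bigl(h-\tfrac{K_i}{\hat p_i}\bigr)[\hat\mu_i-\bar\mu_i]$. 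Conditioning, I split $R_n=\E[R_n\mid\text{aux}]+(R_n-\E[R_n\mid\text{aux}])$. An $\E[A^2\mid\text{aux}]$ bound gives $\E[A^2\mid\text{aux}]=O_P\bigl(h(\Upsilon_{1,n}^2+\Upsilon_{2,n}^2)\bigr)=o_P(h)$, so $\mathrm{Var}(\sqrt{nh}\,R_n\mid\text{aux})\le h^{-1}\E[A^2\mid\text{aux}]=o_P(1)$ and hence $\sqrt{nh}(R_n-\E[R_n\mid\text{aux}])=o_P(1)$ by Chebyshev.

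\emph{DR: double robustness of the conditional mean.} Using the identity $\E[K((T-t)/h)/\hat p_{T|\bm S}(T|\bm S)\mid\bm S]=h\int K(u)\,p_{T|\bm S}(t+uh|\bm S)/\hat p_{T|\bm S}(t+uh|\bm S)\,du$, a change of variables, and Taylor expansion, I reduce $\E[R_n\mid\text{aux}]$ to a sum of: (a) a term of size $O_P(h^2\Upsilon_{2,n})$ coming from $\mu(t+uh,\bm S)-\mu(t,\bm S)=O(uh)$ (when $\bar\mu=\mu$) resp.\ from $(1/\hat p_{T|\bm S}-1/p_{T|\bm S})$ hitting $\mu(t+uh,\bm S)-\mu(t,\bm S)$ (when $\bar p_{T|\bm S}=p_{T|\bm S}$); (b) a product term $O_P\bigl(h\,\norm{\hat\mu(t,\bm S)-\mu(t,\bm S)}_{L_2}\sup_{|u-t|\le h}\norm{\hat p_{T|\bm S}(u|\bm S)-p_{T|\bm S}(u|\bm S)}_{L_2}\bigr)$; and (c), present only when $\bar\mu\ne\mu$, a residual $O_P\bigl(h\,\norm{\bar\mu(t,\bm S)-\mu(t,\bm S)}_{L_2}\sup_{|u-t|\le h}\norm{\hat p_{T|\bm S}(u|\bm S)-p_{T|\bm S}(u|\bm S)}_{L_2}\bigr)$. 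After multiplying by $\sqrt{nh}$: (b) is $h\cdot o_P(1)$ by condition (c) of the proposition; (a) is $\sqrt{nh^5}\,\Upsilon_{2,n}=O_P(\Upsilon_{2,n})$ since $nh^5\to c_2$; and (c) is killed because whenever $\bar\mu\ne\mu$, condition (c) of the proposition forces $\sqrt{nh}\sup_{|u-t|\le h}\norm{\hat p_{T|\bm S}(u|\bm S)-p_{T|\bm S}(u|\bm S)}_{L_2}=o_P(1)$, hence so is $\sqrt{nh^3}\sup\norm{\hat p_{T|\bm S}-p_{T|\bm S}}_{L_2}$. Thus $\sqrt{nh}\,R_n=o_P(1)$ and $\hat m_{\mathrm{DR}}$ inherits the limit behavior of $\bar m_{\mathrm{DR}}$. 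The crucial (and only delicate) point is to keep the four blocks $\hat\mu-\mu$, $\mu-\bar\mu$, $\hat p_{T|\bm S}-\bar p_{T|\bm S}$, $\bar p_{T|\bm S}-p_{T|\bm S}$ separate rather than collapsing them, so that the sole first-order-in-$(\hat p_{T|\bm S}-p_{T|\bm S})$ leftover is always multiplied by the gap $\bar\mu-\mu$ --- which is either zero or, by condition (c), makes its companion rate $o_P((nh)^{-1/2})$ --- while spending the extra factor of $h$ gained from the smoothing.

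\emph{Leading term, bias, and CLT.} It remains to expand $\sqrt{nh}[\bar m_{\mathrm{DR}}(t)-m(t)]$. Writing $\sqrt h\,g_{h,t}=\psi_{h,t}+\sqrt h\,\bar\mu(t,\bm S)$, a change of variables and a third-order Taylor expansion of $\mu$ and of $p_{T|\bm S}/\bar p_{T|\bm S}$ give $\E[\psi_{h,t}]=\sqrt h\,(m(t)-\E\bar\mu(t,\bm S))+\sqrt h\,h^2 B_m(t)+O(\sqrt h\,h^3)$, where the coefficient $B_m(t)$ is checked to coincide with the two stated expressions according to whether $\bar\mu=\mu$ or $\bar p_{T|\bm S}=p_{T|\bm S}$. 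Hence $\sqrt{nh}[\bar m_{\mathrm{DR}}(t)-m(t)]=n^{-1/2}\sum_i W_{n,i}+o(1)$ with $W_{n,i}=\psi_{h,t}(Y_i,T_i,\bm S_i;\bar\mu,\bar p_{T|\bm S})+\sqrt h[\bar\mu(t,\bm S_i)-\E\mu(t,\bm S)]$, which is the claimed asymptotic linear representation, and $\sqrt n\,\E[W_{n,i}]=\sqrt{nh^5}B_m(t)+o(1)=\sqrt{nh}\,h^2B_m(t)+o(1)$. Finally I apply the Lyapunov CLT for triangular arrays to $n^{-1/2}\sum_i(W_{n,i}-\E W_{n,i})$: its variance tends to $V_m(t)=\E[\psi_{h,t}^2]$ (the $\sqrt h\,\bar\mu$ contribution and its covariance with $\psi_{h,t}$ are $O(h)$), and $\E|\psi_{h,t}|^{2+c_1}=O(h^{-c_1/2})$ by Assumption~\ref{assump:reg_diff}(c), so the Lyapunov ratio is $O((nh)^{-c_1/2})\to0$ as $nh\to\infty$; combining these with the representation above yields $\sqrt{nh}[\hat m_{\mathrm{DR}}(t)-m(t)-h^2B_m(t)]\stackrel{d}{\to}\mathcal N(0,V_m(t))$.
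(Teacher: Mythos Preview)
Your proposal is correct and follows essentially the same strategy as the paper: for $\hat m_{\mathrm{RA}}$ and $\hat m_{\mathrm{IPW}}$ the arguments are nearly identical, and for $\hat m_{\mathrm{DR}}$ both you and the paper reduce to the oracle $\bar m_{\mathrm{DR}}$ (with $\bar\mu,\bar p_{T|\bm S}$), control the remainder by conditioning on the auxiliary sample, extract the product structure via Taylor expansion and the change of variable $u=(T-t)/h$, invoke condition (c) for the cross term, and finish with the Lyapunov CLT. The only difference is organizational: the paper breaks the remainder $\mathbb{P}_n[\Psi_{h,t}(\hat\mu,\hat p_{T|\bm S})-\Psi_{h,t}(\bar\mu,\bar p_{T|\bm S})]$ into six explicit pieces (their Terms~VII--XI, separating the empirical-process fluctuations from the bias terms one by one), whereas you bundle everything into a single $R_n$ and handle it via the conditional mean--variance split $R_n=\E[R_n\mid\text{aux}]+(R_n-\E[R_n\mid\text{aux}])$; both routes lead to the same bounds.
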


\begin{remark}[Uniform asymptotic theory for estimating $m(t)$]
If we assume that 
\[
\begin{cases}
\sup_{t\in \mathcal{T}}\norm{\hat{\mu}(t,\bm{S}) - \bar{\mu}(t,\bm{S})}_{L_2}=O_P(\Upsilon_{1,n}),\\
\sup_{t\in \mathcal{T}}\norm{\hat{p}_{T|\bm{S}}(t|\bm{S}) - \bar{p}_{T|\bm{S}}(t|\bm{S})}_{L_2}=O_P\left(\Upsilon_{2,n}\right),\\
\sqrt{nh} \sup_{t\in \mathcal{T}} \norm{\hat{\mu}(t,\bm{S}) - \mu(t,\bm{S})}_{L_2} \norm{\hat{p}_{T|\bm{S}}(t|\bm{S}) - p_{T|\bm{S}}(t|\bm{S})}_{L_2}=o_P(1),
\end{cases}
\]
then the pointwise convergence results in Proposition~\ref{prop:m_pos} can be strengthened to the uniform ones; see our side notes in the proof below.
\end{remark}

\begin{proof}[Proof of Proposition~\ref{prop:m_pos}]
	We derive the rates of convergence of $\hat{m}_{\mathrm{RA}}(t)$ given by \eqref{m_RA} and $\hat{m}_{\mathrm{IPW}}(t)$ given by \eqref{m_IPW} in \autoref{app:m_RA_pos} and \autoref{app:m_IPW_pos}, respectively. We also prove the asymptotic linearity, double robustness, and asymptotic normality of $\hat{m}_{\mathrm{DR}}(t)$ given by \eqref{m_DR} in \autoref{app:m_DR_pos}.
	
	\subsection{Rate of Convergence of $\hat{m}_{\mathrm{RA}}(t)$}
	\label{app:m_RA_pos}
	
	Firstly, we derive the rate of convergence for $\hat{m}_{\mathrm{RA}}(t)$ in \eqref{m_RA}. Notice that
	\begin{align*}
		\hat{m}_{\mathrm{RA}}(t) - m(t) &= \frac{1}{n} \sum_{i=1}^n \hat{\mu}(t,\bm{S}_i) - \mathbb{E}\left[\mu(t,\bm{S}_1)\right]\\
		&= \underbrace{\frac{1}{n} \sum_{i=1}^n \left[\hat{\mu}(t,\bm{S}_i) - \bar{\mu}(t,\bm{S}_i) \right]}_{\textbf{Term I}} + \underbrace{\frac{1}{n} \sum_{i=1}^n \left[\bar{\mu}(t,\bm{S}_i) - \mu(t,\bm{S}_i) \right]}_{\textbf{Term II}} + \underbrace{\frac{1}{n} \sum_{i=1}^n \left\{\mu(t,\bm{S}_i) - \mathbb{E}\left[\mu(t,\bm{S}_i) \right] \right\}}_{\textbf{Term III}}.
	\end{align*}
	
	$\bullet$ \textbf{Term I:} By Markov's inequality (and H\"older's inequality), we know that
	\begin{align*}
		\textbf{Term I} &\leq \frac{1}{n} \sum_{i=1}^n \left|\hat{\mu}(t,\bm{S}_i) - \bar{\mu}(t,\bm{S}_i) \right| \\
		&= O_P\left(\mathbb{E}\left|\hat{\mu}(t,\bm{S}_1) - \bar{\mu}(t,\bm{S}_1) \right|\right) = O_P\left(\norm{\hat{\mu}(t,\bm{S}) - \bar{\mu}(t,\bm{S})}_{L_2}\right) = O_P\left(\Upsilon_{1,n}\right).
	\end{align*}
	
	$\bullet$ \textbf{Term II:} We similarly derive that
	\begin{align*}
		\textbf{Term II} &\leq \frac{1}{n} \sum_{i=1}^n \left|\bar{\mu}(t,\bm{S}_i) - \mu(t,\bm{S}_i) \right| = O_P\left(\norm{\bar{\mu}(t,\bm{S}) - \mu(t,\bm{S})}_{L_2}\right).
	\end{align*}
	
	$\bullet$ \textbf{Term III:} By the central limit theorem and the boundedness of $\mu(t,\bm{s})$ on $\mathcal{T}\times \mathcal{S}$ under Assumption~\ref{assump:reg_diff}, we know that
	$$\textbf{Term III} = \frac{1}{n} \sum_{i=1}^n \left\{\mu(t,\bm{S}_i) - \mathbb{E}\left[\mu(t,\bm{S}_i) \right] \right\} = O_P\left(\frac{1}{\sqrt{n}}\right).$$
	As a side note, under Assumption~\ref{assump:reg_diff}, we know that $\left|\mu(t_1,\bm{s}) - \mu(t_2,\bm{s})\right| \leq A_1 |t_1-t_2|$ for some absolute constant $A_1>0$. Together with the compactness of $\mathcal{T}$ and Example 19.7 in \cite{VDV1998}, we also deduce that
	$$ \sup_{t\in \mathcal{T}}\left|\frac{1}{n} \sum_{i=1}^n \left\{\mu(t,\bm{S}_i) - \mathbb{E}\left[\mu(t,\bm{S}_i) \right] \right\} \right| = O_P\left(\frac{1}{\sqrt{n}}\right).$$
	
	In summary, we conclude that
	$$\hat{m}_{\mathrm{RA}}(t) - m(t) = O_P\left(\Upsilon_{1,n} + \norm{\bar{\mu}(t,\bm{S}) - \mu(t,\bm{S})}_{L_2} + \frac{1}{\sqrt{n}}\right).$$
	
	\subsection{Rate of Convergence of $\hat{m}_{\mathrm{IPW}}(t)$}
	\label{app:m_IPW_pos}
	
	Secondly, we derive the rate of convergence for $\hat{m}_{\mathrm{IPW}}(t)$ in \eqref{m_IPW}. Note that
	\begin{align*}
		&\hat{m}_{\mathrm{IPW}}(t) - m(t) \\
		&= \tilde{m}_{\mathrm{IPW}}(t) - m(t) + \hat{m}_{\mathrm{IPW}}(t) - \tilde{m}_{\mathrm{IPW}}(t)\\
		&= \underbrace{\frac{1}{nh}\sum_{i=1}^n \frac{K\left(\frac{T_i-t}{h}\right)}{p_{T|\bm{S}}(T_i|\bm{S}_i)}\cdot Y_i - \mathbb{E}\left[\mu(t,\bm{S}_1)\right]}_{\textbf{Term IV}} + \underbrace{\frac{1}{nh}\sum_{i=1}^n \frac{K\left(\frac{T_i-t}{h}\right)}{\hat{p}_{T|\bm{S}}(T_i|\bm{S}_i)}\cdot Y_i - \frac{1}{nh}\sum_{i=1}^n \frac{K\left(\frac{T_i-t}{h}\right)}{p_{T|\bm{S}}(T_i|\bm{S}_i)}\cdot Y_i }_{\textbf{Term V}},
	\end{align*}
	where $\tilde{m}_{\mathrm{IPW}}(t) = \frac{1}{nh}\sum_{i=1}^n \frac{K\left(\frac{T_i-t}{h}\right)}{p_{T|\bm{S}}(T_i|\bm{S}_i)}Y_i$ is the oracle IPW estimator of $m(t)$ defined in \eqref{IPW_oracle}. We shall handle \textbf{Term IV} and \textbf{Term V} in \autoref{app:m_pos_term_IV} and \autoref{app:m_pos_term_V}, respectively.
	
	\subsubsection{Rate of Convergence of \textbf{Term IV} for $\hat{m}_{\mathrm{IPW}}(t)$}
	\label{app:m_pos_term_IV}
	
	Under Assumptions~\ref{assump:reg_diff} and \ref{assump:reg_kernel}, we calculate the bias of $\tilde{m}_{\mathrm{IPW}}(t)$ as:
	\begin{align*}
		&\mathbb{E}\left[\tilde{m}_{\mathrm{IPW}}(t)\right] - m(t) \\
		&= \mathbb{E}\left[\frac{1}{h} \frac{K\left(\frac{T_i-t}{h}\right)}{p_{T|\bm{S}}(T_i|\bm{S}_i)}\cdot Y_i\right] - \mathbb{E}\left[\mu(t,\bm{S}_1)\right]\\
		&= \frac{1}{h} \int_{\mathcal{S}\times \mathcal{T}} \frac{K\left(\frac{t_1-t}{h}\right)}{p_{T|\bm{S}}(t_1|\bm{s}_1)} \cdot \mu(t_1,\bm{s}_1) \cdot p(t_1,\bm{s}_1) \, dt_1 d\bm{s}_1 - \mathbb{E}\left[\mu(t,\bm{S}_1)\right] \\
		&\stackrel{\text{(i)}}{=} \int_{\mathcal{S}} \int_{\mathbb{R}} K(u)\cdot \mu(t+uh,\bm{s}_1) \cdot p_S(\bm{s}_1) \, du d\bm{s}_1 - \mathbb{E}\left[\mu(t,\bm{S}_1)\right]\\
		& \stackrel{\text{(ii)}}{=} \int_{\mathcal{S}} \int_{\mathbb{R}} K(u) \left[\mu(t,\bm{s}_1) + uh\cdot \frac{\partial}{\partial t}\mu(t,\bm{s}_1) + \frac{u^2h^2}{2} \frac{\partial^2}{\partial t^2}\mu(\tilde{t},\bm{s}_1) \right] p_S(\bm{s}_1) \, du d\bm{s}_1 - \mathbb{E}\left[\mu(t,\bm{S}_1)\right] \\
		&\stackrel{\text{(iii)}}{=} \int_{\mathcal{S}} \mu(t,\bm{s}_1) \cdot p_S(\bm{s}_1)\, d\bm{s}_1 - \mathbb{E}\left[\mu(t,\bm{S}_1)\right] + \int_{\mathcal{S}} \int_{\mathbb{R}} K(u)\cdot \frac{u^2h^2}{2} \cdot \frac{\partial^2}{\partial t^2}\mu(\tilde{t},\bm{s}_1) \cdot p_S(\bm{s}_1) \, du d\bm{s}_1\\
		&= O(h^2),
	\end{align*}
	where (i) uses a change of variable $u = \frac{t_1-t}{h}$, (ii) applies Taylor's expansion with some $\tilde{t}$ that lies between $t$ and $t+uh$, and (iii) utilizes the properties of the second-order kernel function $K$. Similarly, we compute the variance of $\tilde{m}_{\mathrm{IPW}}(t)$ as:
	\begin{align*}
		&\mathrm{Var}\left[\tilde{m}_{\mathrm{IPW}}(t)\right] \\
		&= \frac{1}{nh^2}\cdot \mathrm{Var}\left[\frac{K\left(\frac{T_i-t}{h}\right)}{p_{T|\bm{S}}(T_i|\bm{S}_i)}Y_i\right] \\
		&= \frac{1}{nh^2}\cdot \mathbb{E}\left[\frac{K^2\left(\frac{T_i-t}{h}\right)}{\left[p_{T|\bm{S}}(T_i|\bm{S}_i)\right]^2}\cdot Y_i^2 \right] - \frac{1}{nh^2} \left\{\mathbb{E}\left[\frac{K\left(\frac{T_i-t}{h}\right)}{p_{T|\bm{S}}(T_i|\bm{S}_i)}\cdot Y_i\right]\right\}^2\\
		&= \frac{1}{nh^2} \int_{\mathcal{S}\times \mathcal{T}} \frac{K^2\left(\frac{t_1-t}{h}\right)}{\left[p_{T|\bm{S}}(t_1|\bm{s}_1)\right]^2} \cdot \left[\mu(t_1,\bm{s}_1)^2 +\sigma^2\right] \cdot p(t_1,\bm{s}_1)\, dt_1 d\bm{s}_1 - \frac{\left\{\mathbb{E}\left[\mu(t,\bm{S}_1)\right]\right\}^2}{n} + O\left(\frac{h^2}{n}\right)\\
		&\stackrel{\text{(i)}}{=} \frac{1}{nh} \int_{\mathcal{S}} \int_{\mathbb{R}} \frac{K^2(u)}{p_{T|\bm{S}}(t+uh|\bm{s}_1)} \cdot \left[\mu(t+uh,\bm{s}_1)^2 +\sigma^2\right] p_S(\bm{s}_1)\, du d\bm{s}_1 + O\left(\frac{1}{n}\right) \\
		&\stackrel{\text{(ii)}}{=} \frac{1}{nh} \int_{\mathcal{S}} \int_{\mathbb{R}} \frac{K^2(u)}{p_{T|\bm{S}}(t|\bm{s}_1) + uh \cdot \frac{\partial}{\partial t}p_{T|\bm{S}}(t'|\bm{s}_1)} \left[\mu(t,\bm{s}_1)^2 + 2uh\cdot \mu(t'',\bm{s}_1)\cdot \frac{\partial}{\partial t} \mu(t'',\bm{s}_1) + \sigma^2\right] p_S(\bm{s}_1)\, du d\bm{s}_1 \\
		&\quad + O\left(\frac{1}{n}\right)\\
		&\stackrel{\text{(iii)}}{=} \frac{1}{nh} \int_{\mathcal{S}} \int_{\mathbb{R}} \frac{K^2(u)}{p_{T|\bm{S}}(t|\bm{s}_1)} \cdot \left[\mu(t,\bm{s}_1)^2 + \sigma^2\right] p_S(\bm{s}_1)\, du d\bm{s}_1 + O\left(\frac{1}{n}\right)\\
		&\stackrel{\text{(iv)}}{=} O\left(\frac{1}{nh}\right),
	\end{align*}
	where (i) uses a change of variable $u = \frac{t_1-t}{h}$ and the boundedness of $\mu(t,\bm{s})$, (ii) applies the Taylor's expansion under Assumptions~\ref{assump:reg_diff} and \ref{assump:den_diff} with $t',t''$ being two points between $t$ and $t+uh$, (iii) absorbs the higher order terms to $O\left(\frac{1}{n}\right)$, and (iv) utilizes the properties of $K$ under Assumption~\ref{assump:reg_kernel} and the positivity condition (Assumption~\ref{assump:positivity}). Now, by Chebyshev's inequality and our above calculations, we obtain that
	\begin{align*}
		\tilde{m}_{\mathrm{IPW}}(t) - m(t) &= \tilde{m}_{\mathrm{IPW}}(t) - \mathbb{E}\left[\tilde{m}_{\mathrm{IPW}}(t)\right] + \mathbb{E}\left[\tilde{m}_{\mathrm{IPW}}(t)\right] - m(t)\\
		&= O_P\left(\sqrt{\mathrm{Var}\left[\tilde{m}_{\mathrm{IPW}}(t)\right]}\right) + O(h^2)\\
		&= O_P\left(\sqrt{\frac{1}{nh}}\right) + O(h^2)
	\end{align*}
	as $h\to 0$ and $nh\to \infty$. As a side note, under the VC-type condition on $K$ (Assumption~\ref{assump:reg_kernel}(c)), we can apply Theorem 2 in \cite{einmahl2005uniform} to strengthen the above pointwise rate of convergence to the uniform one as:
	$$\sup_{t\in \mathcal{T}}\left|\tilde{m}_{\mathrm{IPW}}(t) - m(t) \right| = O_P\left(\sqrt{\frac{|\log h|}{nh}}\right) + O(h^2).$$
	
	\subsubsection{Rate of Convergence of \textbf{Term V} for $\hat{m}_{\mathrm{IPW}}(t)$}
	\label{app:m_pos_term_V}
	
	By direct calculations, we have that
	\begin{align*}
		&\textbf{Term V} \\
		&= \frac{1}{nh}\sum_{i=1}^n \frac{K\left(\frac{T_i-t}{h}\right)}{p_{T|\bm{S}}(T_i|\bm{S}_i)} \cdot Y_i \left[\frac{p_{T|\bm{S}}(T_i|\bm{S}_i) - \hat{p}_{T|\bm{S}}(T_i|\bm{S}_i)}{\hat{p}_{T|\bm{S}}(T_i|\bm{S}_i)}\right]\\
		&= \frac{1}{nh}\sum_{i=1}^n \frac{K\left(\frac{T_i-t}{h}\right)}{p_{T|\bm{S}}(T_i|\bm{S}_i)} \cdot Y_i  \left\{\frac{p_{T|\bm{S}}(T_i|\bm{S}_i) -\bar{p}_{T|\bm{S}}(T_i|\bm{S}_i) + \bar{p}_{T|\bm{S}}(T_i|\bm{S}_i) - \hat{p}_{T|\bm{S}}(T_i|\bm{S}_i)}{p_{T|\bm{S}}(T_i|\bm{S}_i) - \left[p_{T|\bm{S}}(T_i|\bm{S}_i) - \bar{p}_{T|\bm{S}}(T_i|\bm{S}_i)\right] - \left[\bar{p}_{T|\bm{S}}(T_i|\bm{S}_i) - \hat{p}_{T|\bm{S}}(T_i|\bm{S}_i)\right]}\right\}\\
		&\stackrel{\text{(i)}}{=}\left\{\mathbb{E}\left[\mu(t,\bm{S}) \right] + O(h^2) + O_P\left(\sqrt{\frac{1}{nh}}\right)\right\}\\
		&\quad \times \frac{O_P\left(\sup\limits_{|u-t|\leq h}\norm{\hat{p}_{T|\bm{S}}(u|\bm{S}) - \bar{p}_{T|\bm{S}}(u|\bm{S})}_{L_2} + \sup\limits_{|u-t|\leq h}\norm{\bar{p}_{T|\bm{S}}(u|\bm{S}) - p_{T|\bm{S}}(u|\bm{S})}_{L_2} \right)}{\inf_{(t,\bm{s})\in \mathcal{T}\times \mathcal{S}} p_{T|\bm{S}}(t|\bm{s}) - O_P\left(\sup\limits_{|u-t|\leq h}\norm{\hat{p}_{T|\bm{S}}(u|\bm{S}) - \bar{p}_{T|\bm{S}}(u|\bm{S})}_{L_2} + \sup\limits_{|u-t|\leq h}\norm{\bar{p}_{T|\bm{S}}(u|\bm{S}) - p_{T|\bm{S}}(u|\bm{S})}_{L_2} \right)}\\
		&= O_P\left(\Upsilon_{2,n} + \sup\limits_{|u-t|\leq h}\norm{\bar{p}_{T|\bm{S}}(u|\bm{S}) - p_{T|\bm{S}}(u|\bm{S})}_{L_2}\right) \left[O(1 + h^2) + O_P\left(\sqrt{\frac{1}{nh}}\right)\right]\\
		&= O_P\left(\Upsilon_{2,n} + \sup\limits_{|u-t|\leq h}\norm{\bar{p}_{T|\bm{S}}(u|\bm{S}) - p_{T|\bm{S}}(u|\bm{S})}_{L_2}\right)
	\end{align*}
	as $h\to 0$ and $nh\to \infty$, where (i) utilizes our results for \textbf{Term IV} and Markov's inequality.\\
	
	Combining our results for \textbf{Term IV} and \textbf{Term V} in \autoref{app:m_pos_term_IV} and \autoref{app:m_pos_term_V}, we conclude that 
	$$\hat{m}_{\mathrm{IPW}}(t) - m(t) = O(h^2) + O_P\left(\sqrt{\frac{1}{nh}} + \Upsilon_{2,n} + \sup\limits_{|u-t|\leq h}\norm{\bar{p}_{T|\bm{S}}(u|\bm{S}) - p_{T|\bm{S}}(u|\bm{S})}_{L_2}\right).$$
	
	\subsection{Asymptotic Properties and Double Robustness of $\hat{m}_{\mathrm{DR}}(t)$}
	\label{app:m_DR_pos}
	
	Finally, we establish the double robustness and asymptotic properties of $\hat{m}_{\mathrm{DR}}(t)$ in \eqref{m_DR}. Some parts of the following proof are inspired by the proof of Theorem 3.1 in \cite{colangelo2020double}. Notice that under Assumption~\ref{assump:id_cond},
	\begin{align*}
		&\hat{m}_{\mathrm{DR}}(t) - m(t) \\
		&= \frac{1}{nh}\sum_{i=1}^n \left\{\frac{K\left(\frac{T_i-t}{h}\right)}{\hat{p}_{T|\bm{S}}(T_i|\bm{S}_i)}\cdot \left[Y_i - \hat \mu(t,\bm{S}_i)\right]+ h\cdot \hat{\mu}(t,\bm{S}_i) \right\} - \mathbb{E}\left[\mu(t,\bm{S})\right] \\
		&= \mathbb{P}_n \Psi_{h,t}\left(Y,T,\bm{S}; \bar{\mu}, \bar{p}_{T|\bm{S}}\right) - \mathbb{E}\left[\mu(t,\bm{S})\right] + \mathbb{P}_n \left[\Psi_{h,t}\left(Y,T,\bm{S}; \hat{\mu}, \hat{p}_{T|\bm{S}}\right) - \Psi_{h,t}\left(Y,T,\bm{S}; \bar{\mu}, \bar{p}_{T|\bm{S}}\right) \right] \\
		&= \underbrace{\mathbb{P}_n \Psi_{h,t}\left(Y,T,\bm{S}; \bar{\mu}, \bar{p}_{T|\bm{S}}\right) - \mathbb{E}\left[\mu(t,\bm{S})\right]}_{\textbf{Term VI}} \\
		&\quad + \underbrace{\left(\mathbb{P}_n - \P\right)\left[\hat{\mu}(t,\bm{S}) - \bar{\mu}(t,\bm{S})\right]}_{\textbf{Term VII}} + \underbrace{\left(\mathbb{P}_n -\P\right)\left\{\frac{K\left(\frac{T-t}{h}\right)}{h}\left[\frac{1}{\hat{p}_{T|\bm{S}}(T|\bm{S})} - \frac{1}{\bar{p}_{T|\bm{S}}(T|\bm{S})}\right]\left[Y -\bar{\mu}(t,\bm{S})\right]\right\}}_{\textbf{Term VIII}}\\
		&\quad + \underbrace{\left(\mathbb{P}_n -\P\right)\left\{\frac{K\left(\frac{T-t}{h}\right)}{h\cdot \bar{p}_{T|\bm{S}}(T|\bm{S})}\left[\bar{\mu}(t,\bm{S}) - \hat{\mu}(t,\bm{S})\right]\right\}}_{\textbf{Term IX}} \\
		&\quad + \underbrace{\mathbb{P}_n\left\{\frac{K\left(\frac{T-t}{h}\right)}{h}\left[\frac{1}{\hat{p}_{T|\bm{S}}(T|\bm{S})} - \frac{1}{\bar{p}_{T|\bm{S}}(T|\bm{S})}\right]\left[\bar{\mu}(t,\bm{S}) - \hat{\mu}(t,\bm{S})\right]\right\}}_{\textbf{Term X}}\\
		&+ \underbrace{\P\left\{\left[1- \frac{K\left(\frac{T-t}{h}\right)}{h \cdot \bar{p}_{T|\bm{S}}(T|\bm{S})}\right] \left[\hat{\mu}(t,\bm{S}) - \bar{\mu}(t,\bm{S})\right] \right\} + \P\left\{\frac{K\left(\frac{T-t}{h}\right)}{h}\left[\frac{1}{\hat{p}_{T|\bm{S}}(T|\bm{S})} - \frac{1}{\bar{p}_{T|\bm{S}}(T|\bm{S})}\right]\left[Y -\bar{\mu}(t,\bm{S})\right]\right\}}_{\textbf{Term XI}},
	\end{align*}
	where $\Psi_{h,t}\left(Y,T,\bm{S}; \mu,p_{T|\bm{S}}\right) = \frac{K\left(\frac{T-t}{h}\right)}{h\cdot p_{T|\bm{S}}(T|\bm{S})}\cdot \left[Y-\mu(t,\bm{S})\right] + \mu(t,\bm{S})$. It remains to show that the dominating \textbf{Term VI} is of order $O(h^2)+O_P\left(\sqrt{\frac{1}{nh}}\right)$ in \autoref{app:m_pos_Term_VI} and the remainder terms are of order $o_P\left(\sqrt{\frac{1}{nh}}\right)$ for any $t\in \mathcal{T}$ in \autoref{app:m_pos_Term_VII}, \autoref{app:m_pos_Term_VIII}, \autoref{app:m_pos_Term_X}, and \autoref{app:m_pos_Term_XI}. We shall also derive the asymptotic normality of $\hat{m}_{\mathrm{DR}}(t)$ in \autoref{app:m_pos_asym_norm}.
	
	\subsubsection{Analysis of \textbf{Term VI} for $\hat{m}_{\mathrm{DR}}(t)$}
	\label{app:m_pos_Term_VI}
	
	We analyze the variance and bias of \textbf{Term VI} separately as follows. Notice that
	\begin{align*}
		&\mathrm{Var}\left[\textbf{Term VI}\right]\\
		&=\mathrm{Var}\left[\mathbb{P}_n \Psi_{h,t}\left(Y,T,\bm{S}; \bar{\mu}, \bar{p}_{T|\bm{S}}\right) \right] \\
		&= \frac{1}{nh^2} \mathrm{Var}\left[\frac{K\left(\frac{T-t}{h}\right)}{\bar{p}_{T|\bm{S}}(T|\bm{S})}\cdot \left[Y-\bar{\mu}(t,\bm{S})\right] + h\cdot \bar{\mu}(t,\bm{S}) \right]\\
		&\stackrel{\text{(i)}}{\lesssim} \frac{1}{nh^2} \mathrm{Var}\left[\frac{K\left(\frac{T-t}{h}\right)}{\bar{p}_{T|\bm{S}}(T|\bm{S})}\cdot \left[Y-\bar{\mu}(t,\bm{S})\right] \right] + \frac{1}{n} \mathrm{Var}\left[\bar{\mu}(t,\bm{S})\right]\\
		&\stackrel{\text{(ii)}}{=} \frac{1}{nh^2}\cdot \mathbb{E}\left[\frac{K^2\left(\frac{T-t}{h}\right)}{\bar{p}_{T|\bm{S}}^2(T|\bm{S})}\cdot \left[Y-\bar{\mu}(t,\bm{S})\right]^2 \right] + O\left(\frac{1}{n}\right)\\
		&= \frac{1}{nh^2} \int_{\mathcal{S}} \int_{\mathcal{T}} \frac{K^2\left(\frac{t_1-t}{h}\right)}{\bar{p}_{T|\bm{S}}^2(t_1|\bm{s}_1)}\left\{\left[\mu(t_1,\bm{s}_1)-\bar{\mu}(t,\bm{s}_1)\right]^2 + \sigma^2 \right\} p(t_1,\bm{s}_1)\, dt_1 d\bm{s}_1 + O\left(\frac{1}{n}\right)\\
		&\stackrel{\text{(iii)}}{=} \frac{1}{nh} \int_{\mathcal{S}} \int_{\mathbb{R}} \frac{K^2(u)}{\bar{p}_{T|\bm{S}}^2(t+uh|\bm{s}_1)}\left\{\left[\mu(t+uh,\bm{s}_1)-\bar{\mu}(t,\bm{s}_1)\right]^2 + \sigma^2 \right\} p(t+uh,\bm{s}_1)\, du d\bm{s}_1 + O\left(\frac{1}{n}\right)\\
		& \stackrel{\text{(iv)}}{=} \frac{1}{nh} \int_{\mathcal{S}} \int_{\mathbb{R}} \frac{K^2(u)}{\bar{p}_{T|\bm{S}}^2(t|\bm{s}_1) + O(h^2)}\left\{\left[\mu(t,\bm{s}_1)-\bar{\mu}(t,\bm{s}_1)\right]^2 +O(h^2)+ \sigma^2 \right\} \left[p(t,\bm{s}_1)+O(h)\right] du d\bm{s}_1 + O\left(\frac{1}{n}\right)\\
		&=O\left(\frac{1}{nh}\right),
	\end{align*}
	where (i) uses Cauchy-Schwarz inequality on the covariance, (ii) uses the boundedness of $\bar{\mu}$ under Assumption~\ref{assump:reg_diff} to derive the term $O\left(\frac{1}{n}\right)$, (iii) leverages a change of variable $u=\frac{t_1-t}{h}$, and (iv) applies the Taylor's expansion under Assumptions~\ref{assump:reg_diff} and \ref{assump:den_diff}. In the above calculations, we also note from the line (i) that the second part $\bar{\mu}(t,\bm{S})$ of $\Psi_{h,t}\left(Y,T,\bm{S}; \bar{\mu}, \bar{p}_{T|\bm{S}}\right)$ is of smaller order than the first term $\frac{K\left(\frac{T-t}{h}\right)}{h\cdot \bar{p}_{T|\bm{S}}(T|\bm{S})}\left[Y-\bar{\mu}(t,\bm{S})\right]$. Thus, we can only keep the first term in the final asymptotically linear form of $\hat{m}_{\mathrm{DR}}(t)$. Now, by Chebyshev's inequality, we conclude that
	\begin{align*}
		\left(\mathbb{P}_n -\P\right) \Psi_{h,t}\left(Y,T,\bm{S}; \bar{\mu}, \bar{p}_{T|\bm{S}}\right) &= O_P\left(\sqrt{\mathrm{Var}\left[\mathbb{P}_n\Psi_{h,t}\left(Y,T,\bm{S}; \bar{\mu}, \bar{p}_{T|\bm{S}}\right) \right]}\right) \\
		&=O_P\left(\sqrt{\mathrm{Var}\left[\frac{1}{\sqrt{h}}\cdot \mathbb{P}_n\psi_{h,t}\left(Y,T,\bm{S}; \bar{\mu}, \bar{p}_{T|\bm{S}}\right) \right]}\right) = O_P\left(\sqrt{\frac{1}{nh}}\right),
	\end{align*}
	where $\psi_{h,t}\left(Y,T,\bm{S}; \bar{\mu}, \bar{p}_{T|\bm{S}}\right) = \frac{K\left(\frac{T-t}{h}\right)}{\sqrt{h} \cdot \bar{p}_{T|\bm{S}}(T|\bm{S})}\left[Y-\bar{\mu}(t,\bm{S})\right]$. In addition, by direct calculations and Taylor's expansions, we derive that
	\begin{align*}
		&\mathrm{Bias}\left[\textbf{Term VI}\right] \\
		&= \P\left[\Psi_{h,t}\left(Y,T,\bm{S}; \bar{\mu}, \bar{p}_{T|\bm{S}}\right) \right] - \mathbb{E}\left[\mu(t,\bm{S}_1)\right]\\
		&= \mathbb{E}\left[\frac{K\left(\frac{T-t}{h}\right)}{h \cdot \bar{p}_{T|\bm{S}}(T|\bm{S})}\left[Y-\bar{\mu}(t,\bm{S})\right] \right] + \mathbb{E}\left[\bar{\mu}(t,\bm{S}_1) - \mu(t,\bm{S}_1)\right]\\
		&= \int_{\mathcal{S}} \int_{\mathcal{T}} \frac{K\left(\frac{t_1-t}{h}\right)}{h \cdot \bar{p}_{T|\bm{S}}(t_1|\bm{s}_1)}\left[\mu(t_1,\bm{s}_1)-\bar{\mu}(t,\bm{s}_1)\right] p(t_1,\bm{s}_1) \, dt_1d\bm{s}_1 + \mathbb{E}\left[\bar{\mu}(t,\bm{S}_1) - \mu(t,\bm{S}_1)\right]\\
		&= \int_{\mathcal{S}} \int_{\mathbb{R}} \frac{K(u)}{\bar{p}_{T|\bm{S}}(t+uh|\bm{s}_1)}\left[\mu(t+uh,\bm{s}_1)-\bar{\mu}(t,\bm{s}_1)\right] p(t+uh,\bm{s}_1) \, du d\bm{s}_1 + \mathbb{E}\left[\bar{\mu}(t,\bm{S}_1) - \mu(t,\bm{S}_1)\right]\\
		&= \int_{\mathcal{S}} \int_{\mathbb{R}} \frac{K(u) \left[\mu(t,\bm{s}_1)-\bar{\mu}(t,\bm{s}_1) + uh \frac{\partial}{\partial t}\mu(t,\bm{s}_1) +\frac{u^2h^2}{2} \frac{\partial^2}{\partial t^2}\mu(t,\bm{s}_1) + O(h^3)\right]}{\bar{p}_{T|\bm{S}}(t|\bm{s}_1) + uh\frac{\partial}{\partial t} \bar{p}_{T|\bm{S}}(t|\bm{s}_1) + \frac{u^2h^2}{2} \cdot \frac{\partial^2}{\partial t^2}\bar{p}_{T|\bm{S}}(t|\bm{s}_1) + O(h^3)} \\
		&\quad\quad \times \left[p(t,\bm{s}_1) + uh \frac{\partial}{\partial t}p(t,\bm{s}_1) + \frac{u^2h^2}{2} \frac{\partial^2}{\partial t^2}p(t,\bm{s}_1) +O(h^3)\right] \, du d\bm{s}_1 + \mathbb{E}\left[\bar{\mu}(t,\bm{S}_1) - \mu(t,\bm{S}_1)\right]\\
		&= \int_{\mathcal{S}} \int_{\mathbb{R}} K(u) \left[\mu(t,\bm{s}_1)-\bar{\mu}(t,\bm{s}_1) + uh \frac{\partial}{\partial t}\mu(t,\bm{s}_1) +\frac{u^2h^2}{2} \frac{\partial^2}{\partial t^2}\mu(t,\bm{s}_1) + O(h^3)\right] \\
		&\quad \quad \times \left[\frac{1}{\bar{p}_{T|\bm{S}}(t|\bm{s}_1)} - \frac{uh\frac{\partial}{\partial t} \bar{p}_{T|\bm{S}}(t|\bm{s}_1)}{\bar{p}_{T|\bm{S}}^2(t|\bm{s}_1)} - \frac{u^2h^2 \cdot \frac{\partial^2}{\partial t^2}\bar{p}_{T|\bm{S}}(t|\bm{s}_1)}{2 \bar{p}_{T|\bm{S}}^2(t|\bm{s}_1)} + \frac{u^2 h^2\left[\frac{\partial}{\partial t} \bar{p}_{T|\bm{S}}(t|\bm{s}_1)\right]^2}{\bar{p}_{T|\bm{S}}^3(t|\bm{s}_1)} + O(h^3)\right]\\
		&\quad\quad \times \left[p(t,\bm{s}_1) + uh \frac{\partial}{\partial t}p(t,\bm{s}_1) + \frac{u^2h^2}{2} \frac{\partial^2}{\partial t^2}p(t,\bm{s}_1) +O(h^3)\right] \, du d\bm{s}_1 + \mathbb{E}\left[\bar{\mu}(t,\bm{S}_1) - \mu(t,\bm{S}_1)\right]\\
		&= \int_{\mathcal{S}} \frac{\left[\mu(t,\bm{s}_1)-\bar{\mu}(t,\bm{s}_1)\right]}{\bar{p}_{T|\bm{S}}(t|\bm{s}_1)} \cdot p(t,\bm{s}_1) \, d\bm{s}_1 + \int_{\mathcal{S}} \left[\bar{\mu}(t,\bm{s}_1)- \mu(t,\bm{s}_1)\right] p_S(\bm{s}_1)\, d\bm{s}_1 \\
		&\quad + \frac{h^2\kappa_2}{2} \int_{\mathcal{S}} \Bigg\{\frac{\frac{\partial^2}{\partial t^2} \mu(t,\bm{s}_1)}{\bar{p}_{T|\bm{S}}(t|\bm{s}_1)} \cdot p(t,\bm{s}_1) + \frac{2\left[\frac{\partial}{\partial t} \bar{p}_{T|\bm{S}}(t|\bm{s}_1)\right]^2/\bar{p}_{T|\bm{S}}(t|\bm{s}_1) - \frac{\partial^2}{\partial t^2} \bar{p}_{T|\bm{S}}(t|\bm{s}_1)}{\bar{p}_{T|\bm{S}}^2(t|\bm{s}_1)}\left[\bar{\mu}(t,\bm{s}_1)- \mu(t,\bm{s}_1)\right] p(t,\bm{s}_1) \\
		&\quad\quad + \frac{\left[\bar{\mu}(t,\bm{s}_1)- \mu(t,\bm{s}_1)\right]}{\bar{p}_{T|\bm{S}}(t|\bm{s}_1)}\cdot \frac{\partial^2}{\partial t^2} p(t,\bm{s}_1) - \frac{2\left[\frac{\partial}{\partial t}\mu(t,\bm{s}_1)\right]\left[\frac{\partial}{\partial t}\bar{p}_{T|\bm{S}}(t|\bm{s}_1)\right]}{\bar{p}_{T|\bm{S}}^2(t|\bm{s}_1)} \cdot p(t,\bm{s}_1) + \frac{2\left[\frac{\partial}{\partial t}\mu(t,\bm{s}_1)\right]\left[\frac{\partial}{\partial t}p(t,\bm{s}_1)\right]}{\bar{p}_{T|\bm{S}}(t|\bm{s}_1)}\\
		&\quad\quad - \frac{2\left[\frac{\partial}{\partial t}p(t,\bm{s}_1)\right]\left[\frac{\partial}{\partial t}\bar{p}_{T|\bm{S}}(t|\bm{s}_1)\right] \left[\bar{\mu}(t,\bm{s}_1)- \mu(t,\bm{s}_1)\right]}{\bar{p}_{T|\bm{S}}^2(t|\bm{s}_1)}\Bigg\} d\bm{s}_1 + o(h^2)\\
		&= \int_{\mathcal{S}} \frac{\left[\mu(t,\bm{s}_1)-\bar{\mu}(t,\bm{s}_1)\right]\left[p_{T|\bm{S}}(t|\bm{s}_1) - \bar{p}_{T|\bm{S}}(t|\bm{s}_1)\right]}{\bar{p}_{T|\bm{S}}(t|\bm{s}_1) \cdot p_{T|\bm{S}}(t|\bm{s}_1)} \cdot p(t,\bm{s}_1) \, d\bm{s}_1 + h^2B_m(t) + o(h^2)\\
		&= \mathbb{E}_{\bm{S}}\left\{\frac{\left[\mu(t,\bm{S})-\bar{\mu}(t,\bm{S})\right]\left[p_{T|\bm{S}}(t|\bm{S}) - \bar{p}_{T|\bm{S}}(t|\bm{S})\right]}{\bar{p}_{T|\bm{S}}(t|\bm{S})} \right\} + h^2B_m(t) + o(h^2),
	\end{align*}
	where the complicated bias term $B_m(t)$ is given by
	\begin{align*}
		B_m(t) &= \frac{\kappa_2}{2}\cdot \mathbb{E}_{\bm{S}}\Bigg\{\frac{\left[\mu(t,\bm{S})-\bar{\mu}(t,\bm{S})\right]}{\bar{p}_{T|\bm{S}}(t|\bm{S})}\Bigg[\frac{2\frac{\partial}{\partial t} \bar{p}_{T|\bm{S}}(t|\bm{S})\cdot \frac{\partial}{\partial t} \log\bar{p}_{T|\bm{S}}(t|\bm{S}) - \frac{\partial^2}{\partial t^2} \bar{p}_{T|\bm{S}}(t|\bm{S})}{\bar{p}_{T|\bm{S}}(t|\bm{S})} \cdot p_{T|\bm{S}}(t|\bm{S}) + \frac{\partial^2}{\partial t^2} p_{T|\bm{S}}(t|\bm{S}) \\
		&\quad\quad\quad\quad - \frac{2\left[\frac{\partial}{\partial t}p_{T|\bm{S}}(t|\bm{S})\right]\left[\frac{\partial}{\partial t}\bar{p}_{T|\bm{S}}(t|\bm{S})\right]}{\bar{p}_{T|\bm{S}}(t|\bm{S})}\Bigg] \Bigg\} \\
		&\quad + \frac{\kappa_2}{2}\cdot \mathbb{E}_{\bm{S}}\left\{\frac{\left[\bar{p}_{T|\bm{S}}(t|\bm{S}) \cdot \frac{\partial^2}{\partial t^2}\mu(t,\bm{S}) - 2\left[\frac{\partial}{\partial t}\mu(t,\bm{S})\right]\left[\frac{\partial}{\partial t}\bar{p}_{T|\bm{S}}(t|\bm{S})\right]\right]}{\bar{p}_{T|\bm{S}}^2(t|\bm{S})}\cdot p_{T|\bm{S}}(t|\bm{S}) \right\}\\
		&\quad + \kappa_2\cdot \mathbb{E}_{\bm{S}}\left[\frac{\frac{\partial}{\partial t}\mu(t,\bm{S})\cdot \frac{\partial}{\partial t}p_{T|\bm{S}}(t|\bm{S})}{\bar{p}_{T|\bm{S}}(t|\bm{S})}\right]
	\end{align*}
	and $\bar{p}(t,\bm{s}) = \bar{p}_{T|\bm{S}}(t|\bm{s})\cdot p_S(\bm{s})$. Under the condition that either $\bar{\mu} = \mu$ or $\bar{p}_{T|\bm{S}}=p_{T|\bm{S}}$, we have that
	$$\mathbb{E}_{\bm{S}}\left\{\frac{\left[\mu(t,\bm{S})-\bar{\mu}(t,\bm{S})\right]\left[p_{T|\bm{S}}(t|\bm{S}) - \bar{p}_{T|\bm{S}}(t|\bm{S})\right]}{\bar{p}_{T|\bm{S}}(t|\bm{S})} \right\} = 0$$
	and
	\begin{align*}
		B_m(t) &= 
		\begin{cases}
			\frac{\kappa_2}{2}\cdot \mathbb{E}_{\bm{S}}\left\{\frac{\left[\bar{p}_{T|\bm{S}}(t|\bm{S}) \cdot \frac{\partial^2}{\partial t^2}\mu(t,\bm{S}) - 2\left[\frac{\partial}{\partial t}\mu(t,\bm{S})\right]\left[\frac{\partial}{\partial t}\bar{p}_{T|\bm{S}}(t|\bm{S})\right]\right]}{\bar{p}_{T|\bm{S}}^2(t|\bm{S})}\cdot p_{T|\bm{S}}(t|\bm{S}) + \frac{2\frac{\partial}{\partial t}\mu(t,\bm{S})\cdot \frac{\partial}{\partial t}p_{T|\bm{S}}(t|\bm{S})}{\bar{p}_{T|\bm{S}}(t|\bm{S})}\right\} \, \text{ when } \bar{\mu}=\mu,\\
			\frac{\kappa_2}{2}\cdot \mathbb{E}_{\bm{S}}\left[\frac{\partial^2}{\partial t^2} \mu(t,\bm{S}) \right] \quad\quad  \text{ when } \bar{p}_{T|\bm{S}} = p_{T|\bm{S}},
		\end{cases}\\
		&= \begin{cases}
			\frac{\kappa_2}{2}\cdot \mathbb{E}_{\bm{S}}\left\{\frac{2\frac{\partial}{\partial t}\mu(t,\bm{S})\cdot \frac{\partial}{\partial t}p_{T|\bm{S}}(t|\bm{S}) + p_{T|\bm{S}}(t|\bm{S})\left[\frac{\partial^2}{\partial t^2}\mu(t,\bm{S}) - 2 \frac{\partial}{\partial t}\log \bar{p}_{T|\bm{S}}(t|\bm{S}) \cdot \frac{\partial}{\partial t}\mu(t,\bm{S})\right]}{\bar{p}_{T|\bm{S}}(t|\bm{S})}\right\} \quad \text{ when } \bar{\mu}=\mu,\\
			\frac{\kappa_2}{2}\cdot \mathbb{E}_{\bm{S}}\left[\frac{\partial^2}{\partial t^2} \mu(t,\bm{S}) \right] \quad\quad  \text{ when } \bar{p}_{T|\bm{S}} = p_{T|\bm{S}}.
		\end{cases}
	\end{align*}
	As a result, as $h\to 0$ and $nh\to \infty$, we have that 
	\begin{align*}
		\textbf{Term VI} &= \mathbb{P}_n \Psi_{h,t}\left(Y,T,\bm{S}; \bar{\mu}, \bar{p}_{T|\bm{S}}\right) - \mathbb{E}\left[\mu(t,\bm{S})\right] \\
		&= h^2B_m(t) + o(h^2) + O_P\left(\sqrt{\frac{1}{nh}}\right) \\
		&= O(h^2) + O_P\left(\sqrt{\frac{1}{nh}}\right).
	\end{align*}
	As a side note, under some VC-type condition on the kernel function $K$ \citep{einmahl2005uniform}, we can strengthen the above pointwise rate of convergence to the following uniform one as:
	$$\sup_{t\in \mathcal{T}} \left|\textbf{Term VI}\right| = O(h^2) + O_P\left(\sqrt{\frac{|\log h|}{nh}}\right);$$
	see Theorem 4 in \cite{einmahl2005uniform} for details.
	
	\subsubsection{Analysis of \textbf{Term VII} for $\hat{m}_{\mathrm{DR}}(t)$}
	\label{app:m_pos_Term_VII}
	
	By Markov's inequality, we know that
	\begin{align*}
		\sqrt{nh}\cdot \textbf{Term VII} &= \sqrt{h} \cdot \mathbb{G}_n\left[\hat{\mu}(t,\bm{S}) - \bar{\mu}(t,\bm{S}) \right]\\
		&= O_P\left(\sqrt{h}\cdot \Upsilon_{1,n}\right) = o_P(1)
	\end{align*}
	because $\mathbb{E}\left\{h\cdot \left[\hat{\mu}(t,\bm{S}) - \bar{\mu}(t,\bm{S})\right]^2\right\} = h\cdot \norm{\hat{\mu}(t,\bm{S}) - \bar{\mu}(t,\bm{S})}_{L_2}^2 =O_P\left(h\cdot \Upsilon_{1,n}^2\right)$ and $\Upsilon_{1,n}\to 0$ as $n\to\infty$. As a side note, under Assumption~\ref{assump:reg_diff} on $\bar{\mu}$ and $\hat{\mu}$, we know that the function $\bm{s} \mapsto \hat{\mu}(t,\bm{s}) - \bar{\mu}(t,\bm{s})$ is Lipschitz continuous with respect to $t\in \mathcal{T}$. Together with the compactness of $\mathcal{T}$ and Example 19.7 in \cite{VDV1998}, we can also deduce that
	$$\sup_{t\in \mathcal{T}}\left|\sqrt{h} \cdot \mathbb{G}_n\left[\hat{\mu}(t,\bm{S}) - \bar{\mu}(t,\bm{S}) \right] \right| = O_P\left(\sqrt{h}\cdot \sup_{t\in \mathcal{T}}\norm{\hat{\mu}(t,\bm{S}) - \bar{\mu}(t,\bm{S})}_{L_2}\right),$$
	which will be $o_P(1)$ as well if $\sup_{t\in \mathcal{T}}\norm{\hat{\mu}(t,\bm{S}) - \bar{\mu}(t,\bm{S})}_{L_2}=o_P(1)$.
	
	\subsubsection{Analyses of \textbf{Term VIII} and \textbf{Term IX} for $\hat{m}_{\mathrm{DR}}(t)$}
	\label{app:m_pos_Term_VIII}
	
	The argument for showing \textbf{Term VIII} and \textbf{Term IX} to be $o_P\left(\sqrt{\frac{1}{nh}}\right)$ will be similar to the one for \textbf{Term VII} above. By Markov's inequality, we know that
	\begin{align*}
		\sqrt{nh}\cdot \textbf{Term VIII} &=  \mathbb{G}_n\left[\frac{K\left(\frac{T-t}{h}\right)}{\sqrt{h}}\left[\frac{1}{\hat{p}_{T|\bm{S}}(T|\bm{S})} - \frac{1}{\bar{p}_{T|\bm{S}}(T|\bm{S})}\right]\left[Y -\bar{\mu}(t,\bm{S})\right] \right]\\
		&= O_P\left(\Upsilon_{2,n}\right) = o_P(1)
	\end{align*}
	because 
	\begin{align*}
		&\mathbb{E}\left\{\frac{K^2\left(\frac{T-t}{h}\right)}{h}\cdot \frac{\left[\hat{p}_{T|\bm{S}}(T|\bm{S}) - \bar{p}_{T|\bm{S}}(T|\bm{S})\right]^2}{\hat{p}_{T|\bm{S}}^2(T|\bm{S})\cdot \bar{p}_{T|\bm{S}}^2(T|\bm{S})}\cdot \left[Y -\bar{\mu}(t,\bm{S})\right]^2 \right\} \\
		&= \mathbb{E}\left\{\frac{K^2\left(\frac{T-t}{h}\right)}{h}\cdot \frac{\left[\hat{p}_{T|\bm{S}}(T|\bm{S}) - \bar{p}_{T|\bm{S}}(T|\bm{S})\right]^2}{\hat{p}_{T|\bm{S}}^2(T|\bm{S})\cdot \bar{p}_{T|\bm{S}}^2(T|\bm{S})}\cdot \left[\left(\mu(T,\bm{S}) -\bar{\mu}(t,\bm{S}) \right)^2 + \sigma^2\right] \right\}\\
		&\stackrel{\text{(i)}}{=} \mathbb{E}\left\{\int_{\mathbb{R}} K^2(u)\cdot \frac{\left[\hat{p}_{T|\bm{S}}(t+uh|\bm{S}) - \bar{p}_{T|\bm{S}}(t+uh|\bm{S})\right]^2 p_{T|\bm{S}}(t+uh|\bm{S})}{\hat{p}_{T|\bm{S}}^2(T|\bm{S})\cdot \bar{p}_{T|\bm{S}}^2(T|\bm{S})} \cdot \left[\left(\mu(t+uh,\bm{S}) -\bar{\mu}(t,\bm{S}) \right)^2 + \sigma^2\right]\right\}\\
		&\stackrel{\text{(ii)}}{\lesssim} \sup_{|u-t|\leq h}\norm{\hat{p}_{T|\bm{S}}(u|\bm{S}) - \bar{p}_{T|\bm{S}}(u|\bm{S})}_{L_2}^2\\
		&\stackrel{\text{(iii)}}{=}O_P\left(\Upsilon_{2,n}^2\right) = o_P(1),
	\end{align*}
	where (i) uses the change of variable $u=\frac{T-t}{h}$ in the integration, (ii) leverages the boundedness of $\mu,\bar{\mu}$ under Assumption~\ref{assump:reg_diff} and the positivity condition (Assumption~\ref{assump:positivity}) on $\bar{p}_{T|\bm{S}}$, as well as (iii) applies $\sup\limits_{|u-t|\leq h}\norm{\hat{p}_{T|\bm{S}}(u|\bm{S}) - \bar{p}_{T|\bm{S}}(u|\bm{S})}_{L_2}=O_P\left(\Upsilon_{2,n}\right)$ with $\Upsilon_{2,n}\to 0$ as $n\to \infty$. As a side note again, under the VC-type condition on the kernel function $K$ \citep{einmahl2005uniform} and $\sup_{t\in \mathcal{T}}\norm{\hat{p}_{T|\bm{S}}(t|\bm{S}) - \bar{p}_{T|\bm{S}}(t|\bm{S})}_{L_2}=O_P\left(\Upsilon_{2,n}\right)=o_P(1)$, we can strengthen the above pointwise rate of convergence to the following uniform result as:
	$$\sup_{t\in \mathcal{T}}\left|\mathbb{G}_n\left[\frac{K\left(\frac{T-t}{h}\right)}{\sqrt{h}}\left[\frac{1}{\hat{p}_{T|\bm{S}}(T|\bm{S})} - \frac{1}{\bar{p}_{T|\bm{S}}(T|\bm{S})}\right]\left[Y -\bar{\mu}(t,\bm{S})\right] \right]\right|=o_P(1).$$
	Similarly, by Markov's inequality, we have that
	\begin{align*}
		\sqrt{nh} \cdot \textbf{Term IX} &= \mathbb{G}_n\left\{\frac{K\left(\frac{T-t}{h}\right)}{\sqrt{h}\cdot \bar{p}_{T|\bm{S}}(T|\bm{S})}\left[\bar{\mu}(t,\bm{S}) - \hat{\mu}(t,\bm{S})\right]\right\} = O_P\left(\Upsilon_{1,n}\right) = o_P(1)
	\end{align*}
	because
	\begin{align*}
		\mathbb{E}\left\{\frac{K^2\left(\frac{T-t}{h}\right)}{h\cdot \bar{p}_{T|\bm{S}}^2(T|\bm{S})}\left[\bar{\mu}(t,\bm{S}) - \hat{\mu}(t,\bm{S})\right]^2\right\} &=\mathbb{E}\left\{\int_{\mathcal{T}} \frac{K^2\left(\frac{t_1-t}{h}\right) \cdot p_{T|\bm{S}}(t_1|\bm{S})}{h\cdot \bar{p}_{T|\bm{S}}^2(t_1|\bm{S})} \left[\bar{\mu}(t,\bm{S}) - \hat{\mu}(t,\bm{S})\right]^2 dt_1 \right\}\\
		&\stackrel{\text{(i)}}{=} \mathbb{E}\left\{\int_{\mathbb{R}} \frac{K^2(u) \cdot p_{T|\bm{S}}(t+uh|\bm{S})}{\bar{p}_{T|\bm{S}}^2(t+uh|\bm{S})} \left[\bar{\mu}(t,\bm{S}) - \hat{\mu}(t,\bm{S})\right]^2 du \right\} \\
		&\stackrel{\text{(ii)}}{\lesssim} \norm{\hat{\mu}(t,\bm{S}) - \bar{\mu}(t,\bm{S})}_{L_2}^2\\
		&= O_P\left(\Upsilon_{1,n}^2\right) = o_P(1),
	\end{align*}
	where (i) uses the change of variable $u=\frac{t_1-t}{h}$ and (ii) leverages the boundedness of $p_{T|\bm{S}}$ under Assumption~\ref{assump:reg_diff}, the positivity condition (Assumption~\ref{assump:positivity}) on $\bar{p}_{T|\bm{S}}$, the boundedness condition on $K$ under Assumption~\ref{assump:reg_kernel}, as well as $\norm{\hat{\mu}(t,\bm{S}) - \bar{\mu}(t,\bm{S})}_{L_2}=O_P\left(\Upsilon_{1,n}\right)$ with $\Upsilon_{1,n}\to 0$ as $n\to \infty$. In addition, if $\sup_{t\in \mathcal{T}}\norm{\hat{\mu}(t,\bm{S}) - \bar{\mu}(t,\bm{S})}_{L_2}=o_P(1)$, then the above pointwise rate of convergence can be strengthened to the uniform one as:
	$$\sup_{t\in \mathcal{T}}\left|\mathbb{G}_n\left\{\frac{K\left(\frac{T-t}{h}\right)}{\sqrt{h}\cdot \bar{p}_{T|\bm{S}}(T|\bm{S})}\left[\bar{\mu}(t,\bm{S}) - \hat{\mu}(t,\bm{S})\right]\right\}\right|=o_P(1).$$
	
	\subsubsection{Analysis of \textbf{Term X} for $\hat{m}_{\mathrm{DR}}(t)$}
	\label{app:m_pos_Term_X}
	
	We first calculate that
	\begin{align*}
		&\mathbb{E}\left|\sqrt{nh}\cdot\textbf{Term X}\right|\\
		&\mathbb{E}\left|\sqrt{\frac{n}{h}} \cdot K\left(\frac{T-t}{h}\right)\cdot \frac{\left[\bar{p}_{T|\bm{S}}(T|\bm{S}) - \hat{p}_{T|\bm{S}}(T|\bm{S})\right]}{\hat{p}_{T|\bm{S}}(T|\bm{S})\cdot \bar{p}_{T|\bm{S}}(T|\bm{S})}\cdot \left[\bar{\mu}(t,\bm{S}) - \hat{\mu}(t,\bm{S})\right] \right| \\
		&\stackrel{\text{(i)}}{\leq} \sqrt{nh} \cdot \sqrt{\mathbb{E}\left\{\frac{K\left(\frac{T-t}{h}\right) \left[\bar{p}_{T|\bm{S}}(T|\bm{S}) - \hat{p}_{T|\bm{S}}(T|\bm{S})\right]^2 }{h\cdot \hat{p}_{T|\bm{S}}^2(T|\bm{S})\cdot \bar{p}_{T|\bm{S}}^2(T|\bm{S})}\right\} \cdot \mathbb{E}\left\{\frac{K\left(\frac{T-t}{h}\right)}{h} \cdot \left[\bar{\mu}(t,\bm{S}) - \hat{\mu}(t,\bm{S})\right]^2\right\}}\\
		&= \sqrt{nh} \cdot \sqrt{\mathbb{E}\left\{\int_{\mathbb{R}}\frac{K\left(u\right) \left[\bar{p}_{T|\bm{S}}(t+uh|\bm{S}) - \hat{p}_{T|\bm{S}}(t+uh|\bm{S})\right]^2 }{\hat{p}_{T|\bm{S}}^2(t+uh|\bm{S})\cdot \bar{p}_{T|\bm{S}}^2(t+uh|\bm{S})} \cdot p_{T|\bm{S}}(t+uh|\bm{S})\, du\right\}}\\
		&\quad \times \sqrt{\mathbb{E}\left\{\int_{\mathbb{R}} K\left(u\right) \cdot \left[\bar{\mu}(t,\bm{S}) - \hat{\mu}(t,\bm{S})\right]^2 p_{T|\bm{S}}(t+uh|\bm{S})\, du\right\}} \\
		&\lesssim \sqrt{nh} \sup_{|u-t|\leq h} \norm{\hat{p}_{T|\bm{S}}(u|\bm{S}) - p_{T|\bm{S}}(u|\bm{S})}_{L_2} \norm{\hat{\mu}(t,\bm{S}) - \mu(t,\bm{S})}_{L_2}\\
		&\stackrel{\text{(ii)}}{=} o_P(1),
	\end{align*}
	where (i) uses Cauchy-Schwarz inequality and (ii) leverages our assumption (c) on the doubly robust rate of convergence in the proposition statement. As a result, by Markov's inequality, we obtain that
	$$\sqrt{nh}\cdot \textbf{Term X} = \sqrt{\frac{n}{h}}\cdot \mathbb{P}_n\left\{K\left(\frac{T-t}{h}\right)\left[\frac{1}{\hat{p}_{T|\bm{S}}(T|\bm{S})} - \frac{1}{\bar{p}_{T|\bm{S}}(T|\bm{S})}\right]\left[\bar{\mu}(t,\bm{S}) - \hat{\mu}(t,\bm{S})\right]\right\}=o_P(1).$$
	
	\subsubsection{Analysis of \textbf{Term XI} for $\hat{m}_{\mathrm{DR}}(t)$}
	\label{app:m_pos_Term_XI}
	
	By direct calculations with some change of variables, we have that
	\begin{align*}
		&\textbf{Term XI} \\
		&=\mathbb{E}\left\{\left[1- \frac{K\left(\frac{T-t}{h}\right)}{h \cdot \bar{p}_{T|\bm{S}}(T|\bm{S})}\right] \left[\hat{\mu}(t,\bm{S}) - \bar{\mu}(t,\bm{S})\right] \right\} + \mathbb{E}\left\{\frac{K\left(\frac{T-t}{h}\right)}{h}\left[\frac{1}{\hat{p}_{T|\bm{S}}(T|\bm{S})} - \frac{1}{\bar{p}_{T|\bm{S}}(T|\bm{S})}\right]\left[Y -\bar{\mu}(t,\bm{S})\right]\right\} \\
		&= \mathbb{E}\left\{\mathbb{E}\left[1- \frac{K\left(\frac{T-t}{h}\right)}{h \cdot \bar{p}_{T|\bm{S}}(T|\bm{S})} \bigg| \bm{S}\right] \left[\hat{\mu}(t,\bm{S}) - \bar{\mu}(t,\bm{S})\right] \right\} \\
		&\quad + \mathbb{E}\left\{\frac{K\left(\frac{T-t}{h}\right)\left[\bar{p}_{T|\bm{S}}(T|\bm{S}) - \hat{p}_{T|\bm{S}}(T|\bm{S})\right]}{h\cdot \bar{p}_{T|\bm{S}}(T|\bm{S}) \cdot \hat{p}_{T|\bm{S}}(T|\bm{S})} \left[\mu(T,\bm{S}) -\bar{\mu}(t,\bm{S})\right]\right\} \\
		&= \underbrace{\mathbb{E}\left\{\left[1- \int_{\mathbb{R}} \frac{K(u)\cdot p_{T|\bm{S}}(t+uh|\bm{S})}{\bar{p}_{T|\bm{S}}(t+uh|\bm{S})} \,du\right]\left[\hat{\mu}(t,\bm{S}) - \bar{\mu}(t,\bm{S})\right] \right\}}_{\textbf{Term XIa}} \\
		&\quad + \underbrace{\mathbb{E}\left\{\int_{\mathbb{R}} \frac{K(u)\left[\bar{p}_{T|\bm{S}}(t+uh|\bm{S}) - \hat{p}_{T|\bm{S}}(t+uh|\bm{S})\right]}{\bar{p}_{T|\bm{S}}(t+uh|\bm{S}) \cdot \hat{p}_{T|\bm{S}}(t+uh|\bm{S})} \left[\mu(t+uh,\bm{S}) -\bar{\mu}(t,\bm{S})\right] p_{T|\bm{S}}(t+uh|\bm{S})\, du\right\}}_{\textbf{Term XIb}}.
	\end{align*}
	On one hand, when $\bar{p}_{T|\bm{S}} = p_{T|\bm{S}}$, we know from Assumption~\ref{assump:reg_kernel} that $\textbf{Term XIa} = 0$ and
	\begin{align*}
		\textbf{Term XIb} &= \mathbb{E}\left\{\int_{\mathbb{R}} \frac{K(u)\left[p_{T|\bm{S}}(t+uh|\bm{S}) - \hat{p}_{T|\bm{S}}(t+uh|\bm{S})\right]}{\hat{p}_{T|\bm{S}}(t+uh|\bm{S})} \left[\mu(t+uh,\bm{S}) -\bar{\mu}(t,\bm{S})\right] \, du\right\} \\
		&\lesssim \sup_{|u-t|\leq h} \norm{\hat{p}_{T|\bm{S}}(u|\bm{S}) - p_{T|\bm{S}}(u|\bm{S})}_{L_2} \\
		&=o_P\left(\sqrt{\frac{1}{nh}}\right)
	\end{align*}
	by the boundedness of $\mu,\bar{\mu}$ under Assumption~\ref{assump:reg_diff}, the positivity condition (Assumption~\ref{assump:positivity}), and our assumption (c) on the doubly robust rate of convergence in the proposition statement. Specifically, since $\norm{\hat{\mu}(t,\bm{S}) - \mu(t,\bm{S})}_{L_2}= O_P(1)$ when $\bar{\mu}\neq \mu$, our assumption (c) entails that $\sup_{|u-t|\leq h} \norm{\hat{p}_{T|\bm{S}}(u|\bm{S}) - p_{T|\bm{S}}(u|\bm{S})}_{L_2} =o_P\left(\sqrt{\frac{1}{nh}}\right)$.
	
	On the other hand, when $\bar{\mu}=\mu$, we know from Assumption~\ref{assump:positivity} on $\bar{p}_{T|\bm{S}}$ and the boundedness of $p_{T|\bm{S}}$ by Assumption~\ref{assump:den_diff} that
	\begin{align*}
		\textbf{Term XIa} &= \mathbb{E}\left\{\left[1- \int_{\mathbb{R}} \frac{K(u)\cdot p_{T|\bm{S}}(t+uh|\bm{S})}{\bar{p}_{T|\bm{S}}(t+uh|\bm{S})} \,du\right]\left[\hat{\mu}(t,\bm{S}) - \bar{\mu}(t,\bm{S})\right] \right\}\\
		&\lesssim \norm{\hat{\mu}(t,\bm{S}) - \mu(t,\bm{S})}_{L_2}\\
		&=o_P\left(\sqrt{\frac{1}{nh}}\right),
	\end{align*}
	where we again argue from our assumption (c) on the doubly robust rate of convergence in the proposition statement that $\norm{\hat{\mu}(t,\bm{S}) - \mu(t,\bm{S})}_{L_2}=o_P\left(\sqrt{\frac{1}{nh}}\right)$ if $\bar{p}_{T|\bm{S}} \neq p_{T|\bm{S}}$ and 
	$$\sup_{|u-t|\leq h} \norm{\hat{p}_{T|\bm{S}}(u|\bm{S}) - p_{T|\bm{S}}(u|\bm{S})}_{L_2}=O_P(1).$$ 
	In addition, we have that
	\begin{align*}
		\textbf{Term XIb} &= \mathbb{E}\left\{\int_{\mathbb{R}} \frac{K(u)\left[\bar{p}_{T|\bm{S}}(t+uh|\bm{S}) - \hat{p}_{T|\bm{S}}(t+uh|\bm{S})\right]}{\bar{p}_{T|\bm{S}}(t+uh|\bm{S}) \cdot \hat{p}_{T|\bm{S}}(t+uh|\bm{S})} \left[\mu(t+uh,\bm{S}) -\mu(t,\bm{S})\right] p_{T|\bm{S}}(t+uh|\bm{S})\, du\right\}\\
		&\leq \sqrt{\mathbb{E}\left\{\int_{\mathbb{R}} K(u) \left[\bar{p}_{T|\bm{S}}(t+uh|\bm{S}) - \hat{p}_{T|\bm{S}}(t+uh|\bm{S})\right]^2 p_{T|\bm{S}}(t+uh|\bm{S})\, du\right\}}\\
		&\quad \times \sqrt{\mathbb{E}\left\{\int_{\mathbb{R}} \frac{K(u)\left[\mu(t+uh,\bm{S}) -\mu(t,\bm{S})\right]^2}{\bar{p}_{T|\bm{S}}^2(t+uh|\bm{S}) \cdot \hat{p}_{T|\bm{S}}^2(t+uh|\bm{S})}\cdot p_{T|\bm{S}}(t+uh|\bm{S})\, du\right\}}	\\
		&\stackrel{\text{(i)}}{=} O_P\left(\sup_{|u-t|\leq h} \norm{\hat{p}_{T|\bm{S}}(u|\bm{S}) - p_{T|\bm{S}}(u|\bm{S})}_{L_2} \right)\\
		&\quad \times \sqrt{\mathbb{E}\Bigg\{\int_{\mathbb{R}} \frac{K(u) \left[uh\cdot \frac{\partial}{\partial t}\mu(t,\bm{S}) + O(h^2)\right]^2 \left[p_{T|\bm{S}}(t|\bm{S}) + uh\cdot \frac{\partial}{\partial t} p_{T|\bm{S}}(t|\bm{S}) + O(h^2) \right]^2}{\left[\bar{p}^2_{T|\bm{S}}(t|\bm{S}) + 2uh\cdot \bar{p}_{T|\bm{S}}(t|\bm{S}) \cdot \frac{\partial}{\partial t} \bar{p}^2_{T|\bm{S}}(t|\bm{S}) + O(h^2)\right]^2\left[1+O_P\left(\Upsilon_{1,n}^2\right)\right]}\, du\Bigg\}}\\
		& = O_P\left(h^2\cdot  \sup_{|u-t|\leq h} \norm{\hat{p}_{T|\bm{S}}(u|\bm{S}) - p_{T|\bm{S}}(u|\bm{S})}_{L_2} \right) \\
		&= O_P\left(h^2\cdot \Upsilon_{2,n}\right) \\
		&\stackrel{\text{(ii)}}{=} o_P\left(\sqrt{\frac{1}{nh}}\right),
	\end{align*}
	where (i) applies Taylor's expansion and uses the fact that the difference between $\bar{p}_{T|\bm{S}}$ and $\hat{p}_{T|\bm{S}}$ is small when $\sup_{|u-t|\leq h} \norm{\hat{p}_{T|\bm{S}}(u|\bm{S}) - p_{T|\bm{S}}(u|\bm{S})}_{L_2}=O_P\left(\Upsilon_{2,n}\right)$ as well as (i) leverages the arguments that $\sqrt{nh}\cdot h^2 = \sqrt{nh^5} \to \sqrt{c_2}\in [0,\infty)$ and $\Upsilon_{2,n}\to 0$ as $n\to \infty$.
	
	\subsubsection{Asymptotic Normality of $\hat{m}_{\mathrm{DR}}(t)$}
	\label{app:m_pos_asym_norm}
	
	For the asymptotic normality of $\hat{m}_{\mathrm{DR}}(t)$, it follows from the Lyapunov central limit theorem. Specifically, we already show in \autoref{app:m_pos_Term_VI} and subsequent subsections that
	\begin{align*}
		\sqrt{nh}\left[\hat{m}_{\mathrm{DR}}(t) - m(t)\right] &= \frac{1}{\sqrt{n}} \sum_{i=1}^n \left\{\psi_{h,t}\left(Y_i,T_i,\bm{S}_i; \bar{\mu}, \bar{p}_{T|\bm{S}}\right) + \sqrt{h}\left[\bar{\mu}(t,\bm{S}_i) -  \mathbb{E}\left[\mu(t,\bm{S})\right] \right] \right\} +o_P(1)\\
		&= \frac{1}{\sqrt{n}} \sum_{i=1}^n \psi_{h,t}\left(Y_i,T_i,\bm{S}_i; \bar{\mu}, \bar{p}_{T|\bm{S}}\right) +o_P(1)
	\end{align*}
	with $\psi_{h,t}\left(Y,T,\bm{S}; \bar{\mu}, \bar{p}_{T|\bm{S}}\right) = \frac{K\left(\frac{T-t}{h}\right)}{\sqrt{h} \cdot \bar{p}_{T|\bm{S}}(T|\bm{S})}\left[Y-\bar{\mu}(t,\bm{S})\right]$ and $V_m(t) = \mathbb{E}\left[\psi_{h,t}^2\left(Y,T,\bm{S}; \bar{\mu}, \bar{p}_{T|\bm{S}}\right)\right] = O(1)$ by our calculation in \textbf{Term VI}. Then, $\sum_{i=1}^n \mathrm{Var}\left[\frac{1}{\sqrt{n}} \cdot \psi_{h,t}\left(Y_i,T_i,\bm{S}_i; \bar{\mu}, \bar{p}_{T|\bm{S}}\right) \right] = O(1)$ and 
	\begin{align*}
		&\sum_{i=1}^n \mathbb{E}\left|\frac{1}{\sqrt{n}} \cdot \psi_{h,t}\left(Y_i,T_i,\bm{S}_i; \bar{\mu}, \bar{p}_{T|\bm{S}}\right)\right|^{2+c_1} \\
		&= \mathbb{E}\left|\frac{K^{2+c_1}\left(\frac{T-t}{h}\right) \cdot \left[Y-\bar{\mu}(t,\bm{S})\right]^{2+c_1}}{n^{\frac{c_1}{2}} h^{1+\frac{c_1}{2}} \cdot \bar{p}_{T|\bm{S}}^{2+c_1}(T|\bm{S})} \right| \\
		&\lesssim \mathbb{E}\left\{\int_{\mathbb{R}} \frac{K^{2+c_1}(u) \cdot \left[\left[ \mu(t+uh,\bm{S})-\bar{\mu}(t,\bm{S})\right]^{2+c_1} + \mathbb{E}|Y-\mu(t+uh,\bm{S})|^{2+c_1}\right]}{\sqrt{(nh)^{c_1}}\cdot \bar{p}_{T|\bm{S}}^{2+c_1}(t+uh|\bm{S})} \cdot p_{T|\bm{S}}(t+uh|\bm{S})\, du \right\}\\
		&=O\left(\sqrt{\frac{1}{(nh)^{c_1}}}\right) =o(1)
	\end{align*}
	by the boundedness of $\mu,\bar{\mu},p_{T|\bm{S}}$, the positivity condition on $\bar{p}_{T|\bm{S}}$, $\mathbb{E}|Y|^{2+c_1}<\infty$ by Assumption~\ref{assump:reg_diff}(c), and the requirement that $nh\to\infty$ as $n\to \infty$. Hence, the Lyapunov condition holds, and we have that
	$$\sqrt{nh}\left[\hat{m}_{\mathrm{DR}}(t) - m(t) - h^2 B_m(t)\right] \stackrel{d}{\to} \mathcal{N}\left(0,V_m(t)\right)$$
	after subtracting the dominating bias term $h^2 B_m(t)$ of $\psi_{h,t}\left(Y,T,\bm{S}; \bar{\mu}, \bar{p}_{T|\bm{S}}\right)$ that we have computed in \textbf{Term VI}. The proof is thus completed.
\end{proof}

\section{Proof of \autoref{thm:theta_pos}}
\label{app:proof_theta_pos}

\begin{customthm}{1}[Consistency of estimating $\theta(t)$ under positivity]
	Suppose that Assumptions~\ref{assump:id_cond}, \ref{assump:reg_diff}, \ref{assump:den_diff}, \ref{assump:reg_kernel}, and \ref{assump:positivity} hold and $\hat{\mu},\hat{\beta}, \hat{p}_{T|\bm{S}}$ are constructed on a data sample independent of $\{(Y_i,T_i,\bm{S}_i)\}_{i=1}^n$. For any fixed $t\in \mathcal{T}$, we let $\bar{\mu}(t,\bm{s})$, $\bar{\beta}(t,\bm{s})$, and $\bar{p}_{T|\bm{S}}(t|\bm{s})$ be fixed bounded functions to which $\hat{\mu}(t,\bm{s})$, $\hat{\beta}(t,\bm{s})$ and $\hat{p}_{T|\bm{S}}(t|\bm{s})$ converge under the rates of convergence as:
	\begin{align*}
		&\norm{\hat{\mu}(t,\bm{S}) - \bar{\mu}(t,\bm{S})}_{L_2} = O_P\left(\Upsilon_{1,n}\right), \quad \norm{\hat{\beta}(t,\bm{S}) - \bar{\beta}(t,\bm{S})}_{L_2} = O_P\left(\Upsilon_{3,n}\right),\\
		& \text{ and } \quad \sup_{|u-t|\leq h}\norm{\hat{p}_{T|\bm{S}}(u|\bm{S}) - \bar{p}_{T|\bm{S}}(u|\bm{S})}_{L_2}=O_P\left(\Upsilon_{2,n}\right),
	\end{align*}
	where $\Upsilon_{1,n}, \Upsilon_{3,n},\Upsilon_{2,n}\to 0$ as $n\to \infty$. Then, as $h\to 0$ and $nh^3\to \infty$, we have that
	\begin{align*}
		&\hat{\theta}_{\mathrm{RA}}(t) - \theta(t) = O_P\left(\Upsilon_{3,n} + \norm{\bar{\beta}(t,\bm{S}) - \beta(t,\bm{S})}_{L_2} + \frac{1}{\sqrt{n}}\right), \\ 
		&\hat{\theta}_{\mathrm{IPW}}(t) - \theta(t) = O(h^2)+ O_P\left(\sqrt{\frac{1}{nh^3}} + \Upsilon_{2,n} + \sup_{|u-t|\leq h}\norm{\bar{p}_{T|\bm{S}}(u|\bm{S}) - p_{T|\bm{S}}(u|\bm{S})}_{L_2}\right). 
	\end{align*}
	If, in addition, we assume that 
	\begin{enumerate}[label=(\alph*)]
		\item $\bar{p}_{T|\bm{S}}$ satisfies Assumptions~\ref{assump:den_diff} and \ref{assump:positivity};
		\item either (i) ``$\,\bar{\mu}=\mu$ and $\bar{\beta}=\beta$'' with only $h\cdot \Upsilon_{3,n}\to 0$ or (ii) ``$\,\bar{p}_{T|\bm{S}} = p_{T|\bm{S}}$'';
		\item $\sqrt{nh} \sup\limits_{|u-t|\leq h} \norm{\hat{p}_{T|\bm{S}}(u|\bm{S}) - p_{T|\bm{S}}(u|\bm{S})}_{L_2} \left[\norm{\hat{\mu}(t,\bm{S}) - \mu(t,\bm{S})}_{L_2} + h \norm{\hat{\beta}(t,\bm{S}) - \beta(t,\bm{S})}_{L_2}\right] = o_P(1)$,
	\end{enumerate}
	then 
	$$\sqrt{nh^3}\left[\hat{\theta}_{\mathrm{DR}}(t) - \theta(t)\right] = \frac{1}{\sqrt{n}} \sum_{i=1}^n \left\{\phi_{h,t}\left(Y_i,T_i,\bm{S}_i;\bar{\mu}, \bar{\beta}, \bar{p}_{T|\bm{S}}\right) + \sqrt{h^3}\left[\bar{\beta}(t,\bm{S}_i) -  \mathbb{E}\left[\beta(t,\bm{S})\right] \right]\right\} +o_P(1)$$
	when $nh^7\to c_3$ for some finite number $c_3\geq 0$, where $$\phi_{h,t}\left(Y,T,\bm{S}; \bar{\mu},\bar{\beta}, \bar{p}_{T|\bm{S}}\right) = \frac{\left(\frac{T-t}{h}\right) K\left(\frac{T-t}{h}\right)}{\sqrt{h}\cdot \kappa_2\cdot \bar{p}_{T|\bm{S}}(T|\bm{S})}\cdot \left[Y - \bar{\mu}(t,\bm{S}) - (T-t)\cdot \bar{\beta}(t,\bm{S})\right].$$
	Furthermore, 
	$$\sqrt{nh^3}\left[\hat{\theta}_{\mathrm{DR}}(t) - \theta(t) - h^2 B_{\theta}(t)\right] \stackrel{d}{\to} \mathcal{N}\left(0,V_{\theta}(t)\right)$$
	with $V_{\theta}(t) = \mathbb{E}\left[\phi_{h,t}^2\left(Y,T,\bm{S};\bar{\mu}, \bar{\beta}, \bar{p}_{T|\bm{S}}\right)\right]$ and 
	\begin{align*}
		B_{\theta}(t) = 
		\begin{cases}
			\frac{\kappa_4}{6\kappa_2} \cdot \mathbb{E}_{\bm{S}}\left\{\frac{3\frac{\partial}{\partial t} p_{T|\bm{S}}(t|\bm{S}) \cdot \frac{\partial^2}{\partial t^2} \mu(t,\bm{S}) + p_{T|\bm{S}}(t|\bm{S})\left[ \frac{\partial^3}{\partial t^3} \mu(t,\bm{S}) - 3\frac{\partial}{\partial t} \log\bar{p}_{T|\bm{S}}(t|\bm{S}) \cdot \frac{\partial^2}{\partial t^2} \mu(t,\bm{S}) \right]}{\bar{p}_{T|\bm{S}}(t|\bm{S})} \right\} \; \text{ when } \bar{\mu}=\mu \text{ and } \bar{\beta}=\beta,\\
			\frac{\kappa_4}{6\kappa_2} \cdot \mathbb{E}_{\bm{S}}\left[\frac{\partial^3}{\partial t^3} \mu(t,\bm{S})\right] \quad\; \text{ when }\; \bar{p}_{T|\bm{S}} = p_{T|\bm{S}}.
		\end{cases}
	\end{align*}
\end{customthm}

\begin{remark}[Uniform asymptotic theory for estimating $\theta(t)$]
If we assume that 
\[
\begin{cases}
	\sup_{t\in \mathcal{T}}\norm{\hat{\mu}(t,\bm{S}) - \bar{\mu}(t,\bm{S})}_{L_2}=O_P(\Upsilon_{1,n}),\\
	\sup_{t\in \mathcal{T}}\norm{\hat{p}_{T|\bm{S}}(t|\bm{S}) - \bar{p}_{T|\bm{S}}(t|\bm{S})}_{L_2}=O_P\left(\Upsilon_{2,n}\right),\\
	\sup_{t\in \mathcal{T}}\norm{\hat{\beta}(t,\bm{S}) - \bar{\beta}(t,\bm{S})}_{L_2}=O_P(\Upsilon_{3,n}),\\
	\sqrt{nh} \sup_{t\in \mathcal{T}} \norm{\hat{p}_{T|\bm{S}}(u|\bm{S}) - p_{T|\bm{S}}(u|\bm{S})}_{L_2} \left[\norm{\hat{\mu}(t,\bm{S}) - \mu(t,\bm{S})}_{L_2} + h \norm{\hat{\beta}(t,\bm{S}) - \beta(t,\bm{S})}_{L_2}\right]=o_P(1),
\end{cases}
\]
then the pointwise convergence results in \autoref{thm:theta_pos} can be strengthened to the uniform ones; see our side notes in the proof below.
\end{remark}

\begin{proof}[Proof of \autoref{thm:theta_pos}]
	We derive the rates of convergence of $\hat{\theta}_{\mathrm{RA}}(t)$ given by \eqref{theta_RA} and $\hat{\theta}_{\mathrm{IPW}}(t)$ given by \eqref{theta_IPW} in \autoref{app:theta_RA_pos} and \autoref{app:theta_IPW_pos}, respectively. We also prove the asymptotic linearity, double robustness, and asymptotic normality of $\hat{\theta}_{\mathrm{DR}}(t)$ given by \eqref{theta_DR} in \autoref{app:theta_DR_pos}.
	
	\subsection{Rate of Convergence of $\hat{\theta}_{\mathrm{RA}}(t)$}
	\label{app:theta_RA_pos}
	
	Firstly, we derive the rate of convergence for $\hat{\theta}_{\mathrm{RA}}(t)$ in \eqref{theta_RA}. Under Assumption~\ref{assump:id_cond}(d), we have that
	\begin{align*}
		\hat{\theta}_{\mathrm{RA}}(t) - \theta(t) &= \frac{1}{n} \sum_{i=1}^n \hat{\beta}(t,\bm{S}_i) - \frac{d}{dt}\mathbb{E}\left[\mu(t,\bm{S}_1)\right]\\
		&= \underbrace{\frac{1}{n} \sum_{i=1}^n \left[\hat{\beta}(t,\bm{S}_i) - \bar{\beta}(t,\bm{S}_i) \right]}_{\textbf{Term I}} + \underbrace{\frac{1}{n} \sum_{i=1}^n \left[\bar{\beta}(t,\bm{S}_i) - \beta(t,\bm{S}_i) \right]}_{\textbf{Term II}} + \underbrace{\frac{1}{n} \sum_{i=1}^n \left\{\beta(t,\bm{S}_i) - \mathbb{E}\left[\beta(t,\bm{S}_i) \right] \right\}}_{\textbf{Term III}}.
	\end{align*}
	
	$\bullet$ \textbf{Term I:} By Markov's inequality (and H\"older's inequality), we know that
	\begin{align*}
		\textbf{Term I} &\leq \frac{1}{n} \sum_{i=1}^n \left|\hat{\beta}(t,\bm{S}_i) - \bar{\beta}(t,\bm{S}_i) \right| \\
		&= O_P\left(\mathbb{E}\left|\hat{\beta}(t,\bm{S}_1) - \bar{\beta}(t,\bm{S}_1) \right|\right) = O_P\left(\norm{\hat{\beta}(t,\bm{S}) - \bar{\beta}(t,\bm{S})}_{L_2}\right) = O_P\left(\Upsilon_{3,n}\right).
	\end{align*}
	
	$\bullet$ \textbf{Term II:} Analogously, we derive that
	\begin{align*}
		\textbf{Term II} &\leq \frac{1}{n} \sum_{i=1}^n \left|\bar{\beta}(t,\bm{S}_i) - \beta(t,\bm{S}_i) \right| = O_P\left(\norm{\bar{\beta}(t,\bm{S}) - \beta(t,\bm{S})}_{L_2}\right).
	\end{align*}
	
	$\bullet$ \textbf{Term III:} By the central limit theorem and the boundedness of $\beta(t,\bm{s})$ on $\mathcal{T}\times \mathcal{S}$ under Assumption~\ref{assump:reg_diff}, we know that
	$$\textbf{Term III} = \frac{1}{n} \sum_{i=1}^n \left\{\beta(t,\bm{S}_i) - \mathbb{E}\left[\beta(t,\bm{S}_i) \right] \right\} = O_P\left(\frac{1}{\sqrt{n}}\right).$$
	As a side note, under Assumption~\ref{assump:reg_diff}, we know that $\left|\beta(t_1,\bm{s}) - \beta(t_2,\bm{s})\right| \leq A_2 |t_1-t_2|$ for some absolute constant $A_2 >0$. Together with the compactness of $\mathcal{T}$ and Example 19.7 in \cite{VDV1998}, we also deduce that
	$$ \sup_{t\in \mathcal{T}}\left|\frac{1}{n} \sum_{i=1}^n \left\{\beta(t,\bm{S}_i) - \mathbb{E}\left[\beta(t,\bm{S}_i) \right] \right\} \right| = O_P\left(\frac{1}{\sqrt{n}}\right).$$
	
	In summary, we conclude that
	$$\hat{\theta}_{\mathrm{RA}}(t) - \theta(t) = O_P\left(\Upsilon_{3,n} + \norm{\bar{\beta}(t,\bm{S}) - \beta(t,\bm{S})}_{L_2} + \frac{1}{\sqrt{n}}\right).$$
	
	\subsection{Rate of Convergence of $\hat{\theta}_{\mathrm{IPW}}(t)$}
	\label{app:theta_IPW_pos}
	
	Secondly, we derive the rate of convergence for $\hat{\theta}_{\mathrm{IPW}}(t)$ in \eqref{theta_IPW}. Note that
	\begin{align*}
		&\hat{\theta}_{\mathrm{IPW}}(t) - \theta(t) \\
		&= \tilde{\theta}_{\mathrm{IPW}}(t) - \theta(t) + \hat{\theta}_{\mathrm{IPW}}(t) - \tilde{\theta}_{\mathrm{IPW}}(t)\\
		&= \underbrace{\frac{1}{nh}\sum_{i=1}^n \frac{\left(\frac{T_i-t}{h^2}\right)K\left(\frac{T_i-t}{h}\right)}{\kappa_2\cdot p_{T|\bm{S}}(T_i|\bm{S}_i)} \cdot Y_i - \mathbb{E}\left[\beta(t,\bm{S})\right]}_{\textbf{Term IV}} + \underbrace{\frac{1}{nh}\sum_{i=1}^n \frac{\left(\frac{T_i-t}{h^2}\right)K\left(\frac{T_i-t}{h}\right)}{\kappa_2\cdot \hat{p}_{T|\bm{S}}(T_i|\bm{S}_i)} \cdot Y_i - \frac{1}{nh}\sum_{i=1}^n \frac{\left(\frac{T_i-t}{h^2}\right)K\left(\frac{T_i-t}{h}\right)}{\kappa_2\cdot p_{T|\bm{S}}(T_i|\bm{S}_i)} \cdot Y_i}_{\textbf{Term V}},
	\end{align*}
	where $\tilde{\theta}_{\mathrm{IPW}}(t) = \frac{1}{nh}\sum_{i=1}^n \frac{\left(\frac{T_i-t}{h^2}\right)K\left(\frac{T_i-t}{h}\right)}{\kappa_2\cdot p_{T|\bm{S}}(T_i|\bm{S}_i)} \cdot Y_i$ is the oracle IPW estimator of $\theta(t)$ defined in \eqref{IPW_oracle} and $\beta(t,\bm{s})=\frac{\partial}{\partial t}\mu(t,\bm{s})$. We shall handle \textbf{Term IV} and \textbf{Term V} in \autoref{app:theta_pos_term_IV} and \autoref{app:theta_pos_term_V}, respectively.
	
	\subsubsection{Rate of Convergence of \textbf{Term IV} for $\hat{\theta}_{\mathrm{IPW}}(t)$}
	\label{app:theta_pos_term_IV}
	
	Under Assumptions~\ref{assump:reg_diff} and \ref{assump:reg_kernel}, we calculate the bias of $\tilde{\theta}_{\mathrm{IPW}}(t)$ as:
	\begin{align*}
		&\mathbb{E}\left[\tilde{\theta}_{\mathrm{IPW}}(t)\right] - \theta(t) \\
		&= \mathbb{E}\left[\frac{1}{h} \frac{\left(\frac{T_i-t}{h^2}\right)K\left(\frac{T_i-t}{h}\right)}{\kappa_2\cdot p_{T|\bm{S}}(T_i|\bm{S}_i)} \cdot Y_i\right] - \mathbb{E}\left[\beta(t,\bm{S})\right]\\
		&= \frac{1}{h} \int_{\mathcal{S}\times \mathcal{T}} \frac{\left(\frac{t_1-t}{h^2}\right)K\left(\frac{t_1-t}{h}\right)}{\kappa_2\cdot p_{T|\bm{S}}(t_1|\bm{s}_1)} \cdot \mu(t_1,\bm{s}_1) \cdot p(t_1,\bm{s}_1) \, dt_1 d\bm{s}_1 - \mathbb{E}\left[\beta(t,\bm{S})\right] \\
		&\stackrel{\text{(i)}}{=} \frac{1}{h\cdot \kappa_2}\int_{\mathcal{S}} \int_{\mathbb{R}} uK(u)\cdot \mu(t+uh,\bm{s}_1) \cdot p_S(\bm{s}_1) \, du d\bm{s}_1 - \mathbb{E}\left[\beta(t,\bm{S})\right]\\
		& \stackrel{\text{(ii)}}{=} \frac{1}{h\cdot \kappa_2}\int_{\mathcal{S}} \int_{\mathbb{R}} u K(u) \left[\mu(t,\bm{s}_1) + uh\cdot \frac{\partial}{\partial t}\mu(t,\bm{s}_1) + \frac{u^2h^2}{2} \frac{\partial^2}{\partial t^2}\mu(t,\bm{s}_1) + \frac{u^3h^3}{6} \frac{\partial^3}{\partial t^3}\mu(\tilde{t},\bm{s}_1) \right] p_S(\bm{s}_1) \, du d\bm{s}_1 \\
		&\quad - \mathbb{E}\left[\frac{\partial}{\partial t}\mu(t,\bm{S})\right] \\
		&\stackrel{\text{(iii)}}{=} \int_{\mathcal{S}} \frac{\partial}{\partial t}\mu(t,\bm{s}_1) \cdot p_S(\bm{s}_1)\, d\bm{s}_1 - \mathbb{E}\left[\frac{\partial}{\partial t}\mu(t,\bm{S})\right] + \int_{\mathcal{S}} \int_{\mathbb{R}} K(u)\cdot \frac{u^3h^2}{6\kappa_2} \cdot \frac{\partial^3}{\partial t^3}\mu(\tilde{t},\bm{s}_1) \cdot p_S(\bm{s}_1) \, du d\bm{s}_1\\
		&= O(h^2),
	\end{align*}
	where (i) uses a change of variable $u = \frac{t_1-t}{h}$, (ii) applies Taylor's expansion with some $\tilde{t}$ that lies between $t$ and $t+uh$, and (iii) utilizes the properties of the second-order symmetric kernel function $K$. Similarly, we compute the variance of $\tilde{\theta}_{\mathrm{IPW}}(t)$ as:
	\begin{align*}
		&\mathrm{Var}\left[\tilde{\theta}_{\mathrm{IPW}}(t)\right] \\
		&= \frac{1}{nh^2\kappa_2^2}\cdot \mathrm{Var}\left[\frac{\left(\frac{T_i-t}{h^2}\right)K\left(\frac{T_i-t}{h}\right)}{p_{T|\bm{S}}(T_i|\bm{S}_i)}Y_i\right] \\
		&= \frac{1}{nh^2\kappa_2^2}\cdot \mathbb{E}\left[\frac{\left(\frac{T_i-t}{h^2}\right)^2K^2\left(\frac{T_i-t}{h}\right)}{\left[p_{T|\bm{S}}(T_i|\bm{S}_i)\right]^2}\cdot Y_i^2 \right] - \frac{1}{nh^2\kappa_2^2} \left\{\mathbb{E}\left[\frac{\left(\frac{T_i-t}{h^2}\right)K\left(\frac{T_i-t}{h}\right)}{p_{T|\bm{S}}(T_i|\bm{S}_i)}\cdot Y_i\right]\right\}^2\\
		&= \frac{1}{nh^4\kappa_2^2} \int_{\mathcal{S}\times \mathcal{T}} \frac{\left(\frac{t_1-t}{h}\right)^2K^2\left(\frac{t_1-t}{h}\right)}{\left[p_{T|\bm{S}}(t_1|\bm{s}_1)\right]^2} \cdot \left[\mu(t_1,\bm{s}_1)^2 +\sigma^2\right] p(t_1,\bm{s}_1)\, dt_1 d\bm{s}_1 - \frac{\left\{\mathbb{E}\left[\beta(t,\bm{S})\right]\right\}^2}{n} + O\left(\frac{h^2}{n}\right)\\
		&\stackrel{\text{(i)}}{=} \frac{1}{nh^3\kappa_2^2} \int_{\mathcal{S}} \int_{\mathbb{R}} \frac{u^2K^2(u)}{p_{T|\bm{S}}(t+uh|\bm{s}_1)} \cdot \left[\mu(t+uh,\bm{s}_1)^2 +\sigma^2\right] p_S(\bm{s}_1)\, du d\bm{s}_1 + O\left(\frac{1}{n}\right) \\
		&\stackrel{\text{(ii)}}{=} \frac{1}{nh^3\kappa_2^2} \int_{\mathcal{S}} \int_{\mathbb{R}} \frac{u^2K^2(u)}{p_{T|\bm{S}}(t|\bm{s}_1) + uh \cdot \frac{\partial}{\partial t}p_{T|\bm{S}}(t'|\bm{s}_1)} \left[\mu(t,\bm{s}_1)^2 + 2uh\cdot \mu(t'',\bm{s}_1)\cdot \frac{\partial}{\partial t} \mu(t'',\bm{s}_1) + \sigma^2\right] p_S(\bm{s}_1)\, du d\bm{s}_1 \\
		&\quad + O\left(\frac{1}{n}\right)\\
		&\stackrel{\text{(iii)}}{=} \frac{1}{nh^3\kappa_2^2} \int_{\mathcal{S}} \int_{\mathbb{R}} \frac{u^2 K^2(u)}{p_{T|\bm{S}}(t|\bm{s}_1)} \cdot \left[\mu(t,\bm{s}_1)^2 + \sigma^2\right] p_S(\bm{s}_1)\, du d\bm{s}_1 + O\left(\frac{1}{n}\right)\\
		&\stackrel{\text{(iv)}}{=} O\left(\frac{1}{nh^3}\right),
	\end{align*}
	where (i) uses a change of variable $u = \frac{t_1-t}{h}$ and the boundedness of $\beta(t,\bm{s})$, (ii) applies the Taylor's expansion under Assumptions~\ref{assump:reg_diff} and \ref{assump:den_diff} with $t',t''$ being two points between $t$ and $t+uh$, (iii) absorbs the higher order terms to $O\left(\frac{1}{n}\right)$, and (iv) utilizes the properties of $K$ under Assumption~\ref{assump:reg_kernel} and the positivity condition (Assumption~\ref{assump:positivity}). Now, by Chebyshev's inequality and our above calculations, we obtain that
	\begin{align*}
		\tilde{\theta}_{\mathrm{IPW}}(t) - \theta(t) &= \tilde{\theta}_{\mathrm{IPW}}(t) - \mathbb{E}\left[\tilde{\theta}_{\mathrm{IPW}}(t)\right] + \mathbb{E}\left[\tilde{\theta}_{\mathrm{IPW}}(t)\right] - \theta(t)\\
		&= O_P\left(\sqrt{\mathrm{Var}\left[\tilde{\theta}_{\mathrm{IPW}}(t)\right]}\right) + O(h^2)\\
		&= O_P\left(\sqrt{\frac{1}{nh^3}}\right) + O(h^2)
	\end{align*}
	as $h\to 0$ and $nh^3\to \infty$. As a side note, under the VC-type condition on $K$ (Assumption~\ref{assump:reg_kernel}(c)), we can apply Theorem 2 in \cite{einmahl2005uniform} to strengthen the above pointwise rate of convergence to the uniform one as:
	$$\sup_{t\in \mathcal{T}}\left|\tilde{\theta}_{\mathrm{IPW}}(t) - \theta(t) \right| = O_P\left(\sqrt{\frac{|\log h|}{nh^3}}\right) + O(h^2).$$
	
	\subsubsection{Rate of Convergence of \textbf{Term V} for $\hat{\theta}_{\mathrm{IPW}}(t)$}
	\label{app:theta_pos_term_V} 
	
	By direct calculations, we have that
	\begin{align*}
		&\textbf{Term V} \\
		&= \frac{1}{nh}\sum_{i=1}^n \frac{\left(\frac{T_i-t}{h^2}\right)K\left(\frac{T_i-t}{h}\right)}{\kappa_2\cdot p_{T|\bm{S}}(T_i|\bm{S}_i)} \cdot Y_i \left[\frac{p_{T|\bm{S}}(T_i|\bm{S}_i) - \hat{p}_{T|\bm{S}}(T_i|\bm{S}_i)}{\hat{p}_{T|\bm{S}}(T_i|\bm{S}_i)}\right]\\
		&= \frac{1}{nh}\sum_{i=1}^n \frac{\left(\frac{T_i-t}{h^2}\right)K\left(\frac{T_i-t}{h}\right)}{\kappa_2\cdot p_{T|\bm{S}}(T_i|\bm{S}_i)} \cdot Y_i \left\{\frac{p_{T|\bm{S}}(T_i|\bm{S}_i) -\bar{p}_{T|\bm{S}}(T_i|\bm{S}_i) + \bar{p}_{T|\bm{S}}(T_i|\bm{S}_i) - \hat{p}_{T|\bm{S}}(T_i|\bm{S}_i)}{p_{T|\bm{S}}(T_i|\bm{S}_i) - \left[p_{T|\bm{S}}(T_i|\bm{S}_i) - \bar{p}_{T|\bm{S}}(T_i|\bm{S}_i)\right] - \left[\bar{p}_{T|\bm{S}}(T_i|\bm{S}_i) - \hat{p}_{T|\bm{S}}(T_i|\bm{S}_i)\right]}\right\}\\
		&\stackrel{\text{(i)}}{=}\left\{\mathbb{E}\left[\beta(t,\bm{S}) \right] + O(h^2) + O_P\left(\sqrt{\frac{1}{nh^3}}\right)\right\}\\
		&\quad \times \frac{O_P\left(\sup\limits_{|u-t|\leq h}\norm{\hat{p}_{T|\bm{S}}(u|\bm{S}) - \bar{p}_{T|\bm{S}}(u|\bm{S})}_{L_2} + \sup\limits_{|u-t|\leq h}\norm{\bar{p}_{T|\bm{S}}(u|\bm{S}) - p_{T|\bm{S}}(u|\bm{S})}_{L_2} \right)}{\inf_{(t,\bm{s})\in \mathcal{T}\times \mathcal{S}} p_{T|\bm{S}}(t|\bm{s}) - O_P\left(\sup\limits_{|u-t|\leq h}\norm{\hat{p}_{T|\bm{S}}(u|\bm{S}) - \bar{p}_{T|\bm{S}}(u|\bm{S})}_{L_2} + \sup\limits_{|u-t|\leq h}\norm{\bar{p}_{T|\bm{S}}(u|\bm{S}) - p_{T|\bm{S}}(u|\bm{S})}_{L_2} \right)}\\
		&= O_P\left(\Upsilon_{2,n} + \sup\limits_{|u-t|\leq h}\norm{\bar{p}_{T|\bm{S}}(u|\bm{S}) - p_{T|\bm{S}}(u|\bm{S})}_{L_2}\right) \left[O(1 + h^2) + O_P\left(\sqrt{\frac{1}{nh^3}}\right)\right]\\
		&= O_P\left(\Upsilon_{2,n} + \sup\limits_{|u-t|\leq h}\norm{\bar{p}_{T|\bm{S}}(u|\bm{S}) - p_{T|\bm{S}}(u|\bm{S})}_{L_2}\right)
	\end{align*}
	as $h\to 0$ and $nh^3\to \infty$, where (i) utilizes our results for \textbf{Term IV} and Markov's inequality.\\
	
	Combining our results for \textbf{Term IV} and \textbf{Term V}, we conclude that 
	$$\hat{\theta}_{\mathrm{IPW}}(t) - \theta(t) = O(h^2) + O_P\left(\sqrt{\frac{1}{nh^3}} + \Upsilon_{2,n} + \sup\limits_{|u-t|\leq h}\norm{\bar{p}_{T|\bm{S}}(u|\bm{S}) - p_{T|\bm{S}}(u|\bm{S})}_{L_2}\right).$$
	
	\subsection{Asymptotic Properties of $\hat{\theta}_{\mathrm{DR}}(t)$}
	\label{app:theta_DR_pos}
	
	Finally, using the similar arguments to \autoref{app:m_DR_pos}, we establish the asymptotic properties of $\hat{\theta}_{\mathrm{DR}}(t)$ in \eqref{theta_DR}. Under Assumption~\ref{assump:id_cond}, we have that
	\begin{align*}
		&\hat{\theta}_{\mathrm{DR}}(t) - \theta(t) \\
		&= \frac{1}{nh}\sum_{i=1}^n \left\{\frac{\left(\frac{T_i-t}{h}\right)K\left(\frac{T_i-t}{h}\right)}{h\cdot \kappa_2\cdot \hat{p}_{T|\bm{S}}(T_i|\bm{S}_i)} \left[Y_i - \hat{\mu}(t,\bm{S}_i) - (T_i-t)\cdot \hat{\beta}(t,\bm{S}_i)\right]+ h\cdot \hat{\beta}(t,\bm{S}_i) \right\} - \mathbb{E}\left[\frac{\partial}{\partial t}\mu(t,\bm{S})\right] \\
		&= \mathbb{P}_n \Phi_{h,t}\left(Y,T,\bm{S}; \bar{\mu},\bar{\beta}, \bar{p}_{T|\bm{S}}\right) - \mathbb{E}\left[\frac{\partial}{\partial t}\mu(t,\bm{S})\right] + \mathbb{P}_n \left[\Phi_{h,t}\left(Y,T,\bm{S}; \hat{\mu}, \hat{\beta}, \hat{p}_{T|\bm{S}}\right) - \Phi_{h,t}\left(Y,T,\bm{S}; \bar{\mu},\bar{\beta}, \bar{p}_{T|\bm{S}}\right) \right] \\
		&= \underbrace{\mathbb{P}_n \Phi_{h,t}\left(Y,T,\bm{S}; \bar{\mu},\bar{\beta}, \bar{p}_{T|\bm{S}}\right) - \mathbb{E}\left[\beta(t,\bm{S})\right]}_{\textbf{Term VI}} + \underbrace{\left(\mathbb{P}_n-\P\right)\left[\hat{\beta}(t,\bm{S}) - \bar{\beta}(t,\bm{S}) \right]}_{\textbf{Term VII}} \\
		&\quad + \underbrace{\left(\mathbb{P}_n-\P\right)\left\{\frac{\left(\frac{T-t}{h}\right) K\left(\frac{T-t}{h}\right)}{h^2\kappa_2}\left[\frac{1}{\hat{p}_{T|\bm{S}}(T|\bm{S})} - \frac{1}{\bar{p}_{T|\bm{S}}(T|\bm{S})} \right] \left[Y - \bar{\mu}(t,\bm{S}) - (T-t)\cdot \bar{\beta}(t,\bm{S})\right]\right\}}_{\textbf{Term VIII}}\\
		&\quad + \underbrace{\left(\mathbb{P}_n-\P\right)\left\{\frac{\left(\frac{T-t}{h}\right)K\left(\frac{T-t}{h}\right)}{h^2\kappa_2\cdot \bar{p}_{T|\bm{S}}(T|\bm{S})} \left[\bar{\mu}(t,\bm{S}) - \hat{\mu}(t,\bm{S}) + (T-t)\left[\bar{\beta}(t,\bm{S}) - \hat{\beta}(t,\bm{S})\right] \right]\right\}}_{\textbf{Term IX}} \\
		&\quad + \underbrace{\mathbb{P}_n\left\{\frac{\left(\frac{T-t}{h}\right) K\left(\frac{T-t}{h}\right)}{h^2\kappa_2}\left[\frac{1}{\hat{p}_{T|\bm{S}}(T|\bm{S})} - \frac{1}{\bar{p}_{T|\bm{S}}(T|\bm{S})} \right] \left[\bar{\mu}(t,\bm{S}) - \hat{\mu}(t,\bm{S}) + (T-t)\left[\bar{\beta}(t,\bm{S}) - \hat{\beta}(t,\bm{S})\right]\right]\right\}}_{\textbf{Term X}} \\
		&\quad + \underbrace{\P\left\{\left[1- \frac{\left(\frac{T-t}{h}\right)^2 K\left(\frac{T-t}{h}\right)}{h \cdot \kappa_2\cdot \bar{p}_{T|\bm{S}}(T|\bm{S})}\right] \left[\hat{\beta}(t,\bm{S}) - \bar{\beta}(t,\bm{S})\right]\right\}}_{\textbf{Term XIa}} + \underbrace{\P\left\{\frac{\left(\frac{T-t}{h}\right) K\left(\frac{T-t}{h}\right)}{h^2 \kappa_2\cdot \bar{p}_{T|\bm{S}}(T|\bm{S})}\left[\bar{\mu}(t,\bm{S}) - \hat{\mu}(t,\bm{S})\right]\right\}}_{\textbf{Term XIb}}\\
		&\quad + \underbrace{\P\left\{\frac{\left(\frac{T-t}{h}\right) K\left(\frac{T-t}{h}\right)}{h^2\kappa_2}\left[\frac{1}{\hat{p}_{T|\bm{S}}(T|\bm{S})} - \frac{1}{\bar{p}_{T|\bm{S}}(T|\bm{S})}\right]\left[Y - \bar{\mu}(t,\bm{S}) - (T-t)\cdot \bar{\beta}(t,\bm{S})\right]\right\}}_{\textbf{Term XIc}},
	\end{align*}
	where $\Phi_{h,t}\left(Y,T,\bm{S}; \mu, \beta, p_{T|\bm{S}}\right)= \frac{\left(\frac{T-t}{h}\right) K\left(\frac{T-t}{h}\right)}{h^2\kappa_2\cdot p_{T|\bm{S}}(T|\bm{S})}\cdot \left[Y - \mu(t,\bm{S}) - (T-t)\cdot  \beta(t,\bm{S})\right]+ \beta(t,\bm{S})$. It remains to show that the dominating \textbf{Term VI} is of order $O(h^2)+O_P\left(\sqrt{\frac{1}{nh^3}}\right)$ in \autoref{app:theta_pos_Term_VI} and the remainder terms are of order $o_P\left(\sqrt{\frac{1}{nh^3}}\right)$ for any fixed $t\in \mathcal{T}$ in \autoref{app:theta_pos_Term_VII}, \autoref{app:theta_pos_Term_VIII}, and \autoref{app:theta_pos_Term_X}, and \autoref{app:theta_pos_Term_XI}. We shall also derive the asymptotic normality of $\hat{\theta}_{\mathrm{DR}}(t)$ in \autoref{app:theta_pos_asym_norm}.
	
	\subsubsection{Analysis of \textbf{Term VI} for $\hat{\theta}_{\mathrm{DR}}(t)$}
	\label{app:theta_pos_Term_VI}
	
	We analyze the variance and bias of \textbf{Term VI} separately as follows. By direct calculations, we have that
	\begin{align*}
		&\mathrm{Var}\left[\textbf{Term VI}\right] \\
		&= \mathrm{Var}\left[\mathbb{P}_n \Phi_{h,t}\left(Y,T,\bm{S}; \bar{\mu},\bar{\beta}, \bar{p}_{T|\bm{S}}\right) \right]\\
		&= \frac{1}{nh^2}\cdot \mathrm{Var}\left\{\frac{\left(\frac{T-t}{h}\right) K\left(\frac{T-t}{h}\right)}{h\cdot \kappa_2\cdot \bar{p}_{T|\bm{S}}(T|\bm{S})} \left[Y - \bar{\mu}(t,\bm{S}) - (T-t)\cdot \bar{\beta}(t,\bm{S})\right]+ h\cdot \bar{\beta}(t,\bm{S}) \right\}\\
		&\stackrel{\text{(i)}}{\lesssim} \frac{1}{nh^2} \cdot \mathrm{Var}\left\{\frac{\left(\frac{T-t}{h}\right) K\left(\frac{T-t}{h}\right)}{h\cdot \kappa_2\cdot \bar{p}_{T|\bm{S}}(T|\bm{S})} \left[Y - \bar{\mu}(t,\bm{S}) - (T-t)\cdot \bar{\beta}(t,\bm{S})\right] \right\} + \frac{1}{n} \cdot \mathrm{Var}\left[\bar{\beta}(t,\bm{S})\right]\\
		&\stackrel{\text{(ii)}}{=} \frac{1}{nh^2} \cdot \mathbb{E}\left\{\frac{\left(\frac{T-t}{h}\right)^2 K^2\left(\frac{T-t}{h}\right)}{h^2 \kappa_2^2\cdot \bar{p}_{T|\bm{S}}^2(T|\bm{S})} \left[Y - \bar{\mu}(t,\bm{S}) - (T-t)\cdot \bar{\beta}(t,\bm{S})\right]^2\right\} + O\left(\frac{1}{n}\right)\\
		&= \frac{1}{nh^2}\int_{\mathcal{S}} \int_{\mathcal{T}} \frac{\left(\frac{t_1-t}{h}\right)^2 K^2\left(\frac{t_1-t}{h}\right)}{h^2 \kappa_2^2\cdot \bar{p}_{T|\bm{S}}^2(t_1|\bm{s}_1)} \left\{\sigma^2 + \left[\mu(t_1,\bm{s}_1) - \bar{\mu}(t,\bm{s}_1) - (t_1-t)\cdot \bar{\beta}(t,\bm{s}_1)\right]^2\right\} p(t_1,\bm{s}_1)\, dt_1 d\bm{s}_1 \\
		&\quad + O\left(\frac{1}{n}\right)\\
		&\stackrel{\text{(iii)}}{=} \frac{1}{nh^3}\int_{\mathcal{S}} \int_{\mathbb{R}} \frac{u^2\cdot K^2\left(u\right)}{\kappa_2^2\cdot \bar{p}_{T|\bm{S}}^2(t+uh|\bm{s}_1)}\cdot \left\{\sigma^2 + \left[\mu(t+uh,\bm{s}_1) - \bar{\mu}(t,\bm{s}_1) - hu\cdot \bar{\beta}(t,\bm{s}_1)\right]^2 \right\} p(t+uh,\bm{s}_1)\, du d\bm{s}_1 \\
		&\quad + O\left(\frac{1}{n}\right)\\
		&= \frac{1}{nh^3}\int_{\mathcal{S}} \int_{\mathbb{R}} \frac{u^2\cdot K^2\left(u\right)}{\kappa_2^2\cdot \bar{p}_{T|\bm{S}}^2(t|\bm{s}_1) + O(h^2)}\cdot \left\{\sigma^2 + \left[\mu(t,\bm{s}_1) - \bar{\mu}(t,\bm{s}_1)\right]^2 + O(h^2) \right\} \left[p(t,\bm{s}_1) + O(h)\right]\, du d\bm{s}_1 \\
		&\quad + O\left(\frac{1}{n}\right)\\
		&\stackrel{\text{(iv)}}{=} O\left(\frac{1}{nh^3}\right)
	\end{align*}
	where (i) uses Cauchy-Schwarz inequality on the covariance, (ii) leverages the boundedness of $\bar{\beta}$ under Assumption~\ref{assump:reg_diff} to derive the term $O\left(\frac{1}{n}\right)$, (iii) applies a change of variable $u=\frac{t_1-t}{h}$, as well as (iv) utilizes the boundedness of $\mu,\bar{\beta}$ under Assumption~\ref{assump:reg_diff} and the positivity condition (Assumption~\ref{assump:positivity}) on $\bar{p}_{T|\bm{S}}$. In the above calculation, we also note from the line (i) that the second part $\bar{\beta}(t,\bm{S})$ of $\Phi_{h,t}\left(Y,T,\bm{S};\bar{\mu}, \bar{\beta}, \bar{p}_{T|\bm{S}}\right)$ is of smaller order than the first term $\frac{\left(\frac{T-t}{h}\right) K\left(\frac{T-t}{h}\right)}{h^2\kappa_2\cdot \bar{p}_{T|\bm{S}}(T|\bm{S})}\cdot \left[Y - \bar{\mu}(t,\bm{S}) - (T-t)\cdot \bar{\beta}(t,\bm{S})\right]$. Thus, we can only keep the first term in the final asymptotically linear form of $\hat{\theta}_{\mathrm{DR}}(t)$. Now, by Chebyshev's inequality, we conclude that
	\begin{align*}
		\left(\mathbb{P}_n -\P\right) \Phi_{h,t}\left(Y,T,\bm{S}; \bar{\mu},\bar{\beta}, \bar{p}_{T|\bm{S}}\right) &= O_P\left(\sqrt{\mathrm{Var}\left[\mathbb{P}_n\Phi_{h,t}\left(Y,T,\bm{S}; \bar{\mu}, \bar{\beta}, \bar{p}_{T|\bm{S}}\right) \right]}\right) \\
		&=O_P\left(\sqrt{\mathrm{Var}\left[\frac{1}{\sqrt{h^3}}\cdot \mathbb{P}_n\phi_{h,t}\left(Y,T,\bm{S}; \bar{\mu},\bar{\beta}, \bar{p}_{T|\bm{S}}\right) \right]}\right) \\
		&= O_P\left(\sqrt{\frac{1}{nh^3}}\right),
	\end{align*}
	where $\phi_{h,t}\left(Y,T,\bm{S};\bar{\mu}, \bar{\beta}, \bar{p}_{T|\bm{S}}\right) = \frac{\left(\frac{T-t}{h}\right) K\left(\frac{T-t}{h}\right)}{\sqrt{h}\cdot \kappa_2\cdot \bar{p}_{T|\bm{S}}(T|\bm{S})}\cdot \left[Y - \bar{\mu}(t,\bm{S}) - (T-t)\cdot \bar{\beta}(t,\bm{S})\right]$. In addition, by direct calculations and Taylor's expansions, we derive that
	\begin{align*}
		&\mathrm{Bias}\left[\textbf{Term VI}\right] \\
		&= \P\left[\Phi_{h,t}\left(Y,T,\bm{S}; \bar{\mu},\bar{\beta}, \bar{p}_{T|\bm{S}}\right)\right] - \mathbb{E}\left[\beta(t,\bm{S})\right]\\
		&= \mathbb{E}\left\{\frac{\left(\frac{T-t}{h}\right) K\left(\frac{T-t}{h}\right)}{h^2\kappa_2\cdot \bar{p}_{T|\bm{S}}(T|\bm{S})}\cdot \left[Y - \bar{\mu}(t,\bm{S}) - (T-t)\cdot \bar{\beta}(t,\bm{S})\right]\right\} + \mathbb{E}\left[\bar{\beta}(t,\bm{S}) - \beta(t,\bm{S})\right] \\
		&= \mathbb{E}\left\{\int_{\mathcal{T}} \frac{\left(\frac{t_1-t}{h}\right) K\left(\frac{t_1-t}{h}\right)}{h^2\kappa_2\cdot \bar{p}_{T|\bm{S}}(t_1|\bm{S})}\cdot \left[\mu(t_1,\bm{S}) - \bar{\mu}(t,\bm{S}) - (t_1-t)\cdot \bar{\beta}(t,\bm{S})\right] p_{T|\bm{S}}(t_1|\bm{S}) \, dt_1\right\} + \mathbb{E}\left[\bar{\beta}(t,\bm{S}) - \beta(t,\bm{S})\right] \\
		&\stackrel{\text{(i)}}{=} \mathbb{E}\left\{\int_{\mathbb{R}} \frac{u\cdot K\left(u\right)}{h\cdot \kappa_2\cdot \bar{p}_{T|\bm{S}}(t+uh|\bm{S})}\cdot \left[\mu(t+uh,\bm{S}) - \bar{\mu}(t,\bm{S}) - hu\cdot \bar{\beta}(t,\bm{S})\right] p_{T|\bm{S}}(t+uh|\bm{S}) \, du\right\} \\
		&\quad + \mathbb{E}\left[\bar{\beta}(t,\bm{S}) - \beta(t,\bm{S})\right]\\
		&\stackrel{\text{(ii)}}{=} \mathbb{E}\Bigg\{\int_{\mathbb{R}} \frac{u\cdot K\left(u\right)\left[p_{T|\bm{S}}(t|\bm{S}) + uh\cdot \frac{\partial}{\partial t} p_{T|\bm{S}}(t|\bm{S}) + \frac{u^2h^2}{2} \cdot \frac{\partial^2}{\partial t^2} p_{T|\bm{S}}(t|\bm{S}) + \frac{u^3h^3}{6}\cdot \frac{\partial^3}{\partial t^3} p_{T|\bm{S}}(t|\bm{S}) + O(h^4)\right]}{h\cdot \kappa_2\left[\bar{p}_{T|\bm{S}}(t|\bm{S}) + uh\cdot \frac{\partial}{\partial t} \bar{p}_{T|\bm{S}}(t|\bm{S}) + \frac{u^2h^2}{2} \cdot \frac{\partial^2}{\partial t^2} \bar{p}_{T|\bm{S}}(t|\bm{S}) + \frac{u^3h^3}{6}\cdot \frac{\partial^3}{\partial t^3} \bar{p}_{T|\bm{S}}(t|\bm{S}) + O(h^4)\right]}\\ 
		&\quad \times \left[\left(\mu(t,\bm{S}) - \bar{\mu}(t,\bm{S})\right) + hu\left(\beta(t,\bm{S}) - \bar{\beta}(t,\bm{S})\right) + \frac{u^2 h^2}{2}\cdot \frac{\partial^2}{\partial t^2} \mu(t,\bm{S}) + \frac{u^3h^3}{6} \cdot \frac{\partial^3}{\partial t^3}\mu(t,\bm{S}) + O(h^4)\right] \, du\Bigg\} \\
		&\quad + \mathbb{E}\left[\bar{\beta}(t,\bm{S}) - \beta(t,\bm{S})\right]\\
		&= \mathbb{E}\Bigg\{\int_{\mathbb{R}} \frac{u\cdot K\left(u\right)}{h\cdot \kappa_2}\left[p_{T|\bm{S}}(t|\bm{S}) + uh\cdot \frac{\partial}{\partial t} p_{T|\bm{S}}(t|\bm{S}) + \frac{u^2h^2}{2} \cdot \frac{\partial^2}{\partial t^2} p_{T|\bm{S}}(t|\bm{S}) + \frac{u^3h^3}{6} \cdot \frac{\partial^3}{\partial t^3} p_{T|\bm{S}}(t|\bm{S}) + O(h^4)\right]\\
		&\quad \quad \times\Bigg[\frac{1}{\bar{p}_{T|\bm{S}}(t|\bm{S})} - \frac{uh\cdot \frac{\partial}{\partial t}\bar{p}_{T|\bm{S}}(t|\bm{S})}{\bar{p}_{T|\bm{S}}^2(t|\bm{S})} - \frac{u^2h^2 \cdot \frac{\partial^2}{\partial t^2}\bar{p}_{T|\bm{S}}(t|\bm{S})}{2\bar{p}_{T|\bm{S}}^2(t|\bm{S})} + \frac{u^2h^2 \left[\frac{\partial}{\partial t}\bar{p}_{T|\bm{S}}(t|\bm{S})\right]^2}{\bar{p}_{T|\bm{S}}^3(t|\bm{S})} - \frac{u^3h^3 \cdot \frac{\partial^3}{\partial t^3} \bar{p}_{T|\bm{S}}(t|\bm{S})}{6\bar{p}_{T|\bm{S}}^2(t|\bm{S})} \\
		&\quad \quad \quad\quad + \frac{u^3h^3\left[\frac{\partial}{\partial t} \bar{p}_{T|\bm{S}}(t|\bm{S})\right]\left[\frac{\partial^2}{\partial t^2} \bar{p}_{T|\bm{S}}(t|\bm{S})\right]}{\bar{p}_{T|\bm{S}}^3(t|\bm{S})} + O(h^4)\Bigg]\\  
		&\quad\quad \times \left[\left(\mu(t,\bm{S}) - \bar{\mu}(t,\bm{S})\right) + hu\left(\beta(t,\bm{S}) - \bar{\beta}(t,\bm{S})\right) + \frac{u^2 h^2}{2}\cdot \frac{\partial^2}{\partial t^2} \mu(t,\bm{S}) + \frac{u^3h^3}{6} \cdot \frac{\partial^3}{\partial t^3}\mu(t,\bm{S}) + O(h^4)\right] \, du\Bigg\} \\
		&\quad + \mathbb{E}\left[\bar{\beta}(t,\bm{S}) - \beta(t,\bm{S})\right]\\
		&= \mathbb{E}_{\bm{S}}\left\{\left[\frac{\frac{\partial}{\partial t} p_{T|\bm{S}}(t|\bm{S})}{\bar{p}_{T|\bm{S}}(t|\bm{S})} - \frac{p_{T|\bm{S}}(t|\bm{S}) \cdot \frac{\partial}{\partial t} \bar{p}_{T|\bm{S}}(t|\bm{S})}{\bar{p}_{T|\bm{S}}^2(t|\bm{S})}\right]\left[\mu(t,\bm{S}) - \bar{\mu}(t,\bm{S})\right]\right\} \\
		&\quad + \mathbb{E}_{\bm{S}}\left\{\left[\beta(t,\bm{S}) - \bar{\beta}(t,\bm{S})\right]\left[\frac{p_{T|\bm{S}}(t|\bm{S})}{\bar{p}_{T|\bm{S}}(t|\bm{S})} - 1\right]\right\} + h^2 B_{\theta}(t) + O(h^3),
	\end{align*}
	where (i) uses a change of variable $u=\frac{t_1-t}{h}$ and (ii) applies Taylor's expansion. Here, the complicated bias term $B_{\theta}(t)$ is given by
	\begin{align*}
		B_{\theta}(t) &= \frac{\kappa_4}{2\kappa_2} \cdot \mathbb{E}_{\bm{S}}\Bigg\{\frac{\left[\mu(t,\bm{S}) - \bar{\mu}(t,\bm{s})\right]}{\bar{p}_{T|\bm{S}}(t|\bm{S})}\\
		&\quad \hspace{10mm} \times \Bigg[\frac{1}{3}\cdot \frac{\partial^3}{\partial t^3} p_{T|\bm{S}}(t|\bm{S}) - \frac{\partial^2}{\partial t^2} p_{T|\bm{S}}(t|\bm{S}) \cdot \frac{\partial}{\partial t} \log \bar{p}_{T|\bm{S}}(t|\bm{S}) + 2\frac{\partial}{\partial t} p_{T|\bm{S}}(t|\bm{S}) \cdot \left[\frac{\partial}{\partial t} \log \bar{p}_{T|\bm{S}}(t|\bm{S})\right]^2\\
		&\quad\hspace{15mm} + \frac{6\frac{\partial}{\partial t} \log \bar{p}_{T|\bm{S}}(t|\bm{S}) \cdot \frac{\partial^2}{\partial t^2} \bar{p}_{T|\bm{S}}(t|\bm{S}) - \frac{\partial^3}{\partial t^3} \bar{p}_{T|\bm{S}}(t|\bm{S}) - 3 \frac{\partial}{\partial t} p_{T|\bm{S}}(t|\bm{S}) \cdot \frac{\partial^2}{\partial t^2} \bar{p}_{T|\bm{S}}(t|\bm{S})}{3 \bar{p}_{T|\bm{S}}(t|\bm{S})}\Bigg]\Bigg\} \\
		&\quad + \frac{\kappa_4}{2\kappa_2} \cdot \mathbb{E}_{\bm{S}}\Bigg\{\frac{\left[\beta(t,\bm{S}) - \bar{\beta}(t,\bm{s})\right]}{\bar{p}_{T|\bm{S}}(t|\bm{S})} \Bigg[\frac{\partial^2}{\partial t^2} p_{T|\bm{S}}(t|\bm{S}) - 2\frac{\partial}{\partial t} p_{T|\bm{S}}(t|\bm{S})\cdot \frac{\partial}{\partial t} \log \bar{p}_{T|\bm{S}}(t|\bm{S}) \\
		&\hspace{15mm} + 2p_{T|\bm{S}}(t|\bm{S}) \cdot \left[\frac{\partial}{\partial t} \log \bar{p}_{T|\bm{S}}(t|\bm{S})\right]^2 - \frac{p_{T|\bm{S}}(t|\bm{S})\cdot \frac{\partial^2}{\partial t^2} \bar{p}_{T|\bm{S}}(t|\bm{S})}{\bar{p}_{T|\bm{S}}(t|\bm{S})}\Bigg]\Bigg\}\\
		&\quad + \frac{\kappa_4}{2\kappa_2} \cdot \mathbb{E}_{\bm{S}}\left[\frac{\frac{\partial}{\partial t} p_{T|\bm{S}}(t|\bm{S}) \cdot \frac{\partial^2}{\partial t^2} \mu(t,\bm{S})}{\bar{p}_{T|\bm{S}}(t|\bm{S})} - \frac{p_{T|\bm{S}}(t|\bm{S}) \cdot \frac{\partial}{\partial t} \bar{p}_{T|\bm{S}}(t|\bm{S}) \cdot \frac{\partial^2}{\partial t^2} \mu(t,\bm{S})}{\bar{p}_{T|\bm{S}}^2(t|\bm{S})} + \frac{p_{T|\bm{S}}(t|\bm{S})\cdot \frac{\partial^3}{\partial t^3} \mu(t,\bm{S})}{3\bar{p}_{T|\bm{S}}(t|\bm{S})}\right].
	\end{align*}
	Under the condition that either $\bar{\mu}=\mu$ and $\bar{\beta}=\beta$ or $\bar{p}_{T|\bm{S}} = p_{T|\bm{S}}$, we have that
	$$\mathbb{E}_{\bm{S}}\left\{\frac{p_{T|\bm{S}}(t|\bm{S})\left[\mu(t,\bm{S})-\bar{\mu}(t,\bm{S})\right]}{\bar{p}_{T|\bm{S}}(t|\bm{S})} \cdot \frac{\partial}{\partial t}\log \left[\frac{p_{T|\bm{S}}(t|\bm{S})}{\bar{p}_{T|\bm{S}}(t|\bm{S})}\right]\right\} + \mathbb{E}_{\bm{S}}\left\{\left[\beta(t,\bm{S}) - \bar{\beta}(t,\bm{S})\right]\left[\frac{p_{T|\bm{S}}(t|\bm{S})}{\bar{p}_{T|\bm{S}}(t|\bm{S})} - 1\right]\right\} = 0$$
	and 
	\begin{align*}
		B_{\theta}(t) &= 
		\begin{cases}
			\frac{\kappa_4}{2\kappa_2} \cdot \mathbb{E}_{\bm{S}}\left[\frac{\frac{\partial}{\partial t} p_{T|\bm{S}}(t|\bm{S}) \cdot \frac{\partial^2}{\partial t^2} \mu(t,\bm{S})}{\bar{p}_{T|\bm{S}}(t|\bm{S})} - \frac{p_{T|\bm{S}}(t|\bm{S}) \cdot \frac{\partial}{\partial t} \bar{p}_{T|\bm{S}}(t|\bm{S}) \cdot \frac{\partial^2}{\partial t^2} \mu(t,\bm{S})}{\bar{p}_{T|\bm{S}}^2(t|\bm{S})} + \frac{p_{T|\bm{S}}(t|\bm{S})\cdot \frac{\partial^3}{\partial t^3} \mu(t,\bm{S})}{3\bar{p}_{T|\bm{S}}(t|\bm{S})}\right] \, \text{ when } \bar{\mu}=\mu \text{ and } \bar{\beta}=\beta,\\
			\frac{\kappa_4}{6\kappa_2} \cdot \mathbb{E}_{\bm{S}}\left[\frac{\partial^3}{\partial t^3} \mu(t,\bm{S})\right] \quad  \text{ when } \bar{p}_{T|\bm{S}} = p_{T|\bm{S}},
		\end{cases}\\
		&= \begin{cases}
			\frac{\kappa_4}{6\kappa_2} \cdot \mathbb{E}_{\bm{S}}\left\{\frac{3\frac{\partial}{\partial t} p_{T|\bm{S}}(t|\bm{S}) \cdot \frac{\partial^2}{\partial t^2} \mu(t,\bm{S}) + p_{T|\bm{S}}(t|\bm{S})\left[ \frac{\partial^3}{\partial t^3} \mu(t,\bm{S}) - 3\frac{\partial}{\partial t} \log\bar{p}_{T|\bm{S}}(t|\bm{S}) \cdot \frac{\partial^2}{\partial t^2} \mu(t,\bm{S}) \right]}{\bar{p}_{T|\bm{S}}(t|\bm{S})} \right\} \, \text{ when } \bar{\mu}=\mu \text{ and } \bar{\beta}=\beta,\\
			\frac{\kappa_4}{6\kappa_2} \cdot \mathbb{E}_{\bm{S}}\left[\frac{\partial^3}{\partial t^3} \mu(t,\bm{S})\right] \quad\quad \text{ when } \bar{p}_{T|\bm{S}} = p_{T|\bm{S}}.
		\end{cases}
	\end{align*}
	As a result, as $h\to 0$ and $nh^3\to \infty$, we know that
	\begin{align*}
		\textbf{Term VI} &= \mathbb{P}_n \Phi_{h,t}\left(Y,T,\bm{S}; \bar{\mu},\bar{\beta}, \bar{p}_{T|\bm{S}}\right) - \mathbb{E}\left[\beta(t,\bm{S})\right] \\
		&= h^2 B_{\theta}(t) + o(h^2) + O_P\left(\sqrt{\frac{1}{nh^3}}\right)\\
		&= O(h^2) + O_P\left(\sqrt{\frac{1}{nh^3}}\right).
	\end{align*}
	As a side note, under the VC-type condition on the kernel function $K$ \citep{einmahl2005uniform} (Assumption~\ref{assump:reg_kernel}(c)), we can strengthen the above pointwise rate of convergence to the following uniform one as:
	$$\sup_{t\in \mathcal{T}} \left|\textbf{Term VI}\right| = O(h^2) + O_P\left(\sqrt{\frac{|\log h|}{nh^3}}\right);$$
	see Theorem 4 in \cite{einmahl2005uniform} for details.
	
	\subsubsection{Analysis of \textbf{Term VII} for $\hat{\theta}_{\mathrm{DR}}(t)$}
	\label{app:theta_pos_Term_VII}
	
	By Markov's inequality, we know that
	\begin{align*}
		\sqrt{nh^3}\cdot \textbf{Term VII} &= \sqrt{h^3} \cdot \mathbb{G}_n\left[\hat{\beta}(t,\bm{S}) - \bar{\beta}(t,\bm{S}) \right]\\
		&= O_P\left(\sqrt{h^3}\cdot \Upsilon_{3,n}\right) = o_P(1)
	\end{align*}
	because $\mathbb{E}\left\{h^3\cdot \left[\hat{\beta}(t,\bm{S}) - \bar{\beta}(t,\bm{S})\right]^2\right\} = h^3 \norm{\hat{\beta}(t,\bm{S}) - \bar{\beta}(t,\bm{S})}_{L_2}^2 =O_P\left(h^3\cdot \Upsilon_{3,n}^2\right)$ and $h\cdot \Upsilon_{3,n}\to 0$ as $n\to\infty$. As a side note, under Assumption~\ref{assump:reg_diff} on $\bar{\beta}$ and $\hat{\beta}$, we know that the function $\bm{s} \mapsto \hat{\beta}(t,\bm{s}) - \bar{\beta}(t,\bm{s})$ is Lipschitz continuous with respect to $t\in \mathcal{T}$. Together with the compactness of $\mathcal{T}$ and Example 19.7 in \cite{VDV1998}, we can also deduce that
	$$\sup_{t\in \mathcal{T}}\left|\sqrt{h^3} \cdot \mathbb{G}_n\left[\hat{\beta}(t,\bm{S}) - \bar{\beta}(t,\bm{S}) \right] \right| = O_P\left(\sqrt{h^3}\cdot \sup_{t\in \mathcal{T}}\norm{\hat{\beta}(t,\bm{S}) - \bar{\beta}(t,\bm{S})}_{L_2}\right),$$
	which will be $o_P(1)$ as well if $\sup_{t\in \mathcal{T}}\norm{\hat{\beta}(t,\bm{S}) - \bar{\beta}(t,\bm{S})}_{L_2}=o_P\left(\frac{1}{h}\right)$.
	
	\subsubsection{Analyses of \textbf{Term VIII} and \textbf{Term IX} for $\hat{\theta}_{\mathrm{DR}}(t)$}
	\label{app:theta_pos_Term_VIII}
	
	The argument for showing \textbf{Term VIII} and \textbf{Term IX} to be $o_P\left(\sqrt{\frac{1}{nh^3}}\right)$ will be similar to the one for \textbf{Term VII} above. By Markov's inequality, we know that
	\begin{align*}
		\sqrt{nh^3}\cdot \textbf{Term VIII} &=  \mathbb{G}_n\left\{\frac{\left(\frac{T-t}{h}\right) K\left(\frac{T-t}{h}\right)}{\sqrt{h}\cdot \kappa_2}\left[\frac{1}{\hat{p}_{T|\bm{S}}(T|\bm{S})} - \frac{1}{\bar{p}_{T|\bm{S}}(T|\bm{S})}\right]\left[Y -\bar{\mu}(t,\bm{S}) - (T-t)\cdot \bar{\beta}(t,\bm{S})\right] \right\}\\
		&= O_P\left(\Upsilon_{2,n}\right) = o_P(1)
	\end{align*}
	because 
	\begin{align*}
		&\mathbb{E}\left\{\frac{\left(\frac{T-t}{h}\right)^2K^2\left(\frac{T-t}{h}\right)}{h\cdot \kappa_2^2}\cdot \frac{\left[\hat{p}_{T|\bm{S}}(T|\bm{S}) - \bar{p}_{T|\bm{S}}(T|\bm{S})\right]^2}{\hat{p}_{T|\bm{S}}^2(T|\bm{S})\cdot \bar{p}_{T|\bm{S}}^2(T|\bm{S})}\cdot \left[Y -\bar{\mu}(t,\bm{S}) - (T-t)\cdot \bar{\beta}(t,\bm{S})\right]^2 \right\} \\
		&= \mathbb{E}\left\{\frac{\left(\frac{T-t}{h}\right)^2K^2\left(\frac{T-t}{h}\right)}{h\cdot \kappa_2^2}\cdot \frac{\left[\hat{p}_{T|\bm{S}}(T|\bm{S}) - \bar{p}_{T|\bm{S}}(T|\bm{S})\right]^2}{\hat{p}_{T|\bm{S}}^2(T|\bm{S})\cdot \bar{p}_{T|\bm{S}}^2(T|\bm{S})}\cdot \left[\left(\mu(T,\bm{S}) - \bar{\mu}(t,\bm{S}) - (T-t)\cdot \bar{\beta}(t,\bm{S})\right)^2 + \sigma^2\right] \right\}\\
		&\stackrel{\text{(i)}}{=} \mathbb{E}\Bigg\{\int_{\mathbb{R}} \frac{u^2 K^2(u)\left[\hat{p}_{T|\bm{S}}(t+uh|\bm{S}) - \bar{p}_{T|\bm{S}}(t+uh|\bm{S})\right]^2 p_{T|\bm{S}}(t+uh|\bm{S})}{\kappa_2^2 \cdot \hat{p}_{T|\bm{S}}^2(T|\bm{S})\cdot \bar{p}_{T|\bm{S}}^2(T|\bm{S})} \\
		&\quad\quad \times \left[\left(\mu(t+uh,\bm{S}) -\bar{\mu}(t,\bm{S}) -hu\cdot \bar{\beta}(t,\bm{S}) \right)^2 + \sigma^2\right]\Bigg\}\\
		&\stackrel{\text{(ii)}}{\lesssim} \sup_{|u-t|\leq h}\norm{\hat{p}_{T|\bm{S}}(u|\bm{S}) - \bar{p}_{T|\bm{S}}(u|\bm{S})}_{L_2}^2\\
		&\stackrel{\text{(iii)}}{=}O_P\left(\Upsilon_{2,n}^2\right) = o_P(1),
	\end{align*}
	where (i) uses the change of variable $u=\frac{T-t}{h}$ in the integration, (ii) leverages the boundedness of $\mu,\bar{\mu},\bar{\beta}$ under Assumption~\ref{assump:reg_diff} and the positivity condition (Assumption~\ref{assump:positivity}) on $\bar{p}_{T|\bm{S}}$, as well as (iii) applies $\sup\limits_{|u-t|\leq h}\norm{\hat{p}_{T|\bm{S}}(u|\bm{S}) - \bar{p}_{T|\bm{S}}(u|\bm{S})}_{L_2}=O_P\left(\Upsilon_{2,n}\right)$ with $\Upsilon_{2,n}\to 0$ as $n\to \infty$. As a side note again, under the VC-type condition on the kernel function $K$ \citep{einmahl2005uniform} and $\sup_{t\in \mathcal{T}}\norm{\hat{p}_{T|\bm{S}}(t|\bm{S}) - \bar{p}_{T|\bm{S}}(t|\bm{S})}_{L_2}=o_P(1)$, we can strengthen the above pointwise rate of convergence to the following uniform result as:
	$$\sup_{t\in \mathcal{T}}\left|\mathbb{G}_n\left\{\frac{\left(\frac{T-t}{h}\right) K\left(\frac{T-t}{h}\right)}{\sqrt{h}\cdot \kappa_2}\left[\frac{1}{\hat{p}_{T|\bm{S}}(T|\bm{S})} - \frac{1}{\bar{p}_{T|\bm{S}}(T|\bm{S})}\right]\left[Y -\bar{\mu}(t,\bm{S}) - (T-t)\cdot \bar{\beta}(t,\bm{S})\right] \right\}\right|=o_P(1).$$
	Similarly, by Markov's inequality, we have that
	\begin{align*}
		\sqrt{nh^3} \cdot \textbf{Term IX} &= \mathbb{G}_n\left\{\frac{\left(\frac{T-t}{h}\right) K\left(\frac{T-t}{h}\right)}{\sqrt{h}\cdot \kappa_2\cdot \bar{p}_{T|\bm{S}}(T|\bm{S})}\left[\bar{\mu}(t,\bm{S}) - \hat{\mu}(t,\bm{S}) + (T-t)\left(\bar{\beta}(t,\bm{S}) - \hat{\beta}(t,\bm{S})\right)\right]\right\} \\
		&= O_P\left(\max\left\{\Upsilon_{1,n}, h\cdot \Upsilon_{3,n}\right\}\right) = o_P(1)
	\end{align*}
	because
	\begin{align*}
		&\mathbb{E}\left\{\frac{\left(\frac{T-t}{h}\right)^2K^2\left(\frac{T-t}{h}\right)}{h\cdot \kappa_2^2\cdot  \bar{p}_{T|\bm{S}}^2(T|\bm{S})}\left[\bar{\mu}(t,\bm{S}) - \hat{\mu}(t,\bm{S}) + (T-t)\left(\bar{\beta}(t,\bm{S}) - \hat{\beta}(t,\bm{S})\right)\right]^2\right\}\\ &=\mathbb{E}\left\{\int_{\mathcal{T}} \frac{\left(\frac{t_1-t}{h}\right)^2 K^2\left(\frac{t_1-t}{h}\right) \cdot p_{T|\bm{S}}(t_1|\bm{S})}{h\cdot \kappa_2^2\cdot  \bar{p}_{T|\bm{S}}^2(t_1|\bm{S})} \left[\bar{\mu}(t,\bm{S}) - \hat{\mu}(t,\bm{S}) + (t_1-t)\left(\bar{\beta}(t,\bm{S}) - \hat{\beta}(t,\bm{S})\right)\right]^2 dt_1 \right\}\\
		&\stackrel{\text{(i)}}{=} \mathbb{E}\left\{\int_{\mathbb{R}} \frac{u^2 K^2(u) \cdot p_{T|\bm{S}}(t+uh|\bm{S})}{\kappa_2^2 \cdot \bar{p}_{T|\bm{S}}^2(t+uh|\bm{S})} \left[\bar{\mu}(t,\bm{S}) - \hat{\mu}(t,\bm{S}) + hu\left(\bar{\beta}(t,\bm{S}) - \hat{\beta}(t,\bm{S})\right)\right]^2 du \right\} \\
		&\stackrel{\text{(ii)}}{\lesssim} \norm{\hat{\mu}(t,\bm{S}) - \bar{\mu}(t,\bm{S})}_{L_2}^2 + h^2\norm{\hat{\beta}(t,\bm{S}) - \bar{\beta}(t,\bm{S})}_{L_2}^2\\
		&= O_P\left(\Upsilon_{1,n}^2 + h^2\Upsilon_{3,n}^2\right) = o_P(1),
	\end{align*}
	where (i) uses the change of variable $u=\frac{t_1-t}{h}$ and (ii) leverages the boundedness of $p_{T|\bm{S}}$ under Assumption~\ref{assump:reg_diff}, the positivity condition (Assumption~\ref{assump:positivity}) on $\bar{p}_{T|\bm{S}}$, the boundedness condition on $K$ under Assumption~\ref{assump:reg_kernel}, as well as $\norm{\hat{\mu}(t,\bm{S}) - \bar{\mu}(t,\bm{S})}_{L_2}=O_P\left(\Upsilon_{1,n}\right)$ and $h\norm{\hat{\beta}(t,\bm{S}) - \bar{\beta}(t,\bm{S})}_{L_2}=O_P\left(h\cdot\Upsilon_{3,n}\right)$ with $\Upsilon_{1,n}, h\cdot\Upsilon_{3,n}\to 0$ as $n\to \infty$. In addition, if $\sup_{t\in \mathcal{T}}\norm{\hat{\mu}(t,\bm{S}) - \bar{\mu}(t,\bm{S})}_{L_2}=o_P(1)$ and $\sup_{t\in \mathcal{T}}\norm{\hat{\beta}(t,\bm{S}) - \bar{\beta}(t,\bm{S})}_{L_2}=o_P\left(\frac{1}{h}\right)$, then the above pointwise rate of convergence can be strengthened to the uniform one as:
	$$\sup_{t\in \mathcal{T}}\left|\mathbb{G}_n\left\{\frac{\left(\frac{T-t}{h}\right) K\left(\frac{T-t}{h}\right)}{\sqrt{h}\cdot \kappa_2\cdot \bar{p}_{T|\bm{S}}(T|\bm{S})}\left[\bar{\mu}(t,\bm{S}) - \hat{\mu}(t,\bm{S}) + (T-t)\left(\bar{\beta}(t,\bm{S}) - \hat{\beta}(t,\bm{S})\right)\right]\right\}\right|=o_P(1).$$
	
	\subsubsection{Analysis of \textbf{Term X} for $\hat{\theta}_{\mathrm{DR}}(t)$}
	\label{app:theta_pos_Term_X}
	
	We first calculate that
	\begin{align*}
		&\mathbb{E}\left|\sqrt{nh^3}\cdot\textbf{Term X}\right|\\
		&\mathbb{E}\left|\sqrt{\frac{n}{h}} \cdot \frac{\left(\frac{T-t}{h}\right) K\left(\frac{T-t}{h}\right)}{\kappa_2}\cdot \frac{\left[\bar{p}_{T|\bm{S}}(T|\bm{S}) - \hat{p}_{T|\bm{S}}(T|\bm{S})\right]}{\hat{p}_{T|\bm{S}}(T|\bm{S})\cdot \bar{p}_{T|\bm{S}}(T|\bm{S})}\cdot \left[\bar{\mu}(t,\bm{S}) - \hat{\mu}(t,\bm{S}) + (T-t)\left(\bar{\beta}(t,\bm{S}) - \hat{\beta}(t,\bm{S})\right)\right] \right| \\
		&\stackrel{\text{(i)}}{\leq} \sqrt{nh} \cdot \sqrt{\mathbb{E}\left\{\frac{K\left(\frac{T-t}{h}\right) \left[\bar{p}_{T|\bm{S}}(T|\bm{S}) - \hat{p}_{T|\bm{S}}(T|\bm{S})\right]^2 }{h\cdot \hat{p}_{T|\bm{S}}^2(T|\bm{S})\cdot \bar{p}_{T|\bm{S}}^2(T|\bm{S})}\right\}}\\
		&\quad \times \sqrt{\mathbb{E}\left\{\frac{\left(\frac{T-t}{h}\right)^2 K\left(\frac{T-t}{h}\right)}{h\cdot \kappa_2^2} \cdot \left[\bar{\mu}(t,\bm{S}) - \hat{\mu}(t,\bm{S}) + (T-t)\left(\bar{\beta}(t,\bm{S}) - \hat{\beta}(t,\bm{S})\right)\right]^2\right\}} \\
		&= \sqrt{nh} \cdot \sqrt{\mathbb{E}\left\{\int_{\mathbb{R}}\frac{K\left(u\right) \left[\bar{p}_{T|\bm{S}}(t+uh|\bm{S}) - \hat{p}_{T|\bm{S}}(t+uh|\bm{S})\right]^2 }{\hat{p}_{T|\bm{S}}^2(t+uh|\bm{S})\cdot \bar{p}_{T|\bm{S}}^2(t+uh|\bm{S})}\cdot p_{T|\bm{S}}(t+uh|\bm{S})\, du\right\}}\\
		&\quad \times \sqrt{\mathbb{E}\left\{\int_{\mathbb{R}} \frac{u^2 K\left(u\right)}{\kappa_2^2} \cdot \left[\bar{\mu}(t,\bm{S}) - \hat{\mu}(t,\bm{S}) + hu\left(\bar{\beta}(t,\bm{S}) - \hat{\beta}(t,\bm{S})\right)\right]^2 p_{T|\bm{S}}(t+uh|\bm{S})\, du\right\}} \\
		&\lesssim \sqrt{nh} \sup_{|u-t|\leq h} \norm{\hat{p}_{T|\bm{S}}(u|\bm{S}) - \bar{p}_{T|\bm{S}}(u|\bm{S})}_{L_2} \left[\norm{\hat{\mu}(t,\bm{S}) - \bar{\mu}(t,\bm{S})}_{L_2} + h \norm{\hat{\beta}(t,\bm{S}) - \bar{\beta}(t,\bm{S})}_{L_2}\right]\\
		&\stackrel{\text{(ii)}}{=} o_P(1),
	\end{align*}
	where (i) uses Cauchy-Schwarz inequality and (ii) leverages our assumption (c) on the doubly robust rate of convergence in the theorem statement. As a result, by Markov's inequality, we obtain that
	\begin{align*}
		&\sqrt{nh^3}\cdot \textbf{Term X} \\
		&= \sqrt{\frac{n}{h}}\cdot \mathbb{P}_n\left\{\frac{\left(\frac{T-t}{h}\right) K\left(\frac{T-t}{h}\right)}{\kappa_2}\cdot \frac{\left[\bar{p}_{T|\bm{S}}(T|\bm{S}) - \hat{p}_{T|\bm{S}}(T|\bm{S})\right]}{\hat{p}_{T|\bm{S}}(T|\bm{S})\cdot \bar{p}_{T|\bm{S}}(T|\bm{S})}\cdot \left[\bar{\mu}(t,\bm{S}) - \hat{\mu}(t,\bm{S}) + (T-t)\left(\bar{\beta}(t,\bm{S}) - \hat{\beta}(t,\bm{S})\right)\right]\right\}\\
		&=o_P(1).
	\end{align*}
	
	\subsubsection{Analysis of \textbf{Term XI} for $\hat{\theta}_{\mathrm{DR}}(t)$}
	\label{app:theta_pos_Term_XI}
	
	By direct calculations with some change of variables, we have that
	\begin{align*}
		&\textbf{Term XI} \\
		&=\mathbb{E}\left\{\left[1- \frac{\left(\frac{T-t}{h}\right)^2 K\left(\frac{T-t}{h}\right)}{h \cdot \kappa_2\cdot \bar{p}_{T|\bm{S}}(T|\bm{S})}\right] \left[\hat{\beta}(t,\bm{S}) - \bar{\beta}(t,\bm{S})\right] \right\} + \mathbb{E}\left\{\frac{\left(\frac{T-t}{h}\right)K\left(\frac{T-t}{h}\right)}{h^2\cdot \kappa_2 \cdot \bar{p}_{T|\bm{S}}(T|\bm{S})} \left[\bar{\mu}(t,\bm{S}) - \hat{\mu}(t,\bm{S})\right]\right\} \\
		&\quad + \mathbb{E}\left\{\frac{\left(\frac{T-t}{h}\right) K\left(\frac{T-t}{h}\right)}{h^2 \cdot \kappa_2}\left[\frac{1}{\hat{p}_{T|\bm{S}}(T|\bm{S})} - \frac{1}{\bar{p}_{T|\bm{S}}(T|\bm{S})}\right]\left[Y -\bar{\mu}(t,\bm{S}) - (T-t)\cdot \bar{\beta}(t,\bm{S})\right]\right\} \\
		&= \underbrace{\mathbb{E}\left\{\left[1- \int_{\mathbb{R}} \frac{u^2 \cdot K(u)\cdot p_{T|\bm{S}}(t+uh|\bm{S})}{\kappa_2\cdot \bar{p}_{T|\bm{S}}(t+uh|\bm{S})} \,du\right]\left[\hat{\beta}(t,\bm{S}) - \bar{\beta}(t,\bm{S})\right] \right\}}_{\textbf{Term XIa}} \\
		&\quad + \underbrace{\mathbb{E}\left\{\int_{\mathbb{R}} \frac{u\cdot K\left(u\right)\cdot p_{T|\bm{S}}(t+uh|\bm{S})}{h\cdot \kappa_2 \cdot \bar{p}_{T|\bm{S}}(t+uh|\bm{S})} \left[\bar{\mu}(t,\bm{S}) - \hat{\mu}(t,\bm{S})\right]\, du\right\}}_{\textbf{Term XIb}} \\
		&\quad + \underbrace{\mathbb{E}\left\{\int_{\mathbb{R}} \frac{u\cdot K(u)\left[\bar{p}_{T|\bm{S}}(t+uh|\bm{S}) - \hat{p}_{T|\bm{S}}(t+uh|\bm{S})\right]}{h\cdot \kappa_2\cdot \bar{p}_{T|\bm{S}}(t+uh|\bm{S}) \cdot \hat{p}_{T|\bm{S}}(t+uh|\bm{S})} \left[\mu(t+uh,\bm{S}) -\bar{\mu}(t,\bm{S}) - hu\cdot \bar{\beta}(t,\bm{S})\right] p_{T|\bm{S}}(t+uh|\bm{S})\, du\right\}}_{\textbf{Term XIc}}.
	\end{align*}
	On one hand, when $\bar{p}_{T|\bm{S}} = p_{T|\bm{S}}$, we know from Assumption~\ref{assump:reg_kernel} that $\textbf{Term XIa} = \textbf{Term XIb} = 0$ and
	\begin{align*}
		\textbf{Term XIc} &= \mathbb{E}\left\{\int_{\mathbb{R}} \frac{u\cdot K(u)\left[p_{T|\bm{S}}(t+uh|\bm{S}) - \hat{p}_{T|\bm{S}}(t+uh|\bm{S})\right]}{h\cdot \kappa_2\cdot \hat{p}_{T|\bm{S}}(t+uh|\bm{S})} \left[\mu(t+uh,\bm{S}) -\bar{\mu}(t,\bm{S}) - hu\cdot \bar{\beta}(t,\bm{S})\right] \, du\right\} \\
		&\lesssim \frac{1}{h}\cdot \sup_{|u-t|\leq h} \norm{\hat{p}_{T|\bm{S}}(u|\bm{S}) - p_{T|\bm{S}}(u|\bm{S})}_{L_2} \\
		&=o_P\left(\sqrt{\frac{1}{nh^3}}\right)
	\end{align*}
	by the boundedness of $\mu,\bar{\mu}$ under Assumption~\ref{assump:reg_diff}, the positivity condition (Assumption~\ref{assump:positivity}), and our assumption (c) on the doubly robust rate of convergence in the theorem statement. Specifically, since $\norm{\hat{\mu}(t,\bm{S}) - \bar{\mu}(t,\bm{S})}_{L_2} + h \norm{\hat{\beta}(t,\bm{S}) - \bar{\beta}(t,\bm{S})}_{L_2}= O_P(1)$ when $\bar{\mu}\neq \mu$ and $\bar{\beta}\neq \beta$, our assumption (c) ensures that $\sup_{|u-t|\leq h} \norm{\hat{p}_{T|\bm{S}}(u|\bm{S}) - p_{T|\bm{S}}(u|\bm{S})}_{L_2} =o_P\left(\sqrt{\frac{1}{nh}}\right)$.
	
	On the other hand, when $\bar{\mu}=\mu$ and $\bar{\beta}=\beta$, we know from Assumption~\ref{assump:positivity} on $\bar{p}_{T|\bm{S}}$ and the boundedness of $p_{T|\bm{S}}$ by Assumption~\ref{assump:den_diff} that
	\begin{align*}
		\textbf{Term XIa} &= \mathbb{E}\left\{\left[1- \int_{\mathbb{R}} \frac{u^2 \cdot K(u)\cdot p_{T|\bm{S}}(t+uh|\bm{S})}{\kappa_2\cdot \bar{p}_{T|\bm{S}}(t+uh|\bm{S})} \,du\right]\left[\hat{\beta}(t,\bm{S}) - \bar{\beta}(t,\bm{S})\right] \right\}\\
		&\lesssim \norm{\hat{\beta}(t,\bm{S}) - \beta(t,\bm{S})}_{L_2}\\
		&=o_P\left(\sqrt{\frac{1}{nh^3}}\right),
	\end{align*}
	where we argue from our assumption (c) on the doubly robust rate of convergence in the theorem statement that $\norm{\hat{\beta}(t,\bm{S}) - \beta(t,\bm{S})}_{L_2}=o_P\left(\sqrt{\frac{1}{nh^3}}\right)$ if $\bar{p}_{T|\bm{S}} \neq p_{T|\bm{S}}$ and $\sup_{|u-t|\leq h} \norm{\hat{p}_{T|\bm{S}}(u|\bm{S}) - p_{T|\bm{S}}(u|\bm{S})}_{L_2}=O_P(1)$. In addition, we also have that
	\begin{align*}
		\textbf{Term XIb} &= \mathbb{E}\left\{\int_{\mathbb{R}} \frac{u\cdot K\left(u\right)\cdot p_{T|\bm{S}}(t+uh|\bm{S})}{h\cdot \kappa_2 \cdot \bar{p}_{T|\bm{S}}(t+uh|\bm{S})} \left[\bar{\mu}(t,\bm{S}) - \hat{\mu}(t,\bm{S})\right]\, du\right\}\\
		&\lesssim \frac{1}{h} \norm{\hat{\mu}(t,\bm{S}) - \mu(t,\bm{S})}_{L_2}\\
		&=o_P\left(\sqrt{\frac{1}{nh^3}}\right),
	\end{align*}
	where we again argue from our assumption (c) on the doubly robust rate of convergence in the theorem statement that $\norm{\hat{\mu}(t,\bm{S}) - \mu(t,\bm{S})}_{L_2}=o_P\left(\sqrt{\frac{1}{nh}}\right)$ if $\bar{p}_{T|\bm{S}} \neq p_{T|\bm{S}}$ and $\sup_{|u-t|\leq h} \norm{\hat{p}_{T|\bm{S}}(u|\bm{S}) - p_{T|\bm{S}}(u|\bm{S})}_{L_2}=O_P(1)$. Finally, we also derive that
	\begin{align*}
		&\textbf{Term XIc} \\
		&= \mathbb{E}\left\{\int_{\mathbb{R}} \frac{u\cdot K(u)\left[\bar{p}_{T|\bm{S}}(t+uh|\bm{S}) - \hat{p}_{T|\bm{S}}(t+uh|\bm{S})\right]}{h\cdot \kappa_2\cdot \bar{p}_{T|\bm{S}}(t+uh|\bm{S}) \cdot \hat{p}_{T|\bm{S}}(t+uh|\bm{S})} \left[\mu(t+uh,\bm{S}) -\mu(t,\bm{S}) - hu\cdot \beta(t,\bm{S})\right] p_{T|\bm{S}}(t+uh|\bm{S})\, du\right\}\\
		&\stackrel{\text{(i)}}{=} \mathbb{E}\Bigg\{\sup_{|u-t|\leq h}\left|\hat{p}_{T|\bm{S}}(u|\bm{S}) - p_{T|\bm{S}}(u|\bm{S}) \right|\\
		&\quad \times \int_{\mathbb{R}} \frac{u K(u) \left[\frac{u^2h^2}{2}\cdot \frac{\partial^2}{\partial t^2}\mu(t,\bm{S}) + O(h^3)\right] \left[p_{T|\bm{S}}(t|\bm{S}) + uh\cdot \frac{\partial}{\partial t} p_{T|\bm{S}}(t|\bm{S}) + O(h^2) \right]}{h\cdot \kappa_2\left[\bar{p}^2_{T|\bm{S}}(t|\bm{S}) + 2uh\cdot \bar{p}_{T|\bm{S}}(t|\bm{S}) \cdot \frac{\partial}{\partial t} \bar{p}^2_{T|\bm{S}}(t|\bm{S}) + O(h^2)\right]\left[1+O_P\left(\Upsilon_{1,n}\right)\right]}\, du\Bigg\}\\
		& = O_P\left(h^2 \sup_{|u-t|\leq h} \norm{\hat{p}_{T|\bm{S}}(u|\bm{S}) - p_{T|\bm{S}}(u|\bm{S})}_{L_2} \right) \\
		&= O_P\left(h^2\cdot \Upsilon_{2,n}\right) \\
		&\stackrel{\text{(ii)}}{=} o_P\left(\sqrt{\frac{1}{nh^3}}\right),
	\end{align*}
	where (i) applies Taylor's expansion and mean-value theorem for integrals as well as uses the fact that the difference between $\bar{p}_{T|\bm{S}}$ and $\hat{p}_{T|\bm{S}}$ is small when $\sup_{|u-t|\leq h} \norm{\hat{p}_{T|\bm{S}}(u|\bm{S}) - \bar{p}_{T|\bm{S}}(u|\bm{S})}_{L_2}=O_P\left(\Upsilon_{2,n}\right)$, while (ii) leverages the arguments that $\sqrt{nh^3}\cdot h^2 = \sqrt{nh^7} \to \sqrt{c_3}\in [0,\infty)$ and $\Upsilon_{2,n}\to 0$ as $n\to \infty$.
	
	\subsubsection{Asymptotic Normality of $\hat{\theta}_{\mathrm{DR}}(t)$}
	\label{app:theta_pos_asym_norm}
	
	For the asymptotic normality of $\hat{\theta}_{\mathrm{DR}}(t)$, it follows from the Lyapunov central limit theorem. Specifically, we already show in \autoref{app:theta_pos_Term_VI} and subsequent subsections that
	\begin{align*}
		\sqrt{nh^3}\left[\hat{\theta}_{\mathrm{DR}}(t) - \theta(t)\right] &= \frac{1}{\sqrt{n}} \sum_{i=1}^n \left\{\phi_{h,t}\left(Y_i,T_i,\bm{S}_i; \bar{\mu},\bar{\beta}, \bar{p}_{T|\bm{S}}\right) + \sqrt{h^3}\left[\bar{\beta}(t,\bm{S}_i) -  \mathbb{E}\left[\beta(t,\bm{S})\right] \right]\right\} +o_P(1)\\
		&= \frac{1}{\sqrt{n}} \sum_{i=1}^n \phi_{h,t}\left(Y_i,T_i,\bm{S}_i; \bar{\mu},\bar{\beta}, \bar{p}_{T|\bm{S}}\right) +o_P(1)
	\end{align*}
	with 
	$$\phi_{h,t}\left(Y,T,\bm{S}; \bar{\mu},\bar{\beta}, \bar{p}_{T|\bm{S}}\right) = \frac{\left(\frac{T-t}{h}\right) K\left(\frac{T-t}{h}\right)}{\sqrt{h}\cdot \kappa_2\cdot \bar{p}_{T|\bm{S}}(T|\bm{S})}\cdot \left[Y - \bar{\mu}(t,\bm{S}) - (T-t)\cdot \bar{\beta}(t,\bm{S})\right]$$ 
	and $V_{\theta}(t) = \mathbb{E}\left[\phi_{h,t}^2\left(Y,T,\bm{S}; \bar{\mu},\bar{\beta}, \bar{p}_{T|\bm{S}}\right)\right] = O(1)$ by our calculation in \textbf{Term VI}. Then, 
	$$\sum_{i=1}^n \mathrm{Var}\left[\frac{1}{\sqrt{n}} \cdot \phi_{h,t}\left(Y_i,T_i,\bm{S}_i; \bar{\mu},\bar{\beta}, \bar{p}_{T|\bm{S}}\right) \right] = O(1)$$ 
	and 
	\begin{align*}
		&\sum_{i=1}^n \mathbb{E}\left|\frac{1}{\sqrt{n}} \cdot \phi_{h,t}\left(Y_i,T_i,\bm{S}_i; \bar{\mu}, \bar{\beta}, \bar{p}_{T|\bm{S}}\right)\right|^{2+c_1} \\
		&= \mathbb{E}\left|\frac{\left(\frac{T-t}{h}\right)^{2+c_1}K^{2+c_1}\left(\frac{T-t}{h}\right) \cdot \left[Y-\bar{\mu}(t,\bm{S}) - (T-t)\cdot \bar{\beta}(t,\bm{S})\right]^{2+c_1}}{n^{\frac{c_1}{2}} h^{1+\frac{c_1}{2}}\cdot \kappa_2^{2+c_1} \cdot \bar{p}_{T|\bm{S}}^{2+c_1}(T|\bm{S})} \right| \\
		&\lesssim \mathbb{E}\left\{\int_{\mathbb{R}} \frac{u^{2+c_1}K^{2+c_1}(u) \cdot \left[\left[ \mu(t+uh,\bm{S})-\bar{\mu}(t,\bm{S}) -hu\cdot \bar{\beta}(t,\bm{S})\right]^{2+c_1} + \mathbb{E}|Y|^{2+c_1}\right]}{\sqrt{(nh)^{c_1}}\cdot \bar{p}_{T|\bm{S}}^{2+c_1}(t+uh|\bm{S})} \cdot p_{T|\bm{S}}(t+uh|\bm{S})\, du \right\}\\
		&=O\left(\sqrt{\frac{1}{(nh)^{c_1}}}\right) =o(1)
	\end{align*}
	by the boundedness of $\mu,\bar{\mu},p_{T|\bm{S}}$, the positivity condition on $\bar{p}_{T|\bm{S}}$, $\mathbb{E}|Y|^{2+c_1}<\infty$ by Assumption~\ref{assump:reg_diff}(c), and the requirement that $nh^3\to\infty$ as $n\to \infty$. Hence, the Lyapunov condition holds, and we have that
	$$\sqrt{nh^3}\left[\hat{\theta}_{\mathrm{DR}}(t) - \theta(t) - h^2 B_{\theta}(t)\right] \stackrel{d}{\to} \mathcal{N}\left(0,V_{\theta}(t)\right)$$
	after subtracting the dominating bias term $h^2 B_{\theta}(t)$ of $\phi_{h,t}\left(Y,T,\bm{S}; \bar{\mu}, \bar{\beta}, \bar{p}_{T|\bm{S}}\right)$ that we have computed in \textbf{Term VI}. The proof of \autoref{thm:theta_pos} is thus completed.
\end{proof}

\subsection{Asymptotic Bias of A Naive AIPW Estimator of $\theta(t)$}
\label{app:theta_pos_AIPW}

Naively, one may combine the RA estimator \eqref{theta_RA} with the IPW estimator \eqref{theta_IPW} to derive an augmented IPW (AIPW) estimator of $\theta(t)$ with the following (or other similar) form as:
\begin{equation}
	\label{theta_AIPW}
	\hat{\theta}_{\mathrm{AIPW}}(t) = \frac{1}{nh}\sum_{i=1}^n \left\{ \frac{K\left(\frac{T_i-t}{h}\right) }{\hat{p}_{T|\bm{S}}(T_i|\bm{S}_i)} \left[ \frac{Y_i}{\kappa_2} \left(\frac{T_i-t}{h^2}\right)-  \hat{\beta}(t,\bm{S}_i)
	\right]+ h\cdot \hat{\beta}(t,\bm{S}_i) \right\}.
\end{equation}
However, this naive AIPW estimator is not doubly robust as demonstrated by Proposition~\ref{prop:theta_pos_AIPW} below---it is only robust to the misspecification of the limiting quantity of $\hat{\beta}(t,\bm{s})$. In other words, $\hat{\theta}_{\mathrm{AIPW}}(t)$ will be asymptotically unbiased only when the estimated conditional density $\hat{p}_{T|\bm{S}}$ converges to the true conditional density $p_{T|\bm{S}}$ in a certain rate.

Analogous to our calculations in \autoref{app:theta_DR_pos}, we can decompose $\hat{\theta}_{\mathrm{AIPW}}(t) - \theta(t)$ under Assumption~\ref{assump:id_cond} as:
\begin{align*}
	&\hat{\theta}_{\mathrm{AIPW}}(t) - \theta(t) \\
	&= \frac{1}{nh}\sum_{i=1}^n \left\{\frac{K\left(\frac{T_i-t}{h}\right)}{\hat{p}_{T|\bm{S}}(T_i|\bm{S}_i)} \left[\frac{Y_i}{\kappa_2}\left(\frac{T_i-t}{h^2}\right) - \hat{\beta}(t,\bm{S}_i)\right]+ h\cdot \hat{\beta}(t,\bm{S}_i) \right\} - \mathbb{E}\left[\frac{\partial}{\partial t}\mu(t,\bm{S})\right] \\
	&= \mathbb{P}_n \tilde{\Phi}_{h,t}\left(Y,T,\bm{S}; \bar{\beta}, \bar{p}_{T|\bm{S}}\right) - \mathbb{E}\left[\frac{\partial}{\partial t}\mu(t,\bm{S})\right] + \mathbb{P}_n \left[\Phi_{h,t}\left(Y,T,\bm{S}; \hat{\beta}, \hat{p}_{T|\bm{S}}\right) - \Phi_{h,t}\left(Y,T,\bm{S}; \bar{\beta}, \bar{p}_{T|\bm{S}}\right) \right] \\
	&= \underbrace{\mathbb{P}_n \tilde{\Phi}_{h,t}\left(Y,T,\bm{S}; \bar{\beta}, \bar{p}_{T|\bm{S}}\right) - \mathbb{E}\left[\beta(t,\bm{S})\right]}_{\textbf{Dominating Term}} + \left(\mathbb{P}_n-\P\right)\left[\hat{\beta}(t,\bm{S}) - \bar{\beta}(t,\bm{S}) \right] \\
	&\quad + \left(\mathbb{P}_n-\P\right)\left\{\frac{K\left(\frac{T-t}{h}\right)}{h}\left[\frac{1}{\hat{p}_{T|\bm{S}}(T|\bm{S})} - \frac{1}{\bar{p}_{T|\bm{S}}(T|\bm{S})} \right] \left[\frac{Y}{\kappa_2}\left(\frac{T-t}{h^2}\right) - \bar{\beta}(t,\bm{S})\right]\right\}\\
	&\quad + \left(\mathbb{P}_n-\P\right)\left\{\frac{K\left(\frac{T-t}{h}\right)}{h\cdot \bar{p}_{T|\bm{S}}(T|\bm{S})} \left[\bar{\beta}(t,\bm{S}) - \hat{\beta}(t,\bm{S})\right]\right\} \\
	&\quad + \mathbb{P}_n\left\{\frac{K\left(\frac{T-t}{h}\right)}{h}\left[\frac{1}{\hat{p}_{T|\bm{S}}(T|\bm{S})} - \frac{1}{\bar{p}_{T|\bm{S}}(T|\bm{S})} \right] \left[\bar{\beta}(t,\bm{S}) - \hat{\beta}(t,\bm{S})\right]\right\} \\
	&\quad + \P\left\{\left[1- \frac{K\left(\frac{T-t}{h}\right)}{h \cdot \bar{p}(T|\bm{S})}\right] \left[\hat{\beta}(t,\bm{S}) - \bar{\beta}(t,\bm{S})\right] +\frac{K\left(\frac{T-t}{h}\right)}{h}\left[\frac{1}{\hat{p}_{T|\bm{S}}(T|\bm{S})} - \frac{1}{\bar{p}_{T|\bm{S}}(T|\bm{S})}\right]\left[\frac{Y}{\kappa_2}\left(\frac{T-t}{h^2}\right) - \bar{\beta}(t,\bm{S})\right]\right\},
\end{align*}
where $\tilde{\Phi}_{h,t}\left(Y,T,\bm{S}; \beta, p_{T|\bm{S}}\right)= \frac{K\left(\frac{T-t}{h}\right)}{h\cdot p_{T|\bm{S}}(T|\bm{S})} \left[\frac{Y}{\kappa_2}\left(\frac{T-t}{h^2}\right) - \beta(t,\bm{S})\right]+ \beta(t,\bm{S})$. Thus, in order to study the asymptotically dominating bias of $\hat{\theta}_{\mathrm{AIPW}}(t)$ in \eqref{theta_AIPW}, it suffices to compute $\mathbb{E}\left[\tilde{\Phi}_{h,t}\left(Y,T,\bm{S}; \bar{\beta}, \bar{p}_{T|\bm{S}}\right) - \beta(t,\bm{S}) \right]$ as in Proposition~\ref{prop:theta_pos_AIPW} below.

\begin{proposition}[Asymptotically dominating bias of $\hat{\theta}_{\mathrm{AIPW}}(t)$]
	\label{prop:theta_pos_AIPW}
	Suppose that Assumptions~\ref{assump:id_cond}, \ref{assump:positivity}, \ref{assump:reg_diff}, \ref{assump:den_diff}, and \ref{assump:reg_kernel} hold as well as $\hat{\beta}, \hat{p}_{T|\bm{S}}$ are constructed on a data sample independent of $\{(Y_i,T_i,\bm{S}_i)\}_{i=1}^n$. For any fixed $t\in \mathcal{T}$, we let $\bar{\mu}(t,\bm{s})$, $\bar{\beta}(t,\bm{s})$, and $\bar{p}_{T|\bm{S}}(t|\bm{s})$ be fixed bounded functions to which $\hat{\mu}(t,\bm{s})$, $\hat{\beta}(t,\bm{s})$ and $\hat{p}_{T|\bm{S}}(t|\bm{s})$ converge. Assume also that $\bar{p}_{T|\bm{S}}$ satisfies Assumptions~\ref{assump:den_diff} and \ref{assump:positivity}. Then, the asymptotically dominating bias of $\hat{\theta}_{\mathrm{AIPW}}(t)$ is given by
	\begin{align*}
		&\mathbb{E}\left[\tilde{\Phi}_{h,t}\left(Y,T,\bm{S}; \bar{\beta}, \bar{p}_{T|\bm{S}}\right) - \beta(t,\bm{S}) \right] \\
		&= \mathbb{E}_{\bm{S}}\left\{\frac{\mu(t,\bm{S})\cdot p_{T|\bm{S}}(t|\bm{S}) }{\bar{p}_{T|\bm{S}}(t|\bm{S})} \cdot \frac{\partial}{\partial t}\log \left[\frac{p_{T|\bm{S}}(t|\bm{S})}{\bar{p}_{T|\bm{S}}(t|\bm{S})}\right]\right\} + \mathbb{E}_{\bm{S}}\left\{\left[\beta(t,\bm{S}) - \bar{\beta}(t,\bm{S})\right]\left[\frac{p_{T|\bm{S}}(t|\bm{S})}{\bar{p}_{T|\bm{S}}(t|\bm{S})} - 1\right]\right\} + O(h^2)
	\end{align*}
	when $h\to 0$ and $n\to \infty$.
\end{proposition}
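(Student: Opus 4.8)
The reduction already given before the statement shows that it suffices to evaluate $\mathbb{E}\left[\tilde{\Phi}_{h,t}\left(Y,T,\bm{S};\bar{\beta},\bar{p}_{T|\bm{S}}\right) - \beta(t,\bm{S})\right]$. The plan is to compute this directly by conditioning on $\bm{S}$, making the change of variables $u = (t_1-t)/h$ in the inner integral over the treatment variable, and Taylor-expanding in the bandwidth $h$. First I would split the expectation into three pieces: $A := \mathbb{E}\left[\frac{K((T-t)/h)}{h\,\bar{p}_{T|\bm{S}}(T|\bm{S})}\cdot\frac{Y}{\kappa_2}\left(\frac{T-t}{h^2}\right)\right]$, $B := \mathbb{E}\left[\frac{K((T-t)/h)}{h\,\bar{p}_{T|\bm{S}}(T|\bm{S})}\,\bar{\beta}(t,\bm{S})\right]$, and the elementary term $\mathbb{E}\left[\bar{\beta}(t,\bm{S}) - \beta(t,\bm{S})\right]$, so that the target equals $A - B + \mathbb{E}\left[\bar{\beta}(t,\bm{S}) - \beta(t,\bm{S})\right]$.

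For term $A$ I would first replace $Y$ by $\mu(T,\bm{S})$ using the tower property (legitimate since $\mu(t,\bm{s})=\mathbb{E}(Y|T=t,\bm{S}=\bm{s})$ is well-defined on $\mathcal{T}\times\mathcal{S}$ under positivity), then write $A = \mathbb{E}_{\bm{S}}\left[\int \frac{K(u)}{\bar{p}_{T|\bm{S}}(t+uh|\bm{S})}\cdot\frac{\mu(t+uh,\bm{S})\,u}{h\,\kappa_2}\,p_{T|\bm{S}}(t+uh|\bm{S})\,du\right]$. The crucial point is that the $1/h$ prefactor multiplies a kernel integrand weighted by $uK(u)$, whose zeroth contribution vanishes because $\int uK(u)\,du = \kappa_1 = 0$; hence I must Taylor-expand all three $h$-dependent factors — $\mu(t+uh,\bm{S})$, $p_{T|\bm{S}}(t+uh|\bm{S})$, and $1/\bar{p}_{T|\bm{S}}(t+uh|\bm{S})$ — to first order in $h$. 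Collecting the three order-$h$ contributions (one from $\mu$ bringing down $\beta(t,\bm{S})$, one from $p_{T|\bm{S}}$ bringing down $\frac{\partial}{\partial t}p_{T|\bm{S}}(t|\bm{S})$, one from $1/\bar{p}_{T|\bm{S}}$ bringing down $-\frac{\partial}{\partial t}\bar{p}_{T|\bm{S}}(t|\bm{S})/\bar{p}_{T|\bm{S}}(t|\bm{S})$), each paired with $\int u^2K(u)\,du=\kappa_2$, gives, after recognizing the identity $\frac{\partial}{\partial t}p_{T|\bm{S}} - p_{T|\bm{S}}\frac{\partial}{\partial t}\bar{p}_{T|\bm{S}}/\bar{p}_{T|\bm{S}} = p_{T|\bm{S}}\frac{\partial}{\partial t}\log(p_{T|\bm{S}}/\bar{p}_{T|\bm{S}})$, the value $A = \mathbb{E}_{\bm{S}}\left[\frac{\beta(t,\bm{S})\,p_{T|\bm{S}}(t|\bm{S})}{\bar{p}_{T|\bm{S}}(t|\bm{S})} + \frac{\mu(t,\bm{S})\,p_{T|\bm{S}}(t|\bm{S})}{\bar{p}_{T|\bm{S}}(t|\bm{S})}\,\frac{\partial}{\partial t}\log\frac{p_{T|\bm{S}}(t|\bm{S})}{\bar{p}_{T|\bm{S}}(t|\bm{S})}\right] + O(h^2)$; the order-$h$ remainder is genuinely absent because the next batch of terms carries $\int u^3K(u)\,du = 0$ by the second-order symmetric kernel property (Assumption~\ref{assump:reg_kernel}). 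Term $B$ is handled the same way but is simpler: expanding $p_{T|\bm{S}}(t+uh|\bm{S})/\bar{p}_{T|\bm{S}}(t+uh|\bm{S})$ and using $\int K(u)\,du=1$, $\int uK(u)\,du=0$ gives $B = \mathbb{E}_{\bm{S}}\left[\bar{\beta}(t,\bm{S})\,p_{T|\bm{S}}(t|\bm{S})/\bar{p}_{T|\bm{S}}(t|\bm{S})\right] + O(h^2)$.

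To conclude, I would combine: $A - B = \mathbb{E}_{\bm{S}}\left[(\beta(t,\bm{S}) - \bar{\beta}(t,\bm{S}))\,\frac{p_{T|\bm{S}}(t|\bm{S})}{\bar{p}_{T|\bm{S}}(t|\bm{S})} + \frac{\mu(t,\bm{S})\,p_{T|\bm{S}}(t|\bm{S})}{\bar{p}_{T|\bm{S}}(t|\bm{S})}\,\frac{\partial}{\partial t}\log\frac{p_{T|\bm{S}}(t|\bm{S})}{\bar{p}_{T|\bm{S}}(t|\bm{S})}\right] + O(h^2)$, and adding $\mathbb{E}\left[\bar{\beta}(t,\bm{S}) - \beta(t,\bm{S})\right] = -\mathbb{E}_{\bm{S}}\left[\beta(t,\bm{S}) - \bar{\beta}(t,\bm{S})\right]$ converts the first term into $\mathbb{E}_{\bm{S}}\left[(\beta(t,\bm{S}) - \bar{\beta}(t,\bm{S}))\left(\frac{p_{T|\bm{S}}(t|\bm{S})}{\bar{p}_{T|\bm{S}}(t|\bm{S})} - 1\right)\right]$, which is exactly the claimed expression. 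Throughout, the uniform boundedness and smoothness in Assumptions~\ref{assump:reg_diff} and~\ref{assump:den_diff} together with the positivity of $\bar{p}_{T|\bm{S}}$ justify interchanging expectation and integration and controlling the Taylor remainders. The only real obstacle is the bookkeeping of the $1/h$ cancellation: one must verify carefully that every constant-order and odd-moment ($\int u^3 K$) piece either integrates to zero against $K$ or appears with the stated coefficient, so that the leftover is $O(h^2)$ rather than $O(h)$; this rests entirely on $\kappa_1 = 0$ and the symmetry of $K$, which is why the second-order symmetric kernel assumption is indispensable here. Once this is organized, the remaining algebra is routine.
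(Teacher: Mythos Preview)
Your proposal is correct and follows essentially the same route as the paper's proof: condition on $\bm{S}$, change variables $u=(t_1-t)/h$, Taylor-expand $\mu(t+uh,\bm{S})$, $p_{T|\bm{S}}(t+uh|\bm{S})$, and $1/\bar{p}_{T|\bm{S}}(t+uh|\bm{S})$, and collect terms using $\kappa_1=0$ and $\int u^3K(u)\,du=0$. The only cosmetic difference is that you separate the expression into pieces $A$, $B$, and $\mathbb{E}[\bar{\beta}-\beta]$ before expanding, whereas the paper keeps the full integrand $\frac{K(u)}{\bar{p}_{T|\bm{S}}(t+uh|\bm{S})}\big[\frac{u\,\mu(t+uh,\bm{S})}{h\kappa_2}-\bar{\beta}(t,\bm{S})\big]\,p_{T|\bm{S}}(t+uh|\bm{S})$ together and expands all at once (carrying the expansion one order further to also exhibit the explicit $h^2\tilde{B}_\theta(t)$ coefficient, which your $O(h^2)$ absorbs).
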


\begin{proof}[Proof of Proposition~\ref{prop:theta_pos_AIPW}]
	By direct calculations under Assumption~\ref{assump:id_cond}, we derive that
	\begin{align*}
		&\mathbb{E}\left[\tilde{\Phi}_{h,t}\left(Y,T,\bm{S}; \bar{\beta}, \bar{p}_{T|\bm{S}}\right) - \beta(t,\bm{S}) \right] \\
		&= \mathbb{E}\left\{\int_{\mathcal{T}} \frac{K\left(\frac{t_1-t}{h}\right)}{h\cdot \bar{p}_{T|\bm{S}}(t_1|\bm{S})}\cdot \left[\frac{\mu(t_1,\bm{S})}{\kappa_2}\left(\frac{t_1-t}{h^2}\right) - \bar{\beta}(t,\bm{S})\right] p_{T|\bm{S}}(t_1|\bm{S}) \, dt_1\right\} + \mathbb{E}\left[\bar{\beta}(t,\bm{S}) - \beta(t,\bm{S})\right] \\
		&\stackrel{\text{(i)}}{=} \mathbb{E}\left\{\int_{\mathbb{R}} \frac{K\left(u\right)}{\bar{p}_{T|\bm{S}}(t+uh|\bm{S})}\cdot \left[\frac{u\cdot \mu(t+uh,\bm{S})}{h\cdot \kappa_2} - \bar{\beta}(t,\bm{S})\right] p_{T|\bm{S}}(t+uh|\bm{S}) \, du\right\} + \mathbb{E}\left[\bar{\beta}(t,\bm{S}) - \beta(t,\bm{S})\right]\\
		&\stackrel{\text{(ii)}}{=} \mathbb{E}\Bigg\{\int_{\mathbb{R}} \frac{K\left(u\right)\left[p_{T|\bm{S}}(t|\bm{S}) + uh\cdot \frac{\partial}{\partial t} p_{T|\bm{S}}(t|\bm{S}) + \frac{u^2h^2}{2} \cdot \frac{\partial^2}{\partial t^2} p_{T|\bm{S}}(t|\bm{S}) + \frac{u^3h^3}{6}\cdot \frac{\partial^3}{\partial t^3} p_{T|\bm{S}}(t|\bm{S}) + O(h^4)\right]}{\bar{p}_{T|\bm{S}}(t|\bm{S}) + uh\cdot \frac{\partial}{\partial t} \bar{p}_{T|\bm{S}}(t|\bm{S}) + \frac{u^2h^2}{2} \cdot \frac{\partial^2}{\partial t^2} \bar{p}_{T|\bm{S}}(t|\bm{S}) + \frac{u^3h^3}{6}\cdot \frac{\partial^3}{\partial t^3} \bar{p}_{T|\bm{S}}(t|\bm{S}) + O(h^4)}\\ 
		&\quad \times \left[\frac{u\cdot \mu(t,\bm{S})}{h\cdot \kappa_2} + \frac{u^2 \beta(t,\bm{S})}{\kappa_2} - \bar{\beta}(t,\bm{S}) + \frac{u^3 h}{2\kappa_2}\cdot \frac{\partial^2}{\partial t^2} \mu(t,\bm{S}) + \frac{u^4h^2}{6\kappa_2} \cdot \frac{\partial^3}{\partial t^3}\mu(t,\bm{S}) + O(h^3)\right] \, du\Bigg\} \\
		&\quad + \mathbb{E}\left[\bar{\beta}(t,\bm{S}) - \beta(t,\bm{S})\right]\\
		&\stackrel{\text{(iii)}}{=} \mathbb{E}\Bigg\{\int_{\mathbb{R}} K\left(u\right)\left[p_{T|\bm{S}}(t|\bm{S}) + uh\cdot \frac{\partial}{\partial t} p_{T|\bm{S}}(t|\bm{S}) + \frac{u^2h^2}{2} \cdot \frac{\partial^2}{\partial t^2} p_{T|\bm{S}}(t|\bm{S}) + \frac{u^3h^3}{6} \cdot \frac{\partial^3}{\partial t^3} p_{T|\bm{S}}(t|\bm{S}) + O(h^4)\right]\\
		&\quad \quad \times\Bigg[\frac{1}{\bar{p}_{T|\bm{S}}(t|\bm{S})} - \frac{uh\cdot \frac{\partial}{\partial t}\bar{p}_{T|\bm{S}}(t|\bm{S})}{\bar{p}_{T|\bm{S}}^2(t|\bm{S})} - \frac{u^2h^2 \cdot \frac{\partial^2}{\partial t^2}\bar{p}_{T|\bm{S}}(t|\bm{S})}{2\bar{p}_{T|\bm{S}}^2(t|\bm{S})} + \frac{u^2h^2 \left[\frac{\partial}{\partial t}\bar{p}_{T|\bm{S}}(t|\bm{S})\right]^2}{\bar{p}_{T|\bm{S}}^3(t|\bm{S})} - \frac{u^3h^3 \cdot \frac{\partial^3}{\partial t^3} \bar{p}_{T|\bm{S}}(t|\bm{S})}{6\bar{p}_{T|\bm{S}}^2(t|\bm{S})} \\
		&\quad \quad \quad\quad + \frac{u^3h^3\left[\frac{\partial}{\partial t} \bar{p}_{T|\bm{S}}(t|\bm{S})\right]\left[\frac{\partial^2}{\partial t^2} \bar{p}_{T|\bm{S}}(t|\bm{S})\right]}{\bar{p}_{T|\bm{S}}^3(t|\bm{S})} + O(h^4)\Bigg]\\  
		&\quad\quad \times \left[\frac{u\cdot \mu(t,\bm{S})}{h\cdot \kappa_2} + \frac{u^2 \beta(t,\bm{S})}{\kappa_2} - \bar{\beta}(t,\bm{S}) + \frac{u^3 h}{2\kappa_2}\cdot \frac{\partial^2}{\partial t^2} \mu(t,\bm{S}) + \frac{u^4h^2}{6\kappa_2} \cdot \frac{\partial^3}{\partial t^3}\mu(t,\bm{S}) + O(h^3)\right] \, du\Bigg\} \\
		&\quad + \mathbb{E}\left[\bar{\beta}(t,\bm{S}) - \beta(t,\bm{S})\right]\\
		&= \mathbb{E}\left[\frac{\mu(t,\bm{S})\cdot \frac{\partial}{\partial t} p_{T|\bm{S}}(t|\bm{S})}{\bar{p}_{T|\bm{S}}(t|\bm{S})} - \frac{\mu(t,\bm{S})\cdot p_{T|\bm{S}}(t|\bm{S}) \cdot \frac{\partial}{\partial t} \bar{p}_{T|\bm{S}}(t|\bm{S})}{\bar{p}_{T|\bm{S}}^2(t|\bm{S})}\right] + \mathbb{E}\left\{\left[\beta(t,\bm{S}) - \bar{\beta}(t,\bm{S})\right]\left[\frac{p_{T|\bm{S}}(t|\bm{S})}{\bar{p}_{T|\bm{S}}(t|\bm{S})} - 1\right]\right\}\\
		&\quad + h^2 \tilde{B}_{\theta}(t) + O(h^3),
	\end{align*}
	where (i) uses a change of variable $u=\frac{t_1-t}{h}$ while (ii) and (iii) apply Taylor's expansions under Assumptions~\ref{assump:den_diff} and \ref{assump:reg_diff}. Here, the complicated bias term $\tilde{B}_{\theta}(t)$ is given by
	\begin{align*}
		\tilde{B}_{\theta}(t) &= \frac{\kappa_4}{2\kappa_2}\cdot \mathbb{E}_{\bm{S}}\Bigg\{\frac{\mu(t,\bm{S})}{\bar{p}_{T|\bm{S}}(t|\bm{S})}\Bigg[\frac{1}{3}\cdot \frac{\partial^3}{\partial t^3} p_{T|\bm{S}}(t|\bm{S}) - \frac{\partial^2}{\partial t^2}p_{T|\bm{S}}(t|\bm{S}) \cdot \frac{\partial}{\partial t}\log \bar{p}_{T|\bm{S}}(t|\bm{S}) \\
		&\hspace{15mm} + 2\frac{\partial}{\partial t} p_{T|\bm{S}}(t|\bm{S})\left[\frac{\partial}{\partial t} \log \bar{p}_{T|\bm{S}}(t|\bm{S})\right]^2 \\
		& \hspace{15mm}+ \frac{6\frac{\partial}{\partial t} \log \bar{p}_{T|\bm{S}}(t|\bm{S}) \cdot \frac{\partial^2}{\partial t^2} \bar{p}_{T|\bm{S}}(t|\bm{S}) - \frac{\partial^3}{\partial t^3} \bar{p}_{T|\bm{S}}(t|\bm{S}) - 3 \frac{\partial}{\partial t} p_{T|\bm{S}}(t|\bm{S}) \cdot \frac{\partial^2}{\partial t^2} \bar{p}_{T|\bm{S}}(t|\bm{S})}{3 \bar{p}_{T|\bm{S}}(t|\bm{S})}\Bigg] \Bigg\}\\
		&\quad + \frac{\kappa_4}{2\kappa_2} \cdot \mathbb{E}_{\bm{S}}\Bigg\{\frac{\left[\frac{\kappa_4}{\kappa_2}\cdot \beta(t,\bm{S}) - \bar{\beta}(t,\bm{s})\right]}{\bar{p}_{T|\bm{S}}(t|\bm{S})} \Bigg[\frac{\partial^2}{\partial t^2} p_{T|\bm{S}}(t|\bm{S}) - 2\frac{\partial}{\partial t} p_{T|\bm{S}}(t|\bm{S})\cdot \frac{\partial}{\partial t} \log \bar{p}_{T|\bm{S}}(t|\bm{S}) \\
		&\hspace{15mm} + 2p_{T|\bm{S}}(t|\bm{S}) \cdot \left[\frac{\partial}{\partial t} \log \bar{p}_{T|\bm{S}}(t|\bm{S})\right]^2 - \frac{p_{T|\bm{S}}(t|\bm{S})\cdot \frac{\partial^2}{\partial t^2} \bar{p}_{T|\bm{S}}(t|\bm{S})}{\bar{p}_{T|\bm{S}}(t|\bm{S})}\Bigg]\Bigg\}\\
		&\quad + \frac{\kappa_4}{2\kappa_2} \cdot \mathbb{E}_{\bm{S}}\left[\frac{\frac{\partial}{\partial t} p_{T|\bm{S}}(t|\bm{S}) \cdot \frac{\partial^2}{\partial t^2} \mu(t,\bm{S})}{\bar{p}_{T|\bm{S}}(t|\bm{S})} - \frac{p_{T|\bm{S}}(t|\bm{S}) \cdot \frac{\partial}{\partial t} \bar{p}_{T|\bm{S}}(t|\bm{S}) \cdot \frac{\partial^2}{\partial t^2} \mu(t,\bm{S})}{\bar{p}_{T|\bm{S}}^2(t|\bm{S})} + \frac{p_{T|\bm{S}}(t|\bm{S})\cdot \frac{\partial^3}{\partial t^3} \mu(t,\bm{S})}{3\bar{p}_{T|\bm{S}}(t|\bm{S})}\right].
	\end{align*}
	
    When $\bar{p}_{T|\bm{S}} = p_{T|\bm{S}}$, we have that
	$$\mathbb{E}_{\bm{S}}\left\{\frac{\mu(t,\bm{S})\cdot p_{T|\bm{S}}(t|\bm{S}) }{\bar{p}_{T|\bm{S}}(t|\bm{S})} \cdot \frac{\partial}{\partial t}\log \left[\frac{p_{T|\bm{S}}(t|\bm{S})}{\bar{p}_{T|\bm{S}}(t|\bm{S})}\right]\right\} + \mathbb{E}_{\bm{S}}\left\{\left[\beta(t,\bm{S}) - \bar{\beta}(t,\bm{S})\right]\left[\frac{p_{T|\bm{S}}(t|\bm{S})}{\bar{p}_{T|\bm{S}}(t|\bm{S})} - 1\right]\right\} = 0$$
	and $\tilde{B}_{\theta}(t) = \frac{\kappa_4}{6\kappa_2} \cdot \mathbb{E}_{\bm{S}}\left[\frac{\partial^3}{\partial t^3} \mu(t,\bm{S})\right]$. In this case, the dominating bias term is $h^2\tilde{B}_{\theta}(t)$, which tends to 0 as $h\to 0$ and $n\to \infty$. 
	
	However, when $\bar{\beta}=\beta$ (and $\bar{\mu}=\mu$), the dominating bias is equal to 
	$$\mathbb{E}_{\bm{S}}\left\{\frac{\mu(t,\bm{S})\cdot p_{T|\bm{S}}(t|\bm{S}) }{\bar{p}_{T|\bm{S}}(t|\bm{S})} \cdot \frac{\partial}{\partial t}\log \left[\frac{p_{T|\bm{S}}(t|\bm{S})}{\bar{p}_{T|\bm{S}}(t|\bm{S})}\right]\right\} + h^2\tilde{B}_{\theta}(t),$$ 
	which is not necessarily 0. 
	
	This also shows that the naive AIPW estimator \eqref{theta_AIPW} is not doubly robust.
\end{proof}

\section{Proof of \autoref{thm:nonp_eff}}
\label{app:nonp_eff_proof}

\begin{customthm}{2}[Efficient influence function]
Suppose that Assumptions~\ref{assump:positivity}, \ref{assump:reg_diff}, and \ref{assump:reg_kernel} hold for the nonparametric model containing $\P$. For any fixed bandwidth $h>0$ and $t\in \mathcal{T}$, the efficient influence function of $\varpi_{h,t}(\P)$ relative to this model is given by
\begin{align*}
	&\int_{\mathcal{T}} \frac{\mu(t_1,\bm{S})\left(\frac{t_1-t}{h}\right)  K\left(\frac{t_1-t}{h}\right)}{h^2 \cdot \kappa_2} \, dt_1 + \frac{\left[Y - \mu(T,\bm{S})\right] \left(\frac{T-t}{h}\right) K\left(\frac{T-t}{h}\right)}{h^2 \cdot \kappa_2\cdot p_{T|\bm{S}}(T|\bm{S})} - \varpi_{h,t}(\P) \\
	&= \beta(t,\bm{S}) + \frac{\left[Y - \mu(T,\bm{S})\right] \left(\frac{T-t}{h}\right) K\left(\frac{T-t}{h}\right)}{h^2 \cdot \kappa_2\cdot p_{T|\bm{S}}(T|\bm{S})} - \varpi_{h,t}(\P) + h^2 \cdot \tilde{B}_{\theta}(t),
\end{align*}
where $\tilde{B}_{\theta}(t) = \frac{\kappa_4}{6\kappa_2} \cdot \frac{\partial^3}{\partial t^3} \mu(t,\bm{s}) + o(h)$. Under the setup of \autoref{thm:theta_pos}, the resulting estimator of $\theta(t)$ can be written as:
$$\hat{\theta}_{\mathrm{DR,2}}(t) = \frac{1}{nh}\sum_{i=1}^n \left\{ \frac{\left(\frac{T_i-t}{h}\right)K\left(\frac{T_i-t}{h}\right) }{h\cdot \kappa_2\cdot \hat{p}_{T|\bm{S}}(T_i|\bm{S}_i)} \left[Y_i - \hat{\mu}(T_i,\bm{S}_i)
	\right]+ h\cdot \hat{\beta}(t,\bm{S}_i) \right\},$$
which has the same doubly robust properties and asymptotic variance as our proposed one $\hat{\theta}_{\mathrm{DR}}(t)$.
\end{customthm}

\begin{proof}[Proof of \autoref{thm:nonp_eff}]
We start with a parametric submodel \citep{VDV1998,kennedy2024semiparametric} with a family of density functions defined as:
$$p_{\epsilon}(y,t,\bm{s}) = p(y,t,\bm{s}) \left[1 + \epsilon \cdot g(y,t,\bm{s})\right],$$
where $g:\mathcal{Y}\times \mathcal{T}\times \mathcal{S} \to \mathbb{R}$ is a mean-zero function $\int g(y,t,\bm{s})\cdot p(y,t,\bm{s}) \,dydtd\bm{s} = 0$ such that $p_{\epsilon}(y,t,\bm{s}) \geq 0$ for any $\epsilon\in \mathbb{R}$ close to 0. We will also suppress the subscripts for density functions, such as denoting $p_{T|\bm{S}}(t|\bm{s})$ by $p(t|\bm{s})$. The efficient influence function for $\varpi_{h,t}(\P)$ is the mean-zero function that coincides with the canonical gradient $\lim\limits_{\epsilon \to 0} \frac{\varpi_{h,t}(\P_{\epsilon}) - \varpi_{h,t}(\P)}{\epsilon}$. Under this parametric submodel, we know that
$$\varpi_{h,t}(\P_{\epsilon}) = \int_{\mathcal{S}} \int_{\mathcal{T}} \int_{\mathcal{Y}} \frac{y\left(\frac{t_1-t}{h}\right) K\left(\frac{t_1-t}{h}\right)}{h^2\cdot \kappa_2\cdot  p_{\epsilon}(t_1|\bm{s})}\cdot p_{\epsilon}(y,t_1,\bm{s})\, dy dt_1 d\bm{s} \quad \text{ with } \quad p_{\epsilon}(t_1|\bm{s}) = \frac{p_{\epsilon}(t_1,\bm{s})}{p_{\epsilon}(\bm{s})}.$$
Then, we have that
\begin{align}
\label{diff3}
\begin{split}
	&\varpi_{h,t}(\P_{\epsilon}) - \varpi_{h,t}(\P) \\
	&= \int_{\mathcal{S}} \int_{\mathcal{T}} \int_{\mathcal{Y}} \frac{y\left(\frac{t_1-t}{h}\right) K\left(\frac{t_1-t}{h}\right)}{h^2\cdot \kappa_2\cdot  p_{\epsilon}(t_1|\bm{s})}\cdot p_{\epsilon}(y,t_1,\bm{s})\, dy dt_1 d\bm{s} - \int_{\mathcal{S}} \int_{\mathcal{T}} \int_{\mathcal{Y}} \frac{y\left(\frac{t_1-t}{h}\right) K\left(\frac{t_1-t}{h}\right)}{h^2\cdot \kappa_2\cdot  p(t_1|\bm{s})}\cdot p(y,t_1,\bm{s})\, dy dt_1 d\bm{s} \\
	&= \underbrace{\int_{\mathcal{S}} \int_{\mathcal{T}} \int_{\mathcal{Y}} \frac{y\left(\frac{t_1-t}{h}\right) K\left(\frac{t_1-t}{h}\right)}{h^2\cdot \kappa_2}\cdot p_{\epsilon}(y,t_1,\bm{s}) \left[\frac{1}{p_{\epsilon}(t_1|\bm{s})} - \frac{1}{p(t_1|\bm{s})}\right] dy dt_1 d\bm{s}}_{\textbf{Term I}} \\
	&\quad + \underbrace{\int_{\mathcal{S}} \int_{\mathcal{T}} \int_{\mathcal{Y}} \frac{y\left(\frac{t_1-t}{h}\right) K\left(\frac{t_1-t}{h}\right)}{h^2\cdot \kappa_2\cdot  p(t_1|\bm{s})}\left[p_{\epsilon}(y,t_1,\bm{s}) - p(y,t_1,\bm{s})\right] dy dt_1 d\bm{s}}_{\textbf{Term II}}.
\end{split}
\end{align}
Now, by Taylor's expansion, we compute that
\begin{align*}
&\frac{1}{p_{\epsilon}(t_1|\bm{s})} - \frac{1}{p(t_1|\bm{s})} \\
&= \frac{p_{\epsilon}(\bm{s})}{p_{\epsilon}(t_1,\bm{s})} - \frac{1}{p(t_1|\bm{s})} \\
&= \frac{p(\bm{s}) + \epsilon \int_{\mathcal{T}} \int_{\mathcal{Y}} p(y_2,t_2,\bm{s}) \cdot g(y_2,t_2,\bm{s}) dy_2 dt_2}{p(t_1,\bm{s}) + \epsilon \int_{\mathcal{Y}} p(y_2,t_1,\bm{s}) \cdot g(y_2,t_1,\bm{s})\, dy_2} - \frac{1}{p(t_1|\bm{s})}\\
&= \frac{p(\bm{s})}{p(t_1,\bm{s})} - \frac{1}{p(t_1|\bm{s})} + \epsilon\cdot \frac{\int_{\mathcal{T}} \int_{\mathcal{Y}} p(y_2,\bm{s},t_2) \cdot g(y_2,\bm{s},t_2) dy_2 dt_2}{p(t_1,\bm{s})}\\
&\quad - \epsilon\cdot \frac{p(\bm{s})}{p(t_1,\bm{s})^2} \int_{\mathcal{Y}} p(y_2,\bm{s},t_1)\cdot g(y_2,\bm{s},t_1)\, dy_2 + o(\epsilon)\\
&= \epsilon\left[\frac{\int_{\mathcal{T}} \int_{\mathcal{Y}} p(y_2,t_2,\bm{s}) \cdot g(y_2,t_2,\bm{s}) dy_2 dt_2}{p(t_1,\bm{s})} - \frac{p(\bm{s})}{p(t_1,\bm{s})^2} \int_{\mathcal{Y}} p(y_2,t_1,\bm{s})\cdot g(y_2,t_1,\bm{s})\, dy_2 \right] + o(\epsilon).
\end{align*}
Thus, we know that
\begin{align*}
	\textbf{Term I} &= \int_{\mathcal{S}} \int_{\mathcal{T}} \int_{\mathcal{Y}} \frac{y\left(\frac{t_1-t}{h}\right) K\left(\frac{t_1-t}{h}\right)}{h^2\cdot \kappa_2}\cdot p(y,t_1,\bm{s}) \left[1+ O(\epsilon)\right] \left[\frac{1}{p_{\epsilon}(t_1|\bm{s})} - \frac{1}{p(t_1|\bm{s})}\right] dy dt_1 d\bm{s} \\
	&= \epsilon \int_{\mathcal{S}} \int_{\mathcal{T}} \int_{\mathcal{Y}} \frac{y\left(\frac{t_1-t}{h}\right) K\left(\frac{t_1-t}{h}\right)\cdot p(y,t_1,\bm{s})}{h^2\cdot \kappa_2 \cdot p(t_1,\bm{s})} \left[\int_{\mathcal{T}} \int_{\mathcal{Y}} p(y_2,t_2,\bm{s}) \cdot g(y_2,t_2,\bm{s}) \,dy_2 dt_2 \right] dy dt_1 d\bm{s} \\
	&\quad - \epsilon \int_{\mathcal{S}} \int_{\mathcal{T}} \int_{\mathcal{Y}} \frac{y\left(\frac{t_1-t}{h}\right) K\left(\frac{t_1-t}{h}\right)\cdot p(y,t_1,\bm{s})}{h^2\cdot \kappa_2 \cdot p(t_1,\bm{s})\cdot p(t_1|\bm{s})} \left[\int_{\mathcal{Y}} p(y_2,t_1,\bm{s}) \cdot g(y_2,t_1,\bm{s}) \,dy_2 \right] dy dt_1 d\bm{s} + o(\epsilon) \\
	&= \epsilon \int_{\mathcal{S}} \int_{\mathcal{T}} \frac{\mu(t_1,\bm{s})\left(\frac{t_1-t}{h}\right) K\left(\frac{t_1-t}{h}\right)}{h^2\cdot \kappa_2} \left[\int_{\mathcal{T}} \int_{\mathcal{Y}} p(y_2,t_2,\bm{s}) \cdot g(y_2,t_2,\bm{s}) \, dt_2 dy_2 \right] dt_1 d\bm{s} \\
	&\quad - \epsilon \int_{\mathcal{S}} \int_{\mathcal{T}} \int_{\mathcal{Y}} \frac{\mu(t_1,\bm{s})\left(\frac{t_1-t}{h}\right) K\left(\frac{t_1-t}{h}\right)}{h^2\cdot \kappa_2 \cdot p(t_1|\bm{s})}\cdot p(y_2,t_1,\bm{s}) \cdot g(y_2,t_1,\bm{s}) \,dy_2 dt_1 d\bm{s} + o(\epsilon) \\
	&= \epsilon\left\langle \int_{\mathcal{T}} \frac{\mu(t_1,\bm{S})\left(\frac{t_1-t}{h}\right) K\left(\frac{t_1-t}{h}\right)}{h^2\cdot \kappa_2} dt_1 - \frac{\mu(T,\bm{S})\left(\frac{T-t}{h}\right) K\left(\frac{T-t}{h}\right)}{h^2\cdot \kappa_2 \cdot p(T|\bm{S})},\, g(Y,\bm{S},T) \right\rangle_{L_2(\P)} + o(\epsilon),
\end{align*}
where $\big\langle f(Y,T,\bm{S}),\, g(Y,\bm{S},T) \big\rangle_{L_2(\P)} =  \int_{\mathcal{S}} \int_{\mathcal{T}} \int_{\mathcal{Y}} f(y,t,\bm{s})\cdot g(y,t,\bm{s})\cdot p(y,t,\bm{s})\, dy dt d\bm{s}$.
Furthermore, 
\begin{align*}
	\textbf{Term II} &= \epsilon \int_{\mathcal{S}} \int_{\mathcal{T}} \int_{\mathcal{Y}} \frac{y\left(\frac{t_1-t}{h}\right) K\left(\frac{t_1-t}{h}\right)}{h^2\cdot \kappa_2\cdot  p(t_1|\bm{s})} p(y,t_1,\bm{s})\cdot g(y,t_1,\bm{s})\, dy dt_1 d\bm{s} \\
	&= \epsilon\left\langle \frac{Y\left(\frac{T-t}{h}\right) K\left(\frac{T-t}{h}\right)}{h^2\cdot \kappa_2\cdot  p(T|\bm{S})},\, g(Y,\bm{S},T) \right\rangle_{L_2(\P)}.
\end{align*}
Thus, we know that
\begin{align*}
	&\lim_{\epsilon\to 0}\frac{\varpi_{h,t}(\P_{\epsilon}) - \varpi_{h,t}(\P)}{\epsilon} \\
	&= \left\langle \int_{\mathcal{T}} \frac{\mu(t_1,\bm{S})\left(\frac{t_1-t}{h}\right)  K\left(\frac{t_1-t}{h}\right)}{h^2 \cdot \kappa_2} \, dt_1 + \frac{\left[Y - \mu(T,\bm{S})\right] \left(\frac{T-t}{h}\right) K\left(\frac{T-t}{h}\right)}{h^2 \cdot \kappa_2\cdot p(T|\bm{S})},\, g(Y,\bm{S},T) \right\rangle_{L_2(\P)}.
\end{align*}
In addition, $\left\langle C_0, g(Y,\bm{S},T) \right\rangle_{L_2(\P)} =0$ for any constant $C_0$ due to the constraint $\int_{\mathcal{S}} \int_{\mathcal{T}} \int_{\mathcal{Y}} g(y,t,\bm{s}) \cdot p(y,t,\bm{s})\, dy dt d\bm{s}=0$. Hence, 
\begin{align*}
&\lim_{\epsilon\to 0}\frac{\varpi_{h,t}(\P_{\epsilon}) - \varpi_{h,t}(\P)}{\epsilon} \\
&= \left\langle \int_{\mathcal{T}} \frac{\mu(t_1,\bm{S})\left(\frac{t_1-t}{h}\right)  K\left(\frac{t_1-t}{h}\right)}{h^2 \cdot \kappa_2} \, dt_1 + \frac{\left[Y - \mu(T,\bm{S})\right] \left(\frac{T-t}{h}\right) K\left(\frac{T-t}{h}\right)}{h^2 \cdot \kappa_2\cdot p(T|\bm{S})} + C_0,\, g(Y,\bm{S},T) \right\rangle_{L_2(\P)},
\end{align*}
and the efficient influence function is
$$L^*(Y,\bm{S},T) = \int_{\mathcal{T}} \frac{\mu(t_1,\bm{S})\left(\frac{t_1-t}{h}\right)  K\left(\frac{t_1-t}{h}\right)}{h^2 \cdot \kappa_2} \, dt_1 + \frac{\left[Y - \mu(T,\bm{S})\right] \left(\frac{T-t}{h}\right) K\left(\frac{T-t}{h}\right)}{h^2 \cdot \kappa_2\cdot p(T|\bm{S})} + C_0.$$
To determine $C_0$, we use the fact that $\mathbb{E}\left[L^*(Y,\bm{S},T) \right] =0$, which implies that 
\begin{align*}
	C_0 &= \mathbb{E}\left\{-\frac{\left[Y - \mu(T,\bm{S})\right] \left(\frac{T-t}{h}\right) K\left(\frac{T-t}{h}\right)}{h^2 \cdot \kappa_2\cdot p(T|\bm{S})} - \int_{\mathcal{T}} \frac{\mu(t_1,\bm{S})\left(\frac{t_1-t}{h}\right)  K\left(\frac{t_1-t}{h}\right)}{h^2 \cdot \kappa_2} \, dt_1 \right\}\\
	&= 0 - \int_{\mathcal{S}} \int_{\mathcal{T}} \frac{\mu(t_1,\bm{s})\left(\frac{t_1-t}{h}\right) K\left(\frac{t_1-t}{h}\right)}{h^2 \cdot \kappa_2 \cdot p(t_1|\bm{s})} \cdot p(t_1,\bm{s}) \, dt_1 d\bm{s}\\
	&= - \int_{\mathcal{S}} \int_{\mathcal{T}} \int_{\mathcal{Y}} \frac{y \left(\frac{t_1-t}{h}\right)  K\left(\frac{t_1-t}{h}\right)}{h^2\cdot \kappa_2\cdot  p(t_1|\bm{s})} \cdot p(y,t_1,\bm{s}) \,dy dt_1 d\bm{s}\\
	&= - \E\left[\frac{Y \left(\frac{T-t}{h}\right) K\left(\frac{T-t}{h}\right)}{h^2 \cdot \kappa_2\cdot p(T|\bm{S})} \right] \\
	&= -\varpi_{h,t}(\P).
\end{align*}
Therefore, the final efficient influence function for $\varpi_{h,t}(\P) = \E\left[\frac{Y \left(\frac{T-t}{h}\right) K\left(\frac{T-t}{h}\right)}{h^2 \cdot \kappa_2 \cdot p(T|\bm{S})} \right]$ for any fixed $h>0$ and $t\in \mathcal{T}$ is 
\begin{align*}
L^*(Y,\bm{S},T) &= \int_{\mathcal{T}} \frac{\mu(t_1,\bm{S})\left(\frac{t_1-t}{h}\right)  K\left(\frac{t_1-t}{h}\right)}{h^2 \cdot \kappa_2} \, dt_1 + \frac{\left[Y - \mu(T,\bm{S})\right] \left(\frac{T-t}{h}\right) K\left(\frac{T-t}{h}\right)}{h^2 \cdot \kappa_2\cdot p(T|\bm{S})} - \varpi_{h,t}(\P) \\
&= \int_{\mathbb{R}} \frac{\mu(t+uh,\bm{S}) \cdot u  \cdot K\left(u\right)}{h \cdot \kappa_2} \, du + \frac{\left[Y - \mu(T,\bm{S})\right] \left(\frac{T-t}{h}\right) K\left(\frac{T-t}{h}\right)}{h^2 \cdot \kappa_2\cdot p(T|\bm{S})} - \varpi_{h,t}(\P) \\
&= \int_{\mathbb{R}} \frac{u \cdot K\left(u\right)}{h \cdot \kappa_2} \left[\mu(t,\bm{s}) + uh\cdot \frac{\partial}{\partial t} \mu(t,\bm{s}) + \frac{u^2h^2}{2} \cdot \frac{\partial^2}{\partial t^2} \mu(t,\bm{s}) + \frac{u^3h^3}{6} \cdot \frac{\partial^3}{\partial t^3} \mu(t,\bm{s}) + o(h^3)\right] du \\
&\quad + \frac{\left[Y - \mu(T,\bm{S})\right] \left(\frac{T-t}{h}\right) K\left(\frac{T-t}{h}\right)}{h^2 \cdot \kappa_2\cdot p(T|\bm{S})} - \varpi_{h,t}(\P) \\
&= \frac{\partial}{\partial t}\mu(t,\bm{s}) + \frac{\left[Y - \mu(T,\bm{S})\right] \left(\frac{T-t}{h}\right) K\left(\frac{T-t}{h}\right)}{h^2 \cdot \kappa_2\cdot p(T|\bm{S})} - \varpi_{h,t}(\P) + \frac{h^2 \kappa_4}{6\kappa_2} \cdot \frac{\partial^3}{\partial t^3} \mu(t,\bm{s}) + o(h^2)\\
&=\beta(t,\bm{s}) + \frac{\left[Y - \mu(T,\bm{S})\right] \left(\frac{T-t}{h}\right) K\left(\frac{T-t}{h}\right)}{h^2 \cdot \kappa_2\cdot p(T|\bm{S})} - \varpi_{h,t}(\P) + h^2 \cdot \tilde{B}_{\theta}(t),
\end{align*}
where $\tilde{B}_{\theta}(t) = \frac{\kappa_4}{6\kappa_2} \cdot \frac{\partial^3}{\partial t^3} \mu(t,\bm{s}) + o(h)$. This gives rise to another doubly robust estimator of $\theta(t)$ as:
$$\hat{\theta}_{\mathrm{DR,2}}(t) = \frac{1}{nh}\sum_{i=1}^n \left\{ \frac{\left(\frac{T_i-t}{h}\right)K\left(\frac{T_i-t}{h}\right) }{h\cdot \kappa_2\cdot \hat{p}_{T|\bm{S}}(T_i|\bm{S}_i)} \left[Y_i - \hat{\mu}(T_i,\bm{S}_i)
\right]+ h\cdot \hat{\beta}(t,\bm{S}_i) \right\}.$$
One can follow our arguments in the proof of \autoref{thm:theta_pos} to derive the same asymptotic properties of $\hat{\theta}_{\mathrm{DR,2}}(t)$. In particular, we will demonstrate that $\hat{\theta}_{\mathrm{DR}}(t)$ in \eqref{theta_DR} and $\hat{\theta}_{\mathrm{DR,2}}(t)$ above have the same asymptotic variance.

By our arguments in the proof of \autoref{thm:theta_pos}, we know that
\begin{align*}
	\sqrt{nh^3} \left[\hat{\theta}_{\mathrm{DR,2}}(t) -\theta(t)\right] = \underbrace{\frac{1}{\sqrt{n}} \sum_{i=1}^n \frac{\left(\frac{T_i-t}{h}\right) K\left(\frac{T_i-t}{h}\right)}{\sqrt{h} \cdot \kappa_2\cdot \bar{p}_{T|\bm{S}}(T_i|\bm{S}_i)} \cdot \left[Y_i - \bar{\mu}(T_i,\bm{S}_i)\right]}_{\textbf{Term III}} + o_P(1)
\end{align*}
and
\begin{align*}
	\sqrt{nh^3} \left[\hat{\theta}_{\mathrm{DR}}(t) -\theta(t)\right] = \underbrace{\frac{1}{\sqrt{n}} \sum_{i=1}^n \frac{\left(\frac{T_i-t}{h}\right) K\left(\frac{T_i-t}{h}\right)}{\sqrt{h} \cdot \kappa_2\cdot \bar{p}_{T|\bm{S}}(T_i|\bm{S}_i)} \cdot \left[Y_i - \bar{\mu}(t,\bm{S}_i) - (T_i-t)\cdot \bar{\beta}(t,\bm{S}_i)\right]}_{\textbf{Term IV}} + o_P(1),
\end{align*}
where $\bar{\mu}(t,\bm{s})$, $\bar{\beta}(t,\bm{s})$, and $\bar{p}_{T|\bm{S}}(t|\bm{s})$ are fixed bounded functions to which $\hat{\mu}(t,\bm{s})$, $\hat{\beta}(t,\bm{s})$ and $\hat{p}_{T|\bm{S}}(t|\bm{s})$ converge. Thus, the asymptotic variances of $\hat{\theta}_{\mathrm{DR,2}}(t)$ and $\hat{\theta}_{\mathrm{DR}}(t)$ are given by $\mathrm{Var}\left(\textbf{Term III}\right)$ and $\mathrm{Var}\left(\textbf{Term IV}\right)$, respectively. We compute their difference as follows:
\begin{align*}
	&\mathrm{Var}\left(\textbf{Term III}\right) - \mathrm{Var}\left(\textbf{Term IV}\right) \\
	&=\mathrm{Var}\left\{\frac{\left(\frac{T_i-t}{h}\right) K\left(\frac{T_i-t}{h}\right)}{\sqrt{h} \cdot \kappa_2\cdot \bar{p}_{T|\bm{S}}(T_i|\bm{S}_i)} \cdot \left[Y_i - \bar{\mu}(T_i,\bm{S}_i)\right] \right\} \\
	&\quad - \mathrm{Var}\left\{\frac{\left(\frac{T_i-t}{h}\right) K\left(\frac{T_i-t}{h}\right)}{\sqrt{h} \cdot \kappa_2\cdot \bar{p}_{T|\bm{S}}(T_i|\bm{S}_i)} \cdot \left[Y_i - \bar{\mu}(t,\bm{S}_i) - (T_i-t)\cdot \bar{\beta}(t,\bm{S}_i)\right] \right\}\\
	&= \mathbb{E}\left\{\int_{\mathcal{T}} \frac{\left(\frac{t_1-t}{h}\right)^2 K^2\left(\frac{t_1-t}{h}\right)}{h \cdot \kappa_2^2\cdot \bar{p}_{T|\bm{S}}^2(t_1|\bm{S})} \cdot \left[\mu(t_1,\bm{S}) - \bar{\mu}(t_1,\bm{S})\right]^2 p_{T|\bm{S}}(t_1|\bm{S})\, dt_1 \right\} \\
	&\quad - \left[\mathbb{E}\left\{\int_{\mathcal{T}} \frac{\left(\frac{t_1-t}{h}\right) K\left(\frac{t_1-t}{h}\right)}{\sqrt{h} \cdot \kappa_2\cdot \bar{p}_{T|\bm{S}}(t_1|\bm{S})} \cdot \left[\mu(t_1,\bm{S}) - \bar{\mu}(t_1,\bm{S})\right] p_{T|\bm{S}}(t_1|\bm{S})\, dt_1 \right\}  \right]^2 \\
	&\quad - \mathbb{E}\left\{\int_{\mathcal{T}} \frac{\left(\frac{t_1-t}{h}\right)^2 K^2\left(\frac{t_1-t}{h}\right)}{h \cdot \kappa_2^2\cdot \bar{p}_{T|\bm{S}}^2(t_1|\bm{S})} \cdot \left[\mu(t_1,\bm{S}) - \bar{\mu}(t,\bm{S}) - (t_1-t)\cdot \bar{\beta}(t,\bm{S})\right]^2 p_{T|\bm{S}}(t_1|\bm{S})\, dt_1 \right\} \\
	&\quad + \left[\mathbb{E}\left\{\int_{\mathcal{T}} \frac{\left(\frac{t_1-t}{h}\right) K\left(\frac{t_1-t}{h}\right)}{\sqrt{h} \cdot \kappa_2\cdot \bar{p}_{T|\bm{S}}(t_1|\bm{S})} \cdot \left[\mu(t_1,\bm{S}) - \bar{\mu}(t,\bm{S}) - (t_1-t)\cdot \bar{\beta}(t,\bm{S})\right] p_{T|\bm{S}}(t_1|\bm{S})\, dt_1 \right\}  \right]^2 \\
	&= \mathbb{E}\left\{\int_{\mathbb{R}} \frac{u^2\cdot K^2(u)}{\kappa_2^2\cdot \bar{p}_{T|\bm{S}}^2(t+uh|\bm{S})} \cdot \left[\mu(t+uh,\bm{S}) - \bar{\mu}(t+uh,\bm{S})\right]^2 p_{T|\bm{S}}(t+uh|\bm{S})\, du \right\} \\
	&\quad - \left[\mathbb{E}\left\{\int_{\mathbb{R}} \frac{\sqrt{h}\cdot u \cdot K(u)}{\kappa_2\cdot \bar{p}_{T|\bm{S}}(t+uh|\bm{S})} \cdot \left[\mu(t+uh,\bm{S}) - \bar{\mu}(t+uh,\bm{S})\right] p_{T|\bm{S}}(t+uh|\bm{S})\, du \right\} \right]^2 \\
	&\quad - \mathbb{E}\left\{\int_{\mathbb{R}} \frac{u^2\cdot K^2\left(u\right)}{\kappa_2^2\cdot \bar{p}_{T|\bm{S}}^2(t+uh|\bm{S})} \cdot \left[\mu(t+uh,\bm{S}) - \bar{\mu}(t,\bm{S}) - uh\cdot \bar{\beta}(t,\bm{S})\right]^2 p_{T|\bm{S}}(t+uh|\bm{S})\, du \right\} \\
	&\quad + \left[\mathbb{E}\left\{\int_{\mathbb{R}} \frac{\sqrt{h} \cdot u\cdot K\left(u\right)}{\kappa_2\cdot \bar{p}_{T|\bm{S}}(t+uh|\bm{S})} \cdot \left[\mu(t+uh,\bm{S}) - \bar{\mu}(t,\bm{S}) - uh\cdot \bar{\beta}(t,\bm{S})\right] p_{T|\bm{S}}(t+uh|\bm{S})\, du \right\}  \right]^2 \\
	&\stackrel{\text{(i)}}{=} \mathbb{E}\Bigg\{\int_{\mathbb{R}} \frac{u^2\cdot K^2(u)}{\kappa_2^2\cdot \bar{p}_{T|\bm{S}}^2(t+uh|\bm{S})} \left[\mu(t+uh,\bm{S}) - \bar{\mu}(t,\bm{S}) - uh\cdot \bar{\beta}(t,\bm{S}) - \frac{u^2h^2}{2}\cdot \frac{\partial^2}{\partial t^2} \bar{\mu}(t,\bm{S}) + O(h^3)\right]^2 \\
	&\quad \quad \times p_{T|\bm{S}}(t+uh|\bm{S})\, du \Bigg\} \\
	&\quad - \mathbb{E}\left\{\int_{\mathbb{R}} \frac{u^2\cdot K^2\left(u\right)}{\kappa_2^2\cdot \bar{p}_{T|\bm{S}}^2(t+uh|\bm{S})}  \left[\mu(t+uh,\bm{S}) - \bar{\mu}(t,\bm{S}) - uh\cdot \bar{\beta}(t,\bm{S})\right]^2 p_{T|\bm{S}}(t+uh|\bm{S})\, du \right\} + O(h^3)\\
	&= -\mathbb{E}\left\{\int_{\mathbb{R}} \frac{h^2\cdot u^4\cdot K^2\left(u\right)}{\kappa_2^2\cdot \bar{p}_{T|\bm{S}}^2(t+uh|\bm{S})}  \left[\mu(t+uh,\bm{S}) - \bar{\mu}(t,\bm{S}) - uh\cdot \bar{\beta}(t,\bm{S})\right] \cdot \frac{\partial^2}{\partial t^2} \bar{\mu}(t,\bm{S}) \cdot p_{T|\bm{S}}(t+uh|\bm{S})\, du \right\} \\
	&\quad + O(h^3)\\
	&= O(h^2),
\end{align*}
where (i) uses the second-order kernel property with $\int_{\mathbb{R}} u\cdot K(u)\, du=0$ to argue that
\begin{align*}
	&\mathbb{E}\left\{\int_{\mathbb{R}} \frac{\sqrt{h} \cdot u\cdot K\left(u\right)}{\kappa_2\cdot \bar{p}_{T|\bm{S}}(t+uh|\bm{S})} \cdot \left[\mu(t+uh,\bm{S}) - \bar{\mu}(t,\bm{S}) - uh\cdot \bar{\beta}(t,\bm{S})\right] p_{T|\bm{S}}(t+uh|\bm{S})\, du \right\}\\
	&\asymp \mathbb{E}\left\{\int_{\mathbb{R}} \frac{\sqrt{h}\cdot u \cdot K(u)}{\kappa_2\cdot \bar{p}_{T|\bm{S}}(t+uh|\bm{S})} \cdot \left[\mu(t+uh,\bm{S}) - \bar{\mu}(t+uh,\bm{S})\right] p_{T|\bm{S}}(t+uh|\bm{S})\, du \right\} \\
	&= O\left(\sqrt{h^3}\right).
\end{align*}
As a result, the asymptotic variances of $\hat{\theta}_{\mathrm{DR,2}}(t)$ and $\hat{\theta}_{\mathrm{DR}}(t)$ are indeed the same when $n\to \infty$ and $h\to 0$.

Notice that if the regression model is correctly specified, \emph{i.e.}, $\bar{\mu}=\mu, \bar{\beta}=\beta$, then the differences between the asymptotic variances of $\hat{\theta}_{\mathrm{DR,2}}(t)$ and $\hat{\theta}_{\mathrm{DR}}(t)$ will be of smaller order as $O(h^4)$.
\end{proof}

\section{Proofs of Propositions~\ref{prop:IPW_bias_m}, \ref{prop:theta_IPW_new}, and \ref{prop:theta_IPW_new2}}
\label{app:IPW_bias_proof}

\subsection{Proof of Proposition~\ref{prop:IPW_bias_m}}
\label{app:IPW_bias_proof_m}

\begin{customprop}{3}[Inconsistency of IPW estimators]
Suppose that Assumptions~\ref{assump:id_cond}(a-c), \ref{assump:reg_diff}, \ref{assump:den_diff}(c), and \ref{assump:reg_kernel}(a-b) hold under the additive confounding model \eqref{add_conf_model}. Assume also that when the bandwidth $h$ is small, the Lebesgue measure of the symmetric difference set satisfies 
$$\left|\mathcal{S}(t+uh)\triangle \mathcal{S}(t)\right| = \left|\left[\mathcal{S}(t+uh)\setminus \mathcal{S}(t)\right] \cup \left[\mathcal{S}(t)\setminus \mathcal{S}(t+uh)\right]\right|=o(1)$$
for any $t\in \mathcal{T}$ and $u\in \mathbb{R}$. Then, when $h$ is small, the expectation of $\tilde{m}_{\mathrm{IPW}}(t)$ in \eqref{IPW_oracle} is given by
$$\E\left[\tilde{m}_{\mathrm{IPW}}(t)\right] = \bar{m}(t)\cdot \rho(t) + \omega(t) + o(1),$$
where $\rho(t) = \P\left(\bm{S}\in \mathcal{S}(t)\right)$ and $\omega(t) = \E\left[\eta(\bm{S}) \mathbbm{1}_{\{\bm{S}\in \mathcal{S}(t)\}}\right]$. If, in addition, there exists a constant $A_h>0$ depending on $h$ such that
\begin{equation*}
	\int_{\mathbb{R}}\E\left\{\left[\bar{m}(t) +\eta(\bm{S})\right]\left[\mathbbm{1}_{\{\bm{S}\in \mathcal{S}(t+uh)\setminus \mathcal{S}(t)\}} - \mathbbm{1}_{\{\bm{S}\in \mathcal{S}(t)\setminus \mathcal{S}(t+uh)\}}\right] \right\} u\cdot K(u)\, du=O(A_h)
\end{equation*}
for any $t\in \mathcal{T}$ and $u\in \mathbb{R}$ when $h$ is small, then the expectation of $\tilde{\theta}_{\mathrm{IPW}}(t)$ in \eqref{IPW_forms} is given by
$$\E\left[\tilde{\theta}_{\mathrm{IPW}}(t)\right]=\bar{m}'(t)\cdot \rho(t) + O\left(\frac{A_h}{h}\right).$$
\end{customprop}

\begin{proof}[Proof of Proposition~\ref{prop:IPW_bias_m}]
	Notice that the conditional density support $\mathcal{S}(t)$ depends on $t$ when the positivity condition is violated. Under the additive confounding model \eqref{add_conf_model}, we have that
	\begin{align*}
		\E\left[\tilde{m}_{\mathrm{IPW}}(t)\right] &= \E\left[\frac{1}{h}\frac{K\left(\frac{T_i-t}{h}\right)}{ p_{T|\bm{S}}(T_i|\bm{S}_i)}\cdot Y_i\right]\\
		&=\E\left\{\frac{1}{h}\frac{K\left(\frac{T_i-t}{h}\right)}{ p_{T|\bm{S}}(T_i|\bm{S}_i)}\left[\bar{m}(T_i)+\eta(\bm{S}_i)\right]\right\}\\
		&= \int_{\mathcal{T}} \int_{\mathcal{S}(\tilde{t})}\frac{1}{h}\frac{K(\frac{\tilde{t}-t}{h})}{ p_{T|\bm{S}}(\tilde{t}|\bm{s})}  \left[\bar{m}(\tilde{t})+\eta(\bm{s})\right] p(\tilde{t},\bm{s})\,d\bm{s}d\tilde{t}\\
		& =\int_{\mathcal{T}} \int_{\mathcal{S}(\tilde{t})}\frac{1}{h}K\left(\frac{\tilde{t}-t}{h}\right) \left[\bar{m}(\tilde{t})+\eta(\bm{s})\right] p_S(\bm{s})\,d\bm{s} d\tilde{t}\\ 
		&\stackrel{\text{(i)}}{=} \int_{\mathbb{R}} \int_{\mathcal{S}(t+uh)} K\left(u\right) \left[\bar{m}(t+uh)+\eta(\bm{s})\right] p_S(\bm{s})\,d\bm{s} du\\ 
		& \stackrel{\text{(ii)}}{=} \int_{\mathbb{R}} \int_{\mathcal{S}(t+uh)} K(u) \left[\bar{m}(t)+\eta(\bm{s}) + uh\cdot \bar{m}'(t) + O(h^2)\right] p_S(\bm{s})\,d\bm{s} du \\
		&= \int_{\mathbb{R}} \int_{\mathcal{S}(t)} K(u) \left[\bar{m}(t)+\eta(\bm{s}) + uh\cdot \bar{m}'(t) + O(h^2)\right] p_S(\bm{s})\,d\bm{s} du \\
		&\quad + \int_{\mathbb{R}} \int_{\mathcal{S}(t+uh)\setminus \mathcal{S}(t)} K(u) \left[\bar{m}(t)+\eta(\bm{s}) + uh\cdot \bar{m}'(t) + O(h^2)\right] p_S(\bm{s})\,d\bm{s} du\\
		& \quad - \int_{\mathbb{R}} \int_{\mathcal{S}(t) \setminus \mathcal{S}(t+uh)} K(u) \left[\bar{m}(t)+\eta(\bm{s}) + uh\cdot \bar{m}'(t) + O(h^2)\right] p_S(\bm{s})\,d\bm{s} du\\
		&\stackrel{\text{(iii)}}{=} \int_{\mathcal{S}(t)} \left[\bar{m}(t)+\eta(\bm{s})\right] \cdot p_S(\bm{s}) d\bm{s} + O(h^2) +o(1)\\
		&= \bar{m}(t) \cdot \P(\bm{S}\in\mathcal{S}(t)) + \E\left[\eta(\bm{S})\cdot  \mathbbm{1}_{\{\bm{S}\in \mathcal{S}(t)\}}\right] +o(1) \\
		&= \bar{m}(t)\cdot \rho(t) + \omega(t) + o(1),
	\end{align*}
	where (i) follows from a change of variable $u=\frac{\tilde{t}-t}{h}$, (ii) is due to Taylor's expansion under Assumption~\ref{assump:reg_diff}, and (iii) relies on our assumption on the Lebesgue measure $\left|\mathcal{S}(t+uh)\triangle \mathcal{S}(t)\right| = o(1)$ for any $t\in \mathcal{T}$ and $u\in \mathbb{R}$.
	
	Similarly, we can also derive that
	\begin{align*}
		&\mathbb{E}\left[\tilde{\theta}_{\mathrm{IPW}}(t) \right]\\ &=\E\left[\frac{\left[\bar{m}(T_i) +\eta(\bm{S}_i)\right] \left(\frac{T_i-t}{h}\right) K\left(\frac{T_i-t}{h}\right)}{h^2\cdot \kappa_2\cdot p_{T|\bm{S}}(T_i|\bm{S}_i)}\right]\\
		&= \int_{\mathcal{T}} \int_{\mathcal{S}(t_1)} \frac{\left[\bar{m}(t_1) + \eta(\bm{s}_1)\right]\left(\frac{t_1-t}{h}\right) K\left(\frac{t_1-t}{h}\right)}{h^2\cdot \kappa_2} \cdot p_S(\bm{s}_1) \, d\bm{s}_1 dt_1\\
		&\stackrel{\text{(iv)}}{=} \int_{\mathbb{R}} \int_{\mathcal{S}(t+uh)} \frac{\left[\bar{m}(t+uh) + \eta(\bm{s}_1)\right]u\cdot K(u)}{h\cdot \kappa_2} \cdot p_S(\bm{s}_1) \, d\bm{s}_1 du\\
		&=\int_{\mathbb{R}} \int_{\mathcal{S}(t)} \frac{\left[\bar{m}(t+uh) + \eta(\bm{s}_1)\right]u\cdot K(u)}{h\cdot \kappa_2} \cdot p_S(\bm{s}_1) \, d\bm{s}_1 du\\
		&\quad + \int_{\mathbb{R}}\left[\int_{\mathcal{S}(t+uh) \setminus \mathcal{S}(t)} - \int_{\mathcal{S}(t) \setminus \mathcal{S}(t+uh)}\right] \frac{\left[\bar{m}(t+uh) + \eta(\bm{s}_1)\right]u\cdot K(u)}{h\cdot \kappa_2} \cdot p_S(\bm{s}_1) \, d\bm{s}_1 du\\
		&\stackrel{\text{(v)}}{=}\int_{\mathbb{R}} \int_{\mathcal{S}(t)} \frac{\left[\bar{m}(t) + \eta(\bm{s}_1) + uh\cdot \bar{m}'(t) + O(h^2)\right]u\cdot K(u)}{h\cdot \kappa_2} \cdot p_S(\bm{s}_1) \, d\bm{s}_1 du\\
		&\quad + \int_{\mathbb{R}}\left[\int_{\mathcal{S}(t+uh) \setminus \mathcal{S}(t)} - \int_{\mathcal{S}(t) \setminus \mathcal{S}(t+uh)}\right] \frac{\left[\bar{m}(t) + \eta(\bm{s}_1) + uh\cdot \bar{m}'(t) + O(h^2)\right]u\cdot K(u)}{h\cdot \kappa_2} \cdot p_S(\bm{s}_1) \, d\bm{s}_1 du\\
		&\stackrel{\text{(vi)}}{=} \int_{\mathcal{S}(t)} \bar{m}'(t)\cdot p_S(\bm{s}_1)\, d\bm{s}_1 + O(h) \\
		&\quad + \frac{1}{h\cdot \kappa_2} \int_{\mathbb{R}}\E\left\{\left[\bar{m}(t) +\eta(\bm{S})\right]\left[\mathbbm{1}_{\{\bm{S}\in \mathcal{S}(t+uh)\setminus \mathcal{S}(t)\}} - \mathbbm{1}_{\{\bm{S}\in \mathcal{S}(t)\setminus \mathcal{S}(t+uh)\}}\right] \right\} u\cdot K(u)\, du\\
		&\quad + \frac{1}{\kappa_2}\int_{\mathbb{R}}\E\left\{\bar{m}'(t)\left[\mathbbm{1}_{\{\bm{S}\in \mathcal{S}(t+uh)\setminus \mathcal{S}(t)\}} - \mathbbm{1}_{\{\bm{S}\in \mathcal{S}(t)\setminus \mathcal{S}(t+uh)\}}\right] \right\} u^2 K(u)\, du\\
		&\stackrel{\text{(vii)}}{=} \bar{m}'(t)\cdot \rho(t) + O\left(\frac{A_h}{h}\right),
	\end{align*}
    where (iv) follows from a change of variable $u=\frac{\tilde{t}-t}{h}$, (v) is due to Taylor's expansion under Assumption~\ref{assump:reg_diff}, (vi) leverages the property of the second-order kernel $K$, and (vii) relies on our assumptions on the Lebesgue measure $\left|\mathcal{S}(t+uh)\triangle \mathcal{S}(t)\right| = o(1)$ and \eqref{quadratic_cond} for any $t\in \mathcal{T}$ and $u\in \mathbb{R}$. The results follow.
\end{proof}

\subsection{Proof of Proposition~\ref{prop:theta_IPW_new}}
\label{app:IPW_new_proof}

\begin{customprop}{4}
Suppose that Assumptions~\ref{assump:id_cond}(a-c), \ref{assump:reg_diff}, \ref{assump:den_diff}(c), and \ref{assump:reg_kernel}(a-b) hold under the additive confounding model \eqref{add_conf_model}. Then, when the bandwidth $h$ is small, the expectation of the modified IPW quantity \eqref{IPW_quantity} is given by
\begin{align*}
	&\mathbb{E}\left[\tilde{\Xi}_t(Y,T,\bm{S})\right] = \bar{m}'(t) + O(h^2) \\
	&\quad\quad + \int_{\mathbb{R}}\E\left\{\left[\bar{m}(t+uh) +\eta(\bm{S})\right]\left[\mathbbm{1}_{\{\bm{S}\in \mathcal{S}(t+uh)\setminus \mathcal{S}(t)\}} - \mathbbm{1}_{\{\bm{S}\in \mathcal{S}(t)\setminus \mathcal{S}(t+uh)\}}\right] \Big| T=t\right\} u\cdot K(u)\, du.
\end{align*}
\end{customprop}

\begin{proof}[Proof of Proposition~\ref{prop:theta_IPW_new}]
Recall that the conditional density support $\mathcal{S}(t)$ depends on $t$ when the positivity condition fails to hold. Under the additive confounding model \eqref{add_conf_model}, we have that
\begin{align*}
	&\mathbb{E}\left[\tilde{\Xi}_t(Y,T,\bm{S})\right] \\
	&= \int_{\mathcal{T}} \int_{\mathcal{S}(t_1)} \frac{\left[\bar{m}(t_1) + \eta(\bm{s}_1)\right] \left(\frac{t_1-t}{h}\right) K\left(\frac{t_1-t}{h}\right) p_{\bm{S}|T}(\bm{s}_1|t)}{h^2 \cdot \kappa_2} \, d\bm{s}_1 dt_1\\
	&= \int_{\mathcal{T}} \int_{\mathcal{S}(t)} \frac{\left[\bar{m}(t_1) + \eta(\bm{s}_1)\right] \left(\frac{t_1-t}{h}\right) K\left(\frac{t_1-t}{h}\right) p_{\bm{S}|T}(\bm{s}_1|t)}{h^2 \cdot \kappa_2} \, d\bm{s}_1 dt_1 \\
	&\quad + \left\{\left[\int_{\mathcal{T}} \int_{\mathcal{S}(t_1) \setminus \mathcal{S}(t)} - \int_{\mathcal{T}} \int_{\mathcal{S}(t) \setminus \mathcal{S}(t_1)} \right] \frac{\left[\bar{m}(t_1) + \eta(\bm{s}_1)\right] \left(\frac{t_1-t}{h}\right) K\left(\frac{t_1-t}{h}\right) p_{\bm{S}|T}(\bm{s}_1|t)}{h^2 \cdot \kappa_2} \, d\bm{s}_1 dt_1 \right\}\\
	&\stackrel{\text{(i)}}{=} \int_{\mathbb{R}} \int_{\mathcal{S}(t)} \frac{\left[\bar{m}(t+uh) + \eta(\bm{s}_1)\right] u \cdot K(u) \cdot p_{\bm{S}|T}(\bm{s}_1|t)}{\kappa_2 \cdot h} \,d\bm{s}_1 du\\
	&\quad + \left\{\int_{\mathbb{R}} \left[\int_{\mathcal{S}(t+uh) \setminus \mathcal{S}(t)} - \int_{\mathcal{S}(t) \setminus \mathcal{S}(t+uh)} \right] \frac{\left[\bar{m}(t+uh) + \eta(\bm{s}_1)\right] u\cdot K(u) p_{\bm{S}|T}(\bm{s}_1|t)}{h \cdot \kappa_2} \, d\bm{s}_1 du \right\}\\
	&\stackrel{\text{(ii)}}{=} \int_{\mathbb{R}} \int_{\mathcal{S}(t)} \frac{\left[\bar{m}(t) +\eta(\bm{s}_1) + uh \cdot \bar{m}'(t) + \frac{u^2h^2}{2} \cdot \bar{m}''(t) + O\left(h^3\right)\right] u \cdot K(u) \cdot p(\bm{s}_1|t)}{\kappa_2 \cdot h} \,d\bm{s}_1 du\\
	&\quad + \left\{\int_{\mathbb{R}}\left[\int_{\mathcal{S}(t+uh) \setminus \mathcal{S}(t)} - \int_{\mathcal{S}(t) \setminus \mathcal{S}(t+uh)} \right] \frac{\left[\bar{m}(t+uh) + \eta(\bm{s}_1)\right] u\cdot K(u) p_{\bm{S}|T}(\bm{s}_1|t)}{h \cdot \kappa_2} \, d\bm{s}_1 du \right\}\\
	&\stackrel{\text{(iii)}}{=} \bar{m}'(t) + O(h^2)\\
	&\quad + \left\{\int_{\mathbb{R}}\left[ \int_{\mathcal{S}(t+uh) \setminus \mathcal{S}(t)} - \int_{\mathcal{S}(t) \setminus \mathcal{S}(t+uh)} \right] \frac{\left[\bar{m}(t+uh) + \eta(\bm{s}_1)\right] u\cdot K(u) p_{\bm{S}|T}(\bm{s}_1|t)}{h \cdot \kappa_2} \, d\bm{s}_1 du \right\}\\
	&= \bar{m}'(t) + O(h^2) \\
	&\quad + \int_{\mathbb{R}}\E\left\{\left[\bar{m}(t+uh) +\eta(\bm{S})\right]\left[\mathbbm{1}_{\{\bm{S}\in \mathcal{S}(t+uh)\setminus \mathcal{S}(t)\}} - \mathbbm{1}_{\{\bm{S}\in \mathcal{S}(t)\setminus \mathcal{S}(t+uh)\}}\right] \Big| T=t\right\} u\cdot K(u)\, du,
\end{align*}
where (i) follows from a change of variable $u=\frac{t_1-t}{h}$,  (ii) is due to Taylor's expansion, and (iii) utilizes the fact that $K$ is a second-order kernel function by Assumption~\ref{assump:reg_kernel}(b). The result thus follows.\\

As stated in Remark~\ref{remark:IPW_modified}, we should evaluate the conditional density $p_{\bm{S}|T}$ at the (query) point $(t,\bm{S})$ instead of the sample point $(T,\bm{S})$ in the modified IPW quantity \eqref{IPW_quantity}. To see this, we consider the alternative modified IPW quantity 
$$\tilde{\Xi}_{t,2}(Y,T,\bm{S}) = \frac{Y\left(\frac{T-t}{h}\right) K\left(\frac{T-t}{h}\right) p_{\bm{S}|T}(\bm{S}|T)}{h^2\cdot \kappa_2 \cdot p(T,\bm{S})}$$
and compute its expectation as:
\begin{align*}
	&\mathbb{E}\left[\tilde{\Xi}_{t,2}(Y,T,\bm{S}) \right] \\
	&= \int_{\mathcal{T}} \int_{\mathcal{S}(t_1)} \frac{\left[\bar{m}(t_1) + \eta(\bm{s}_1)\right]\left(\frac{t_1-t}{h}\right) K\left(\frac{t_1-t}{h}\right) \cdot p_{\bm{S}|T}(\bm{s}_1|t_1)}{h^2\cdot \kappa_2} \,d\bm{s}_1 dt_1 \\
	&\stackrel{\text{(iv)}}{=} \int_{\mathbb{R}} \int_{\mathcal{S}(t+uh)} \frac{\left[\bar{m}(t+uh) + \eta(\bm{s}_1)\right] u \cdot K\left(u\right) \cdot p_{\bm{S}|T}(\bm{s}_1|t+uh)}{h\cdot \kappa_2} \,d\bm{s}_1 du \\
	&\stackrel{\text{(v)}}{=} \int_{\mathbb{R}} \int_{\mathcal{S}(t+uh)} \frac{\left[\bar{m}(t) + \eta(\bm{s}_1) + uh\cdot \bar{m}'(t) + \frac{u^2h^2}{2}\cdot \bar{m}''(t) + O(h^3)\right] u \cdot K\left(u\right) \cdot p_{\bm{S}|T}(\bm{s}_1|t+uh)}{h\cdot \kappa_2} \,d\bm{s}_1 du \\
	&\stackrel{\text{(vi)}}{=} \bar{m}'(t) + O(h^2) + \int_{\mathbb{R}} \int_{\mathcal{S}(t+uh)} \frac{ \eta(\bm{s}_1) \cdot u \cdot K\left(u\right) \cdot p_{\bm{S}|T}(\bm{s}_1|t+uh)}{h\cdot \kappa_2}\, d\bm{s}_1 du \\
	&= \bar{m}'(t) + O(h^2) + \int_{\mathcal{T}} \frac{\left(\frac{t_1-t}{h}\right) K\left(\frac{t_1-t}{h}\right)}{h^2\cdot \kappa_2} \cdot \mathbb{E}\left[\eta(\bm{S})|T = t_1\right]\, dt_1,
\end{align*}
where (i) follows from a change of variable $u=\frac{t_1-t}{h}$,  (ii) is due to Taylor's expansion, and (iii) utilizes the fact that $K$ is a second-order kernel function by Assumption~\ref{assump:reg_kernel}(b). Hence, it is unclear how we can eliminate the additional bias term $\int_{\mathcal{T}} \frac{\left(\frac{t_1-t}{h}\right) K\left(\frac{t_1-t}{h}\right)}{h^2\cdot \kappa_2} \cdot \mathbb{E}\left[\eta(\bm{S})|T = t_1\right]\, dt_1$ from $\mathbb{E}\left[\tilde{\Xi}_{t,2}(Y,T,\bm{S}) \right]$ unless $\E\left[\eta(\bm{S})|T=t\right]=0$, which is not true in general.
\end{proof}

\subsection{Proof of Proposition~\ref{prop:theta_IPW_new2}}
\label{app:IPW_new_proof2}

\begin{customprop}{5}
Suppose that Assumptions~\ref{assump:id_cond}(a-c), \ref{assump:reg_diff}, \ref{assump:den_diff}(c), \ref{assump:reg_kernel}(a-b), and \ref{assump:cond_support_smooth} hold under the additive confounding model \eqref{add_conf_model}. Then, when the bandwidth $h>0$ is small, the expectation of the modified IPW quantity \eqref{IPW_quantity2} is given by
$$\mathbb{E}\left[\tilde{\Xi}_{t,\zeta}(Y,T,\bm{S}) \right] = \bar{m}'(t) + \frac{h^2\kappa_4}{6\kappa_2}\cdot \bar{m}^{(3)}(t) + O\left(h^3\right).$$
\end{customprop}

\begin{proof}[Proof of Proposition~\ref{prop:theta_IPW_new2}]
	Since the kernel function $K$ has a compact support under Assumption~\ref{assump:reg_kernel}, we can assume, without loss of generality, that it is supported on $[-1,1]$. Then, when $h < \frac{\zeta}{A_0}$ (or, equivalently $A_0|uh| < \zeta$ for any $u\in [-1,1]$), we have that
	$$\mathcal{S}(t) \ominus \zeta \subset \mathcal{S}(t) \ominus \left(A_0|uh|\right) \subset S(t+uh) \quad \text{ and } \quad \mathcal{L}_{\zeta}(t) \subset \mathcal{L}_{A_0|\delta|}(t) \subset \mathcal{S}(t+\delta)$$
	by Assumption~\ref{assump:cond_support_smooth}. Then, under model \eqref{add_conf_model} and the support shrinking approach for $p_{\zeta}(\bm{s}|t)$, the expectation of $\tilde{\Xi}_{t,\zeta}(Y,T,\bm{S})$ is given by
	\begin{align*}
		&\mathbb{E}\left[\tilde{\Xi}_{t,\zeta}(Y,T,\bm{S})\right] \\
		&= \int_{\mathcal{T}} \int_{\mathcal{S}(t_1)} \frac{\left[\bar{m}(t_1) + \eta(\bm{s}_1)\right] \left(\frac{t_1-t}{h}\right)K\left(\frac{t_1-t}{h}\right) \cdot p_{\zeta}(\bm{s}_1|t)}{\kappa_2 \cdot h} \,d\bm{s}_1 du \\
		&= \int_{\mathbb{R}} \int_{\mathcal{S}(t+uh)} \frac{\left[\bar{m}(t+uh) + \eta(\bm{s}_1)\right]\cdot u \cdot K(u) \cdot p_{\zeta}(\bm{s}_1|t)}{\kappa_2 \cdot h} \,d\bm{s}_1 du \\
		&\stackrel{\text{(i)}}{=} \int_{\mathbb{R}} \int_{\mathcal{S}(t) \ominus \zeta} \frac{\left[\bar{m}(t+uh) + \eta(\bm{s}_1)\right] \cdot u \cdot K(u) \cdot p_{\zeta}(\bm{s}_1|t)}{\kappa_2 \cdot h} \,d\bm{s}_1 du\\
		&= \int_{\mathbb{R}} \int_{\mathcal{S}(t) \ominus \zeta} \frac{\left[\bar{m}(t) + \eta(\bm{s}_1) + uh \cdot \bar{m}'(t) + \frac{u^2h^2}{2} \cdot \bar{m}''(t) + \frac{u^3h^3}{6} \cdot \bar{m}^{(3)}(t) + O\left(h^4\right)\right] u\cdot K(u) \cdot p_{\zeta}(\bm{s}_1|t)}{\kappa_2 \cdot h} \,d\bm{s}_1 du\\
		&\stackrel{\text{(ii)}}{=} \underbrace{\int_{\mathbb{R}} \int_{\mathcal{S}(t) \ominus \zeta} \frac{\left[\bar{m}(t) + \eta(\bm{s}_1) \right]\cdot u \cdot K(u) \cdot p_{\zeta}(\bm{s}_1|t)}{\kappa_2 \cdot h} \,d\bm{s}_1 du}_{=0} + \int_{\mathbb{R}} \int_{\mathcal{S}(t) \ominus \zeta} \frac{\bar{m}'(t)\cdot u^2 K(u) \cdot p_{\zeta}(\bm{s}_1|t)}{\kappa_2} \,d\bm{s}_1 du\\
		&\quad + \underbrace{\int_{\mathbb{R}} \int_{\mathcal{S}(t) \ominus \zeta} \frac{ h\cdot \bar{m}''(t) \cdot u^3 K(u) \cdot p_{\zeta}(\bm{s}_1|t)}{2\kappa_2} \,d\bm{s}_1 du}_{=0} + \int_{\mathbb{R}} \int_{\mathcal{S}(t) \ominus \zeta} \frac{ h^2\cdot \bar{m}^{(3)}(t) \cdot u^4 K(u) \cdot p_{\zeta}(\bm{s}_1|t)}{6\kappa_2} \,d\bm{s}_1 du \\
		&\quad + O\left(h^3\right)\\
		&= \bar{m}'(t) + \frac{h^2\kappa_4}{6\kappa_2}\cdot \bar{m}^{(3)}(t) + O\left(h^3\right)\\
		&= \mathbb{E}\left[\frac{\partial}{\partial t}\mu(t,\bm{S}) \Big| T=t\right] + \frac{h^2\kappa_4}{6\kappa_2}\cdot \bar{m}^{(3)}(t) + O\left(h^3\right),
	\end{align*}
	where (i) uses the definition of the $\zeta$-interior conditional density \eqref{interior_cond_density} and (ii) follows from the fact that $K$ is a second-order kernel function under Assumption~\ref{assump:reg_kernel}(b). The result under the level set approach for $p_{\zeta}(\bm{s}|t)$ follows from almost identical arguments.
\end{proof}

\section{Proof of \autoref{thm:theta_nopos}}
\label{app:theta_nopos_proof}

\begin{customthm}{6}[Consistency of estimating $\theta(t)$ without positivity]
	Suppose that Assumptions~\ref{assump:id_cond}(a-c), \ref{assump:reg_diff}, \ref{assump:den_diff}, \ref{assump:reg_kernel}, and \ref{assump:cond_support_smooth} hold under the additive confounding model \eqref{add_conf_model}, and the support $\mathcal{S} \subset \mathbb{R}^d$ of the marginal density $p_{\bm{S}}$ is compact. In addition, $\hat{\mu},\hat{\beta}, \hat{p}_{\zeta}, \hat{p}$ are constructed on a data sample independent of $\{(Y_i,T_i,\bm{S}_i)\}_{i=1}^n$. For any fixed $t\in \mathcal{T}$, we let $\bar{\mu}(t,\bm{s})$, $\bar{\beta}(t,\bm{s})$, $\bar{p}_{\zeta}(\bm{s}|t)$, and $\bar{p}(t,\bm{s})$ be fixed bounded functions to which $\hat{\mu}(t,\bm{s})$, $\hat{\beta}(t,\bm{s})$, $\hat{p}_{\zeta}(\bm{s}|t)$, and $\hat{p}(t,\bm{s})$ converge under the rates of convergence as:
	\begin{align*}
		&\norm{\hat{\beta}(t,\bm{S}) - \bar{\beta}(t,\bm{S})}_{L_2} = O_P\left(\Upsilon_{3,n}\right), \quad \sup_{\bm{s}\in \mathcal{S}}\left|\hat{F}_{\bm{S}|T}(\bm{s}|t) - F_{\bm{S}|T}(\bm{s}|t)\right| = O_P\left(\Upsilon_{4,n}\right), \\
		& \norm{\hat{p}_{\zeta}(\bm{S}|t) - \bar{p}_{\zeta}(\bm{S}|t)}_{L_2}=O_P\left(\Upsilon_{5,n}\right), \quad \text{ and } \quad \sup_{|u-t|\leq h}\norm{\hat{p}(u,\bm{S}) - \bar{p}(u,\bm{S})}_{L_2}=O_P\left(\Upsilon_{6,n}\right),
	\end{align*}
	where $\Upsilon_{3,n},\Upsilon_{4,n},\Upsilon_{5,n}, \Upsilon_{6,n}\to 0$ as $n\to \infty$. Then, as $h\to 0$ and $nh^3\to \infty$, we have that
	\begin{align*}
		&\hat{\theta}_{\mathrm{C,RA}}(t) - \theta(t) = O_P\left(\Upsilon_{3,n}+\Upsilon_{4,n}+\norm{\bar{\beta}(t,\bm{S}) - \beta(t,\bm{S})}_{L_2}\right), \\ 
		&\hat{\theta}_{\mathrm{C,IPW}}(t) - \theta(t) = O(h^2) + O_P\left(\sqrt{\frac{1}{nh^3}} + \Upsilon_{5,n} + \Upsilon_{6,n} + \sup_{|u-t|\leq h} \norm{\bar{p}(u,\bm{S}) - p(u,\bm{S})}_{L_2}\right). 
	\end{align*}
	If, in addition, we assume that 
	\begin{enumerate}[label=(\alph*)]
		\item $\bar{p},\bar{p}_{\zeta}$ satisfy Assumptions~\ref{assump:den_diff} and \ref{assump:cond_support_smooth} as well as $\sqrt{nh^3}\cdot \Upsilon_{5,n}=o(1)$;
		\item either (i) ``$\,\bar{\mu}=\mu$ and $\bar{\beta}=\beta$'' or (ii) ``$\,\bar{p} = p$'';
		\item {\small $\sqrt{nh} \left[\norm{\hat{p}_{\zeta}(\bm{S}|t) - \bar{p}_{\zeta}(\bm{S}|t)}_{L_2} + \sup\limits_{|u-t|\leq h} \norm{\hat{p}(u,\bm{S}) - p(u,\bm{S})}_{L_2} \right]\left[\norm{\hat{\mu}(t,\bm{S}) - \mu(t,\bm{S})}_{L_2} + h \norm{\hat{\beta}(t,\bm{S}) - \beta(t,\bm{S})}_{L_2}\right] = o_P(1)$},
	\end{enumerate}
	then 
	\begin{align*}
		&\sqrt{nh^3}\left[\hat{\theta}_{\mathrm{C,DR}}(t) - \theta(t)\right] \\
		&= \frac{1}{\sqrt{n}} \sum_{i=1}^n \left\{\phi_{C,h,t}\left(Y_i,T_i,\bm{S}_i;\bar{\mu}, \bar{\beta}, \bar{p},\bar{p}_{\zeta}\right) + \sqrt{h^3}\left[\int \bar{\beta}(t,\bm{s})\cdot \bar{p}_{\zeta}(\bm{s}|t)\, d\bm{s} - \theta(t)\right] \right\} + o_P(1)
	\end{align*}
	when $nh^7\to c_3$ for some finite number $c_3\geq 0$, where $$\phi_{C,h,t}\left(Y,T,\bm{S}; \bar{\mu},\bar{\beta}, \bar{p},\bar{p}_{\zeta}\right) = \frac{\left(\frac{T-t}{h}\right) K\left(\frac{T-t}{h}\right) \cdot \bar{p}_{\zeta}(\bm{S}|t)}{\sqrt{h}\cdot \kappa_2\cdot \bar{p}(T,\bm{S})}\cdot \left[Y - \bar{\mu}(t,\bm{S}) - (T-t)\cdot \bar{\beta}(t,\bm{S})\right].$$
	Furthermore, 
	$$\sqrt{nh^3}\left[\hat{\theta}_{\mathrm{C,DR}}(t) - \theta(t) - h^2 B_{C,\theta}(t)\right] \stackrel{d}{\to} \mathcal{N}\left(0,V_{C,\theta}(t)\right)$$
	with $V_{C,\theta}(t) = \mathbb{E}\left[\phi_{C,h,t}^2\left(Y,T,\bm{S};\bar{\mu}, \bar{\beta}, \bar{p},\bar{p}_{\zeta}\right)\right]$ and 
	\begin{align*}
		B_{C,\theta}(t) &= \begin{cases}
			\frac{\kappa_4}{6\kappa_2} \int \left\{\frac{3\frac{\partial}{\partial t} p(t,\bm{s}) \cdot \bar{m}''(t) + p(t,\bm{s})\left[\bar{m}^{(3)}(t) - 3\frac{\partial}{\partial t} \log\bar{p}(t,\bm{s}) \cdot \bar{m}''(t) \right]}{\bar{p}(t,\bm{s})} \right\} \bar{p}_{\zeta}(\bm{s}|t)\, d\bm{s}& \text{ when } \bar{\mu}=\mu \text{ and } \bar{\beta}=\beta,\\
			\frac{\kappa_4}{6\kappa_2} \cdot \bar{m}^{(3)}(t) & \text{ when } \bar{p} = p.
		\end{cases}
	\end{align*}
\end{customthm}

\begin{proof}[Proof of \autoref{thm:theta_nopos}]
	We derive the rates of convergence of $\hat{\theta}_{\mathrm{C,RA}}(t)$ given by \eqref{theta_RA_corrected} and $\hat{\theta}_{\mathrm{IPW}}(t)$ given by \eqref{theta_IPW_bnd_corrected} in \autoref{app:theta_RA_nopos} and \autoref{app:theta_IPW_nopos}, respectively. We also prove the asymptotic linearity, double robustness, and asymptotic normality of $\hat{\theta}_{\mathrm{C,DR}}(t)$ given by \eqref{theta_DR_bnd_corrected} in \autoref{app:theta_DR_nopos}.
	
	\subsection{Rate of Convergence of $\hat{\theta}_{\mathrm{C,RA}}(t)$}
	\label{app:theta_RA_nopos}
	
	Firstly, we derive the rate of convergence for $\hat{\theta}_{\mathrm{C,RA}}(t)$ in \eqref{theta_RA_corrected}. By Proposition~\ref{prop:id_additive}, we know that $\theta(t)=\bar{m}'(t) = \mathbb{E}\left[\frac{\partial}{\partial t}\mu(T,\bm{S}) \Big| T=t\right] = \mathbb{E}\left[\beta(T,\bm{S}) \big| T=t\right]$ and 
	\begin{align*}
		&\hat{\theta}_{\mathrm{C,RA}}(t) - \theta(t) \\
		&= \int \hat{\beta}(t,\bm{s}) \, d\hat{F}_{\bm{S}|T}(\bm{s}|t) - \int \beta(t,\bm{s})\, dF_{\bm{S}|T}(\bm{s}|t)\\
		&= \underbrace{\int \left[\hat{\beta}(t,\bm{s}) - \bar{\beta}(t,\bm{s}) \right] d\hat{F}_{\bm{S}|T}(\bm{s}|t)}_{\textbf{Term I}} + \underbrace{\int \bar{\beta}(t,\bm{s})\, d\left[\hat{F}_{\bm{S}|T}(\bm{s}|t) - F_{\bm{S}|T}(\bm{s}|t)\right]}_{\textbf{Term II}} + \underbrace{\int \left[\bar{\beta}(t,\bm{s}) - \beta(t,\bm{s}) \right] d F_{\bm{S}|T}(\bm{s}|t)}_{\textbf{Term III}}.
	\end{align*}
	
	$\bullet$ {\bf Term I:} By Markov’s inequality (and H\"older's inequality), we know that
	$$\left|\hat{\beta}(t,\bm{S}_1) - \bar{\beta}(t,\bm{S}_1)\right| = O_P\left(\mathbb{E}\left|\hat{\beta}(t,\bm{S}_1) - \bar{\beta}(t,\bm{S}_1)\right|\right) =O_P\left(\norm{\hat{\beta}(t,\bm{S}) - \bar{\beta}(t,\bm{S})}_{L_2}\right)=O_P\left(\Upsilon_{3,n}\right)$$
	for any random vector $\bm{S}_1$ supported on $\mathcal{S}\subset \mathbb{R}^d$. Thus, 
	$$\textbf{Term I} \leq \int \left|\hat{\beta}(t,\bm{s}) - \bar{\beta}(t,\bm{s}) \right| d\hat{F}_{\bm{S}|T}(\bm{s}|t) = O_P\left(\norm{\hat{\beta}(t,\bm{S}) - \bar{\beta}(t,\bm{S})}_{L_2}\right)=O_P\left(\Upsilon_{3,n}\right).$$
	
	$\bullet$ {\bf Term II:} By the compactness of $\mathcal{S}$ and the fact that $\mathcal{S}(t) \subset \mathcal{S}$, we know that the Lebesgue measure $|\mathcal{S}(t)|$ satisfies $|\mathcal{S}(t)|\leq |\mathcal{S}|<\infty$ for any $t\in \mathcal{T}$ and thus,
	\begin{align*}
		\textbf{Term II} &\leq \sup_{\bm{s}\in \mathcal{S}}\left|\bar{\beta}(t,\bm{s})\right| \cdot \norm{\hat{F}_{\bm{S}|T}(\cdot|t) - F_{\bm{S}|T}(\cdot |t)}_{\mathrm{TV}} \\
		&\leq \sup_{\bm{s}\in \mathcal{S}}\left|\bar{\beta}(t,\bm{s})\right| \cdot \sup_{\bm{s}\in \mathcal{S}}\left|\hat{F}_{\bm{S}|T}(\bm{s}|t) - F_{\bm{S}|T}(\bm{s}|t)\right| \cdot |\mathcal{S}|\\
		&=O_P(\Upsilon_{4,n})
	\end{align*}
    under Assumption~\ref{assump:reg_diff} and the condition that $\sup\limits_{\bm{s}\in \mathcal{S}}\left|\hat{F}_{\bm{S}|T}(\bm{s}|t) - F_{\bm{S}|T}(\bm{s}|t)\right| = O_P\left(\Upsilon_{4,n}\right)$, where $\norm{\hat{F}_{\bm{S}|T}(\cdot|t) - F_{\bm{S}|T}(\cdot |t)}_{\mathrm{TV}}$ is the total variation distance between the probability measures associated with $\hat{F}_{\bm{S}|T}(\cdot|t)$ and $F_{\bm{S}|T}(\cdot |t)$. Notice that $\hat{F}_{\bm{S}|T}(\cdot|t)$ can be constructed on the same data sample $\left\{(Y_i,T_i,\bm{S}_i)\right\}_{i=1}^n$.\\
	
	$\bullet$ {\bf Term III:} Similar to the argument for \textbf{Term I}, we have that
	$$\textbf{Term III} \leq \int \left|\bar{\beta}(t,\bm{s}) - \beta(t,\bm{s}) \right| d\hat{F}_{\bm{S}|T}(\bm{s}|t) = O_P\left(\norm{\bar{\beta}(t,\bm{S}) - \beta(t,\bm{S})}_{L_2}\right).$$
	
	In summary, we conclude that
	$$\hat{\theta}_{\mathrm{C,RA}}(t) - \theta(t) = O_P\left(\Upsilon_{3,n}+\Upsilon_{4,n}+\norm{\bar{\beta}(t,\bm{S}) - \beta(t,\bm{S})}_{L_2}\right).$$

	\subsection{Rate of Convergence of $\hat{\theta}_{\mathrm{C,IPW}}(t)$}
	\label{app:theta_IPW_nopos}
	
	Secondly, we derive the rate of convergence for $\hat{\theta}_{\mathrm{C,IPW}}(t)$ in \eqref{theta_IPW_bnd_corrected}. Recall from \eqref{IPW_quantity2} that
	\begin{align*}
		\hat{\theta}_{\mathrm{C,IPW}}(t) - \theta(t) 
		&= \mathbb{P}_n\left[\tilde{\Xi}_{t,\zeta}(Y,T,\bm{S}) \right] - \theta(t) + \hat{\theta}_{\mathrm{C,IPW}}(t) - \mathbb{P}_n\left[\tilde{\Xi}_{t,\zeta}(Y,T,\bm{S}) \right]\\
		&= \underbrace{\frac{1}{nh}\sum_{i=1}^n \frac{Y_i\left(\frac{T_i-t}{h}\right) K\left(\frac{T_i-t}{h}\right) \bar{p}_{\zeta}(\bm{S}_i|t)}{h\cdot \kappa_2 \cdot p(T_i,\bm{S}_i)} - \mathbb{E}\left[\beta(T,\bm{S}) \big|T=t\right]}_{\textbf{Term IV}} \\
		&\quad + \underbrace{\frac{1}{nh}\sum_{i=1}^n \frac{Y_i\left(\frac{T_i-t}{h}\right) K\left(\frac{T_i-t}{h}\right)}{h\cdot \kappa_2} \left[\frac{\hat{p}_{\zeta}(\bm{S}_i|t)}{\hat{p}(T_i,\bm{S}_i)} - \frac{\bar{p}_{\zeta}(\bm{S}_i|t)}{p(T_i,\bm{S}_i)}\right]}_{\textbf{Term V}}.
	\end{align*}
	We shall handle \textbf{Term IV} and \textbf{Term V} in \autoref{app:theta_nopos_term_IV} and \autoref{app:theta_nopos_term_V}, respectively.
	
	\subsubsection{Rate of Convergence of \textbf{Term IV} for $\hat{\theta}_{\mathrm{C,IPW}}(t)$}
	\label{app:theta_nopos_term_IV}
	
	We already computed in Proposition~\ref{prop:theta_IPW_new2} that 
	$$\mathbb{E}\left[\tilde{\Xi}_{t,\zeta}(Y,T,\bm{S}) \right] - \theta(t) = \frac{h^2\kappa_4}{6\kappa_2}\cdot \bar{m}^{(3)}(t) + O\left(h^3\right)$$
	under model \eqref{add_conf_model}. In particular, the above equality holds true even when we replace the true $\zeta$-interior conditional density $p_{\zeta}$ with the limiting $\zeta$-interior conditional density $\bar{p}_{\zeta}$ because $\bar{p}_{\zeta}$ also satisfies Assumptions~\ref{assump:den_diff} and \ref{assump:cond_support_smooth}. Additionally, we calculate the variance of $\mathbb{P}_n\left[\tilde{\Xi}_{t,\zeta}(Y,T,\bm{S})\right]$ under $\bar{p}_{\zeta}$ as:
	\begin{align*}
		&\mathrm{Var}\left\{\mathbb{P}_n\left[\tilde{\Xi}_{t,\zeta}(Y,T,\bm{S})\right]\right\} \\
		&=\frac{1}{nh^4\kappa_2^2}\cdot\mathrm{Var}\left[\frac{Y_i\left(\frac{T_i-t}{h}\right)K\left(\frac{T_i-t}{h}\right) \cdot \bar{p}_{\zeta}(\bm{S}_i|t)}{p(T_i,\bm{S}_i)}\right] \\
		&= \frac{1}{nh^4\kappa_2^2}\cdot\mathbb{E}\left[\frac{Y_i^2\left(\frac{T_i-t}{h}\right)^2 K^2\left(\frac{T_i-t}{h}\right) \cdot \bar{p}_{\zeta}^2(\bm{S}_i|t)}{p^2(T_i,\bm{S}_i)}\right] - \frac{1}{nh^4\kappa_2^2} \left\{\mathbb{E}\left[\frac{Y_i\left(\frac{T_i-t}{h}\right)K\left(\frac{T_i-t}{h}\right) \cdot \bar{p}_{\zeta}(\bm{S}_i|t)}{p(T_i,\bm{S}_i)}\right]\right\}^2\\
		&\stackrel{\text{(i)}}{=} \frac{1}{nh^4\kappa_2^2} \int_{\mathcal{S}\times \mathcal{T}} \frac{\left(\frac{t_1-t}{h}\right)^2K^2\left(\frac{t_1-t}{h}\right) \cdot \bar{p}_{\zeta}^2(\bm{s}_1|t) }{p(t_1,\bm{s}_1)} \cdot \left[\mu(t_1,\bm{s}_1)^2 +\sigma^2\right] \, dt_1 d\bm{s}_1 - \frac{\left\{\mathbb{E}\left[\beta(T,\bm{S})\big| T=t\right]\right\}^2}{n} \\
		&\quad + O\left(\frac{h^2}{n}\right)\\
		&\stackrel{\text{(ii)}}{=} \frac{1}{nh^3\kappa_2^2} \int_{\mathcal{S}} \int_{\mathbb{R}} \frac{u^2K^2(u)\cdot \bar{p}_{\zeta}^2(\bm{s}_1|t)}{p(t+uh,\bm{s}_1)} \cdot \left[\mu(t+uh,\bm{s}_1)^2 +\sigma^2\right] \, du d\bm{s}_1 + O\left(\frac{1}{n}\right) \\
		&\stackrel{\text{(iii)}}{=} \frac{1}{nh^3\kappa_2^2} \int_{\mathcal{S}} \int_{\mathbb{R}} \frac{u^2K^2(u) \cdot \bar{p}_{\zeta}^2(\bm{s}_1|t)}{p(t,\bm{s}_1) + uh \cdot \frac{\partial}{\partial t}p(t',\bm{s}_1)} \left[\mu(t,\bm{s}_1)^2 + 2uh\cdot \mu(t'',\bm{s}_1)\cdot \frac{\partial}{\partial t} \mu(t'',\bm{s}_1) + \sigma^2\right]\, du d\bm{s}_1 \\
		&\quad + O\left(\frac{1}{n}\right)\\
		&\stackrel{\text{(iv)}}{=} \frac{1}{nh^3\kappa_2^2} \int_{\mathcal{S}} \int_{\mathbb{R}} \frac{u^2 K^2(u) \cdot \bar{p}_{\zeta}^2(\bm{s}_1|t)}{p(t,\bm{s}_1)} \left[\mu(t,\bm{s}_1)^2 + \sigma^2\right] \, du d\bm{s}_1 + O\left(\frac{1}{n}\right)\\
		&\stackrel{\text{(v)}}{=} O\left(\frac{1}{nh^3}\right)
	\end{align*}
    with $\mu(t,\bm{s})=\bar{m}(t)+\eta(\bm{s})$ under model \eqref{add_conf_model}, where (i) utilizes our result in Proposition~\ref{prop:theta_IPW_new2} for $\mathbb{E}\left[\tilde{\Xi}_{t,\zeta}(Y,T,\bm{S}) \right]$, (ii) uses a change of variable $u = \frac{t_1-t}{h}$ and the boundedness of $\beta(t,\bm{s})$, (iii) applies the Taylor's expansion under Assumptions~\ref{assump:reg_diff} and \ref{assump:den_diff} with $t',t''$ being two points between $t$ and $t+uh$, (iv) absorbs the higher order terms to $O\left(\frac{1}{n}\right)$, and (iv) utilizes the properties of $K$ under Assumption~\ref{assump:reg_kernel} and the positivity of $\frac{\bar{p}_{\zeta}^2(\bm{s}|t)}{p(t,\bm{s})}$ within the support $\mathcal{J}$ of $p(t,\bm{s})$. Now, by Chebyshev's inequality and our above calculations, we obtain that
	\begin{align*}
		\mathbb{P}_n\left[\tilde{\Xi}_{t,\zeta}(Y,T,\bm{S}) \right] - \theta(t) &= \mathbb{P}_n\left[\tilde{\Xi}_{t,\zeta}(Y,T,\bm{S}) \right] - \mathbb{E}\left[\tilde{\Xi}_{t,\zeta}(Y,T,\bm{S})\right] + \mathbb{E}\left[\tilde{\Xi}_{t,\zeta}(Y,T,\bm{S})\right] - \theta(t)\\
		&= O_P\left(\sqrt{\mathrm{Var}\left\{\mathbb{P}_n\left[\tilde{\Xi}_{t,\zeta}(Y,T,\bm{S}) \right]\right\}}\right) + O(h^2)\\
		&= O_P\left(\sqrt{\frac{1}{nh^3}}\right) + O(h^2)
	\end{align*}
	as $h\to 0$ and $nh^3\to \infty$. As a side note, under the VC-type condition on $K$ (Assumption~\ref{assump:reg_kernel}(c)), we can apply Theorem 2 in \cite{einmahl2005uniform} to strengthen the above pointwise rate of convergence to the uniform one as:
	$$\sup_{t\in \mathcal{T}}\left|\mathbb{P}_n\left[\tilde{\Xi}_{t,\zeta}(Y,T,\bm{S}) \right] - \theta(t) \right| = O_P\left(\sqrt{\frac{|\log h|}{nh^3}}\right) + O(h^2).$$
	
	\subsubsection{Rate of Convergence of \textbf{Term V} for $\hat{\theta}_{\mathrm{C,IPW}}(t)$}
	\label{app:theta_nopos_term_V}
	
	By direct calculations, we have that
	\begin{align*}
		&\textbf{Term V} \\
		&= \frac{1}{nh}\sum_{i=1}^n \frac{Y_i\left(\frac{T_i-t}{h}\right) K\left(\frac{T_i-t}{h}\right) \cdot \bar{p}_{\zeta}(\bm{S}_i|t)}{h\cdot \kappa_2 \cdot p(T_i,\bm{S}_i)} \left[\frac{\hat{p}_{\zeta}(\bm{S}_i|t) \cdot p(T_i,\bm{S}_i) - \bar{p}_{\zeta}(\bm{S}_i|t) \cdot \hat{p}(T_i,\bm{S}_i)}{\hat{p}(T_i,\bm{S}_i)\cdot \bar{p}_{\zeta}(\bm{S}_i|t)} \right]\\
		&= \frac{1}{nh}\sum_{i=1}^n \frac{Y_i\left(\frac{T_i-t}{h}\right) K\left(\frac{T_i-t}{h}\right) \cdot \bar{p}_{\zeta}(\bm{S}_i|t)}{h\cdot \kappa_2 \cdot p(T_i,\bm{S}_i)} \left\{\frac{\left[\hat{p}_{\zeta}(\bm{S}_i|t) - \bar{p}_{\zeta}(\bm{S}_i|t)\right] p(T_i,\bm{S}_i) - \bar{p}_{\zeta}(\bm{S}_i|t)\left[ \hat{p}(T_i,\bm{S}_i) - p(T_i,\bm{S}_i)\right]}{\hat{p}(T_i,\bm{S}_i)\cdot \bar{p}_{\zeta}(\bm{S}_i|t)} \right\}\\
		&\stackrel{\text{(i)}}{=} \left\{\mathbb{E}\left[\beta(T,\bm{S})\big| T=t\right] + O(h^2) +O_P\left(\sqrt{\frac{1}{nh^3}}\right)\right\}\\
		&\quad \times \Bigg[\frac{O_P\left(\norm{\hat{p}_{\zeta}(\bm{S}|t) - \bar{p}_{\zeta}(\bm{S}|t)}_{L_2} \right)}{\inf\limits_{(t,\bm{s})\in \mathcal{T}\times \mathcal{S}}p(t,\bm{s}) - O_P\left(\sup\limits_{|u-t|\leq h} \left[\norm{\hat{p}(u,\bm{S}) - \bar{p}(u,\bm{S})}_{L_2} + \norm{\bar{p}(u,\bm{S}) - p(u,\bm{S})}_{L_2} \right] \right)} \\
		&\quad \quad + \frac{O_P\left(\sup\limits_{|u-t|\leq h} \left[\norm{\hat{p}(u,\bm{S}) - \bar{p}(u,\bm{S})}_{L_2} + \norm{\bar{p}(u,\bm{S}) - p(u,\bm{S})}_{L_2} \right] \right)}{\inf\limits_{(t,\bm{s})\in \mathcal{T}\times \mathcal{S}}p(t,\bm{s}) - O_P\left(\sup\limits_{|u-t|\leq h} \left[\norm{\hat{p}(u,\bm{S}) - \bar{p}(u,\bm{S})}_{L_2} + \norm{\bar{p}(u,\bm{S}) - p(u,\bm{S})}_{L_2} \right] \right)}\Bigg]\\
		&= \left[O(1+h^2) + O_P\left(\sqrt{\frac{1}{nh^3}}\right)\right] \cdot O_P\left(\Upsilon_{5,n} + \Upsilon_{6,n} + \sup_{|u-t|\leq h} \norm{\bar{p}(u,\bm{S}) - p(u,\bm{S})}_{L_2}\right) \\
		&= O_P\left(\Upsilon_{5,n} + \Upsilon_{6,n} + \sup_{|u-t|\leq h} \norm{\bar{p}(u,\bm{S}) - p(u,\bm{S})}_{L_2}\right)
	\end{align*}
	as $h\to 0$ and $nh^3\to \infty$, where (i) utilizes our results for \textbf{Term IV} and Markov’s inequality.
	
	Combining our results for \textbf{Term IV} and \textbf{Term V}, we conclude that
	\begin{align*}
		&\hat{\theta}_{\mathrm{C,IPW}}(t) - \theta(t) = O(h^2) + O_P\left(\sqrt{\frac{1}{nh^3}} + \Upsilon_{5,n} + \Upsilon_{6,n} + \sup_{|u-t|\leq h} \norm{\bar{p}(u,\bm{S}) - p(u,\bm{S})}_{L_2}\right).
	\end{align*}
	
	\subsection{Asymptotic Properties of $\hat{\theta}_{\mathrm{C,DR}}(t)$}
	\label{app:theta_DR_nopos}
	
	Finally, using some similar arguments to \autoref{app:theta_DR_pos}, we shall establish the asymptotic properties of $\hat{\theta}_{\mathrm{C,DR}}(t)$ in \eqref{theta_DR_bnd_corrected}. By Proposition~\ref{prop:id_additive}, we have that 
	$$\theta(t) = \mathbb{E}\left[\frac{\partial}{\partial t}\mu(T,\bm{S}) \big| T=t\right] = \int \beta(t,\bm{s}) \, dF_{\bm{S}|T}(\bm{s}|t) = \int \beta(t,\bm{s}) \cdot p_{\zeta}(\bm{s}|t)\, d\bm{s} = \int \beta(t,\bm{s})\cdot \bar{p}_{\zeta}(\bm{s}|t)\, d\bm{s}$$
	where $\mu(t,\bm{s})=\bar{m}(t) + \eta(\bm{s})$. 
	Therefore, 
	\begin{align*}
		&\hat{\theta}_{\mathrm{C,DR}}(t) - \theta(t) \\
		&= \frac{1}{n}\sum_{i=1}^n \frac{\left(\frac{T_i-t}{h}\right)K\left(\frac{T_i-t}{h}\right)\cdot \hat{p}_{\zeta}(\bm{S}_i|t)}{h^2\cdot \kappa_2\cdot \hat{p}(T_i,\bm{S}_i)} \left[Y_i - \hat{\mu}(t,\bm{S}_i) - (T_i-t)\cdot \hat{\beta}(t,\bm{S}_i)\right]+ \int \hat{\beta}(t,\bm{s})\cdot \hat{p}_{\zeta}(\bm{s}|t)\, d\bm{s} \\
		&\quad - \int \beta(t,\bm{s})\cdot \bar{p}_{\zeta}(\bm{s}|t)\, d\bm{s} \\
		&= \mathbb{P}_n \left[\frac{1}{\sqrt{h^3}}\cdot \phi_{C,h,t}\left(Y,T,\bm{S}; \bar{\mu},\bar{\beta}, \bar{p}, \bar{p}_{\zeta}\right)\right] + \int \bar{\beta}(t,\bm{s})\cdot \bar{p}_{\zeta}(\bm{s}|t)\, d\bm{s} - \int \beta(t,\bm{s})\cdot \bar{p}_{\zeta}(\bm{s}|t)\, d\bm{s}\\
		&\quad + \int \hat{\beta}(t,\bm{s})\cdot \hat{p}_{\zeta}(\bm{s}|t)\, d\bm{s} - \int \bar{\beta}(t,\bm{s})\cdot \bar{p}_{\zeta}(\bm{s}|t)\, d\bm{s} \\
		&\quad + \mathbb{P}_n \left[\frac{1}{\sqrt{h^3}}\cdot \phi_{C,h,t}\left(Y,T,\bm{S}; \hat{\mu}, \hat{\beta}, \hat{p}, \hat{p}_{\zeta}\right) - \frac{1}{\sqrt{h^3}}\cdot \phi_{C,h,t}\left(Y,T,\bm{S}; \bar{\mu},\bar{\beta}, \bar{p}, \bar{p}_{\zeta}\right) \right] \\
		&= \underbrace{\mathbb{P}_n \left[\frac{1}{\sqrt{h^3}}\cdot \phi_{C,h,t}\left(Y,T,\bm{S}; \bar{\mu},\bar{\beta}, \bar{p}, \bar{p}_{\zeta}\right)\right] + \int \bar{\beta}(t,\bm{s})\cdot \bar{p}_{\zeta}(\bm{s}|t)\, d\bm{s} - \int \beta(t,\bm{s})\cdot \bar{p}_{\zeta}(\bm{s}|t)\, d\bm{s}}_{\textbf{Term VI}} \\
		&\quad + \underbrace{\int \hat{\beta}(t,\bm{s})\left[\hat{p}_{\zeta}(\bm{s}|t) - \bar{p}_{\zeta}(\bm{s}|t)\right]\, d\bm{s}}_{\textbf{Term VII}} \\
		&\quad + \underbrace{\left(\mathbb{P}_n-\P\right)\left\{\frac{\left(\frac{T-t}{h}\right) K\left(\frac{T-t}{h}\right)}{h^2\kappa_2}\left[\frac{\hat{p}_{\zeta}(\bm{S}|t)}{\hat{p}(T,\bm{S})} - \frac{\bar{p}_{\zeta}(\bm{S}|t)}{\bar{p}(T,\bm{S})} \right] \left[Y - \bar{\mu}(t,\bm{S}) - (T-t)\cdot \bar{\beta}(t,\bm{S})\right]\right\}}_{\textbf{Term VIII}}\\
		&\quad + \underbrace{\left(\mathbb{P}_n-\P\right)\left\{\frac{\left(\frac{T-t}{h}\right)K\left(\frac{T-t}{h}\right)\cdot \bar{p}_{\zeta}(\bm{S}|t)}{h^2\kappa_2\cdot \bar{p}(T,\bm{S})} \left[\bar{\mu}(t,\bm{S}) - \hat{\mu}(t,\bm{S}) + (T-t)\left[\bar{\beta}(t,\bm{S}) - \hat{\beta}(t,\bm{S})\right] \right]\right\}}_{\textbf{Term IX}} \\
		&\quad + \underbrace{\mathbb{P}_n\left\{\frac{\left(\frac{T-t}{h}\right) K\left(\frac{T-t}{h}\right)}{h^2\kappa_2}\left[\frac{\hat{p}_{\zeta}(\bm{S}|t)}{\hat{p}(T,\bm{S})} - \frac{\bar{p}_{\zeta}(\bm{S}|t)}{\bar{p}(T,\bm{S})} \right] \left[\bar{\mu}(t,\bm{S}) - \hat{\mu}(t,\bm{S}) + (T-t)\left[\bar{\beta}(t,\bm{S}) - \hat{\beta}(t,\bm{S})\right]\right]\right\}}_{\textbf{Term X}} \\
		&\quad + \underbrace{\P\left\{\frac{\left(\frac{T-t}{h}\right)^2 K\left(\frac{T-t}{h}\right) \cdot \bar{p}_{\zeta}(\bm{S}|t)}{h \cdot \kappa_2\cdot \bar{p}(T,\bm{S})} \left[\bar{\beta}(t,\bm{S}) - \hat{\beta}(t,\bm{S})\right]\right\} + \int \left[\hat{\beta}(t,\bm{s}) - \bar{\beta}(t,\bm{s})\right]\bar{p}_{\zeta}(\bm{s}|t)\, d\bm{s}}_{\textbf{Term XIa}} \\ 
		&\quad + \underbrace{\P\left\{\frac{\left(\frac{T-t}{h}\right) K\left(\frac{T-t}{h}\right)\cdot \bar{p}_{\zeta}(\bm{S}|t)}{h^2 \cdot \kappa_2\cdot \bar{p}(T,\bm{S})}\left[\bar{\mu}(t,\bm{S}) - \hat{\mu}(t,\bm{S})\right]\right\}}_{\textbf{Term XIb}}\\
		&\quad + \underbrace{\P\left\{\frac{\left(\frac{T-t}{h}\right) K\left(\frac{T-t}{h}\right)}{h^2\kappa_2}\left[\frac{\hat{p}_{\zeta}(\bm{S}|t)}{\hat{p}(T,\bm{S})} - \frac{\bar{p}_{\zeta}(\bm{S}|t)}{\bar{p}(T,\bm{S})} \right] \left[Y - \bar{\mu}(t,\bm{S}) - (T-t)\cdot \bar{\beta}(t,\bm{S})\right]\right\}}_{\textbf{Term XIc}},
	\end{align*}
	where $\phi_{C,h,t}\left(Y,T,\bm{S}; \bar{\mu},\bar{\beta}, \bar{p},\bar{p}_{\zeta}\right) = \frac{\left(\frac{T-t}{h}\right) K\left(\frac{T-t}{h}\right) \cdot \bar{p}_{\zeta}(\bm{S}|t)}{\sqrt{h}\cdot \kappa_2\cdot \bar{p}(T,\bm{S})}\cdot \left[Y - \bar{\mu}(t,\bm{S}) - (T-t)\cdot \bar{\beta}(t,\bm{S})\right]$. It remains to show that the dominating \textbf{Term VI} is of order $O(h^2)+O_P\left(\sqrt{\frac{1}{nh^3}}\right)$ in \autoref{app:theta_nopos_Term_VI} and the remainder terms are of order $o_P\left(\sqrt{\frac{1}{nh^3}}\right)$ for any fixed $t\in \mathcal{T}$ in \autoref{app:theta_nopos_Term_VII}, \autoref{app:theta_nopos_Term_VIII}, and \autoref{app:theta_nopos_Term_X}, and \autoref{app:theta_nopos_Term_XI}. We shall also derive the asymptotic normality of $\hat{\theta}_{\mathrm{C,DR}}(t)$ in \autoref{app:theta_nopos_asym_norm}.
	
	\subsubsection{Analysis of \textbf{Term VI} for $\hat{\theta}_{\mathrm{C,DR}}(t)$}
	\label{app:theta_nopos_Term_VI}
	
	We analyze the variance and bias of \textbf{Term VI} separately as follows. By direct calculations, we have that
	\begin{align*}
		\mathrm{Var}\left[\textbf{Term VI}\right] &= \mathrm{Var}\left\{\mathbb{P}_n\left[\frac{1}{\sqrt{h^3}} \cdot \phi_{C,h,t}\left(Y,T,\bm{S}; \bar{\mu},\bar{\beta}, \bar{p},\bar{p}_{\zeta}\right)\right]\right\} \\
		&= \frac{1}{nh^4} \cdot \mathrm{Var}\left\{\frac{\left(\frac{T-t}{h}\right) K\left(\frac{T-t}{h}\right) \cdot \bar{p}_{\zeta}(\bm{S}|t)}{\kappa_2\cdot \bar{p}(T,\bm{S})}\cdot \left[Y - \bar{\mu}(t,\bm{S}) - (T-t)\cdot \bar{\beta}(t,\bm{S})\right]\right\}\\
		&= \frac{1}{nh^4} \cdot \mathbb{E}\left\{\frac{\left(\frac{T-t}{h}\right)^2 K^2\left(\frac{T-t}{h}\right) \cdot \bar{p}_{\zeta}^2(\bm{S}|t)}{\kappa_2^2\cdot \bar{p}^2(T,\bm{S})}\cdot \left[Y - \bar{\mu}(t,\bm{S}) - (T-t)\cdot \bar{\beta}(t,\bm{S})\right]^2 \right\} \\
		&\quad - \frac{1}{nh^4}\left\{\mathbb{E}\left[\frac{\left(\frac{T-t}{h}\right) K\left(\frac{T-t}{h}\right) \cdot \bar{p}_{\zeta}(\bm{S}|t)}{\kappa_2\cdot \bar{p}(T,\bm{S})}\cdot \left[Y - \bar{\mu}(t,\bm{S}) - (T-t)\cdot \bar{\beta}(t,\bm{S})\right]\right]\right\}^2\\
		&\stackrel{\text{(i)}}{\lesssim} \frac{1}{nh^3}\int_{\mathbb{R}} \int_{\mathcal{S}(t+uh)} \frac{u^2 K^2\left(u\right) \cdot \bar{p}_{\zeta}^2(\bm{s}_1|t)\cdot p(t+uh,\bm{s}_1)}{\kappa_2^2\cdot \bar{p}^2(t+uh,\bm{s}_1)}\\
		&\quad \quad \times \left\{\left[\mu(t+uh,\bm{s}_1) - \bar{\mu}(t,\bm{s}_1) - hu\cdot \bar{\beta}(t,\bm{s}_1)\right]^2 + \sigma^2\right\} d\bm{s}_1 du\\
		&\stackrel{\text{(ii)}}{=} \frac{1}{nh^3}\int_{\mathbb{R}} \int_{\mathcal{S}(t+uh)} \frac{u^2 K^2\left(u\right) \cdot \bar{p}_{\zeta}^2(\bm{s}_1|t)\cdot \left[p(t,\bm{s}_1) + O(h)\right]}{\kappa_2^2\cdot \left[\bar{p}^2(t,\bm{s}_1) + O(h^2)\right]}\\
		&\quad \quad \times \left\{\left[\mu(t,\bm{s}_1) +O(h) - \bar{\mu}(t,\bm{s}_1) - hu\cdot \bar{\beta}(t,\bm{s}_1)\right]^2 + \sigma^2\right\} d\bm{s}_1 du\\
		&\stackrel{\text{(iii)}}{=} O\left(\frac{1}{nh^3}\right),
	\end{align*}
	where (i) uses a change of variable and only keeps the dominating first term, (ii) leverages Taylor's expansions, and (iii) utilizes the upper boundedness of $\mu,\bar{\mu},\bar{\beta},\bar{p}_{\zeta}$ under Assumption~\ref{assump:reg_diff} as well as the fact that $\bar{p}$ is lower bounded away from 0 around the support $\mathcal{J}$. Now, by Chebyshev’s inequality, we conclude that
	\begin{align*}
		\left(\mathbb{P}_n -\P\right)\left[\frac{1}{\sqrt{h^3}}\cdot \phi_{C,h,t}\left(Y,T,\bm{S}; \bar{\mu},\bar{\beta}, \bar{p},\bar{p}_{\zeta}\right)\right] &= O_P\left(\sqrt{\mathrm{Var}\left\{\mathbb{P}_n\left[\frac{1}{\sqrt{h^3}} \cdot \phi_{C,h,t}\left(Y,T,\bm{S}; \bar{\mu},\bar{\beta}, \bar{p},\bar{p}_{\zeta}\right)\right]\right\}}\right) \\
		&= O_P\left(\sqrt{\frac{1}{nh^3}}\right).
	\end{align*}
	In addition, we calculate the bias term as:
	\begin{align*}
		&\mathrm{Bias}\left[\textbf{Term VI}\right] \\
		&=\mathbb{E}\left[\frac{1}{\sqrt{h^3}} \cdot \phi_{C,h,t}\left(Y,T,\bm{S}; \bar{\mu},\bar{\beta}, \bar{p},\bar{p}_{\zeta}\right)\right] + \int \bar{\beta}(t,\bm{s})\cdot \bar{p}_{\zeta}(\bm{s}|t)\, d\bm{s} - \int \beta(t,\bm{s})\cdot \bar{p}_{\zeta}(\bm{s}|t)\, d\bm{s} \\
		&= \mathbb{E}\left\{\frac{\left(\frac{T-t}{h}\right) K\left(\frac{T-t}{h}\right) \cdot \bar{p}_{\zeta}(\bm{S}|t)}{h^2\cdot \kappa_2\cdot \bar{p}(T,\bm{S})}\cdot \left[Y - \bar{\mu}(t,\bm{S}) - (T-t)\cdot \bar{\beta}(t,\bm{S})\right]\right\} \\
		&\quad + \int \bar{\beta}(t,\bm{s})\cdot \bar{p}_{\zeta}(\bm{s}|t)\, d\bm{s} - \int \beta(t,\bm{s})\cdot \bar{p}_{\zeta}(\bm{s}|t)\, d\bm{s} \\
		&= \int_{\mathcal{T}} \int_{\mathcal{S}(t_1)} \frac{\left(\frac{t_1-t}{h}\right) K\left(\frac{t_1-t}{h}\right) \cdot \bar{p}_{\zeta}(\bm{s}_1|t)}{h^2\kappa_2\cdot \bar{p}(t_1,\bm{s}_1)}\cdot \left[\mu(t_1,\bm{s}_1) - \bar{\mu}(t,\bm{s}_1) - (t_1-t)\cdot \bar{\beta}(t,\bm{s}_1)\right] p(t_1,\bm{s}_1) \, d\bm{s}_1 dt_1 \\
		&\quad + \int \bar{\beta}(t,\bm{s})\cdot \bar{p}_{\zeta}(\bm{s}|t)\, d\bm{s} - \int \beta(t,\bm{s})\cdot \bar{p}_{\zeta}(\bm{s}|t)\, d\bm{s} \\
		&\stackrel{\text{(i)}}{=} \int_{\mathbb{R}} \int_{\mathcal{S}(t+uh)} \frac{u K\left(u\right) \cdot \bar{p}_{\zeta}(\bm{s}_1|t)}{h\cdot \kappa_2\cdot \bar{p}(t+uh,\bm{s}_1)}\cdot \left[\mu(t+uh,\bm{s}_1) - \bar{\mu}(t,\bm{s}_1) - hu\cdot \bar{\beta}(t,\bm{s}_1)\right] p(t+uh,\bm{s}_1) \, d\bm{s}_1 du \\
		&\quad + \int \bar{\beta}(t,\bm{s})\cdot \bar{p}_{\zeta}(\bm{s}|t)\, d\bm{s} - \int \beta(t,\bm{s})\cdot \bar{p}_{\zeta}(\bm{s}|t)\, d\bm{s} \\
		&\stackrel{\text{(ii)}}{=} \int_{\mathbb{R}} \int_{\bar{\mathcal{S}}(t)\ominus \zeta} \frac{u\cdot K\left(u\right)\cdot \bar{p}_{\zeta}(\bm{s}_1|t)\left[p(t,\bm{s}_1) + uh\cdot \frac{\partial}{\partial t} p(t,\bm{s}_1) + \frac{u^2h^2}{2} \cdot \frac{\partial^2}{\partial t^2} p(t,\bm{s}_1) + \frac{u^3h^3}{6}\cdot \frac{\partial^3}{\partial t^3} p(t,\bm{s}_1) + O(h^4)\right]}{h\cdot \kappa_2\left[\bar{p}(t,\bm{s}_1) + uh\cdot \frac{\partial}{\partial t} \bar{p}(t,\bm{s}_1) + \frac{u^2h^2}{2} \cdot \frac{\partial^2}{\partial t^2} \bar{p}(t,\bm{s}_1) + \frac{u^3h^3}{6}\cdot \frac{\partial^3}{\partial t^3} \bar{p}(t,\bm{s}_1) + O(h^4)\right]}\\ 
		&\quad \times \left[\left(\mu(t,\bm{s}_1) - \bar{\mu}(t,\bm{s}_1)\right) + hu\left(\beta(t,\bm{s}_1) - \bar{\beta}(t,\bm{s}_1)\right) + \frac{u^2 h^2}{2}\cdot \frac{\partial^2}{\partial t^2} \mu(t,\bm{s}_1) + \frac{u^3h^3}{6} \cdot \frac{\partial^3}{\partial t^3}\mu(t,\bm{s}_1) + O(h^4)\right] \, d\bm{s}_1du \\
		&\quad + \int \bar{\beta}(t,\bm{s})\cdot \bar{p}_{\zeta}(\bm{s}|t)\, d\bm{s} - \int \beta(t,\bm{s})\cdot \bar{p}_{\zeta}(\bm{s}|t)\, d\bm{s} \\
		&= \int_{\bar{\mathcal{S}}(t)\ominus \zeta} \int_{\mathbb{R}} \frac{u\cdot K\left(u\right)\cdot \bar{p}_{\zeta}(\bm{s}_1|t)}{h\cdot \kappa_2}\left[p(t,\bm{s}_1) + uh\cdot \frac{\partial}{\partial t} p(t,\bm{s}_1) + \frac{u^2h^2}{2} \cdot \frac{\partial^2}{\partial t^2} p(t,\bm{s}_1) + \frac{u^3h^3}{6}\cdot \frac{\partial^3}{\partial t^3} p(t,\bm{s}_1) + O(h^4)\right]\\
		&\quad \quad \times\Bigg[\frac{1}{\bar{p}(t,\bm{s}_1)} - \frac{uh\cdot \frac{\partial}{\partial t}\bar{p}(t,\bm{s}_1)}{\bar{p}^2(t,\bm{s}_1)} - \frac{u^2h^2 \cdot \frac{\partial^2}{\partial t^2}\bar{p}(t,\bm{s}_1)}{2\bar{p}^2(t,\bm{s}_1)} + \frac{u^2h^2 \left[\frac{\partial}{\partial t}\bar{p}(t,\bm{s}_1)\right]^2}{\bar{p}^3(t,\bm{s}_1)} - \frac{u^3h^3 \cdot \frac{\partial^3}{\partial t^3} \bar{p}(t,\bm{s}_1)}{6\bar{p}^2(t,\bm{s}_1)} \\
		&\quad \quad \quad\quad + \frac{u^3h^3\left[\frac{\partial}{\partial t} \bar{p}(t,\bm{s}_1)\right]\left[\frac{\partial^2}{\partial t^2} \bar{p}(t,\bm{s}_1)\right]}{\bar{p}^3(t,\bm{s}_1)} + O(h^4)\Bigg]\\  
		&\quad\quad \times \left[\left(\mu(t,\bm{s}_1) - \bar{\mu}(t,\bm{s}_1)\right) + hu\left(\beta(t,\bm{s}_1) - \bar{\beta}(t,\bm{s}_1)\right) + \frac{u^2 h^2}{2}\cdot \frac{\partial^2}{\partial t^2} \mu(t,\bm{s}_1) + \frac{u^3h^3}{6} \cdot \frac{\partial^3}{\partial t^3}\mu(t,\bm{s}_1) + O(h^4)\right] \, d\bm{s}_1du \\
		&\quad + \int \bar{\beta}(t,\bm{s})\cdot \bar{p}_{\zeta}(\bm{s}|t)\, d\bm{s} - \int \beta(t,\bm{s})\cdot \bar{p}_{\zeta}(\bm{s}|t)\, d\bm{s} \\
		&= \int \left[\frac{\frac{\partial}{\partial t} p(t,\bm{s})}{\bar{p}(t,\bm{s})} - \frac{p(t,\bm{s}) \cdot \frac{\partial}{\partial t} \bar{p}(t,\bm{s})}{\bar{p}^2(t,\bm{s})}\right]\left[\mu(t,\bm{s}) - \bar{\mu}(t,\bm{s})\right] \cdot \bar{p}_{\zeta}(\bm{s}|t)\, d\bm{s}\\
		&\quad + \int \left[\beta(t,\bm{s}) - \bar{\beta}(t,\bm{s})\right]\left[\frac{p(t,\bm{s})}{\bar{p}(t,\bm{s})} - 1\right] \bar{p}_{\zeta}(\bm{s}|t)\, d\bm{s} + h^2 \cdot B_{C,\theta}(t) + O(h^3),
	\end{align*}
	where (i) uses a change of variable $u=\frac{t_1-t}{h}$ and (ii) applies Taylor's expansions. Here, the complicated bias term $B_{C,\theta}(t)$ is given by
	\begin{align*}
		B_{C,\theta}(t) &= \frac{\kappa_4}{2\kappa_2} \int\frac{\left[\mu(t,\bm{s}) - \bar{\mu}(t,\bm{s})\right]}{\bar{p}(t,\bm{s})}\\
		&\quad \hspace{10mm} \times \Bigg[\frac{1}{3}\cdot \frac{\partial^3}{\partial t^3} p(t,\bm{s}) - \frac{\partial^2}{\partial t^2} p(t,\bm{s}) \cdot \frac{\partial}{\partial t} \log \bar{p}(t,\bm{s}) + 2\frac{\partial}{\partial t} p(t,\bm{s}) \cdot \left[\frac{\partial}{\partial t} \log \bar{p}(t,\bm{s})\right]^2\\
		&\quad\hspace{15mm} + \frac{6\frac{\partial}{\partial t} \log \bar{p}(t,\bm{s}) \cdot \frac{\partial^2}{\partial t^2} \bar{p}(t,\bm{s}) - \frac{\partial^3}{\partial t^3} \bar{p}(t,\bm{s}) - 3 \frac{\partial}{\partial t} p(t,\bm{s}) \cdot \frac{\partial^2}{\partial t^2} \bar{p}(t,\bm{s})}{3 \bar{p}(t,\bm{s})}\Bigg] \bar{p}_{\zeta}(\bm{s}|t)\, d\bm{s} \\
		&\quad + \frac{\kappa_4}{2\kappa_2} \int \frac{\left[\beta(t,\bm{s}) - \bar{\beta}(t,\bm{s})\right]}{\bar{p}(t,\bm{s})} \Bigg[\frac{\partial^2}{\partial t^2} p(t,\bm{s}) - 2\frac{\partial}{\partial t} p(t,\bm{s})\cdot \frac{\partial}{\partial t} \log \bar{p}(t,\bm{s}) \\
		&\hspace{15mm} + 2p(t,\bm{s}) \cdot \left[\frac{\partial}{\partial t} \log \bar{p}(t,\bm{s})\right]^2 - \frac{p(t,\bm{s})\cdot \frac{\partial^2}{\partial t^2} \bar{p}(t,\bm{s})}{\bar{p}(t,\bm{s})}\Bigg] \bar{p}_{\zeta}(\bm{s}|t)\, d\bm{s}\\
		&\quad + \frac{\kappa_4}{2\kappa_2} \int\left[\frac{\frac{\partial}{\partial t} p(t,\bm{s}) \cdot \frac{\partial^2}{\partial t^2} \mu(t,\bm{s})}{\bar{p}(t,\bm{s})} - \frac{p(t,\bm{s}) \cdot \frac{\partial}{\partial t} \bar{p}(t,\bm{s}) \cdot \frac{\partial^2}{\partial t^2} \mu(t,\bm{s})}{\bar{p}^2(t,\bm{s})} + \frac{p(t,\bm{s})\cdot \frac{\partial^3}{\partial t^3} \mu(t,\bm{s})}{3\bar{p}(t,\bm{s})}\right] \bar{p}_{\zeta}(\bm{s}|t)\, d\bm{s}.
	\end{align*}
	Under the condition that either $\bar{\mu}=\mu$ and $\bar{\beta}=\beta$ or $\bar{p} = p$, we have that
	\begin{align*}
		&\int \left[\frac{\frac{\partial}{\partial t} p(t,\bm{s})}{\bar{p}(t,\bm{s})} - \frac{p(t,\bm{s}) \cdot \frac{\partial}{\partial t} \bar{p}(t,\bm{s})}{\bar{p}^2(t,\bm{s})}\right]\left[\mu(t,\bm{s}) - \bar{\mu}(t,\bm{s})\right] \cdot \bar{p}_{\zeta}(\bm{s}|t)\, d\bm{s}\\
		&\quad + \int \left[\beta(t,\bm{s}) - \bar{\beta}(t,\bm{s})\right]\left[\frac{p(t,\bm{s})}{\bar{p}(t,\bm{s})} - 1\right] \bar{p}_{\zeta}(\bm{s}|t)\, d\bm{s} \\
		& = 0
	\end{align*}
	and 
	\begin{align*}
		B_{C,\theta}(t) &= 
		\begin{cases}
			\frac{\kappa_4}{2\kappa_2} \int \left[\frac{\frac{\partial}{\partial t} p(t,\bm{s}) \cdot \frac{\partial^2}{\partial t^2} \mu(t,\bm{s})}{\bar{p}(t,\bm{s})} - \frac{p(t,\bm{s}) \cdot \frac{\partial}{\partial t} \bar{p}(t,\bm{s}) \cdot \frac{\partial^2}{\partial t^2} \mu(t,\bm{s})}{\bar{p}^2(t,\bm{s})} + \frac{p(t,\bm{s})\cdot \frac{\partial^3}{\partial t^3} \mu(t,\bm{s})}{3\bar{p}(t,\bm{s})} \right] \bar{p}_{\zeta}(\bm{s}|t)\, d\bm{s} \; \text{ when } \bar{\mu}=\mu \text{ and } \bar{\beta}=\beta,\\
			\frac{\kappa_4}{6\kappa_2} \int \left[\frac{\partial^3}{\partial t^3} \mu(t,\bm{s}) \right] \bar{p}_{\zeta}(\bm{s}|t)\, d\bm{s} \quad\quad \text{ when } \bar{p} = p,
		\end{cases}\\
		&= \begin{cases}
			\frac{\kappa_4}{6\kappa_2} \int \left\{\frac{3\frac{\partial}{\partial t} p(t,\bm{s}) \cdot \bar{m}''(t) + p(t,\bm{s})\left[\bar{m}^{(3)}(t) - 3\frac{\partial}{\partial t} \log\bar{p}(t,\bm{s}) \cdot \bar{m}''(t) \right]}{\bar{p}(t,\bm{s})} \right\} \bar{p}_{\zeta}(\bm{s}|t)\, d\bm{s} &\text{ when } \bar{\mu}=\mu \text{ and } \bar{\beta}=\beta,\\
			\frac{\kappa_4}{6\kappa_2} \cdot \bar{m}^{(3)}(t) & \text{ when } \bar{p} = p.
		\end{cases}
	\end{align*}
	As a result, as $h\to 0$ and $nh^3\to \infty$, we know that
	\begin{align*}
		\textbf{Term VI} &= \mathbb{P}_n\left[\frac{1}{\sqrt{h^3}} \cdot \phi_{C,h,t}\left(Y,T,\bm{S}; \bar{\mu},\bar{\beta}, \bar{p},\bar{p}_{\zeta}\right) \right] + \int \bar{\beta}(t,\bm{s})\cdot \bar{p}_{\zeta}(\bm{s}|t)\, d\bm{s} - \int \beta(t,\bm{s})\cdot \bar{p}_{\zeta}(\bm{s}|t)\, d\bm{s} \\
		&= h^2 B_{C,\theta}(t) + o(h^2) + O_P\left(\sqrt{\frac{1}{nh^3}}\right)\\
		&= O(h^2) + O_P\left(\sqrt{\frac{1}{nh^3}}\right).
	\end{align*}
	As a side note, under some VC-type condition on the kernel function $K$ \citep{einmahl2005uniform}, we can strengthen the above pointwise rate of convergence to the following uniform one as:
	$$\sup_{t\in \mathcal{T}} \left|\textbf{Term VI}\right| = O(h^2) + O_P\left(\sqrt{\frac{|\log h|}{nh^3}}\right);$$
	see Theorem 4 in \cite{einmahl2005uniform} for details.
	
	\subsubsection{Analysis of \textbf{Term VII} for $\hat{\theta}_{\mathrm{C,DR}}(t)$}
	\label{app:theta_nopos_Term_VII}
	
	Notice that
	\begin{align*}
		\textbf{Term VII} &= \int \hat{\beta}(t,\bm{s})\left[\hat{p}_{\zeta}(\bm{s}|t) - \bar{p}_{\zeta}(\bm{s}|t)\right]\, d\bm{s}\\
		&\leq \int \left|\bar{\beta}(t,\bm{s}) + \hat{\beta}(t,\bm{s}) - \bar{\beta}(t,\bm{s})\right| \left|\hat{p}_{\zeta}(\bm{s}|t) - \bar{p}_{\zeta}(\bm{s}|t)\right|\, d\bm{s}\\
		&\stackrel{\text{(i)}}{\lesssim} \int \left[\left|\bar{\beta}(t,\bm{s})\right| + \left|\hat{\beta}(t,\bm{s}) - \bar{\beta}(t,\bm{s})\right|\right] \left|\hat{p}_{\zeta}(\bm{s}|t) - \bar{p}_{\zeta}(\bm{s}|t)\right| p_S(\bm{s})\, d\bm{s} \\
		&\stackrel{\text{(ii)}}{=} O_P\left(\norm{\hat{p}_{\zeta}(\bm{S}|t) - \bar{p}_{\zeta}(\bm{S}|t)}_{L_2}\right)\\
		&= O_P\left(\Upsilon_{5,n}\right),
	\end{align*}
	where (i) uses the fact that the marginal density $p_S$ is lower bounded away from 0 within the union set $\left(\mathcal{S}(t)\ominus\zeta\right) \cup \left(\hat{\mathcal{S}}(t)\ominus\zeta\right)$ and (ii) leverages the boundedness of $\bar{\beta}$ under Assumption~\ref{assump:reg_diff} as well as $\norm{\hat{\beta}(t,\bm{S}) - \bar{\beta}(t,\bm{S})}_{L_2} = O_P\left(\Upsilon_{3,n}\right)$ with $\Upsilon_{3,n} \to 0$ as $n\to \infty$. Hence, $\sqrt{nh^3}\cdot \textbf{Term VII} =o_P(1)$ when $\sqrt{nh^3} \cdot \Upsilon_{5,n} =o(1)$.
	
	\subsubsection{Analyses of \textbf{Term VIII} and \textbf{Term IX} for $\hat{\theta}_{\mathrm{C,DR}}(t)$}
	\label{app:theta_nopos_Term_VIII}
	
	By Markov's inequality, we know that
	\begin{align*}
		\sqrt{nh^3}\cdot \textbf{Term VIII} &=  \mathbb{G}_n\left\{\frac{\left(\frac{T-t}{h}\right) K\left(\frac{T-t}{h}\right)}{\sqrt{h}\cdot \kappa_2}\left[\frac{\hat{p}_{\zeta}(\bm{S}|t)}{\hat{p}(T,\bm{S})} - \frac{\bar{p}_{\zeta}(\bm{S}|t)}{\bar{p}(T,\bm{S})}\right]\left[Y -\bar{\mu}(t,\bm{S}) - (T-t)\cdot \bar{\beta}(t,\bm{S})\right] \right\}\\
		&= O_P\left(\Upsilon_{5,n} + \Upsilon_{6,n}\right) = o_P(1)
	\end{align*}
	because 
	\begin{align*}
		&\mathbb{E}\left\{\frac{\left(\frac{T-t}{h}\right)^2K^2\left(\frac{T-t}{h}\right)}{h\cdot \kappa_2^2} \left[\frac{\hat{p}_{\zeta}(\bm{S}|t)}{\hat{p}(T,\bm{S})} - \frac{\bar{p}_{\zeta}(\bm{S}|t)}{\bar{p}(T,\bm{S})}\right]^2 \left[Y -\bar{\mu}(t,\bm{S}) - (T-t)\cdot \bar{\beta}(t,\bm{S})\right]^2 \right\} \\
		&\lesssim  \mathbb{E}\Bigg\{\frac{\left(\frac{T-t}{h}\right)^2K^2\left(\frac{T-t}{h}\right)}{h\cdot \kappa_2^2}\cdot \frac{\left[\hat{p}_{\zeta}(\bm{S}|t) - \bar{p}_{\zeta}(\bm{S}|t)\right]^2 \bar{p}^2(T,\bm{S}) + \left[\hat{p}(T,\bm{S}) - \bar{p}(T,\bm{S})\right] \bar{p}_{\zeta}^2(\bm{S}|t)}{\hat{p}^2(T,\bm{S})\cdot \bar{p}^2(T,\bm{S})}\\
		&\quad \times \left[\left(\mu(T,\bm{S}) - \bar{\mu}(t,\bm{S}) - (T-t)\cdot \bar{\beta}(t,\bm{S})\right)^2 + \sigma^2\right] \Bigg\}\\
		&\stackrel{\text{(i)}}{=} \int_{\mathbb{R}} \int_{\mathcal{S}(t+uh)} \frac{u^2 K^2(u)}{\kappa_2^2 } \cdot \frac{\left[\hat{p}_{\zeta}(\bm{s}_1|t) - \bar{p}_{\zeta}(\bm{s}_1|t)\right]^2 \bar{p}^2(t+uh,\bm{s}_1) + \left[\hat{p}(t+uh,\bm{s}_1) - \bar{p}(t+uh,\bm{s}_1)\right] \bar{p}_{\zeta}^2(\bm{s}_1|t)}{\hat{p}^2(t+uh,\bm{s}_1)\cdot \bar{p}^2(t+uh,\bm{s}_1)}\\
		&\quad\quad \times \left[\left(\mu(t+uh,\bm{S}) -\bar{\mu}(t,\bm{S}) -hu\cdot \bar{\beta}(t,\bm{S}) \right)^2 + \sigma^2\right]p(t+uh,\bm{s}_1)\, d\bm{s}_1 du\\
		&\stackrel{\text{(ii)}}{\lesssim} \norm{\hat{p}_{\zeta}(\bm{S}|t) - \bar{p}_{\zeta}(\bm{S}|t)}_{L_2}^2 + \sup_{|u-t|\leq h} \norm{\hat{p}(u,\bm{S}) - \bar{p}(u,\bm{S})}_{L_2}^2\\
		&= O_P\left(\Upsilon_{5,n}^2 + \Upsilon_{6,n}^2\right) = o_P(1),
	\end{align*}
	where (i) uses the change of variable $u=\frac{T-t}{h}$ in the integration as well as (ii) leverages the upper boundedness of $\mu,\bar{\mu},\bar{\beta}$ under Assumption~\ref{assump:reg_diff} and the lower boundedness on $\frac{\bar{p}_{\zeta}}{\bar{p}},\hat{p}$ away from 0 around the support $\mathcal{J}$ by definition. As a side note, under the VC-type condition on the kernel function $K$ \citep{einmahl2005uniform} and 
	$$\sup_{t\in \mathcal{T}}\left[\norm{\hat{p}_{\zeta}(\bm{S}|t) - \bar{p}_{\zeta}(\bm{S}|t)}_{L_2} + \sup_{|u-t|\leq h} \norm{\hat{p}(u,\bm{S}) - \bar{p}(u,\bm{S})}_{L_2}\right]=o_P(1),$$ 
	we can strengthen the above pointwise rate of convergence to the following uniform result as:
	$$\sup_{t\in \mathcal{T}}\left|\mathbb{G}_n\left\{\frac{\left(\frac{T-t}{h}\right) K\left(\frac{T-t}{h}\right)}{\sqrt{h}\cdot \kappa_2}\left[\frac{\hat{p}_{\zeta}(\bm{S}|t)}{\hat{p}(T,\bm{S})} - \frac{\bar{p}_{\zeta}(\bm{S}|t)}{\bar{p}(T,\bm{S})}\right]\left[Y -\bar{\mu}(t,\bm{S}) - (T-t)\cdot \bar{\beta}(t,\bm{S})\right] \right\}\right|=o_P(1).$$
	Similarly, by Markov's inequality, we have that
	\begin{align*}
		\sqrt{nh^3} \cdot \textbf{Term IX} &= \mathbb{G}_n\left\{\frac{\left(\frac{T-t}{h}\right) K\left(\frac{T-t}{h}\right)\cdot \bar{p}_{\zeta}(\bm{S}|t)}{\sqrt{h}\cdot \kappa_2\cdot \bar{p}(T,\bm{S})}\left[\bar{\mu}(t,\bm{S}) - \hat{\mu}(t,\bm{S}) + (T-t)\left(\bar{\beta}(t,\bm{S}) - \hat{\beta}(t,\bm{S})\right)\right]\right\} \\
		&= O_P\left(\max\left\{\Upsilon_{1,n}, h\cdot \Upsilon_{3,n}\right\}\right) = o_P(1)
	\end{align*}
	because
	\begin{align*}
		&\mathbb{E}\left\{\frac{\left(\frac{T-t}{h}\right)^2K^2\left(\frac{T-t}{h}\right)\cdot \bar{p}_{\zeta}^2(\bm{S}|t)}{h\cdot \kappa_2^2\cdot  \bar{p}^2(T,\bm{S})}\left[\bar{\mu}(t,\bm{S}) - \hat{\mu}(t,\bm{S}) + (T-t)\left(\bar{\beta}(t,\bm{S}) - \hat{\beta}(t,\bm{S})\right)\right]^2\right\}\\ 
		&=\int_{\mathcal{T}} \int_{\mathcal{S}(t)}\frac{\left(\frac{t_1-t}{h}\right)^2 K^2\left(\frac{t_1-t}{h}\right) \cdot \bar{p}_{\zeta}^2(\bm{s}_1|t) \cdot p(t_1,\bm{s}_1)}{h\cdot \kappa_2^2\cdot  \bar{p}^2(t_1,\bm{s}_1)} \left[\bar{\mu}(t,\bm{s}_1) - \hat{\mu}(t,\bm{s}_1) + (t_1-t)\left(\bar{\beta}(t,\bm{s}_1) - \hat{\beta}(t,\bm{s}_1)\right)\right]^2\, d\bm{s}_1 dt_1 \\
		&\stackrel{\text{(i)}}{=} \int_{\mathbb{R}} \int_{\mathcal{S}(t+uh)}\frac{u^2 K^2(u) \cdot \bar{p}_{\zeta}^2(\bm{s}_1|t) \cdot p(t+uh,\bm{s}_1)}{h\cdot \kappa_2^2\cdot  \bar{p}^2(t+uh,\bm{s}_1)} \left[\bar{\mu}(t,\bm{s}_1) - \hat{\mu}(t,\bm{s}_1) + hu\cdot \left(\bar{\beta}(t,\bm{s}_1) - \hat{\beta}(t,\bm{s}_1)\right)\right]^2\, d\bm{s}_1 du \\
		&\stackrel{\text{(ii)}}{\lesssim} \norm{\hat{\mu}(t,\bm{S}) - \bar{\mu}(t,\bm{S})}_{L_2}^2 + h^2\norm{\hat{\beta}(t,\bm{S}) - \bar{\beta}(t,\bm{S})}_{L_2}^2\\
		&= O_P\left(\Upsilon_{1,n}^2 + h^2\Upsilon_{3,n}^2\right) = o_P(1),
	\end{align*}
	where (i) uses the change of variable $u=\frac{t_1-t}{h}$ and (ii) leverages the boundedness of $p,p_{\zeta}$ under Assumption~\ref{assump:reg_diff}, the lower bound on $\frac{\bar{p}_{\zeta}}{\bar{p}}$ away from 0 around the support $\mathcal{J}$ by definition, the boundedness condition on $K$ under Assumption~\ref{assump:reg_kernel}, as well as $\norm{\hat{\mu}(t,\bm{S}) - \bar{\mu}(t,\bm{S})}_{L_2}=O_P\left(\Upsilon_{1,n}\right)$ and $h\norm{\hat{\beta}(t,\bm{S}) - \bar{\beta}(t,\bm{S})}_{L_2}=O_P\left(h\cdot\Upsilon_{3,n}\right)$ with $\Upsilon_{1,n}, h\cdot\Upsilon_{3,n}\to 0$ as $n\to \infty$. In addition, if $\sup_{t\in \mathcal{T}}\norm{\hat{\mu}(t,\bm{S}) - \bar{\mu}(t,\bm{S})}_{L_2}=o_P(1)$ and $\sup_{t\in \mathcal{T}}\norm{\hat{\beta}(t,\bm{S}) - \bar{\beta}(t,\bm{S})}_{L_2}=o_P\left(\frac{1}{h}\right)$, then the above pointwise rate of convergence can be strengthened to the uniform one as:
	$$\sup_{t\in \mathcal{T}}\left|\mathbb{G}_n\left\{\frac{\left(\frac{T-t}{h}\right) K\left(\frac{T-t}{h}\right)\cdot \bar{p}_{\zeta}(\bm{S}|t)}{\sqrt{h}\cdot \kappa_2\cdot \bar{p}(T|\bm{S})}\left[\bar{\mu}(t,\bm{S}) - \hat{\mu}(t,\bm{S}) + (T-t)\left(\bar{\beta}(t,\bm{S}) - \hat{\beta}(t,\bm{S})\right)\right]\right\}\right|=o_P(1).$$
	
	\subsubsection{Analysis of \textbf{Term X} for $\hat{\theta}_{\mathrm{C,DR}}(t)$}
	\label{app:theta_nopos_Term_X}
	
	We first calculate that
	{\small\begin{align*}
		&\mathbb{E}\left|\sqrt{nh^3}\cdot\textbf{Term X}\right|\\
		&\mathbb{E}\left|\sqrt{\frac{n}{h}} \cdot \frac{\left(\frac{T-t}{h}\right) K\left(\frac{T-t}{h}\right)}{\kappa_2} \left[\frac{\hat{p}_{\zeta}(\bm{S}|t)}{\hat{p}(T,\bm{S})} - \frac{\bar{p}_{\zeta}(\bm{S}|t)}{\bar{p}(T,\bm{S})}\right] \left[\bar{\mu}(t,\bm{S}) - \hat{\mu}(t,\bm{S}) + (T-t)\left(\bar{\beta}(t,\bm{S}) - \hat{\beta}(t,\bm{S})\right)\right] \right| \\
		&\stackrel{\text{(i)}}{\lesssim} \sqrt{nh} \cdot \sqrt{\mathbb{E}\left\{\frac{K\left(\frac{T-t}{h}\right)}{h} \left[ \frac{\left[\bar{p}_{\zeta}(\bm{S}|t) - \hat{p}_{\zeta}(\bm{S}|t)\right]^2}{\hat{p}^2(T,\bm{S})} + \frac{\left[\hat{p}(T,\bm{S}) - \bar{p}(T,\bm{S})\right]^2 \bar{p}_{\zeta}^2(\bm{S}|t)}{\bar{p}^2(T,\bm{S})} \right]\right\}}\\
		&\quad \times \sqrt{\mathbb{E}\left\{\frac{\left(\frac{T-t}{h}\right)^2 K\left(\frac{T-t}{h}\right)}{h\cdot \kappa_2^2} \cdot \left[\bar{\mu}(t,\bm{S}) - \hat{\mu}(t,\bm{S}) + (T-t)\left(\bar{\beta}(t,\bm{S}) - \hat{\beta}(t,\bm{S})\right)\right]^2\right\}} \\
		&= \sqrt{nh} \cdot \sqrt{\int_{\mathbb{R}}\int_{\mathcal{S}(t+uh)} K(u) \left[\frac{\left[\bar{p}_{\zeta}(\bm{s}_1|t) - \hat{p}_{\zeta}(\bm{s}_1|t)\right]^2}{\hat{p}^2(t+uh,\bm{s}_1)} + \frac{\left[\hat{p}(t+uh,\bm{s}_1) - \bar{p}(t+uh,\bm{s}_1)\right]^2 \bar{p}_{\zeta}^2(\bm{s}_1|t)}{\bar{p}^2(t+uh,\bm{s}_1)} \right] p(t+uh,\bm{s}_1)\,d\bm{s}_1 du}\\
		&\quad \times \sqrt{\mathbb{E}\left\{\int_{\mathbb{R}} \frac{u^2 K\left(u\right)}{\kappa_2^2} \cdot \left[\bar{\mu}(t,\bm{S}) - \hat{\mu}(t,\bm{S}) + hu\left(\bar{\beta}(t,\bm{S}) - \hat{\beta}(t,\bm{S})\right)\right]^2 p_{T|\bm{S}}(t+uh|\bm{S})\, du\right\}} \\
		&\lesssim \sqrt{nh} \left[\norm{\hat{p}_{\zeta}(\bm{S}|t) - \bar{p}_{\zeta}(\bm{S}|t)}_{L_2} + \sup_{|u-t|\leq h} \norm{\hat{p}(u,\bm{S}) - \bar{p}(u,\bm{S})}_{L_2}\right] \left[\norm{\hat{\mu}(t,\bm{S}) - \bar{\mu}(t,\bm{S})}_{L_2} + h \norm{\hat{\beta}(t,\bm{S}) - \bar{\beta}(t,\bm{S})}_{L_2}\right]\\
		&\stackrel{\text{(ii)}}{=} o_P(1),
	\end{align*}}
	where (i) uses Cauchy-Schwarz inequality and (ii) leverages our assumption (c) on the doubly robust rate of convergence in the theorem statement. As a result, by Markov's inequality, we obtain that
	\begin{align*}
		&\sqrt{nh^3}\cdot \textbf{Term X} \\
		&= \sqrt{\frac{n}{h}}\cdot \mathbb{P}_n\left\{\frac{\left(\frac{T-t}{h}\right) K\left(\frac{T-t}{h}\right)}{\kappa_2} \left[\frac{\hat{p}_{\zeta}(\bm{S}|t)}{\hat{p}(T,\bm{S})} - \frac{\bar{p}_{\zeta}(\bm{S}|t)}{\bar{p}(T,\bm{S})}\right] \left[\bar{\mu}(t,\bm{S}) - \hat{\mu}(t,\bm{S}) + (T-t)\left(\bar{\beta}(t,\bm{S}) - \hat{\beta}(t,\bm{S})\right)\right]\right\}\\
		&=o_P(1).
	\end{align*}
	
	\subsubsection{Analysis of \textbf{Term XI} for $\hat{\theta}_{\mathrm{C,DR}}(t)$}
	\label{app:theta_nopos_Term_XI}
	
	By direct calculations under model \eqref{add_conf_model} with some change of variables, we have that
	{\small\begin{align*}
		&\textbf{Term XI} \\
		&= \mathbb{E}\left\{\frac{\left(\frac{T-t}{h}\right)^2 K\left(\frac{T-t}{h}\right) \cdot \bar{p}_{\zeta}(\bm{S}|t)}{h \cdot \kappa_2\cdot \bar{p}(T,\bm{S})} \left[\bar{\beta}(t,\bm{S}) - \hat{\beta}(t,\bm{S})\right]\right\} + \int \left[\hat{\beta}(t,\bm{s}) - \bar{\beta}(t,\bm{s})\right] \bar{p}_{\zeta}(\bm{s}|t)\, d\bm{s} \\ 
		&\quad + \mathbb{E}\left\{\frac{\left(\frac{T-t}{h}\right) K\left(\frac{T-t}{h}\right)\cdot \bar{p}_{\zeta}(\bm{S}|t)}{h^2 \cdot \kappa_2\cdot \bar{p}(T,\bm{S})}\left[\bar{\mu}(t,\bm{S}) - \hat{\mu}(t,\bm{S})\right]\right\}\\
		&\quad + \mathbb{E}\left\{\frac{\left(\frac{T-t}{h}\right) K\left(\frac{T-t}{h}\right)}{h^2\kappa_2}\left[\frac{\hat{p}_{\zeta}(\bm{S}|t)}{\hat{p}(T,\bm{S})} - \frac{\bar{p}_{\zeta}(\bm{S}|t)}{\bar{p}(T,\bm{S})} \right] \left[Y - \bar{\mu}(t,\bm{S}) - (T-t)\cdot \bar{\beta}(t,\bm{S})\right]\right\}\\
		&= \underbrace{\int_{\mathbb{R}} \int_{\mathcal{S}(t+uh)} \frac{u^2 K\left(u\right) \cdot \bar{p}_{\zeta}(\bm{s}_1|t)\cdot p(t+uh,\bm{s}_1)}{\kappa_2\cdot \bar{p}(t+uh,\bm{s}_1)} \left[\bar{\beta}(t,\bm{s}_1) - \hat{\beta}(t,\bm{s}_1)\right] d\bm{s}_1 du + \int \left[\hat{\beta}(t,\bm{s}) - \bar{\beta}(t,\bm{s})\right]\bar{p}_{\zeta}(\bm{s}|t)\, d\bm{s}}_{\textbf{Term XIa}} \\
		& \quad + \underbrace{\int_{\mathbb{R}} \int_{\mathcal{S}(t+uh)} \frac{u K\left(u\right)\cdot \bar{p}_{\zeta}(\bm{s}_1|t) \cdot p(t+uh,\bm{s}_1)}{h \cdot \kappa_2\cdot \bar{p}(t+uh,\bm{s}_1)}\left[\bar{\mu}(t,\bm{s}_1) - \hat{\mu}(t,\bm{s}_1)\right] d\bm{s}_1 du}_{\textbf{Term XIb}}\\
		&\quad + \underbrace{\int_{\mathbb{R}} \int_{\mathcal{S}(t+uh)} \frac{u K\left(u\right) \cdot p(t+uh,\bm{s}_1)}{h\cdot \kappa_2}\left[\frac{\hat{p}_{\zeta}(\bm{s}_1|t)}{\hat{p}(t+uh,\bm{s}_1)} - \frac{\bar{p}_{\zeta}(\bm{s}_1|t)}{\bar{p}(t+uh,\bm{s}_1)} \right] \left[\mu(t+uh,\bm{s}_1) - \bar{\mu}(t,\bm{s}_1) - hu\cdot \bar{\beta}(t,\bm{s}_1)\right] d\bm{s}_1 du}_{\textbf{Term XIc}}.
	\end{align*}}
	
	On one hand, when $\bar{p}(t,\bm{s})=p(t,\bm{s})$, we know that $\textbf{Term XIb}=0$, and under Assumption~\ref{assump:cond_support_smooth} and model \eqref{add_conf_model},
	\begin{align*}
		\textbf{Term XIa} &= \int_{\mathbb{R}} \int_{\mathcal{S}(t) \ominus \zeta} \frac{u^2 K\left(u\right) \cdot \bar{p}_{\zeta}(\bm{s}_1|t)}{\kappa_2} \left[\bar{\beta}(t,\bm{s}_1) - \hat{\beta}(t,\bm{s}_1)\right] d\bm{s}_1 du + \int \left[\hat{\beta}(t,\bm{s}) - \bar{\beta}(t,\bm{s})\right]\bar{p}_{\zeta}(\bm{s}|t)\, d\bm{s}\\
		&= \int_{\mathcal{S}(t) \ominus \zeta} p_{\zeta}(\bm{s}_1|t) \left[\bar{\beta}(t,\bm{s}_1) - \hat{\beta}(t,\bm{s}_1)\right] d\bm{s}_1 + \int \left[\hat{\beta}(t,\bm{s}) - \bar{\beta}(t,\bm{s})\right]\bar{p}_{\zeta}(\bm{s}|t)\, d\bm{s}\\
		&=0.
	\end{align*}
	In addition, we also have that
	\begin{align*}
		&\textbf{Term XIc} \\
		&= \int_{\mathbb{R}} \int_{\mathcal{S}(t+uh)} \frac{u K\left(u\right) \cdot \left[p(t+uh,\bm{s}_1) - \hat{p}(t+uh,\bm{s}_1)\right]\cdot \bar{p}_{\zeta}(\bm{s}_1|t)}{h\cdot \kappa_2\cdot \hat{p}(t+uh,\bm{s}_1)} \left[\mu(t+uh,\bm{s}_1) - \bar{\mu}(t,\bm{s}_1) - hu\cdot \bar{\beta}(t,\bm{s}_1)\right] d\bm{s}_1 du \\
		&\quad + \int_{\mathbb{R}} \int_{\mathcal{S}(t+uh)} \frac{u K\left(u\right) \cdot \left[\hat{p}_{\zeta}(\bm{s}_1|t) - \bar{p}_{\zeta}(\bm{s}_1|t)\right]\cdot p(t+uh,\bm{s}_1)}{h\cdot \kappa_2\cdot \hat{p}(t+uh,\bm{s}_1)} \left[\mu(t+uh,\bm{s}_1) - \bar{\mu}(t,\bm{s}_1) - hu\cdot \bar{\beta}(t,\bm{s}_1)\right] d\bm{s}_1 du \\
		&\lesssim \frac{1}{h}\left[\norm{\hat{p}_{\zeta}(\bm{S}|t) - p_{\zeta}(\bm{S}|t)}_{L_2} + \sup_{|u-t|\leq h} \norm{\hat{p}(u,\bm{S}) - p(u,\bm{S})}_{L_2}\right] \\
		&=o_P\left(\sqrt{\frac{1}{nh^3}}\right)
	\end{align*}
	by the upper boundedness of $\mu,\bar{\mu}$ under Assumption~\ref{assump:reg_diff}, the lower boundedness of $\hat{p}$ away from 0 around the support $\mathcal{J}$, and our assumption (c) on the doubly robust rate of convergence in the theorem statement. Specifically, since $\norm{\hat{\mu}(t,\bm{S}) - \bar{\mu}(t,\bm{S})}_{L_2} + h \norm{\hat{\beta}(t,\bm{S}) - \bar{\beta}(t,\bm{S})}_{L_2}= O_P(1)$ when $\bar{\mu}\neq \mu$ and $\bar{\beta}\neq \beta$, our assumption (c) ensures that $\norm{\hat{p}_{\zeta}(\bm{S}|t) - \bar{p}_{\zeta}(\bm{S}|t)}_{L_2} + \sup_{|u-t|\leq h} \norm{\hat{p}(u,\bm{S}) - p(u,\bm{S})}_{L_2} =o_P\left(\sqrt{\frac{1}{nh}}\right)$.
	
	On the other hand, when $\bar{\mu}=\mu$ and $\bar{\beta}=\beta$, we know from Assumptions~\ref{assump:den_diff} and \ref{assump:cond_support_smooth} on $\bar{p},\bar{p}_{\zeta}$ that $\frac{\bar{p}_{\zeta}}{\bar{p}}$ is bounded away from 0 within the support and thus,
	\begin{align*}
		&\textbf{Term XIa} \\
		&= \int_{\mathbb{R}} \int_{\bar{\mathcal{S}}(t)\ominus \zeta} \frac{u^2 K\left(u\right) \cdot \bar{p}_{\zeta}(\bm{s}_1|t)\cdot p(t+uh,\bm{s}_1)}{\kappa_2\cdot \bar{p}(t+uh,\bm{s}_1)} \left[\beta(t,\bm{s}_1) - \hat{\beta}(t,\bm{s}_1)\right] d\bm{s}_1 du + \int \left[\hat{\beta}(t,\bm{s}) - \beta(t,\bm{s})\right]\bar{p}_{\zeta}(\bm{s}|t)\, d\bm{s}\\
		&\stackrel{\text{(i)}}{\lesssim} \norm{\hat{\beta}(t,\bm{S}) - \beta(t,\bm{S})}_{L_2}\\
		&\stackrel{\text{(ii)}}{=}o_P\left(\sqrt{\frac{1}{nh^3}}\right),
	\end{align*}
	where (i) uses the facts that $\bar{p}_{\zeta}$ is upper bounded while the marginal density $p_S$ is lower bounded away from 0 within the $\zeta$-interior conditional support $\mathcal{S}(t)\ominus \zeta$ and (ii) utilizes our assumption (c) on the doubly robust rate of convergence in the theorem statement to argue that $\norm{\hat{\beta}(t,\bm{S}) - \beta(t,\bm{S})}_{L_2}=o_P\left(\sqrt{\frac{1}{nh^3}}\right)$ if $\bar{p}\neq p, \bar{p}_{\zeta}\neq p_{\zeta}$ and $\norm{\hat{p}_{\zeta}(\bm{S}|t) - p_{\zeta}(\bm{S}|t)}_{L_2} + \sup_{|u-t|\leq h} \norm{\hat{p}(u,\bm{S}) - p(u,\bm{S})}_{L_2}=O_P(1)$. In addition, we also have that
	\begin{align*}
		\textbf{Term XIb} &= \int_{\mathbb{R}} \int_{\mathcal{S}(t+uh)} \frac{u K\left(u\right)\cdot \bar{p}_{\zeta}(\bm{s}_1|t) \cdot p(t+uh,\bm{s}_1)}{h \cdot \kappa_2\cdot \bar{p}(t+uh,\bm{s}_1)}\left[\mu(t,\bm{s}_1) - \hat{\mu}(t,\bm{s}_1)\right] d\bm{s}_1 du \\
		&\lesssim \frac{1}{h} \norm{\hat{\mu}(t,\bm{S}) - \mu(t,\bm{S})}_{L_2}\\
		&=o_P\left(\sqrt{\frac{1}{nh^3}}\right),
	\end{align*}
	where we again argue from our assumption (c) on the doubly robust rate of convergence in the theorem statement that $\norm{\hat{\mu}(t,\bm{S}) - \mu(t,\bm{S})}_{L_2}=o_P\left(\sqrt{\frac{1}{nh}}\right)$ if $\bar{p}\neq p, \bar{p}_{\zeta}\neq p_{\zeta}$ and $\norm{\hat{p}_{\zeta}(\bm{S}|t) - p_{\zeta}(\bm{S}|t)}_{L_2} + \sup_{|u-t|\leq h} \norm{\hat{p}(u,\bm{S}) - p(u,\bm{S})}_{L_2}=O_P(1)$. Finally, we also derive that
	\begin{align*}
		&\textbf{Term XIc} \\
		&= \int_{\mathbb{R}} \int_{\mathcal{S}(t+uh)} \frac{u K\left(u\right) \cdot p(t+uh,\bm{s}_1)}{h\cdot \kappa_2}\left[\frac{\hat{p}_{\zeta}(\bm{s}_1|t)}{\hat{p}(t+uh,\bm{s}_1)} - \frac{\bar{p}_{\zeta}(\bm{s}_1|t)}{\bar{p}(t+uh,\bm{s}_1)} \right] \left[\mu(t+uh,\bm{s}_1) - \mu(t,\bm{s}_1) - hu\cdot \beta(t,\bm{s}_1)\right] d\bm{s}_1 du\\
		&= \int_{\mathbb{R}} \int_{\mathcal{S}(t+uh)} \frac{u K\left(u\right) \cdot p(t+uh,\bm{s}_1)\left[\hat{p}_{\zeta}(\bm{s}_1|t) - \bar{p}_{\zeta}(\bm{s}_1|t) \right]}{h\cdot \kappa_2\cdot \hat{p}(t+uh,\bm{s}_1)} \left[\mu(t+uh,\bm{s}_1) - \mu(t,\bm{s}_1) - hu\cdot \beta(t,\bm{s}_1)\right] d\bm{s}_1 du\\
		&\quad + \int_{\mathbb{R}} \int_{\bar{\mathcal{S}}(t)\ominus \zeta} \frac{u K\left(u\right) \cdot p(t+uh,\bm{s}_1)\cdot \bar{p}_{\zeta}(\bm{s}_1|t)\left[\bar{p}(t+uh,\bm{s}_1) - \hat{p}(t+uh,\bm{s}_1) \right]}{h\cdot \kappa_2\cdot \hat{p}(t+uh,\bm{s}_1)\cdot \bar{p}(t+uh,\bm{s}_1)}\\
		&\quad \quad \times \left[\mu(t+uh,\bm{s}_1) - \mu(t,\bm{s}_1) - hu\cdot \beta(t,\bm{s}_1)\right] d\bm{s}_1 du\\
		&\stackrel{\text{(i)}}{=} \int_{\mathbb{R}} \int_{\mathcal{S}(t+uh)} \frac{u K\left(u\right) \left[\hat{p}_{\zeta}(\bm{s}_1|t) - \bar{p}_{\zeta}(\bm{s}_1|t) \right] \left[\frac{u^2h^2}{2}\cdot \frac{\partial^2}{\partial t^2}\mu(t,\bm{s}_1) + O(h^3)\right] \left[p(t,\bm{s}_1) + 2uh\cdot \frac{\partial}{\partial t} p(t,\bm{s}_1) + O(h^2)\right]}{h\cdot \kappa_2\left[\bar{p}(t,\bm{s}_1) + 2uh\cdot \frac{\partial}{\partial t} \bar{p}(t,\bm{s}_1) + O(h^2)\right]\left[1+O_P\left(\Upsilon_{6,n}\right)\right]} \, d\bm{s}_1 du\\
		&\quad + \int_{\bar{\mathcal{S}}(t)\ominus \zeta} \sup_{|u-t|\leq h}\left|\bar{p}(u,\bm{s}_1) - \hat{p}(u,\bm{s}_1) \right|\\
		&\quad\quad \times \left|\int_{\mathbb{R}} \frac{u\cdot  K\left(u\right) \cdot \bar{p}_{\zeta}(\bm{s}_1|t)\left[\frac{u^2h^2}{2}\cdot \frac{\partial^2}{\partial t^2}\mu(t,\bm{s}_1) + O(h^3)\right] \left[p(t,\bm{s}_1) + 2uh\cdot \frac{\partial}{\partial t} p(t,\bm{s}_1) + O(h^2)\right]}{h\cdot \kappa_2\left[\bar{p}^2(t,\bm{s}_1) + 2uh\cdot \bar{p}(t,\bm{s}_1) \cdot \frac{\partial}{\partial t} \bar{p}(t,\bm{s}_1) + O(h^2)\right]\left[1+O_P\left(\Upsilon_{6,n}^2\right)\right]} \, du\right| d\bm{s}_1\\
		&= O_P\left(h^2 \norm{\hat{p}_{\zeta}(\bm{S}|t) - \bar{p}_{\zeta}(\bm{S}|t)}_{L_2}\right) + O_P\left(h^2\sup_{|u-t|\leq h}\norm{\bar{p}(u,\bm{S}) - \hat{p}(u,\bm{S})}_{L_2} \right)\\
		&= O_P\left(h^2\left[\Upsilon_{5,n}+\Upsilon_{6,n}\right]\right) \\
		&\stackrel{\text{(ii)}}{=} o_P\left(\sqrt{\frac{1}{nh^3}}\right),
	\end{align*}
	where (i) applies Taylor's expansion and mean-value theorem for integrals as well as uses the fact that the difference between $\bar{p}$ and $\hat{p}$ is small when $\sup_{|u-t|\leq h} \norm{\hat{p}(u,\bm{S}) - \bar{p}(u,\bm{S})}_{L_2}=O_P\left(\Upsilon_{6,n}\right)$, while (ii) leverages the arguments that $\sqrt{nh^3}\cdot h^2 = \sqrt{nh^7} \to \sqrt{c_3}\in [0,\infty)$ and $\Upsilon_{5,n},\Upsilon_{6,n}\to 0$ as $n\to \infty$.
	
	\subsubsection{Asymptotic Normality of $\hat{\theta}_{\mathrm{C,DR}}(t)$}
	\label{app:theta_nopos_asym_norm}
	
	For the asymptotic normality of $\hat{\theta}_{\mathrm{C,DR}}(t)$, it follows from the Lyapunov central limit theorem. Specifically, we already show in \autoref{app:theta_nopos_Term_VI} and subsequent subsections that
	\begin{align*}
		&\sqrt{nh^3}\left[\hat{\theta}_{\mathrm{C,DR}}(t) - \theta(t)\right] \\
		&= \frac{1}{\sqrt{n}} \sum_{i=1}^n \left\{\phi_{C,h,t}\left(Y_i,T_i,\bm{S}_i;\bar{\mu}, \bar{\beta}, \bar{p},\bar{p}_{\zeta}\right) + \sqrt{h^3}\left[\int \bar{\beta}(t,\bm{s})\cdot \bar{p}_{\zeta}(\bm{s}|t)\, d\bm{s} - \theta(t)\right] \right\} +o_P(1) \\
		&= \frac{1}{\sqrt{n}} \sum_{i=1}^n \phi_{C,h,t}\left(Y_i,T_i,\bm{S}_i;\bar{\mu}, \bar{\beta}, \bar{p},\bar{p}_{\zeta}\right)  +o_P(1)
	\end{align*}
	with 
	$$\phi_{C,h,t}\left(Y,T,\bm{S}; \bar{\mu},\bar{\beta}, \bar{p},\bar{p}_{\zeta}\right) = \frac{\left(\frac{T-t}{h}\right) K\left(\frac{T-t}{h}\right) \cdot \bar{p}_{\zeta}(\bm{S}|t)}{\sqrt{h}\cdot \kappa_2\cdot \bar{p}(T,\bm{S})}\cdot \left[Y - \bar{\mu}(t,\bm{S}) - (T-t)\cdot \bar{\beta}(t,\bm{S})\right]$$
	and $V_{C,\theta}(t) = \mathbb{E}\left[\phi_{C,h,t}^2\left(Y,T,\bm{S}; \bar{\mu},\bar{\beta}, \bar{p},\bar{p}_{\zeta}\right)\right] = O(1)$ by our calculation in \textbf{Term VI}. Then, 
	$$\sum_{i=1}^n \mathrm{Var}\left[\frac{1}{\sqrt{n}} \cdot \phi_{C,h,t}\left(Y_i,T_i,\bm{S}_i; \bar{\mu},\bar{\beta}, \bar{p},\bar{p}_{\zeta}\right) \right] = O(1)$$ 
	and 
	\begin{align*}
		&\sum_{i=1}^n \mathbb{E}\left|\frac{1}{\sqrt{n}} \cdot \phi_{C,h,t}\left(Y_i,T_i,\bm{S}_i; \bar{\mu},\bar{\beta}, \bar{p},\bar{p}_{\zeta}\right)\right|^{2+c_1} \\
		&= \mathbb{E}\left|\frac{\left(\frac{T-t}{h}\right)^{2+c_1}K^{2+c_1}\left(\frac{T-t}{h}\right) \cdot \bar{p}_{\zeta}^{2+c_1}(\bm{S}|t)\cdot \left[Y-\bar{\mu}(t,\bm{S}) - (T-t)\cdot \bar{\beta}(t,\bm{S})\right]^{2+c_1}}{n^{\frac{c_1}{2}} h^{1+\frac{c_1}{2}}\cdot \kappa_2^{2+c_1} \cdot \bar{p}^{2+c_1}(T,\bm{S})} \right| \\
		&\lesssim \int_{\mathbb{R}} \int_{\bar{\mathcal{S}}(t)\ominus \zeta} \frac{u^{2+c_1}K^{2+c_1}(u) \cdot \bar{p}_{\zeta}^{2+c_1}(\bm{s}_1|t)\left[\left[ \mu(t+uh,\bm{s}_1)-\bar{\mu}(t,\bm{s}_1) -hu\cdot \bar{\beta}(t,\bm{s}_1)\right]^{2+c_1} + \mathbb{E}|\epsilon|^{2+c_1}\right]}{\sqrt{(nh)^{c_1}}\bar{p}^{2+c_1}(t+uh,\bm{s}_1)} \\ 
		&\quad \times p(t+uh,\bm{s}_1)\, d\bm{s}_1 du \\
		&=O\left(\sqrt{\frac{1}{(nh)^{c_1}}}\right) =o(1)
	\end{align*}
	by the upper boundedness of $\mu,\bar{\mu},p$ under Assumptions~\ref{assump:reg_diff} and \ref{assump:den_diff}, the lower boundedness of $\bar{p}$ away from 0 around the support $\mathcal{J}$, the assumption that $\mathbb{E}|\epsilon|^{2+c_1}<\infty$, and the requirement that $nh^3\to\infty$ as $n\to \infty$. Hence, the Lyapunov condition holds, and we have that
	$$\sqrt{nh^3}\left[\hat{\theta}_{\mathrm{C,DR}}(t) - \theta(t) - h^2 B_{C,\theta}(t)\right] \stackrel{d}{\to} \mathcal{N}\left(0,V_{C,\theta}(t)\right)$$
	after subtracting the dominating bias term $h^2 B_{C,\theta}(t)$ of $\phi_{C,h,t}\left(Y,T,\bm{S}; \bar{\mu}, \bar{\beta}, \bar{p}, \bar{p}_{\zeta}\right)$ that we have computed in \textbf{Term VI}. The proof is thus completed.
\end{proof}

\section{Asymptotic Theory of Estimating $m(t)$ Without Positivity}
\label{app:m_nopos_proof}

Under the additive confounding model \eqref{add_conf_model}, Proposition~\ref{prop:id_additive} implies that we can define the integral RA estimator \eqref{m_RA_corrected} as well as the integral IPW and DR estimators of $m(t)$ based on \eqref{theta_DR_bnd_corrected} and \eqref{theta_IPW_bnd_corrected} as:
$$\hat{m}_{\mathrm{C,IPW}}(t) = \frac{1}{n}\sum_{i=1}^n \left[Y_i + \int_{\tilde{t}=T_i}^{\tilde{t}=t} \hat{\theta}_{\mathrm{C,IPW}}(\tilde{t})\, d\tilde{t} \right] \quad \text{ and } \quad \hat{m}_{\mathrm{C,DR}}(t) = \frac{1}{n}\sum_{i=1}^n \left[Y_i + \int_{\tilde{t}=T_i}^{\tilde{t}=t} \hat{\theta}_{\mathrm{C,DR}}(\tilde{t})\, d\tilde{t} \right].$$ 
We prove in the following corollary that these integral estimators are consistent to $m(t)$ without assuming the positivity condition.

\begin{corollary}[Consistency of estimating $m(t)$ without positivity]
\label{cor:m_nopos}
	Suppose that Assumptions~\ref{assump:id_cond}, \ref{assump:reg_diff}, \ref{assump:den_diff}, \ref{assump:reg_kernel}, and \ref{assump:cond_support_smooth} are valid under the additive confounding model \eqref{add_conf_model}. In addition, $\hat{\mu},\hat{\beta}, \hat{p}_{\zeta}, \hat{p}$ are constructed on a data sample independent of $\{(Y_i,T_i,\bm{S}_i)\}_{i=1}^n$. For any fixed $t\in \mathcal{T}$, we let $\bar{\mu}(t,\bm{s})$, $\bar{\beta}(t,\bm{s})$, $\bar{p}_{\zeta}(\bm{s}|t)$, and $\bar{p}(t,\bm{s})$ be fixed bounded functions to which $\hat{\mu}(t,\bm{s})$, $\hat{\beta}(t,\bm{s})$, $\hat{p}_{\zeta}(\bm{s}|t)$, and $\hat{p}(t,\bm{s})$ converge under the rates of convergence as:
	\begin{align*}
		&\sup_{t\in \mathcal{T}}\norm{\hat{\beta}(t,\bm{S}) - \bar{\beta}(t,\bm{S})}_{L_2} = O_P\left(\bar{\Upsilon}_{3,n}\right), \quad \sup_{t\in \mathcal{T}}\sup_{\bm{s}\in \mathcal{S}}\left|\hat{F}_{\bm{S}|T}(\bm{s}|t) - F_{\bm{S}|T}(\bm{s}|t)\right| = O_P\left(\bar{\Upsilon}_{4,n}\right), \\
		& \sup_{t\in \mathcal{T}}\norm{\hat{p}_{\zeta}(\bm{S}|t) - \bar{p}_{\zeta}(\bm{S}|t)}_{L_2}=O_P\left(\bar{\Upsilon}_{5,n}\right), \quad \text{ and } \quad \sup_{u\in \mathcal{T} \oplus h}\norm{\hat{p}(u,\bm{S}) - \bar{p}(u,\bm{S})}_{L_2}=O_P\left(\bar{\Upsilon}_{6,n}\right),
	\end{align*}
	where $\mathcal{T} \oplus h=\left\{u\in \mathbb{R}: \inf_{t\in \mathcal{T}} |u-t| \leq h\right\}$ and $\bar{\Upsilon}_{3,n}, \bar{\Upsilon}_{4,n}, \bar{\Upsilon}_{5,n}, \bar{\Upsilon}_{6,n}\to 0$ as $n\to \infty$. Then, as $h\to 0$ and $nh^3\to \infty$, we have that
	\begin{align*}
		&\sup_{t\in \mathcal{T}}\left|\hat{m}_{\mathrm{C,RA}}(t) - m(t)\right| = O_P\left(\bar{\Upsilon}_{3,n}+\bar{\Upsilon}_{4,n}+ \sup_{t\in \mathcal{T}}\norm{\bar{\beta}(t,\bm{S}) - \beta(t,\bm{S})}_{L_2} + \frac{1}{\sqrt{n}}\right), \\ 
		&\sup_{t\in \mathcal{T}}\left|\hat{m}_{\mathrm{C,IPW}}(t) - m(t)\right| = O(h^2) + O_P\left(\sqrt{\frac{|\log h|}{nh^3}} + \bar{\Upsilon}_{5,n} + \bar{\Upsilon}_{6,n} + \sup_{u\in \mathcal{T} \oplus h} \norm{\bar{p}(u,\bm{S}) - p(u,\bm{S})}_{L_2}\right). 
	\end{align*}
	If, in addition, we assume that 
	\begin{enumerate}[label=(\alph*)]
		\item $\bar{p},\bar{p}_{\zeta}$ satisfy Assumptions~\ref{assump:den_diff} and \ref{assump:cond_support_smooth} as well as $\sqrt{nh^3}\cdot \bar{\Upsilon}_{5,n}=o(1)$;
		\item either $\bar{\mu}=\mu$ and $\bar{\beta}=\beta$ or $\bar{p} = p$;
		\item {\small $\sup\limits_{t\in \mathcal{T}}\left[\norm{\hat{p}_{\zeta}(\bm{S}|t) - \bar{p}_{\zeta}(\bm{S}|t)}_{L_2} + \sup\limits_{|u-t|\leq h} \norm{\hat{p}(u,\bm{S}) - p(u,\bm{S})}_{L_2} \right]\left[\norm{\hat{\mu}(t,\bm{S}) - \mu(t,\bm{S})}_{L_2} + h \norm{\hat{\beta}(t,\bm{S}) - \beta(t,\bm{S})}_{L_2}\right] = o_P\left(\frac{1}{\sqrt{nh}}\right)$},
	\end{enumerate}
	then 
	\begin{align*}
		&\sqrt{nh^3}\left[\hat{m}_{\mathrm{C,DR}}(t) - m(t)\right] \\
		&= \frac{1}{\sqrt{n}} \sum_{i=1}^n \left\{ \mathbb{E}_{T_1}\left[\int_{T_1}^t  \left\{\phi_{C,h,\tilde{t}}\left(Y_i,T_i,\bm{S}_i; \bar{\mu},\bar{\beta}, \bar{p},\bar{p}_{\zeta}\right) + \sqrt{h^3}\int \left[\bar{\beta}(\tilde{t},\bm{s}) - \beta(\tilde{t},\bm{s})\right] \bar{p}_{\zeta}(\bm{s}|\tilde{t})\, d\bm{s} \right\} d\tilde{t}\right] \right\} + o_P(1)
	\end{align*}
	when $nh^7\to c_3$ for some finite number $c_3\geq 0$, where $\phi_{C,h,t}\left(Y,T,\bm{S}; \bar{\mu},\bar{\beta}, \bar{p},\bar{p}_{\zeta}\right)$ is defined in \autoref{thm:theta_nopos}. 
	Furthermore, 
	$$\sqrt{nh^3}\left\{\hat{m}_{\mathrm{C,DR}}(t) - m(t) - h^2\cdot  \mathbb{E}_{T_1}\left[\int_{T_1}^t B_{C,\theta}(\tilde{t})\, d\tilde{t} \right]\right\} \stackrel{d}{\to} \mathcal{N}\left(0,V_{C,m}(t)\right)$$
	with $V_{C,m}(t) = \mathbb{E}\left[\left\{\mathbb{E}_{T_1}\left[\int_{T_1}^t \phi_{C,h,\tilde{t}}\left(Y,T,\bm{S};\bar{\mu}, \bar{\beta}, \bar{p},\bar{p}_{\zeta}\right) d\tilde{t}\right] \right\}^2 \right]$ with $B_{C,\theta}(t)$ defined in \autoref{thm:theta_nopos}.
\end{corollary}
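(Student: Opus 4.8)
The plan is to reduce the statement to the already-proved theory for $\hat\theta_{\mathrm{C},\bullet}$ via the integral identification formula of Proposition~\ref{prop:id_additive}, namely $m(t)=\mathbb{E}(Y)+\mathbb{E}_{T_1}\!\left[\int_{T_1}^{t}\theta(\tilde t)\,d\tilde t\right]$ under model \eqref{add_conf_model} (with $\mathcal{T}$ compact and connected by Assumption~\ref{assump:den_diff}(c)). For each of the three estimators I would write
\[
\hat m_{\mathrm{C},\bullet}(t)-m(t)
=\tfrac1n\!\sum_{i}\!\bigl[Y_i-\mathbb{E}(Y)\bigr]
+\tfrac1n\!\sum_{i}\!\Bigl[\textstyle\int_{T_i}^{t}\!\theta-\mathbb{E}_{T_1}\!\int_{T_1}^{t}\!\theta\Bigr]
+\tfrac1n\!\sum_{i}\!\int_{T_i}^{t}\!\bigl[\hat\theta_{\mathrm{C},\bullet}(\tilde t)-\theta(\tilde t)\bigr]d\tilde t .
\]
The first two summands are sample averages of bounded i.i.d.\ quantities (boundedness of $\bar m,\eta$ on the compact supports and of $\theta$ on $\mathcal{T}$); by a Glivenko--Cantelli/Donsker argument for the Lipschitz-in-$t$ class $t\mapsto\int_{T_i}^t\theta$ they are $O_P(n^{-1/2})$ uniformly in $t$, and since $\sqrt{nh^3}\,n^{-1/2}=\sqrt{h^3}\to0$ they are negligible at the $\sqrt{nh^3}$ scale (they leave only the residual $n^{-1/2}$ appearing in the $\hat m_{\mathrm{C,RA}}$ rate).

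The key manipulation for the third, ``integrated estimation error'' term is $\tfrac1n\sum_i\int_{T_i}^t f(\tilde t)\,d\tilde t=\int_{\mathcal{T}}W_{n,t}(\tilde t)\,f(\tilde t)\,d\tilde t$ with the bounded step-function weight $W_{n,t}(\tilde t)=\tfrac1n\sum_i\operatorname{sgn}(t-T_i)\,\mathbbm{1}_{\{\tilde t\in[\min(T_i,t),\max(T_i,t)]\}}\in[-1,1]$, which by a VC empirical-process bound for the empirical c.d.f.\ of $T$ obeys $\sup_{t,\tilde t\in\mathcal{T}}|W_{n,t}(\tilde t)-w_t(\tilde t)|=O_P(n^{-1/2})$, where $w_t(\tilde t)=\mathbb{E}\bigl[\operatorname{sgn}(t-T_1)\mathbbm{1}_{\{\tilde t\in[\min(T_1,t),\max(T_1,t)]\}}\bigr]$. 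For $\hat m_{\mathrm{C,RA}}$ and $\hat m_{\mathrm{C,IPW}}$ I would simply bound $\bigl|\int W_{n,t}(\hat\theta_{\mathrm{C},\bullet}-\theta)\bigr|\le\operatorname{diam}(\mathcal{T})\,\sup_{\tilde t\in\mathcal{T}}|\hat\theta_{\mathrm{C},\bullet}(\tilde t)-\theta(\tilde t)|$ and invoke the \emph{uniform} versions of the $\hat\theta_{\mathrm{C,RA}}$ and $\hat\theta_{\mathrm{C,IPW}}$ rates from \autoref{thm:theta_nopos} (the ``side notes'' in its proof; the IPW rate acquires the $\sqrt{|\log h|/(nh^3)}$ factor from the uniform-in-bandwidth maximal inequality of \citet{einmahl2005uniform}), which hold under the sup-over-$t$ nuisance-rate hypotheses of the corollary; combining with the $O_P(n^{-1/2})$ from the first two summands gives exactly the stated rates.

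For $\hat m_{\mathrm{C,DR}}$ I would split $\int W_{n,t}(\hat\theta_{\mathrm{C,DR}}-\theta)=\int(W_{n,t}-w_t)(\hat\theta_{\mathrm{C,DR}}-\theta)+\int w_t(\hat\theta_{\mathrm{C,DR}}-\theta)$: the first term is $O_P(n^{-1/2})\cdot O_P\!\bigl(\sqrt{|\log h|/(nh^3)}\bigr)$, which times $\sqrt{nh^3}$ is $o_P(1)$. Into the second term I would substitute the \emph{uniform} asymptotic-linear expansion $\hat\theta_{\mathrm{C,DR}}(\tilde t)-\theta(\tilde t)=\tfrac1n\sum_j\bigl\{h^{-3/2}\phi_{C,h,\tilde t}(Y_j,T_j,\bm S_j;\bar\mu,\bar\beta,\bar p,\bar p_\zeta)+\int[\bar\beta(\tilde t,\bm s)-\beta(\tilde t,\bm s)]\bar p_\zeta(\bm s|\tilde t)\,d\bm s\bigr\}+R_n(\tilde t)$ with $\sup_{\tilde t\in\mathcal{T}}|R_n(\tilde t)|=o_P((nh^3)^{-1/2})$; then $\sqrt{nh^3}\int w_t R_n=o_P(1)$, and by Fubini $\int w_t(\tilde t)\{\cdots\}_j(\tilde t)\,d\tilde t=\mathbb{E}_{T_1}\bigl[\int_{T_1}^t\{\cdots\}_j(\tilde t)\,d\tilde t\bigr]$, so $\sqrt{nh^3}(\hat m_{\mathrm{C,DR}}(t)-m(t))=\tfrac1{\sqrt n}\sum_j\mathbb{E}_{T_1}[\int_{T_1}^t\{\phi_{C,h,\tilde t}(Y_j,\dots)+\sqrt{h^3}\int[\bar\beta-\beta]\bar p_\zeta\}d\tilde t]+o_P(1)$, which is the asserted linear form. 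Integrating the pointwise bias $h^2B_{C,\theta}(\tilde t)+o(h^2)$ of \autoref{thm:theta_nopos} against $w_t$ and using $nh^7\to c_3$ identifies the recentering $h^2\mathbb{E}_{T_1}[\int_{T_1}^tB_{C,\theta}(\tilde t)\,d\tilde t]$; a Lyapunov CLT then applies to the i.i.d.\ summand $\mathbb{E}_{T_1}[\int_{T_1}^t\phi_{C,h,\tilde t}(Y,T,\bm S;\dots)d\tilde t]$, whose second moment is $V_{C,m}(t)$ by definition and whose $(2+c_1)$-th moment ratio is shown to vanish by moving $\mathbb{E}_{T_1}$ and $\int d\tilde t$ outside $|\cdot|^{2+c_1}$ via Jensen and reusing the kernel-order estimates from \textbf{Term VI} of the proof of \autoref{thm:theta_nopos}.

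The main obstacle is the first input above: establishing the \emph{uniform}-in-$\tilde t$ asymptotic-linear expansion of $\hat\theta_{\mathrm{C,DR}}$ with an $o_P((nh^3)^{-1/2})$ remainder, since the proof of \autoref{thm:theta_nopos} is stated pointwise in $t$. Upgrading the empirical-process terms (\textbf{Term VIII}, \textbf{Term IX}) to uniform control requires the VC-type hypothesis on $K$ (Assumption~\ref{assump:reg_kernel}(c)) together with the Lipschitz-in-$t$ regularity of $\mu,\beta$ and of the densities, and upgrading the bias/product terms (\textbf{Term VI}, \textbf{Term X}, \textbf{Term XI}) to hold uniformly is precisely where the corollary's sup-over-$t$ strengthenings of conditions (a)--(c) are consumed. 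A secondary, more mechanical difficulty is the order bookkeeping in the Lyapunov step: $\phi_{C,h,\tilde t}$ carries an $h^{-1/2}$ prefactor and is supported on an $O(h)$-window in $\tilde t$, so one must check that after the $\int_{T_1}^t\!\cdot\,d\tilde t$ integration the symmetry of $K$ (which already produced the $O(h^2)$ bias) leaves a summand of the correct order for the $\sqrt{nh^3}$ normalization; this, however, runs parallel to the computations already carried out for \textbf{Term VI} in \autoref{thm:theta_nopos}.
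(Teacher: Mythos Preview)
Your proposal is correct and, for the RA and IPW estimators, follows the paper's proof essentially verbatim: the three-term decomposition via Proposition~\ref{prop:id_additive}, the $O_P(n^{-1/2})$ control of the first two sample-mean terms, and the bound $\bigl|\tfrac1n\sum_i\int_{T_i}^t(\hat\theta_{\mathrm{C},\bullet}-\theta)\bigr|\le \tfrac1n\sum_i|t-T_i|\cdot\sup_{\tilde t}|\hat\theta_{\mathrm{C},\bullet}(\tilde t)-\theta(\tilde t)|$ are exactly what the paper does in \eqref{m_decomp}--\eqref{m_decomp2} and \S\ref{app:m_RA_nopos}--\ref{app:m_IPW_nopos}.

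For $\hat m_{\mathrm{C,DR}}$ your route diverges from the paper's in one organizational respect. You first replace the empirical weight $W_{n,t}$ by its deterministic limit $w_t$ and then insert a \emph{uniform}-in-$\tilde t$ asymptotic linear expansion of $\hat\theta_{\mathrm{C,DR}}(\tilde t)-\theta(\tilde t)$ with $\sup_{\tilde t}|R_n(\tilde t)|=o_P((nh^3)^{-1/2})$, which you correctly flag as the main obstacle. The paper avoids proving this uniform remainder bound as a standalone lemma: instead it substitutes the \emph{pointwise} expansion $\hat\theta_{\mathrm{C,DR}}(\tilde t)-\theta(\tilde t)=\mathbb P_n[h^{-3/2}\phi_{C,h,\tilde t}]+\int[\bar\beta-\beta]\bar p_\zeta+A_{n,h}(\tilde t)$ directly into the double sum, treats the leading piece as a V-statistic and applies a Hoeffding-type (Pascal's rule) decomposition into \textbf{Terms Ia--Id}, and then carries each individual remainder component of $A_{n,h}$ (i.e.\ the original \textbf{Terms VII--XI} of the $\hat\theta_{\mathrm{C,DR}}$ proof) through the integral $\tfrac1n\sum_i\int_{T_i}^t(\cdot)\,d\tilde t$ and bounds them one-by-one as \textbf{Terms IIIa--IIIg}, several via second-moment V-statistic calculations rather than sup bounds. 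Your approach is cleaner and more modular, but it front-loads work into the uniform expansion; the paper's approach is more granular but never needs $\sup_{\tilde t}|R_n(\tilde t)|$ as such. Both arrive at the same influence-function representation and Lyapunov CLT, and your identification of $w_t$ with $\mathbb E_{T_1}[\int_{T_1}^t\cdot\,d\tilde t]$ via Fubini is exactly the mechanism behind the paper's \textbf{Term Ia}.
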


\begin{proof}[Proof of Corollary~\ref{cor:m_nopos}]
	Recall from \eqref{m_RA_corrected} that our integral estimator of $m(t)$ is defined as:
	$$\hat{m}_C(t) = \frac{1}{n}\sum_{i=1}^n \left[Y_i + \int_{\tilde{t}=T_i}^{\tilde{t}=t} \hat{\theta}_C(\tilde{t})\, d\tilde{t} \right],$$
	where $\hat{\theta}_C(t)$ can be either RA \eqref{theta_RA_corrected}, IPW \eqref{theta_IPW_bnd_corrected}, or DR \eqref{theta_DR_bnd_corrected} estimators. By \eqref{id_m} in Proposition~\ref{prop:id_additive}, we have that
	\begin{align}
		\label{m_decomp}
		\hat{m}_C(t) - m(t) &= \frac{1}{n}\sum_{i=1}^n Y_i - \mathbb{E}(Y) + \frac{1}{n} \sum_{i=1}^n \int_{T_i}^t \theta(\tilde{t})\, d\tilde{t} - \mathbb{E}\left[\int_T^t \theta(\tilde{t})\, d\tilde{t}\right] + \frac{1}{n}\sum_{i=1}^n \int_{T_i}^t \left[\hat{\theta}_C(\tilde{t}) - \theta(\tilde{t})\right]d\tilde{t}.
	\end{align}
	Under Assumption~\ref{assump:reg_diff} and the condition $\mathbb{E}|\epsilon^{2+c_1}| < \infty$ for some constant $c_1>0$, it is valid that $\mathrm{Var}(Y) <\infty$ and $\mathrm{Var}\left[\int_T^t \theta(\tilde{t})\, d\tilde{t}\right] < \infty$. Thus, by Chebyshev's inequality, we know that
	$$\frac{1}{n}\sum_{i=1}^n Y_i - \mathbb{E}(Y) = O_P\left(\frac{1}{\sqrt{n}}\right) \quad \text{ and } \quad \frac{1}{n} \sum_{i=1}^n \int_{T_i}^t \theta(\tilde{t})\, d\tilde{t} - \mathbb{E}\left[\int_T^t \theta(\tilde{t})\, d\tilde{t}\right]= O_P\left(\frac{1}{\sqrt{n}}\right).$$
	Furthermore, under Assumption~\ref{assump:reg_diff}, $\left|\int_T^{t_1} \theta(\tilde{t})\, d\tilde{t} - \int_T^{t_2} \theta(\tilde{t})\, d\tilde{t} \right| \leq \sup_{t\in \mathcal{T}}|\theta(t)| \cdot |t_1-t_2|$. Together with the compactness of $\mathcal{T}$ and Example 19.7 in \cite{VDV1998}, we also deduce that
	$$\sup_{t\in \mathcal{T}}\left|\frac{1}{n} \sum_{i=1}^n \int_{T_i}^t \theta(\tilde{t})\, d\tilde{t} - \mathbb{E}\left[\int_T^t \theta(\tilde{t})\, d\tilde{t}\right] \right|= O_P\left(\frac{1}{\sqrt{n}}\right).$$
	Therefore, plugging the above rates of convergence back into \eqref{m_decomp}, we conclude that
	\begin{align}
		\label{m_decomp2}
		\hat{m}_C(t) - m(t) &= \frac{1}{n}\sum_{i=1}^n \int_{T_i}^t \left[\hat{\theta}_C(\tilde{t}) - \theta(\tilde{t})\right]d\tilde{t} + O_P\left(\frac{1}{\sqrt{n}}\right).
	\end{align}
	Now, we derive the uniform rates of convergence for $\hat{m}_{\mathrm{C,RA}}(t)$ and $\hat{m}_{\mathrm{C,IPW}}(t)$ when $\hat{\theta}_C(t)$ are given by $\hat{\theta}_{\mathrm{C,RA}}(t)$ in \eqref{theta_RA_corrected} and $\hat{\theta}_{\mathrm{C,IPW}}(t)$ in \eqref{theta_IPW_bnd_corrected} respectively in \autoref{app:m_RA_nopos} and \autoref{app:m_IPW_nopos}. We also prove the asymptotic linearity, double robustness, and asymptotic normality of $\hat{m}_{\mathrm{C,DR}}(t)$ when $\hat{\theta}_C(t)$ is given by $\hat{\theta}_{\mathrm{C,DR}}(t)$ in \eqref{theta_DR_bnd_corrected} in \autoref{app:m_DR_nopos}.
	
	\subsection{Uniform Rate of Convergence of $\hat{m}_{\mathrm{C,RA}}(t)$}
	\label{app:m_RA_nopos}
	
	Firstly, we derive the rate of convergence for $\hat{m}_{\mathrm{C,RA}}(t)$ when the derivative estimator is given by $\hat{\theta}_{\mathrm{C,RA}}(t)$ in \eqref{theta_RA_corrected}. By \eqref{m_decomp2}, we have that
	\begin{align*}
		\sup_{t\in \mathcal{T}}\left|\hat{m}_{\mathrm{C,RA}}(t) - m(t)\right| &\leq \sup_{t\in \mathcal{T}}\left|\frac{1}{n}\sum_{i=1}^n \int_{T_i}^t \left[\hat{\theta}_{\mathrm{C,RA}}(\tilde{t}) - \theta(\tilde{t})\right]d\tilde{t}\right| + O_P\left(\frac{1}{\sqrt{n}}\right) \\
		&\leq \sup_{t\in \mathcal{T}} \left[\left|\hat{\theta}_{\mathrm{C,RA}}(t) - \theta(t) \right|\cdot \frac{1}{n}\sum_{i=1}^n |t-T_i| \right] + O_P\left(\frac{1}{\sqrt{n}}\right).
	\end{align*}
	The compactness of $\mathcal{T}$ by Assumption~\ref{assump:reg_diff} and Markov's inequality imply that $\sup_{t\in \mathcal{T}} \frac{1}{n}\sum_{i=1}^n |t-T_i| = O_P(1)$. Therefore, by \autoref{thm:theta_nopos}, we conclude that
	\begin{align*}
		\sup_{t\in \mathcal{T}}\left|\hat{m}_{\mathrm{C,RA}}(t) - m(t)\right| &\leq \sup_{t\in \mathcal{T}}\left|\hat{\theta}_{\mathrm{C,RA}}(t) - \theta(t) \right| \cdot \sup_{t\in \mathcal{T}} \left[\frac{1}{n}\sum_{i=1}^n |t-T_i| \right] + O_P\left(\frac{1}{\sqrt{n}}\right)\\
		&= O_P\left(\bar{\Upsilon}_{3,n}+\bar{\Upsilon}_{4,n}+ \sup_{t\in \mathcal{T}}\norm{\bar{\beta}(t,\bm{S}) - \beta(t,\bm{S})}_{L_2} + \frac{1}{\sqrt{n}}\right).
	\end{align*}

	\subsection{Uniform Rate of Convergence of $\hat{m}_{\mathrm{C,IPW}}(t)$}
	\label{app:m_IPW_nopos}
	
	Secondly, we derive the rate of convergence for $\hat{m}_{\mathrm{C,IPW}}(t)$ when the derivative estimator is given by $\hat{\theta}_{\mathrm{C,IPW}}(t)$ in \eqref{theta_IPW_bnd_corrected}. Similar to our arguments in \autoref{app:m_RA_nopos}, we derive from our results in the proof of \autoref{thm:theta_nopos} (\autoref{app:theta_IPW_nopos}) that 
	\begin{align*}
		&\sup_{t\in \mathcal{T}}\left|\hat{m}_{\mathrm{C,IPW}}(t) - m(t)\right|\\ &\leq \sup_{t\in \mathcal{T}}\left|\hat{\theta}_{\mathrm{C,IPW}}(t) - \theta(t) \right| \cdot \sup_{t\in \mathcal{T}} \left[\frac{1}{n}\sum_{i=1}^n |t-T_i| \right] + O_P\left(\frac{1}{\sqrt{n}}\right)\\
		&= O(h^2) + O_P\left(\sqrt{\frac{|\log h|}{nh^3}} + \bar{\Upsilon}_{5,n} + \bar{\Upsilon}_{6,n} + \sup_{u\in \mathcal{T} \oplus h} \norm{\bar{p}(u,\bm{S}) - p(u,\bm{S})}_{L_2}\right).
	\end{align*}

	\subsection{Asymptotic Properties of $\hat{m}_{\mathrm{C,DR}}(t)$}
	\label{app:m_DR_nopos}
	
	Finally, using the asymptotic properties of $\hat{\theta}_{\mathrm{C,DR}}(t)$ in \autoref{thm:theta_nopos}, we shall establish the asymptotic properties of $\hat{m}_{\mathrm{C,DR}}(t)$ when the derivative function is given by $\hat{\theta}_{\mathrm{C,DR}}(t)$ in \eqref{theta_DR_bnd_corrected}. Recall from the proof of \autoref{thm:theta_nopos} (specifically, \autoref{app:theta_DR_nopos}) that 
	\begin{align*}
		&\hat{\theta}_{\mathrm{C,DR}}(t) - \theta(t) \\
		&= \mathbb{P}_n\left[\frac{1}{\sqrt{h^3}}\cdot \phi_{C,h,t}\left(Y_i,T_i,\bm{S}_i;\bar{\mu}, \bar{\beta}, \bar{p},\bar{p}_{\zeta}\right)\right] +  \int \left[\bar{\beta}(t,\bm{s})- \beta(t,\bm{s})\right] \bar{p}_{\zeta}(\bm{s}|t)\, d\bm{s} + A_{n,h}(t),
	\end{align*}
	where $\phi_{C,h,t}\left(Y,T,\bm{S}; \bar{\mu},\bar{\beta}, \bar{p},\bar{p}_{\zeta}\right) = \frac{\left(\frac{T-t}{h}\right) K\left(\frac{T-t}{h}\right) \cdot \bar{p}_{\zeta}(\bm{S}|t)}{\sqrt{h}\cdot \kappa_2\cdot \bar{p}(T,\bm{S})}\cdot \left[Y - \bar{\mu}(t,\bm{S}) - (T-t)\cdot \bar{\beta}(t,\bm{S})\right]$ and $A_{n,h}(t)=o_P\left(\sqrt{\frac{1}{nh^3}}\right)$ consists of \textbf{Term VII--XI} for $\hat{\theta}_{\mathrm{C,DR}}(t)$ in \autoref{app:theta_DR_nopos}. Then, by \eqref{m_decomp2} and model \eqref{add_conf_model}, we know that
	\begin{align*}
		&\hat{m}_{\mathrm{C,DR}}(t) - m(t) \\
		&= \frac{1}{n}\sum_{i_1=1}^n \int_{T_{i_1}}^t \left\{\frac{1}{n}\sum_{i_2=1}^n \frac{\left(\frac{T_{i_2}-\tilde{t}}{h}\right) K\left(\frac{T_{i_2}-\tilde{t}}{h}\right) \cdot \bar{p}_{\zeta}(\bm{S}_{i_2}|\tilde{t})}{h^2\cdot \kappa_2\cdot \bar{p}(T_{i_2},\bm{S}_{i_2})}\cdot \left[Y_{i_2} - \bar{\mu}(\tilde{t},\bm{S}_{i_2}) - (T_{i_2}-\tilde{t})\cdot \bar{\beta}(\tilde{t},\bm{S}_{i_2})\right]\right\} d\tilde{t} \\
		&\quad + \frac{1}{n}\sum_{i=1}^n \int_{T_i}^t \int \left[\bar{\beta}(\tilde{t},\bm{s}) - \beta(\tilde{t},\bm{s})\right] \bar{p}_{\zeta}(\bm{s}|\tilde{t})\, d\bm{s} d\tilde{t} + \frac{1}{n}\sum_{i=1}^n \int_{T_i}^t A_{n,h}(\tilde{t})\, d\tilde{t} + O_P\left(\frac{1}{\sqrt{n}}\right)\\
		&= \underbrace{\frac{1}{n^2}\sum_{i_1=1}^n \sum_{i_2=1}^n \int_{T_{i_1}}^t \frac{1}{\sqrt{h^3}}\cdot \phi_{C,h,\tilde{t}}\left(Y_{i_2},T_{i_2},\bm{S}_{i_2}; \bar{\mu},\bar{\beta}, \bar{p},\bar{p}_{\zeta}\right) d\tilde{t} + \mathbb{E}\left\{\int_T^t \int \left[\bar{\beta}(\tilde{t},\bm{s}) - \beta(\tilde{t},\bm{s})\right] \bar{p}_{\zeta}(\bm{s}|\tilde{t})\, d\bm{s} d\tilde{t}\right\}}_{\textbf{Term I}}\\
		&\quad + \underbrace{\frac{1}{n}\sum_{i=1}^n \int_{T_i}^t \int \left[\bar{\beta}(\tilde{t},\bm{s}) - \beta(\tilde{t},\bm{s})\right] \bar{p}_{\zeta}(\bm{s}|\tilde{t})\, d\bm{s} d\tilde{t} - \mathbb{E}\left\{\int_T^t \int \left[\bar{\beta}(\tilde{t},\bm{s}) - \beta(\tilde{t},\bm{s})\right] \bar{p}_{\zeta}(\bm{s}|\tilde{t})\, d\bm{s} d\tilde{t}\right\}}_{\textbf{Term II}}\\
		&\quad + \underbrace{\frac{1}{n}\sum_{i=1}^n \int_{T_i}^t A_{n,h}(\tilde{t})\, d\tilde{t}}_{\textbf{Term III}} + O_P\left(\frac{1}{\sqrt{n}}\right).
	\end{align*}
	As for \textbf{Term II}, we know from Assumptions~\ref{assump:reg_diff} and \ref{assump:den_diff} that $\mathbb{E}\left\{\left[\int_T^t \int \left[\bar{\beta}(\tilde{t},\bm{s}) - \beta(\tilde{t},\bm{s})\right] \bar{p}_{\zeta}(\bm{s}|\tilde{t})\, d\bm{s} d\tilde{t}\right]^2\right\} < \infty$ for any $t\in \mathcal{T}$. By the central limit theorem, 
	\begin{align*}
		\textbf{Term II} &= \frac{1}{n}\sum_{i=1}^n \int_{T_i}^t \int \left[\bar{\beta}(\tilde{t},\bm{s}) - \beta(\tilde{t},\bm{s})\right] \bar{p}_{\zeta}(\bm{s}|\tilde{t})\, d\bm{s} d\tilde{t} - \mathbb{E}\left\{\int_T^t \int \left[\bar{\beta}(\tilde{t},\bm{s}) - \beta(\tilde{t},\bm{s})\right] \bar{p}_{\zeta}(\bm{s}|\tilde{t})\, d\bm{s} d\tilde{t}\right\} \\
		&= O_P\left(\frac{1}{\sqrt{n}}\right).
	\end{align*}
	Thus, it remains to derive the asymptotic linearity of $\hat{m}_{\mathrm{C,DR}}(t) - m(t)$ from \textbf{Term I} in \autoref{app:m_nopos_Term_I} and argue that $\textbf{Term III} = o_P\left(\sqrt{\frac{1}{nh^3}}\right)$ in \autoref{app:m_nopos_Term_III}.
	
	\subsubsection{Analysis of \textbf{Term I} for $\hat{m}_{\mathrm{C,DR}}(t)$}
	\label{app:m_nopos_Term_I}
	
	Notice that the first term in \textbf{Term I} takes a form of V-statistics with a symmetric ``kernel'' defined as: 
	\begin{align*}
		\Lambda_t(\bm{U}_{i_1}, \bm{U}_{i_2}) &= \frac{1}{2}\int_{T_{i_1}}^t \frac{1}{\sqrt{h^3}}\cdot \phi_{C,h,\tilde{t}}\left(Y_{i_2},T_{i_2},\bm{S}_{i_2}; \bar{\mu},\bar{\beta}, \bar{p},\bar{p}_{\zeta}\right) d\tilde{t} \\
		&\quad + \frac{1}{2} \int_{T_{i_2}}^t \frac{1}{\sqrt{h^3}}\cdot \phi_{C,h,\tilde{t}}\left(Y_{i_1},T_{i_1},\bm{S}_{i_1}; \bar{\mu},\bar{\beta}, \bar{p},\bar{p}_{\zeta}\right) d\tilde{t},
	\end{align*}
	where $\bm{U}_i\equiv (Y_i,T_i,\bm{S}_i)$ for $i=1,...,n$.
	By Pascal's rule, we know that
	\begin{align*}
		\textbf{Term I} &= \mathbb{P}_n^2 \Lambda_t + \mathbb{E}\left\{\int_T^t \int \left[\bar{\beta}(\tilde{t},\bm{s}) - \beta(\tilde{t},\bm{s})\right] \bar{p}_{\zeta}(\bm{s}|\tilde{t})\, d\bm{s} d\tilde{t}\right\}\\
		&= 2\left(\mathbb{P}_n - \P\right)\P \Lambda_t + \left(\mathbb{P}_n - \P\right)^2 \Lambda_t + \P^2 \Lambda_t + \mathbb{E}\left\{\int_T^t \int \left[\bar{\beta}(\tilde{t},\bm{s}) - \beta(\tilde{t},\bm{s})\right] \bar{p}_{\zeta}(\bm{s}|\tilde{t})\, d\bm{s} d\tilde{t}\right\}\\
		&= \underbrace{\mathbb{E}_{T_1}\left\{\int_{T_1}^t \left(\mathbb{P}_n-\P\right)\left[\frac{1}{\sqrt{h^3}}\cdot \phi_{C,h,\tilde{t}}\left(Y,T,\bm{S}; \bar{\mu},\bar{\beta}, \bar{p},\bar{p}_{\zeta}\right) \right] d\tilde{t}\right\}}_{\textbf{Term Ia}} \\
		&\quad + \underbrace{\left(\mathbb{P}_n-\P\right)\left\{\int_T^t \mathbb{E}\left[\frac{1}{\sqrt{h^3}}\cdot \phi_{C,h,\tilde{t}}\left(Y,T,\bm{S}; \bar{\mu},\bar{\beta}, \bar{p},\bar{p}_{\zeta}\right) \right] d\tilde{t}\right\}}_{\textbf{Term Ib}} \\
		&\quad + \underbrace{\left(\mathbb{P}_n-\P\right)^2 \left\{\int_{T_{i_1}}^t \frac{1}{\sqrt{h^3}}\cdot \phi_{C,h,\tilde{t}}\left(Y_{i_2},T_{i_2},\bm{S}_{i_2}; \bar{\mu},\bar{\beta}, \bar{p},\bar{p}_{\zeta}\right) d\tilde{t}\right\}}_{\textbf{Term Ic}} \\
		&\quad + \underbrace{\mathbb{E}_{T_1}\left\{\int_{T_1}^t \left(\mathbb{E}\left[\frac{1}{\sqrt{h^3}}\cdot \phi_{C,h,\tilde{t}}\left(Y,T,\bm{S}; \bar{\mu},\bar{\beta}, \bar{p},\bar{p}_{\zeta}\right) \right] + \int \left[\bar{\beta}(\tilde{t},\bm{s}) - \beta(\tilde{t},\bm{s})\right] \bar{p}_{\zeta}(\bm{s}|\tilde{t})\, d\bm{s}\right) d\tilde{t}\right\}}_{\textbf{Term Id}},
	\end{align*}
	where we use the shorthand notation $\P \Lambda_t$ referring to the function $\bm{U}_{i_1} \mapsto \int \Lambda_t(\bm{U}_{i_1}, \bm{u}_{i_2}) \, d\P(\bm{u}_{i_2})$ and $\P^2\Lambda_t = \int \int \Lambda_t(\bm{u}_{i_1}, \bm{u}_{i_2}) \, d\P(\bm{u}_{i_1})d\P(\bm{u}_{i_2})$. 
	
	We shall show that the dominating terms \textbf{Term Ia} and \textbf{Term Id} are of orders $O_P\left(\sqrt{\frac{1}{nh^3}}\right)$ and $O(h^2)$ respectively, and the remainder terms \textbf{Term Ib} and \textbf{Term Ic} are of order $o_P\left(\sqrt{\frac{1}{nh^3}}\right)$ as follows.
	
	$\bullet$ \textbf{Term Ia:} By our calculations in \autoref{app:theta_nopos_Term_VI}, we know that
	\begin{align*}
		&\mathrm{Var}\left[\textbf{Term Ia}\right] \\
		&= \mathrm{Var}\left[\left(\mathbb{P}_n-\P\right)\left\{\mathbb{E}_{T_1}\left[\int_{T_1}^t \frac{\left(\frac{T-\tilde{t}}{h}\right) K\left(\frac{T-\tilde{t}}{h}\right) \cdot \bar{p}_{\zeta}(\bm{S}|\tilde{t})}{h^2\cdot \kappa_2\cdot \bar{p}(T,\bm{S})}\cdot \left[Y - \bar{\mu}(\tilde{t},\bm{S}) - (T-\tilde{t})\cdot \bar{\beta}(\tilde{t},\bm{S})\right] d\tilde{t} \right] \right\} \right]\\
		&=\frac{1}{n} \cdot \mathrm{Var}\left\{\mathbb{E}_{T_1}\left[\int_{T_1}^t \frac{\left(\frac{T-t}{h}\right) K\left(\frac{T-\tilde{t}}{h}\right) \cdot \bar{p}_{\zeta}(\bm{S}|\tilde{t})}{h^2\cdot \kappa_2\cdot \bar{p}(T,\bm{S})}\cdot \left[Y - \bar{\mu}(\tilde{t},\bm{S}) - (T-\tilde{t})\cdot \bar{\beta}(\tilde{t},\bm{S})\right] d\tilde{t} \right] \right\}\\
		&= \frac{1}{n} \cdot \mathbb{E}\left[\left\{\mathbb{E}_{T_1}\left[\int_{T_1}^t \frac{\left(\frac{T-t}{h}\right) K\left(\frac{T-\tilde{t}}{h}\right) \cdot \bar{p}_{\zeta}(\bm{S}|\tilde{t})}{h^2\cdot \kappa_2\cdot \bar{p}(T,\bm{S})}\cdot \left[Y - \bar{\mu}(\tilde{t},\bm{S}) - (T-\tilde{t})\cdot \bar{\beta}(\tilde{t},\bm{S})\right] d\tilde{t} \right] \right\}^2 \right] - \frac{1}{n} \cdot \P^2\Lambda_t \\
		&\stackrel{\text{(i)}}{\leq} \frac{1}{nh^4} \cdot \mathbb{E}_{T_1}\left[\left|t-T_1\right| \int_{T_1}^t \mathbb{E}\left\{\frac{\left(\frac{T-t}{h}\right)^2 K^2\left(\frac{T-\tilde{t}}{h}\right) \cdot \bar{p}_{\zeta}^2(\bm{S}|\tilde{t})}{\kappa_2^2\cdot \bar{p}^2(T,\bm{S})}\cdot \left[Y - \bar{\mu}(\tilde{t},\bm{S}) - (T-\tilde{t})\cdot \bar{\beta}(\tilde{t},\bm{S})\right]^2 \right\} d\tilde{t} \right] - \frac{1}{n} \cdot \P^2\Lambda_t \\
		&\stackrel{\text{(ii)}}{\lesssim} \frac{1}{nh^3} \int_{\mathcal{T}}  \int_{\mathbb{R}} \int_{\mathcal{S}(t+uh)} |t-t_1| \cdot p_T(t_1)\cdot \frac{u^2 K^2\left(u\right) \cdot \bar{p}_{\zeta}^2(\bm{s}_1|t)\cdot p(t+uh,\bm{s}_1)}{\kappa_2^2\cdot \bar{p}^2(t+uh,\bm{s}_1)} \\
		&\quad \times \left\{\left[\mu(t+uh,\bm{s}_1) - \bar{\mu}(t,\bm{s}_1) - hu\cdot \bar{\beta}(t,\bm{s}_1)\right]^2 + \sigma^2\right\} d\bm{s}_1 du dt_1\\
		&\stackrel{\text{(iii)}}{=} O_P\left(\frac{1}{nh^3}\right),
	\end{align*}
	where (i) follows from Jensen's inequality on the squared function and Cauchy-Schwarz inequality as $\left[\int_{T_1}^t g(\tilde{t}) \,d\tilde{t} \right]^2 \leq \left[\int_{T_1}^t g(\tilde{t}) \,d\tilde{t} \right] \cdot |t-T_1|$ for the function $g(\tilde{t}) = \mathbb{E}\left[\phi_{C,h,\tilde{t}}\left(Y,T,\bm{S}; \bar{\mu},\bar{\beta}, \bar{p},\bar{p}_{\zeta}\right) \right]$, (ii) uses a change of variable and only keeps the dominating first term, and (iii) leverages our arguments in \autoref{app:theta_nopos_Term_VI}. Moreover, $\mathrm{Var}\left[\sqrt{nh^3}\cdot \textbf{Term Ia}\right]$ is strictly positive as long as $\mathrm{Var}(\epsilon|T,\bm{S})>\sigma^2>0$. Then, by Chebyshev's inequality, we obtain that
	\begin{align*}
		\textbf{Term Ia} = O_P\left(\sqrt{\mathrm{Var}\left[\textbf{Term Ia}\right]}\right) = O_P\left(\sqrt{\frac{1}{nh^3}}\right).
	\end{align*}
	
	$\bullet$ \textbf{Term Id:} By our calculation of the bias term in \autoref{app:theta_nopos_Term_VI}, we know that
	\begin{align*}
		\textbf{Term Id} &= h^2\cdot\mathbb{E}_{T_1}\left[\int_{T_1}^t B_{C,\theta}(\tilde{t})\, d\tilde{t} \right] + O(h^3) = O(h^2),
	\end{align*}
	where $B_{C,\theta}(t)$ is defined in \autoref{thm:theta_nopos} as:
	\begin{align*}
		B_{C,\theta}(t) &= \begin{cases}
			\frac{\kappa_4}{6\kappa_2} \int \left\{\frac{3\frac{\partial}{\partial t} p(t,\bm{s}) \cdot \bar{m}''(t) + p(t,\bm{s})\left[\bar{m}^{(3)}(t) - 3\frac{\partial}{\partial t} \log\bar{p}(t,\bm{s}) \cdot \bar{m}''(t) \right]}{\bar{p}(t,\bm{s})} \right\} \bar{p}_{\zeta}(\bm{s}|t)\, d\bm{s}& \text{ when } \bar{\mu}=\mu \text{ and } \bar{\beta}=\beta,\\
			\frac{\kappa_4}{6\kappa_2} \cdot \bar{m}^{(3)}(t) & \text{ when } \bar{p} = p.
		\end{cases}
	\end{align*}
	
	$\bullet$ \textbf{Term Ib:} By Chebyshev's inequality, we have that
	\begin{align*}
		\textbf{Term Ib} & = \left(\mathbb{P}_n-\P\right)\left\{\int_T^t \mathbb{E}\left[\frac{1}{\sqrt{h^3}}\cdot \phi_{C,h,\tilde{t}}\left(Y,T,\bm{S}; \bar{\mu},\bar{\beta}, \bar{p},\bar{p}_{\zeta}\right) \right] d\tilde{t}\right\}\\
		&= O_P\left(\sqrt{\frac{\mathrm{Var}\left\{\int_T^t \mathbb{E}\left[\frac{1}{\sqrt{h^3}}\cdot \phi_{C,h,\tilde{t}}\left(Y,T,\bm{S}; \bar{\mu},\bar{\beta}, \bar{p},\bar{p}_{\zeta}\right) \right] d\tilde{t}\right\}}{n}}\right) \\
		&\stackrel{\text{(i)}}{=}O_P\left(\sqrt{\frac{1}{nh^2}}\right)  =o_P\left(\sqrt{\frac{1}{nh^3}}\right).
	\end{align*}
	Here, the equality (i) above follows from the calculation that
	\begin{align*}
		& \mathrm{Var}\left\{\int_T^t \mathbb{E}\left[\frac{1}{\sqrt{h^3}}\cdot \phi_{C,h,\tilde{t}}\left(Y,T,\bm{S}; \bar{\mu},\bar{\beta}, \bar{p},\bar{p}_{\zeta}\right) \right] d\tilde{t}\right\}\\
		&= \int_{\mathcal{T}} \left\{\int_{t_2}^t \int_{\mathcal{T}} \int_{\mathcal{S}(t_1)} \frac{\left(\frac{t_1-\tilde{t}}{h}\right) K\left(\frac{t_1-\tilde{t}}{h}\right) \cdot \bar{p}_{\zeta}(\bm{s}_1|\tilde{t})}{h^2 \cdot \kappa_2\cdot \bar{p}(t_1,\bm{s}_1)}\cdot \left[\mu(t_1,\bm{s}_1) - \bar{\mu}(\tilde{t},\bm{s}_1) - (t_1-\tilde{t})\cdot \bar{\beta}(\tilde{t},\bm{s}_1)\right] d\bm{s}_1 dt_1 d\tilde{t}\right\}^2 p_T(t_2)\, dt_2\\
		&\stackrel{\text{(ii)}}{=} \int_{\mathcal{T}} \left\{\int_{t_2}^t \int_{\mathbb{R}} \int_{\mathcal{S}(\tilde{t}+uh)} \frac{u\cdot K\left(u\right) \cdot \bar{p}_{\zeta}(\bm{s}_1|\tilde{t})}{h \cdot \kappa_2\cdot \bar{p}(\tilde{t}+uh,\bm{s}_1)}\cdot \left[\mu(\tilde{t}+uh,\bm{s}_1) - \bar{\mu}(\tilde{t},\bm{s}_1) - uh\cdot \bar{\beta}(\tilde{t},\bm{s}_1)\right] d\bm{s}_1 du d\tilde{t}\right\}^2 p_T(t_2)\, dt_2\\
		&\stackrel{\text{(iii)}}{=} O\left(\frac{1}{h^2}\right),
	\end{align*}
	where (ii) utilizes a change of variable and (iii) utilizes the upper boundedness of $\mu,\bar{\mu}, \bar{\beta},\bar{p}_{\zeta},p_T$ under Assumptions~\ref{assump:reg_diff} and \ref{assump:den_diff} as well as the fact that $\bar{p}$ is lower bounded away from 0 around the support $\mathcal{J}$.\\
	
	$\bullet$ \textbf{Term Ic:} Recall from Assumption~\ref{assump:reg_kernel}(c) that $\mathcal{K} = \left\{t'\mapsto \left(\frac{t'-t}{h}\right)^{k_1} K\left(\frac{t'-t}{h}\right): t\in \mathcal{T}, h>0, k_1=0,1\right\}$ is a bounded VC-type class of measurable functions on $\mathbb{R}$. Under Assumption~\ref{assump:reg_diff} and the condition that $\mathbb{E}|\epsilon^{2+c_1}| <\infty$, we deduce by Theorem 4 in \cite{einmahl2005uniform} that with probability 1,
	\begin{align*}
		&\sup_{t\in \mathcal{T}}\left|\mathbb{G}_n\left[\phi_{C,h,t}\left(Y,T,\bm{S}; \bar{\mu},\bar{\beta}, \bar{p},\bar{p}_{\zeta}\right) \right] \right| \\
		&= \sup_{\tilde{t}\in \mathcal{T}}\left|\mathbb{G}_n\left\{\frac{\left(\frac{T-t}{h}\right) K\left(\frac{T-t}{h}\right) \cdot \bar{p}_{\zeta}(\bm{S}|t)}{\sqrt{h}\cdot \kappa_2\cdot \bar{p}(T,\bm{S})}\cdot \left[Y - \bar{\mu}(t,\bm{S}) - (T-t)\cdot \bar{\beta}(t,\bm{S})\right] \right\} \right| \\
		&= O\left(\sqrt{|\log h|}\right)
	\end{align*}
	when $\frac{|\log h|}{\log\log n} \to \infty$. Thus, by Chebyshev's inequality, 
	\begin{align*}
		\sqrt{nh^3}\cdot \textbf{Term Ic} &= \left(\mathbb{P}_n - \P\right)\left\{\int_{T_{i_1}}^t \mathbb{G}_n \left[\phi_{C,h,\tilde{t}}\left(Y,T,\bm{S}; \bar{\mu},\bar{\beta}, \bar{p},\bar{p}_{\zeta}\right) \right] d\tilde{t}\right\} \\
		&= O_P\left(\sqrt{\frac{1}{n}\cdot \mathbb{E}\left\{\left[\int_{T_{i_1}}^t \mathbb{G}_n \left[\phi_{C,h,\tilde{t}}\left(Y,T,\bm{S}; \bar{\mu},\bar{\beta}, \bar{p},\bar{p}_{\zeta}\right) \right] d\tilde{t} \right]^2\right\}} \right)\\
		&= O_P\left(\sqrt{\frac{|\log h|}{n}} \right) = o_P\left(\sqrt{\frac{1}{nh^3}}\right).
	\end{align*}
	Here, the last equality follows from the calculation that
	\begin{align*}
		\mathbb{E}\left\{\left[\int_{T_{i_1}}^t \mathbb{G}_n \left[\phi_{C,h,\tilde{t}}\left(Y,T,\bm{S}; \bar{\mu},\bar{\beta}, \bar{p},\bar{p}_{\zeta}\right) \right] d\tilde{t} \right]^2\right\} &\stackrel{\text{(i)}}{\leq} \mathbb{E}\left[|t-T_{i_1}|^2 \cdot \sup_{\tilde{t}\in \mathcal{T}}\left|\mathbb{G}_n \left[\phi_{C,h,\tilde{t}}\left(Y,T,\bm{S}; \bar{\mu},\bar{\beta}, \bar{p},\bar{p}_{\zeta}\right) \right] \right|^2 \right]\\
		&= O\left(|\log h|\right),
	\end{align*}
	where (i) applies the mean-value theorem for integrals.\\
	
	As a summary for this subsection, we conclude that
	\begin{align*}
		&\sqrt{nh^3} \cdot \textbf{Term I} \\
		&= \frac{1}{\sqrt{n}} \sum_{i=1}^n \left\{ \mathbb{E}_{T_1}\left[\int_{T_1}^t  \left\{\phi_{C,h,\tilde{t}}\left(Y_i,T_i,\bm{S}_i; \bar{\mu},\bar{\beta}, \bar{p},\bar{p}_{\zeta}\right) + \sqrt{h^3}\int \left[\bar{\beta}(\tilde{t},\bm{s}) - \beta(\tilde{t},\bm{s})\right] \bar{p}_{\zeta}(\bm{s}|\tilde{t})\, d\bm{s} \right\} d\tilde{t}\right] \right\} + o_P(1).
	\end{align*}

	\subsubsection{Analysis of \textbf{Term III} for $\hat{m}_{\mathrm{C,DR}}(t)$}
	\label{app:m_nopos_Term_III}
	
	Recall from those \textbf{Term VII--XI} for $\hat{\theta}_{\mathrm{C,DR}}(t)$ in \autoref{app:theta_DR_nopos} that \textbf{Term III} for $\hat{m}_{\mathrm{C,DR}}(t)$ here is given by
	\begin{align*}
		&\textbf{Term III} \\
		&= \frac{1}{n}\sum_{i=1}^n \int_{T_i}^t A_{n,h}(\tilde{t})\, d\tilde{t}\\
		&= \underbrace{\frac{1}{n}\sum_{i=1}^n \int_{T_i}^t\int \hat{\beta}(\tilde{t},\bm{s}) \left[\hat{p}_{\zeta}(\bm{s}|\tilde{t}) - \bar{p}_{\zeta}(\bm{s}|\tilde{t})\right]d\bm{s} d\tilde{t}}_{\textbf{Term IIIa}} \\
		&\quad + \underbrace{\frac{1}{n}\sum_{i=1}^n \int_{T_i}^t \left(\mathbb{P}_n-\P\right)\left\{\frac{\left(\frac{T-\tilde{t}}{h}\right) K\left(\frac{T-\tilde{t}}{h}\right)}{h^2\kappa_2}\left[\frac{\hat{p}_{\zeta}(\bm{S}|\tilde{t})}{\hat{p}(T,\bm{S})} - \frac{\bar{p}_{\zeta}(\bm{S}|\tilde{t})}{\bar{p}(T,\bm{S})} \right] \left[Y - \bar{\mu}(\tilde{t},\bm{S}) - (T-\tilde{t})\cdot \bar{\beta}(\tilde{t},\bm{S})\right]\right\} d\tilde{t}}_{\textbf{Term IIIb}}\\
		&\quad + \underbrace{\frac{1}{n}\sum_{i=1}^n \int_{T_i}^t \left(\mathbb{P}_n-\P\right)\left\{\frac{\left(\frac{T-\tilde{t}}{h}\right)K\left(\frac{T-\tilde{t}}{h}\right)\cdot \bar{p}_{\zeta}(\bm{S}|\tilde{t})}{h^2\kappa_2\cdot \bar{p}(T,\bm{S})} \left[\bar{\mu}(\tilde{t},\bm{S}) - \hat{\mu}(\tilde{t},\bm{S}) + (T-\tilde{t})\left[\bar{\beta}(\tilde{t},\bm{S}) - \hat{\beta}(\tilde{t},\bm{S})\right] \right]\right\} d\tilde{t}}_{\textbf{Term IIIc}} \\
		&\quad + \underbrace{\frac{1}{n}\sum_{i=1}^n \int_{T_i}^t \mathbb{P}_n\left\{\frac{\left(\frac{T-\tilde{t}}{h}\right) K\left(\frac{T-\tilde{t}}{h}\right)}{h^2\kappa_2}\left[\frac{\hat{p}_{\zeta}(\bm{S}|\tilde{t})}{\hat{p}(T,\bm{S})} - \frac{\bar{p}_{\zeta}(\bm{S}|\tilde{t})}{\bar{p}(T,\bm{S})} \right] \left[\bar{\mu}(\tilde{t},\bm{S}) - \hat{\mu}(\tilde{t},\bm{S}) + (T-\tilde{t})\left[\bar{\beta}(\tilde{t},\bm{S}) - \hat{\beta}(\tilde{t},\bm{S})\right]\right]\right\} d\tilde{t}}_{\textbf{Term IIId}} \\
		&\quad + \underbrace{\frac{1}{n}\sum_{i=1}^n \int_{T_i}^t \left(\P\left\{\frac{\left(\frac{T-\tilde{t}}{h}\right)^2 K\left(\frac{T-\tilde{t}}{h}\right) \cdot \bar{p}_{\zeta}(\bm{S}|\tilde{t})}{h \cdot \kappa_2\cdot \bar{p}(T,\bm{S})} \left[\bar{\beta}(\tilde{t},\bm{S}) - \hat{\beta}(\tilde{t},\bm{S})\right]\right\} + \int \left[\hat{\beta}(\tilde{t},\bm{s}) - \bar{\beta}(\tilde{t},\bm{s})\right]\bar{p}_{\zeta}(\bm{s}|\tilde{t})\, d\bm{s} \right) d\tilde{t}}_{\textbf{Term IIIe}} \\ 
		&\quad + \underbrace{\frac{1}{n}\sum_{i=1}^n \int_{T_i}^t \P\left\{\frac{\left(\frac{T-\tilde{t}}{h}\right) K\left(\frac{T-\tilde{t}}{h}\right)\cdot \bar{p}_{\zeta}(\bm{S}|\tilde{t})}{h^2 \cdot \kappa_2\cdot \bar{p}(T,\bm{S})}\left[\bar{\mu}(\tilde{t},\bm{S}) - \hat{\mu}(\tilde{t},\bm{S})\right]\right\} d\tilde{t}}_{\textbf{Term IIIf}}\\
		&\quad + \underbrace{\frac{1}{n}\sum_{i=1}^n \int_{T_i}^t \P\left\{\frac{\left(\frac{T-\tilde{t}}{h}\right) K\left(\frac{T-\tilde{t}}{h}\right)}{h^2\kappa_2}\left[\frac{\hat{p}_{\zeta}(\bm{S}|\tilde{t})}{\hat{p}(T,\bm{S})} - \frac{\bar{p}_{\zeta}(\bm{S}|\tilde{t})}{\bar{p}(T,\bm{S})} \right] \left[Y - \bar{\mu}(\tilde{t},\bm{S}) - (T-\tilde{t})\cdot \bar{\beta}(\tilde{t},\bm{S})\right]\right\} d\tilde{t}}_{\textbf{Term IIIg}}.
	\end{align*}
	We shall argue that all these terms above are of order $o_P\left(\sqrt{\frac{1}{nh^3}}\right)$ respectively as follows.
	
	$\bullet$ \textbf{Term IIIa:} By direct calculations, we know that
	\begin{align*}
		\sqrt{nh^3} \cdot \textbf{Term IIIa} &= \sqrt{\frac{h^3}{n}} \sum_{i=1}^n \int_{T_i}^t \int \hat{\beta}(\tilde{t},\bm{s})\left[\hat{p}_{\zeta}(\bm{s}|\tilde{t}) -\bar{p}_{\zeta}(\bm{s}|\tilde{t})\right] d\bm{s} d\tilde{t}\\
		&\leq \left(\sqrt{\frac{h^3}{n}} \sum_{i=1}^n |t-T_i|\right) \sup_{\tilde{t}\in \mathcal{T}} \left|\int \hat{\beta}(\tilde{t},\bm{s})\left[\hat{p}_{\zeta}(\bm{s}|\tilde{t}) -\bar{p}_{\zeta}(\bm{s}|\tilde{t})\right] d\bm{s} \right|\\
		&=O_P\left(\sqrt{nh^3} \cdot \bar{\Upsilon}_{5,n}\right) = o_P(1),
	\end{align*}
	where in the last equality, we use the compactness of the marginal support $\mathcal{T} \subset \mathbb{R}$ to argue that $\frac{1}{n} \sum_{i=1}^n |t-T_i| = O_P(1)$ for any $t\in \mathcal{T}$ and utilize our derivations in \autoref{app:theta_nopos_Term_VII} to obtain that $\sup_{\tilde{t}\in \mathcal{T}} \left|\int \hat{\beta}(\tilde{t},\bm{s})\left[\hat{p}_{\zeta}(\bm{s}|\tilde{t}) -\bar{p}_{\zeta}(\bm{s}|\tilde{t})\right] d\bm{s} \right| = O_P\left(\norm{\hat{p}_{\zeta}(\bm{S}|t) - \bar{p}_{\zeta}(\bm{S}|t)}_{L_2}\right) = O_P\left(\bar{\Upsilon}_{5,n}\right)$.\\
	
	$\bullet$ \textbf{Term IIIb:} Notice that \textbf{Term IIIb} can be written in a form of V-statistics. Specifically, 
	\begin{align*}
		&\sqrt{nh^3}\cdot \textbf{Term IIIb} \\
		&= \frac{1}{\sqrt{n}}\sum_{i=1}^n \int_{T_i}^t \left(\mathbb{P}_n-\P\right)\left\{\frac{\left(\frac{T-\tilde{t}}{h}\right) K\left(\frac{T-\tilde{t}}{h}\right)}{\sqrt{h}\cdot \kappa_2}\left[\frac{\hat{p}_{\zeta}(\bm{S}|\tilde{t})}{\hat{p}(T,\bm{S})} - \frac{\bar{p}_{\zeta}(\bm{S}|\tilde{t})}{\bar{p}(T,\bm{S})} \right] \left[Y - \bar{\mu}(\tilde{t},\bm{S}) - (T-\tilde{t})\cdot \bar{\beta}(\tilde{t},\bm{S})\right]\right\} d\tilde{t}\\
		&:= \frac{1}{n^{\frac{3}{2}}} \sum_{i_1=1}^n \sum_{i_2=1}^n \int_{T_{i_1}}^t \left\{\bm{Z}_{i_2}(\tilde{t}) - \mathbb{E}\left[\bm{Z}_{i_2}(\tilde{t}) \right]\right\} d\tilde{t}
	\end{align*}
	with $\bm{Z}_i(\tilde{t}) = \frac{\left(\frac{T_i-\tilde{t}}{h}\right) K\left(\frac{T_i-\tilde{t}}{h}\right)}{\sqrt{h}\cdot \kappa_2}\left[\frac{\hat{p}_{\zeta}(\bm{S}_i|\tilde{t})}{\hat{p}(T_i,\bm{S})} - \frac{\bar{p}_{\zeta}(\bm{S}_i|\tilde{t})}{\bar{p}(T_i,\bm{S}_i)} \right] \left[Y_i - \bar{\mu}(\tilde{t},\bm{S}_i) - (T_i-\tilde{t})\cdot \bar{\beta}(\tilde{t},\bm{S}_i)\right]$. Note that the random variables $\int_{T_i}^t \left\{\bm{Z}_j(\tilde{t}) - \mathbb{E}\left[\bm{Z}_j(\tilde{t}) \right]\right\} d\tilde{t}$ and $\int_{T_k}^t \left\{\bm{Z}_{\ell}(\tilde{t}) - \mathbb{E}\left[\bm{Z}_{\ell}(\tilde{t}) \right]\right\} d\tilde{t}$ are dependent and have a nonzero covariance when any of $i,j,k,\ell$ coincides. Thus, the variance of 
	$$\sum_{i_1=1}^n \sum_{i_2=1}^n \int_{T_{i_1}}^t \left\{\bm{Z}_{i_2}(\tilde{t}) - \mathbb{E}\left[\bm{Z}_{i_2}(\tilde{t}) \right]\right\} d\tilde{t}$$ 
	involves a sum of $O(n^3)$ nonzero terms; see, \emph{e.g.}, Section 6.1 in \cite{lehmann1999elements} for detailed explanations. By Chebyshev's inequality,
	\begin{align*}
		\sqrt{nh^3}\cdot \textbf{Term IIIb} &= \frac{1}{n^{\frac{3}{2}}} \sum_{i_1=1}^n \sum_{i_2=1}^n \int_{T_{i_1}}^t \left\{\bm{Z}_{i_2}(\tilde{t}) - \mathbb{E}\left[\bm{Z}_{i_2}(\tilde{t}) \right]\right\} \\
		&= O_P\left(\mathrm{Var}\left[\int_{T_{i_1}}^t \left\{\bm{Z}_{i_2}(\tilde{t}) - \mathbb{E}\left[\bm{Z}_{i_2}(\tilde{t}) \right]\right\} \right] \right)\\
		&= O_P\left(\sqrt{\mathbb{E}\left[\left(\int_{T_{i_1}}^t \left\{\bm{Z}_{i_2}(\tilde{t}) - \mathbb{E}\left[\bm{Z}_{i_2}(\tilde{t}) \right]\right\} \right)^2 \right]}\right)\\
		&=o_P(1).
	\end{align*}
	Here, the last equality follows from the calculations that
	\begin{align*}
		&\mathbb{E}\left[\left(\int_{T_{i_1}}^t \left\{\bm{Z}_{i_2}(\tilde{t}) - \mathbb{E}\left[\bm{Z}_{i_2}(\tilde{t}) \right]\right\} \right)^2 \right] \\ &\stackrel{\text{(i)}}{\leq} \mathbb{E}\left[(t-T_{i_1})^2\left|\bm{Z}_{i_2}(t') - \mathbb{E}\left[\bm{Z}_{i_2}(t')\right]\right|^2\right]\\
		&= \sup_{t'\in \mathcal{T}} \mathbb{E}\left[(t-T_{i_1})^2\left|\bm{Z}_{i_2}(t')\right|^2\right]\\
		&\stackrel{\text{(ii)}}{\lesssim} \sup_{t'\in \mathcal{T}}\mathbb{E}\left\{\frac{\left(\frac{T_i-t'}{h}\right)^2 K^2\left(\frac{T_i-t'}{h}\right)}{h\cdot \kappa_2^2}\left[\frac{\hat{p}_{\zeta}(\bm{S}_i|t')}{\hat{p}(T_i,\bm{S})} - \frac{\bar{p}_{\zeta}(\bm{S}_i|t')}{\bar{p}(T_i,\bm{S}_i)} \right]^2 \left[Y_i - \bar{\mu}(t',\bm{S}_i) - (T_i-t')\cdot \bar{\beta}(t',\bm{S}_i)\right]^2\right\}\\
		&\stackrel{\text{(iii)}}{\lesssim} \sup_{t\in \mathcal{T}}\norm{\hat{p}_{\zeta}(\bm{S}|t) - \bar{p}_{\zeta}(\bm{S}|t)}_{L_2}^2 + \sup_{u\in \mathcal{T} \oplus h} \norm{\hat{p}(u,\bm{S}) - \bar{p}(u,\bm{S})}_{L_2}^2\\
		&= O_P\left(\bar{\Upsilon}_{5,n}^2 + \bar{\Upsilon}_{6,n}^2\right) = o_P(1),
	\end{align*}
	where (i) applies the mean-value theorem for integrals with $t'$ lying between $t, T_{i_1}\in \mathcal{T}$, (ii) uses the compactness of $\mathcal{T} \subset \mathbb{R}$, and (iii) utilizes our derivations in \autoref{app:theta_nopos_Term_VIII}. \\
	
	$\bullet$ \textbf{Term IIIc:} Analogous to our arguments for \textbf{Term IIIb}, we write \textbf{Term IIIc} in terms of V-statistics and deduce that
	\begin{align*}
		&\sqrt{nh^3}\cdot \textbf{Term IIIc} \\
		&= \frac{1}{\sqrt{n}}\sum_{i=1}^n \int_{T_i}^t \left(\mathbb{P}_n-\P\right)\left\{\frac{\left(\frac{T-\tilde{t}}{h}\right)K\left(\frac{T-\tilde{t}}{h}\right)\cdot \bar{p}_{\zeta}(\bm{S}|\tilde{t})}{\sqrt{h}\cdot \kappa_2\cdot \bar{p}(T,\bm{S})} \left[\bar{\mu}(\tilde{t},\bm{S}) - \hat{\mu}(\tilde{t},\bm{S}) + (T-\tilde{t})\left[\bar{\beta}(\tilde{t},\bm{S}) - \hat{\beta}(\tilde{t},\bm{S})\right] \right]\right\} d\tilde{t}\\
		&:= \frac{1}{n^{\frac{3}{2}}} \sum_{i_1=1}^n \sum_{i_2=1}^n \int_{T_{i_1}}^t \left\{\bar{\bm{Z}}_{i_2}(\tilde{t}) - \mathbb{E}\left[\bar{\bm{Z}}_{i_2}(\tilde{t}) \right]\right\} d\tilde{t}\\
		&= O_P\left(\sqrt{\mathbb{E}\left[\left(\int_{T_{i_1}}^t \left\{\bar{\bm{Z}}_{i_2}(\tilde{t}) - \mathbb{E}\left[\bar{\bm{Z}}_{i_2}(\tilde{t}) \right]\right\} \right)^2 \right]}\right)\\
		&=o_P(1).
	\end{align*}
	Here, $\bar{\bm{Z}}_i(\tilde{t}) = \frac{\left(\frac{T_i-\tilde{t}}{h}\right)K\left(\frac{T_i-\tilde{t}}{h}\right)\cdot \bar{p}_{\zeta}(\bm{S}_i|\tilde{t})}{\sqrt{h}\cdot \kappa_2\cdot \bar{p}(T_i,\bm{S}_i)} \left[\bar{\mu}(\tilde{t},\bm{S}_i) - \hat{\mu}(\tilde{t},\bm{S}_i) + (T_i-\tilde{t})\left[\bar{\beta}(\tilde{t},\bm{S}_i) - \hat{\beta}(\tilde{t},\bm{S}_i)\right] \right]$, and the last equality above follows from some similar calculations as:
	\begin{align*}
		&\mathbb{E}\left[\left(\int_{T_{i_1}}^t \left\{\bar{\bm{Z}}_{i_2}(\tilde{t}) - \mathbb{E}\left[\bar{\bm{Z}}_{i_2}(\tilde{t}) \right]\right\} \right)^2 \right]\\
		&\leq \mathbb{E}\left[(t-T_{i_1})^2\left|\bar{\bm{Z}}_{i_2}(t') - \mathbb{E}\left[\bar{\bm{Z}}_{i_2}(t')\right]\right|^2\right]\\
		&= \sup_{t'\in \mathcal{T}} \mathbb{E}\left[(t-T_{i_1})^2\left|\bar{\bm{Z}}_{i_2}(t')\right|^2\right]\\
		&\lesssim \sup_{t'\in \mathcal{T}} \mathbb{E}\left\{\frac{\left(\frac{T_i-\tilde{t}}{h}\right)^2 K^2\left(\frac{T_i-\tilde{t}}{h}\right)\cdot \bar{p}_{\zeta}^2(\bm{S}_i|\tilde{t})}{h\cdot \kappa_2^2\cdot \bar{p}^2(T_i,\bm{S}_i)} \left[\bar{\mu}(\tilde{t},\bm{S}_i) - \hat{\mu}(\tilde{t},\bm{S}_i) + (T_i-\tilde{t})\left[\bar{\beta}(\tilde{t},\bm{S}_i) - \hat{\beta}(\tilde{t},\bm{S}_i)\right] \right]^2 \right\}\\
		&\stackrel{\text{(iv)}}{\lesssim} \sup_{t\in \mathcal{T}}\left[\norm{\hat{\mu}(t,\bm{S}) - \bar{\mu}(t,\bm{S})}_{L_2}^2 + h^2\norm{\hat{\beta}(t,\bm{S}) - \bar{\beta}(t,\bm{S})}_{L_2}^2 \right] \\
		&= O_P\left(\bar{\Upsilon}_{1,n}^2 + h^2\bar{\Upsilon}_{3,n}^2\right) = o_P(1),
	\end{align*}
	where (iv) again leverages our derivations in \autoref{app:theta_nopos_Term_VIII}.\\
	
	$\bullet$ \textbf{Term IIId:} Similar to our arguments for \textbf{Term IIIb}, we also write \textbf{Term IIId} in terms of V-statistics and utilize Markov's inequality to deduce that
	\begin{align*}
		&\sqrt{nh^3}\cdot \textbf{Term IIId} \\
		&= \frac{1}{\sqrt{n}} \sum_{i=1}^n \int_{T_i}^t \mathbb{P}_n\left\{\frac{\left(\frac{T-\tilde{t}}{h}\right) K\left(\frac{T-\tilde{t}}{h}\right)}{\sqrt{h}\cdot \kappa_2}\left[\frac{\hat{p}_{\zeta}(\bm{S}|\tilde{t})}{\hat{p}(T,\bm{S})} - \frac{\bar{p}_{\zeta}(\bm{S}|\tilde{t})}{\bar{p}(T,\bm{S})} \right] \left[\bar{\mu}(\tilde{t},\bm{S}) - \hat{\mu}(\tilde{t},\bm{S}) + (T-\tilde{t})\left[\bar{\beta}(\tilde{t},\bm{S}) - \hat{\beta}(\tilde{t},\bm{S})\right]\right]\right\} d\tilde{t} \\
		&:= \frac{1}{n^{\frac{3}{2}}} \sum_{i_1=1}^n \sum_{i_2=1}^n \int_{T_{i_1}}^t \bm{V}_{i_2}(\tilde{t}) \, d\tilde{t} \\
		&=O_P\left(\sqrt{n}\cdot \mathbb{E}\left|\int_{T_{i_1}}^t \bm{V}_{i_2}(\tilde{t}) \, d\tilde{t}\right|\right)\\
		&=o_P(1).
	\end{align*}
	Here, $\bm{V}_i(\tilde{t}) = \frac{\left(\frac{T_i-\tilde{t}}{h}\right) K\left(\frac{T_i-\tilde{t}}{h}\right)}{\sqrt{h}\cdot \kappa_2}\left[\frac{\hat{p}_{\zeta}(\bm{S}_i|\tilde{t})}{\hat{p}(T_i,\bm{S}_i)} - \frac{\bar{p}_{\zeta}(\bm{S}_i|\tilde{t})}{\bar{p}(T_i,\bm{S}_i)} \right] \left[\bar{\mu}(\tilde{t},\bm{S}_i) - \hat{\mu}(\tilde{t},\bm{S}_i) + (T_i-\tilde{t})\left[\bar{\beta}(\tilde{t},\bm{S}_i) - \hat{\beta}(\tilde{t},\bm{S}_i)\right]\right]$, and the last equality above follows from the following calculation as:
	\begin{align*}
		&\sqrt{n}\cdot \mathbb{E}\left|\int_{T_{i_1}}^t \bm{V}_{i_2}(\tilde{t}) \, d\tilde{t}\right|\\
		&\stackrel{\text{(i)}}{\leq} \sqrt{n}\cdot \mathbb{E}\left[|t-T_{i_1}|\cdot \left|\bm{V}_{i_2}(t') \right| \right] \\
		&\stackrel{\text{(ii)}}{\lesssim} \sqrt{n}\cdot \sup_{\tilde{t}\in \mathcal{T}} \mathbb{E}\left|\frac{\left(\frac{T-\tilde{t}}{h}\right) K\left(\frac{T-\tilde{t}}{h}\right)}{\sqrt{h}\cdot \kappa_2}\left[\frac{\hat{p}_{\zeta}(\bm{S}|\tilde{t})}{\hat{p}(T,\bm{S})} - \frac{\bar{p}_{\zeta}(\bm{S}|\tilde{t})}{\bar{p}(T,\bm{S})} \right] \left[\bar{\mu}(\tilde{t},\bm{S}) - \hat{\mu}(\tilde{t},\bm{S}) + (T-\tilde{t})\left[\bar{\beta}(\tilde{t},\bm{S}) - \hat{\beta}(\tilde{t},\bm{S})\right]\right]\right|\\
		&\stackrel{\text{(iii)}}{\lesssim} \sqrt{nh} \cdot \sup\limits_{t\in \mathcal{T}}\Bigg\{\left[\norm{\hat{p}_{\zeta}(\bm{S}|t) - p_{\zeta}(\bm{S}|t)}_{L_2} + \sup\limits_{|u-t|\leq h} \norm{\hat{p}(u,\bm{S}) - p(u,\bm{S})}_{L_2} \right]\\
		&\quad \times \left[\norm{\hat{\mu}(t,\bm{S}) - \mu(t,\bm{S})}_{L_2} + h \norm{\hat{\beta}(t,\bm{S}) - \beta(t,\bm{S})}_{L_2}\right] \Bigg\}\\
		&\stackrel{\text{(iv)}}{=} o_P(1),
	\end{align*}
	where applies the mean-value theorem for integrals with $t'$ lying between $t, T_{i_1}\in \mathcal{T}$, (ii) uses the compactness of $\mathcal{T} \subset \mathbb{R}$, (iii) utilizes our derivations in \autoref{app:theta_nopos_Term_X}, and (iv) leverages our assumption (c) on the doubly robust rate of convergence in the corollary statement. \\
	
	$\bullet$ \textbf{Term IIIe, Term IIIf, and Term IIIg:} On one hand, when $\bar{p}(t,\bm{s}) = p(t,\bm{s})$ for all $(t,\bm{s})\in \mathcal{T}\times \mathcal{S}$, we know from our calculations in \autoref{app:theta_nopos_Term_XI} that $\textbf{Term IIIe} = \textbf{Term IIIf}=0$ and 
	\begin{align*}
		&\textbf{Term IIIg} \\
		&\stackrel{\text{(i)}}{\leq} \left(\frac{1}{n}\sum_{i=1}^n |T_i-t| \right) \cdot \sup_{\tilde{t}\in \mathcal{T}}\left|\P\left\{\frac{\left(\frac{T-\tilde{t}}{h}\right) K\left(\frac{T-\tilde{t}}{h}\right)}{h^2\kappa_2}\left[\frac{\hat{p}_{\zeta}(\bm{S}|\tilde{t})}{\hat{p}(T,\bm{S})} - \frac{\bar{p}_{\zeta}(\bm{S}|\tilde{t})}{p(T,\bm{S})} \right] \left[Y - \bar{\mu}(\tilde{t},\bm{S}) - (T-\tilde{t})\cdot \bar{\beta}(\tilde{t},\bm{S})\right]\right\} \right|\\
		&\stackrel{\text{(ii)}}{\lesssim} \left(\frac{1}{n}\sum_{i=1}^n |T_i-t| \right) \cdot \frac{1}{h} \left[\sup_{t\in \mathcal{T}} \norm{\hat{p}_{\zeta}(\bm{S}|t) - \bar{p}_{\zeta}(\bm{S}|t)}_{L_2} + \sup_{u\in \mathcal{T} \oplus h} \norm{\hat{p}(u,\bm{S}) - p(u,\bm{S})}_{L_2}\right]\\
		&\stackrel{\text{(iii)}}{=} O_P\left(\frac{1}{h} \left(\bar{\Upsilon}_{5,n} + \bar{\Upsilon}_{6,n}\right)\right)\\
		&=o_P\left(\sqrt{\frac{1}{nh^3}}\right),
	\end{align*}
	where (i) utilizes the mean-value theorem for integrals, (ii) follows from our derivations in \autoref{app:theta_nopos_Term_XI}, and (iii) applies Markov's inequality to $\frac{1}{n}\sum_{i=1}^n |T_i-t|$ as well as leverages our assumption (c) on the doubly robust rate of convergence in the corollary statement. Specifically, since $\norm{\hat{\mu}(t,\bm{S}) - \bar{\mu}(t,\bm{S})}_{L_2} + h \norm{\hat{\beta}(t,\bm{S}) - \bar{\beta}(t,\bm{S})}_{L_2}= O_P(1)$ for any $t\in \mathcal{T}$ when $\bar{\mu}\neq \mu$ and $\bar{\beta}\neq \beta$, our assumption (c) ensures that $\sup_{t\in \mathcal{T}}\norm{\hat{p}_{\zeta}(\bm{S}|t) - \bar{p}_{\zeta}(\bm{S}|t)}_{L_2} + \sup_{u\in \mathcal{T} \oplus h} \norm{\hat{p}(u,\bm{S}) - p(u,\bm{S})}_{L_2} =o_P\left(\sqrt{\frac{1}{nh}}\right)$.
	
	On the other hand, when $\bar{\mu}=\mu$ and $\bar{\beta}=\beta$, we again know from our calculations in \autoref{app:theta_nopos_Term_XI} that 
	\begin{align*}
		&\textbf{Term IIIe} \\
		&\leq \left(\frac{1}{n}\sum_{i=1}^n |T_i-t| \right) \\
		&\quad \times \sup_{\tilde{t}\in \mathcal{T}}\left|\P\left\{\frac{\left(\frac{T-\tilde{t}}{h}\right)^2 K\left(\frac{T-\tilde{t}}{h}\right) \cdot \bar{p}_{\zeta}(\bm{S}|\tilde{t})}{h \cdot \kappa_2\cdot \bar{p}(T,\bm{S})} \left[\beta(\tilde{t},\bm{S}) - \hat{\beta}(\tilde{t},\bm{S})\right]\right\} + \int \left[\hat{\beta}(\tilde{t},\bm{s}) - \beta(\tilde{t},\bm{s})\right]\bar{p}_{\zeta}(\bm{s}|\tilde{t})\, d\bm{s} \right|\\
		&= O_P\left(\sup_{t\in \mathcal{T}}\norm{\hat{\beta}(t,\bm{S}) - \beta(t,\bm{S})}_{L_2} \right)\\
		&=o_P\left(\sqrt{\frac{1}{nh^3}}\right),
	\end{align*}
	\begin{align*}
		\textbf{Term IIIf} &\leq \left(\frac{1}{n}\sum_{i=1}^n |T_i-t| \right) \cdot \sup_{\tilde{t}\in \mathcal{T}}\left|\P\left\{\frac{\left(\frac{T-\tilde{t}}{h}\right) K\left(\frac{T-\tilde{t}}{h}\right)\cdot \bar{p}_{\zeta}(\bm{S}|\tilde{t})}{h^2 \cdot \kappa_2\cdot \bar{p}(T,\bm{S})}\left[\mu(\tilde{t},\bm{S}) - \hat{\mu}(\tilde{t},\bm{S})\right]\right\} \right|\\
		&= O_P\left(\frac{1}{h}\cdot \sup_{t\in \mathcal{T}}\norm{\hat{\mu}(t,\bm{S}) - \mu(t,\bm{S})}_{L_2} \right)\\
		&=o_P\left(\sqrt{\frac{1}{nh^3}}\right),
	\end{align*}
	and
	\begin{align*}
		&\textbf{Term IIIg} \\
		&\leq \left(\frac{1}{n}\sum_{i=1}^n |T_i-t| \right) \cdot \sup_{\tilde{t}\in \mathcal{T}}\left|\P\left\{\frac{\left(\frac{T-\tilde{t}}{h}\right) K\left(\frac{T-\tilde{t}}{h}\right)}{h^2\kappa_2}\left[\frac{\hat{p}_{\zeta}(\bm{S}|\tilde{t})}{\hat{p}(T,\bm{S})} - \frac{\bar{p}_{\zeta}(\bm{S}|\tilde{t})}{\bar{p}(T,\bm{S})} \right] \left[Y - \bar{\mu}(\tilde{t},\bm{S}) - (T-\tilde{t})\cdot \bar{\beta}(\tilde{t},\bm{S})\right]\right\} \right|\\
		&=O_P\left(h^2\left[\sup_{t\in \mathcal{T}} \norm{\hat{p}_{\zeta}(\bm{S}|t) - p_{\zeta}(\bm{S}|t)}_{L_2} + \sup_{u\in \mathcal{T} \oplus h} \norm{\hat{p}(u,\bm{S}) - p(u,\bm{S})}_{L_2} \right]\right)\\
		&=o_P\left(\sqrt{\frac{1}{nh^3}}\right).
	\end{align*}
	
	As a summary for this subsection, we conclude that $\sqrt{nh^3}\cdot \textbf{Term III} = o_P(1)$.

	\subsubsection{Asymptotic Normality of $\hat{m}_{\mathrm{C,DR}}(t)$}
	\label{app:m_nopos_asym_norm}
	
	For the asymptotic normality of $\hat{m}_{\mathrm{C,DR}}(t)$, it follows from the Lyapunov central limit theorem. Specifically, we already show in \autoref{app:m_nopos_Term_I} and \autoref{app:m_nopos_Term_III} that
	\begin{align*}
		&\sqrt{nh^3}\left[\hat{m}_{\mathrm{C,DR}}(t) - m(t)\right] \\
		&= \frac{1}{\sqrt{n}} \sum_{i=1}^n \left\{ \mathbb{E}_{T_1}\left[\int_{T_1}^t  \left\{\phi_{C,h,\tilde{t}}\left(Y_i,T_i,\bm{S}_i; \bar{\mu},\bar{\beta}, \bar{p},\bar{p}_{\zeta}\right) + \sqrt{h^3} \int \left[\bar{\beta}(\tilde{t},\bm{s}) - \beta(\tilde{t},\bm{s})\right] \bar{p}_{\zeta}(\bm{s}|\tilde{t})\, d\bm{s} \right\} d\tilde{t}\right] \right\} + o_P(1) \\
		&= \frac{1}{\sqrt{n}} \sum_{i=1}^n \mathbb{E}_{T_1}\left[\int_{T_1}^t \phi_{C,h,\tilde{t}}\left(Y_i,T_i,\bm{S}_i;\bar{\mu}, \bar{\beta}, \bar{p},\bar{p}_{\zeta}\right) d\tilde{t}\right]  +o_P(1)
	\end{align*}
	with 
	$$\phi_{C,h,t}\left(Y,T,\bm{S}; \bar{\mu},\bar{\beta}, \bar{p},\bar{p}_{\zeta}\right) = \frac{\left(\frac{T-t}{h}\right) K\left(\frac{T-t}{h}\right) \cdot \bar{p}_{\zeta}(\bm{S}|t)}{\sqrt{h}\cdot \kappa_2\cdot \bar{p}(T,\bm{S})}\cdot \left[Y - \bar{\mu}(t,\bm{S}) - (T-t)\cdot \bar{\beta}(t,\bm{S})\right]$$
	and $V_{C,m}(t) = \mathbb{E}\left[\left\{\mathbb{E}_{T_1}\left[\int_{T_1}^t \phi_{C,h,\tilde{t}}\left(Y,T,\bm{S};\bar{\mu}, \bar{\beta}, \bar{p},\bar{p}_{\zeta}\right) d\tilde{t}\right] \right\}^2 \right] = O(1)$ by our calculation of \textbf{Term Ia} in \autoref{app:m_nopos_Term_I}. Then, 
	$$\sum_{i=1}^n \mathrm{Var}\left\{\frac{1}{\sqrt{n}} \cdot \mathbb{E}_{T_1}\left[\int_{T_1}^t \phi_{C,h,\tilde{t}}\left(Y_i,T_i,\bm{S}_i;\bar{\mu}, \bar{\beta}, \bar{p},\bar{p}_{\zeta}\right) d\tilde{t}\right] \right\} = O(1)$$ 
	and 
	\begin{align*}
		&\sum_{i=1}^n \mathbb{E}\left|\frac{1}{\sqrt{n}} \cdot \mathbb{E}_{T_1}\left[\int_{T_1}^t \phi_{C,h,\tilde{t}}\left(Y,T,\bm{S};\bar{\mu}, \bar{\beta}, \bar{p},\bar{p}_{\zeta}\right) d\tilde{t}\right] \right|^{2+c_4} \\
		&\leq \mathbb{E}\left[\frac{1}{\sqrt{n^{c_4}h^{c_4+2}}}\cdot \mathbb{E}_{T_1}\left|\int_{T_1}^t \frac{\left(\frac{T-\tilde{t}}{h}\right) K\left(\frac{T-\tilde{t}}{h}\right) \cdot \bar{p}_{\zeta}(\bm{S}|\tilde{t})}{\kappa_2\cdot \bar{p}(T,\bm{S})}\cdot \left[Y - \bar{\mu}(\tilde{t},\bm{S}) - (T-\tilde{t})\cdot \bar{\beta}(\tilde{t},\bm{S})\right] d\tilde{t} \right|^{2+c_4} \right]\\
		&= \mathbb{E}\Bigg\{\frac{1}{\sqrt{n^{c_4}h^{c_4+2}}} \mathbb{E}_{T_1}\Bigg[|t-T_1|^{2+c_4} \\
		&\quad \times \sup_{\tilde{t}\in \mathcal{T}} \frac{\left(\frac{T-\tilde{t}}{h}\right)^{2+c_4} K^{2+c_4}\left(\frac{T-\tilde{t}}{h}\right) \cdot \bar{p}_{\zeta}^{2+c_4}(\bm{S}|\tilde{t})}{\kappa_2^{2+c_4}\cdot \bar{p}^{2+c_4}(T,\bm{S})}\cdot \left[\mu(T,\bm{S}) + \epsilon - \bar{\mu}(\tilde{t},\bm{S}) - (T-\tilde{t})\cdot \bar{\beta}(\tilde{t},\bm{S})\right]^{2+c_4} d\tilde{t} \Bigg] \Bigg\}\\
		&=O\left(\sqrt{\frac{1}{n^{c_4}h^{2+c_4}}}\right) =o(1)
	\end{align*}
	by the upper boundedness of $\mu,\bar{\mu},p$ under Assumptions~\ref{assump:reg_diff} and \ref{assump:den_diff}, the upper boundedness of the kernel function under Assumption~\ref{assump:reg_kernel}(c), the lower boundedness of $\bar{p}$ away from 0 around the support $\mathcal{J}$, the assumption that $\mathbb{E}|\epsilon|^{2+c_1}<\infty$ for some constant $c_1\geq 1$, and the requirement that $nh^3\to\infty$ as $n\to \infty$. Hence, the Lyapunov condition holds, and we have that
	$$\sqrt{nh^3}\left\{\hat{m}_{\mathrm{C,DR}}(t) - m(t) - h^2\cdot  \mathbb{E}_{T_1}\left[\int_{T_1}^t B_{C,\theta}(\tilde{t})\, d\tilde{t} \right]\right\} \stackrel{d}{\to} \mathcal{N}\left(0,V_{C,m}(t)\right)$$
	after subtracting the dominating bias term $h^2\cdot  \mathbb{E}_{T_1}\left[\int_{T_1}^t B_{C,\theta}(\tilde{t})\, d\tilde{t} \right]$ that we have computed in \textbf{Term Id} in \autoref{app:m_nopos_Term_I}. The proof is thus completed.
\end{proof}

\end{document}